\numberwithin{equation}{section}
\newcommand{\grad}{\operatorname{grad}}
\newcommand{\diag}{\operatorname{diag}}
\newtheorem{definition}{Definition}[section]
\newtheorem{proposition}{Proposition}[section]
\newtheorem{theorem}{Theorem}[section]
\newtheorem{corollary}{Corollary}[section]
\newtheorem{lemma}{Lemma}[section]
\newcommand{\req}[1]{Eq.\,(\ref{#1})}
\newcommand{\beqn}{\begin{equation}}
\newcommand{\eeqn}{\end{equation}}
\newcommand{\GeV}{\text{ GeV}}
\newcommand{\MeV}{\text{ MeV}}
\newcommand{\keV}{\text{ keV}}
\newcommand{\eV}{\text{ eV}}
\newcommand{\meV}{\text{ meV}}
\author{Jeremiah Birrell}
\title{Non-Equilibrium Aspects of Relic Neutrinos: From Freeze-out to the Present Day}
\date{2014}
\begin{document}

\maketitle

\chapter*{Acknowledgments}
 Many thanks go to my advisor, Dr. Johann Rafelski for his patience, encouragement, and dedication in helping me achieve this goal.  This dissertation could not have happened without his great insight into so many areas of physics.  

Thanks to my committee, professors Sean Fleming, David Glickenstein, and Douglas Pickrell, as well as professors Moysey Brio and Jan Wehr for being willing and able to assist whenever I had a question. Thanks also to Dr. Berndt Muller for reviewing this dissertation.

Thanks to Cheng-Tao Yang, a student from National Taiwan University who visited U of A and worked on evaluating matrix elements for neutrino processes, supervisor Dr. Pisin Chen, and with whom we wrote several papers. 

Finally, thanks to Dr. Michael Tabor for his advice and guidance during my time at the University of Arizona and for giving me the opportunity to pursue this goal.

This work was supported in part by a grant from the U.S. Department of Energy, DE-FG02-04ER41318 (PI Dr. Johann Rafelski). Fall 2011 through spring 2014 this work was conducted with Government support under and awarded by DoD, Air Force Office of Scientific Research, National Defense Science and Engineering Graduate (NDSEG) Fellowship, 32 CFR 168a.

\chapter*{Dedication}
\thispagestyle{topright}
\begin{center}To my parents for their invaluable emotional support during the highs and lows.\end{center}

\tableofcontents
\listoffigures
\listoftables

\begin{abstract}
In this dissertation, we study the evolution and properties of the relic (or cosmic) neutrino distribution from neutrino freeze-out at $T=O(1)$ MeV through the free-streaming era up to today, focusing on the deviation of the neutrino spectrum from equilibrium and in particular we demonstrate the presence of chemical non-equilibrium that continues to the present day.  The work naturally separates into two parts.  The first, which constitutes chapters \ref{ch:intro} through \ref{ch:nu_today}, focuses on aspects of the relic neutrinos that can be explored using conservation laws.  The second part, chapters \ref{ch:vol_forms} through \ref{ch:param_studies}, studies the neutrino distribution using the full general relativistic Boltzmann equation.

 Chapter \ref{ch:intro} begins with a brief overview of the Friedmann$-$Lemaitre$-$Robertson$-$Walker metric and its use in cosmology. With this background, we give a broad overview of the history of the Universe, from just prior to neutrino freeze-out up through the present day, placing the history of cosmic neutrino evolution in its proper context.

 Motivated by the Planck CMB measurements of the effective number of neutrinos, $N_\nu$, chapter \ref{ch:model_ind} focuses on the distinction between chemical and kinetic equilibrium and freeze-out. Using these concepts, we derive those properties of neutrino freeze-out that depend only on conservation laws and are independent of the details of the scattering processes. In particular, we characterize the dependence of both $N_\nu$ and the deviation of the neutrino distribution from chemical equilibrium on the neutrino kinetic freeze-out temperature. Part one ends with chapter \ref{ch:nu_today}, which connects the freeze-out era with the current era by characterizing the present day neutrino spectrum as seen from the Earth. It includes the velocity and de Broglie wavelength distributions and a computation the drag force on a coherent  detector due to neutrino scattering.

We now begin the second part of this dissertation, where the focus is on properties of cosmic neutrinos that depend on the details of the neutrino reactions, as is necessary for modeling the non-thermal distortions from equilibrium and computing freeze-out temperatures. As a preliminary, in chapter \ref{ch:vol_forms} we develop some geometry background concerning volume forms and integration on submanifolds that is helpful in computations. 

 In chapter \ref{ch:boltz_orthopoly} we recall a spectral method, adapted to near chemical equilibrium, that has been used in prior works to study neutrino freeze-out. We then detail a new spectral method, based on a dynamical basis of orthogonal polynomials. This method was designed extend the regime of applicability to systems far from chemical equilibrium and/or that undergo significant reheating, that is a temperature dependence that does follow a simple scaling law. In the process, we also improved the speed of the method. The method is validated on an exactly solvable model problem.  

In chapter \ref{ch:coll_simp} we list the reactions that neutrinos participate in while freezing out and detail an improved procedure for analytically simplifying the corresponding scattering integrals for subsequent numerical computation.  This procedure relies on some of the concepts introduced in chapter \ref{ch:vol_forms}. Using these scattering integral computations,  we solve the Boltzmann equation through the neutrino freeze-out period using both spectral methods from chapter \ref{ch:boltz_orthopoly}. We show numerically that our new method agrees with the prior method when both are applicable and also find that our method significantly reduces the required computer time -- by a factor 20 or more.

 Finally, in chapter \ref{ch:param_studies} we use this novel approach to perform parametric studies of the dependence of the neutrino freeze-out on the Weinberg angle, weak force interaction strength, the strength of gravity, and electron mass in order to constrain time and/or temperature variation of these parameters using measurements of $N_\nu$. This exploration is performed with the aim of recognizing mechanisms in the neutrino freeze-out process that are capable of leading to the measured value of  $N_\nu$ in the environment of a hot Universe in which freeze-out  occurs. 

\end{abstract}

\part{Neutrino Freeze-out via Conservation Laws}
\chapter{Introduction to Cosmology and the Relic Neutrino Background}\label{ch:intro}

At a temperature of $5$ MeV the Universe consisted of a plasma of $e^\pm$, photons, and neutrinos.  At around $1$ MeV neutrinos stop interacting, or freeze-out, and begin to free-stream through the Universe. Today they comprise the relic neutrino background. Photons freeze-out around $0.25$ eV and today they make up the Cosmic Microwave Background (CMB), currently at $T_{\gamma,0}=0.235$ meV.  Relic neutrinos have not been directly measured, but their impact on the speed of expansion of the Universe is imprinted on the CMB.  Indirect measurements of the relic neutrino background, such as by the Planck satellite \cite{Planck},  constrain neutrino properties such as mass and number of massless degrees of freedom.

 In later chapters, we will study the details of the neutrino freeze-out process and their impact on observables in detail but first we present an overview of cosmology, from just prior to neutrino freeze-out until today, putting the relic neutrinos in their proper context. Much of this material, including most figures, was adapted from our paper \cite{ErasOfUniverse}.

\section{Standard Cosmology}\label{cosmo}
 To follow the history of the relic neutrino distribution, one must first understand the relation between the expansion dynamics of the Universe, its energy content, and the connection to the photon and neutrino temperature. For this purpose we need some preparation in the  Friedmann$-$Lemaitre$-$Robertson$-$Walker (FRW) cosmological  model, see for example \cite{hartle2003gravity,hobson,misner1973gravitation}. Assuming a homogeneous, isotropic Universe, one arrives at the spacetime metric
\beqn\label{metric}
ds^2=dt^2-a^2(t)\left[ \frac{dr^2}{1-kr^2}+r^2(d\theta^2+\sin^2(\theta)d\phi^2)\right]
\eeqn
characterized  by the scale parameter $a(t)$.  $a(t)$ determines the distance between objects at rest in the Universe frame, otherwise known as comoving observers. The geometric parameter $k=-1,0,1$ identifies the geometry of the spacial hypersurfaces defined by comoving observers. Space is a flat-sheet for the observationally preferred value $k=0$ \cite{Planck}, hyperbolic for $k=-1$, and spherical for $k=1$.

The dynamics are governed by the Einstein equations
\beqn\label{Einstine}
G^{\mu\nu}=R^{\mu\nu}-\left(\frac R 2 -\Lambda\right) g^{\mu\nu}=-\frac{1}{M_p^2} T^{\mu\nu},  
\quad R= g_{\mu\nu}R^{\mu\nu}
\eeqn
where $M_p\equiv 1/\sqrt{8\pi G_N}$ is the Planck mass, $G_N$ is the gravitational constant, and we work in units where $\hbar=c=1$. Recall that the Einstein tensor $G^{\mu\nu}$ is divergence free and hence so is the total stress energy tensor, $T^{\mu\nu}$.  Note that our definition of $M_p$, while more convenient in cosmology, differs by a factor of $1/\sqrt{8\pi}$ from the particle physics convention.  Finally, we point out that there are several sign conventions in use regarding the definition of geometrical quantities and Einstein's equation that are clarified in appendix \ref{app:conventions}.

 In a homogeneous isotropic spacetime, the matter content is necessarily characterized by two quantities, the energy density $\rho$ and isotropic pressure $P$
\begin{equation}
  T^\mu_\nu =\mathrm{diag}(\rho, -P, -P, -P).
\end{equation}
 It is common to absorb the Einstein cosmological constant $\Lambda$ into $\rho$ and $P$ by defining
\beqn\label{EpsLam}
\rho_\Lambda=M_p^2\Lambda, \qquad P_\Lambda=-M_p^2 \Lambda.
\eeqn
We implicitly consider this done from now on.

The global Universe dynamics can be characterized by two  quantities, the Hubble parameter  $H$, a strongly time dependent quantity on cosmological time scales,  and the deceleration parameter $q$
\beqn\label{dynamic}
\frac{\dot a }{a}\equiv H(t) ,\quad 
q\equiv -\frac{a\ddot a}{\dot a^2}.
\eeqn
We note the relations
\beqn
\quad \frac{\ddot a}{a}=-qH^2,\quad \dot H=-H^2(1+q). 
\eeqn

Two dynamically independent equations arise using the metric \req{metric} in \req{Einstine}
\beqn\label{hubble}
\frac{8\pi G_N}{3} \rho =  \frac{\dot a^2+k}{a^2}
=H^2\left( 1+\frac { k }{\dot a^2}\right),
\qquad
\frac{4\pi G_N}{3} (\rho+3P)  =-\frac{\ddot a}{a}=qH^2.
\eeqn
We can eliminate the strength of the interaction, $G_N$,  solving both these equations for ${8\pi G_N}/{3}$, and equating the result to find a relatively simple constraint for the deceleration parameter
\beqn\label{qparam}
q=\frac 1 2 \left(1+3\frac{P}{\rho}\right)\left(1+\frac{k}{\dot a^2}\right).
\eeqn
 From this point on, we work within the  flat cosmological model with $k=0$ and so $q$ is determined entirely by the matter content of the Universe
\begin{equation}\label{qparam}
q=\frac 1 2 \left(1+3\frac{P}{\rho}\right).
\end{equation}

As must be the case for any solution of Einstein's equations,   \req{hubble} implies that the energy momentum tensor of matter is divergence free
\beqn\label{divTmn}
\nabla_\nu T^{\mu\nu} =0 \Rightarrow -\frac{\dot\rho}{\rho+P}=3\frac{\dot a}{a}=3H.
\eeqn
 The same relation also follows from  conservation of entropy, $dE+PdV=TdS=0,\  dE=d(\rho V),\  dV=d(a^3)$. Given an equation of state $P(\rho)$, solution of \req{divTmn} describes the dynamical evolution of matter in the Universe. Combined with the Hubble equation
\begin{equation}\label{Hubble_eq}
H^2=\frac{\rho}{3M_p}
\end{equation}
this allows us to solve for the large scale dynamics of the Universe. 

Using the flat FRW model of cosmology outlined above, we now present several perspectives on the history of the Universe.  First we focus on the reheating history. 

\section{Reheating History of the Universe}\label{Eralink}

At times where dimensional scales are irrelevant, entropy conservation means that  temperature scales inversely with the scale factor $a(t)$. This follows from \req{divTmn} when $ \rho\simeq 3P   \propto T^4$. However, as the temperature drops and at their respective $m\simeq T$ scales, successively less massive particles annihilate and disappear from the thermal Universe. Their entropy reheats the other degrees of freedom and thus in the process, the entropy originating in a massive degree of freedom is shifted into the effectively massless degrees of freedom that still remain.  This causes the  $T\propto 1/a(t)$ scaling to break down; during each of these `reorganization' periods the drop in temperature is slowed by the concentration of entropy in fewer degrees of freedom, leading to a change in the reheating ratio, $R$, defined as
\begin{equation}\label{redshiftratio}
R\equiv \frac{1+z}{ T_\gamma/T_{\gamma,0}}, \qquad 1+z\equiv \frac{a_{0}}{a(t)}.
\end{equation}
The reheating ratio connects the photon temperature redshift to the geometric redshift, where $a_0$ is the scale factor today (often normalized to $1$) and quantifies the deviation from the scaling relation between $a(t)$ and $T$.

As we will see, the change in $R$ can be computed by the drop in the number of degrees of freedom.  At a temperature on the order of the top quark mass, when all standard model particles were in thermal equilibrium, the Universe was pushed apart by 28 bosonic and 90 fermionic degrees of freedom. The total number of degrees of freedom can be computed as follows.  

For bosons we have the following: the doublet of charged Higgs particles has $4=2\times2=1+3$  degrees of freedom -- three will migrate to the longitudinal components of $W^\pm, Z$ when the electro-weak vacuum freezes and the EW symmetry breaking arises, while one is retained in the one single dynamical charge neutral Higgs component. In the massless stage, the SU(2)$\times$U(1) theory has 4$\times$2=8 gauge degrees of freedom where the first coefficient  is  the number of particles $(\gamma, Z, W^\pm)$ and each massless gauge boson has  two transverse polarizations. Adding in $8_c\times2_s=16$ gluonic degrees of freedom we obtain 4+8+16=28  bosonic degrees of freedom. 

The count of fermionic degrees of freedom includes three $f$ families, two spins $s$, another factor two for particle-antiparticle duality. We have in each family of flavors a doublet of $2\times 3_c$ quarks, 1-lepton and 1/2 neutrinos (due left-handedness which was not implemented counting spin). Thus we find that a total $3_f\times 2_p\times 2_s\times(2\times 3_c+1_l+1/2_\nu)=90$ fermionic degrees of freedom. We further recall that massless fermions contribute 7/8 of that of bosons in both pressure and energy density. Thus the total number of massless Standard Model particles at a temperature above the top quark mass scale, referring by convention to bosonic degrees of freedom, is $g_{\rm SM}=28+90\times 7/8=106.75$

In figure~\ref{fig:dof}  we show the cube of the reheating ratio \req{redshiftratio} as a function of photon temperature $T_\gamma$ from the primordial high temperature  early Universe on the right to the present on the left, where $R$  must be by definition unity.  The periods of change seen in figure \ref{fig:dof} come when the temperature crosses the mass of a particle species that is in equilibrium. One can see drops corresponding to the disappearance of particles as indicated.   After $e^+e^-$ annihilation on the left, there are no significant degrees of freedom remaining to annihilate and feed entropy into photons, and so $R$  remains constant until today. We show the result using a Fermi gas model with a very rough model for the QGP phase transition and hadronization period near $O(100\MeV)$. The fermi gas model is a poor approximation above the QGP phase transition; a more precise model using lattice QCD, see e.g. \cite{Borsanyi:2013bia}, together with a high temperature perturbative QCD expansion, see e.g. \cite{Letessier:2002gp}, would be needed to improve on this situation but the details do not impact the neutrino freeze-out period near $1\MeV$ which is our primary concern, and so we do not consider these issues further here.

\begin{figure} 
\centerline{\hspace*{0.4cm}\includegraphics[height=6.6cm]{./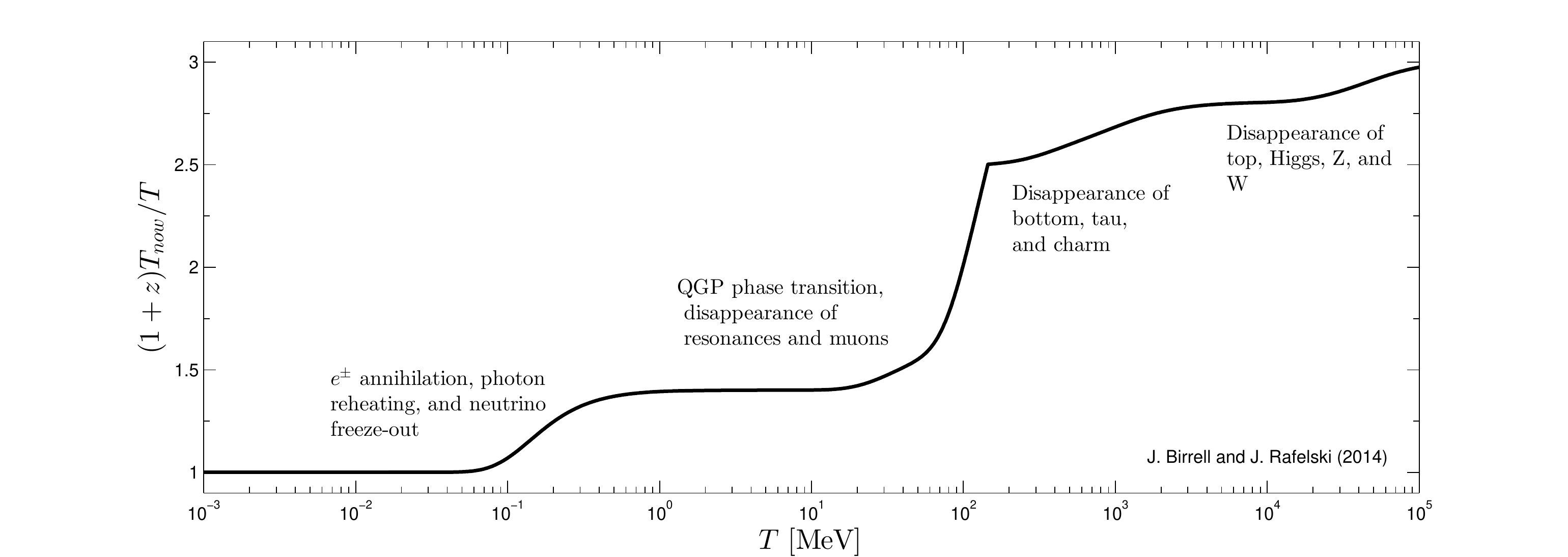}}
\caption{Disappearance of degrees of freedom. The Universe volume inflated approximately by a factor of 27 above the thermal red shift scale as massive particles disappeared successively from the inventory.\label{fig:dof}}
 \end{figure}

As long as the dynamics are at least approximately entropy conserving, the total drop in $R$ is entirely determined by entropy conservation. Namely, the magnitude of the drop in $R$ figure~\ref{fig:dof} is a measure of the number of degrees of freedom that have disappeared from the Universe. Consider   two times $t_1$ and $t_2$ at which all particle species that have not yet annihilated are effectively massless.  By conservation of comoving entropy and  scaling $T\propto 1/a$ we have
\begin{equation}\label{r_ratio}
1=\frac{a_1^3S_{1}}{a_2^3 S_2}=\frac{a_1^3\sum_ig_i T_{i,1}^3}{a_2^3\sum_j g_j T_{j,2}^3},\qquad \left(\frac{R_1}{R_2}\right)^3=\frac{\sum_ig_i (T_{i,1}/T_{\gamma,1})^3}{\sum_j g_j (T_{j,2}/T_{\gamma,2})^3}
\end{equation}
where the sums are over the total number of degrees of freedom present at the indicated time and the degeneracy factors $g_i$ contain the $7/8$ factor for fermions. In the second form    we divided the numerator and denominator by $a_{0}T_{\gamma,0}$. We distinguish between the temperature of each particle species and our reference temperature, the photon temperature.  This is important since today neutrinos are colder than photons, due to photon reheating from  $e^\pm$ annihilation occurring after neutrinos decoupled (this is only an approximation, a point we will study in detail in subsequent chapters).  By conservation of entropy one obtains the neutrino to photon temperature ratio of
\begin{equation}\label{T_nu_T_gamma}
T_\nu/T_\gamma=({4}/{11})^{1/3}.
\end{equation}
We will call this the reheating ratio in the decoupled limit.  For details on the derivation of this standard result, see for example our paper in appendix \ref{app:model_ind}, where it is obtained as a special case of a more general analysis. 

Using \req{r_ratio}  we  compute the total drop in $R^3$ shown in figure \ref{fig:dof}.  At $T=T_\gamma=\mathcal{O}(100\GeV)$ the number of active degrees of freedom is slightly below $g_{\rm SM}=106.75$ due to the partial disappearance of top quarks, but this approximation will be good enough for our purposes.  At this time, all the species are in thermal equilibrium with photons and so $T_{i,1}/T_{\gamma,1}=1$ for all $i$.  Today we have $2$ photon and $7/8\times 6$ neutrino degrees of freedom and a  neutrino to photon temperature ratio \req{T_nu_T_gamma}.  Therefore we have
\begin{equation}
\left(\frac{R_{100GeV}}{R_{now}}\right)^3= \frac{g_{SM}}{g_{\rm now}}=\frac{106.75}{2+\frac{7}{8}\times 6\times \frac{4}{11}}\approx 27.3
\end{equation}
which is the  fractional change we see in the fermi gas model curve in figure \ref{fig:dof} (as mentioned above, the QCD model is reduced due to interactions). The meaning of this factor is that the Universe approximately inflated by a factor 27 above the thermal red shift scale as massive particles disappeared successively from the inventory.

\section{Composition of the Universe}
From the perspective of reheating, the history of the Universe from the end of $e^\pm$ annihilation until today has been uneventful.  We can shed additional light on this period and others by looking at the composition of the Universe as a function of temperature

\begin{figure}
\centerline{\hspace*{0.4cm}\includegraphics[height=7.6cm]{./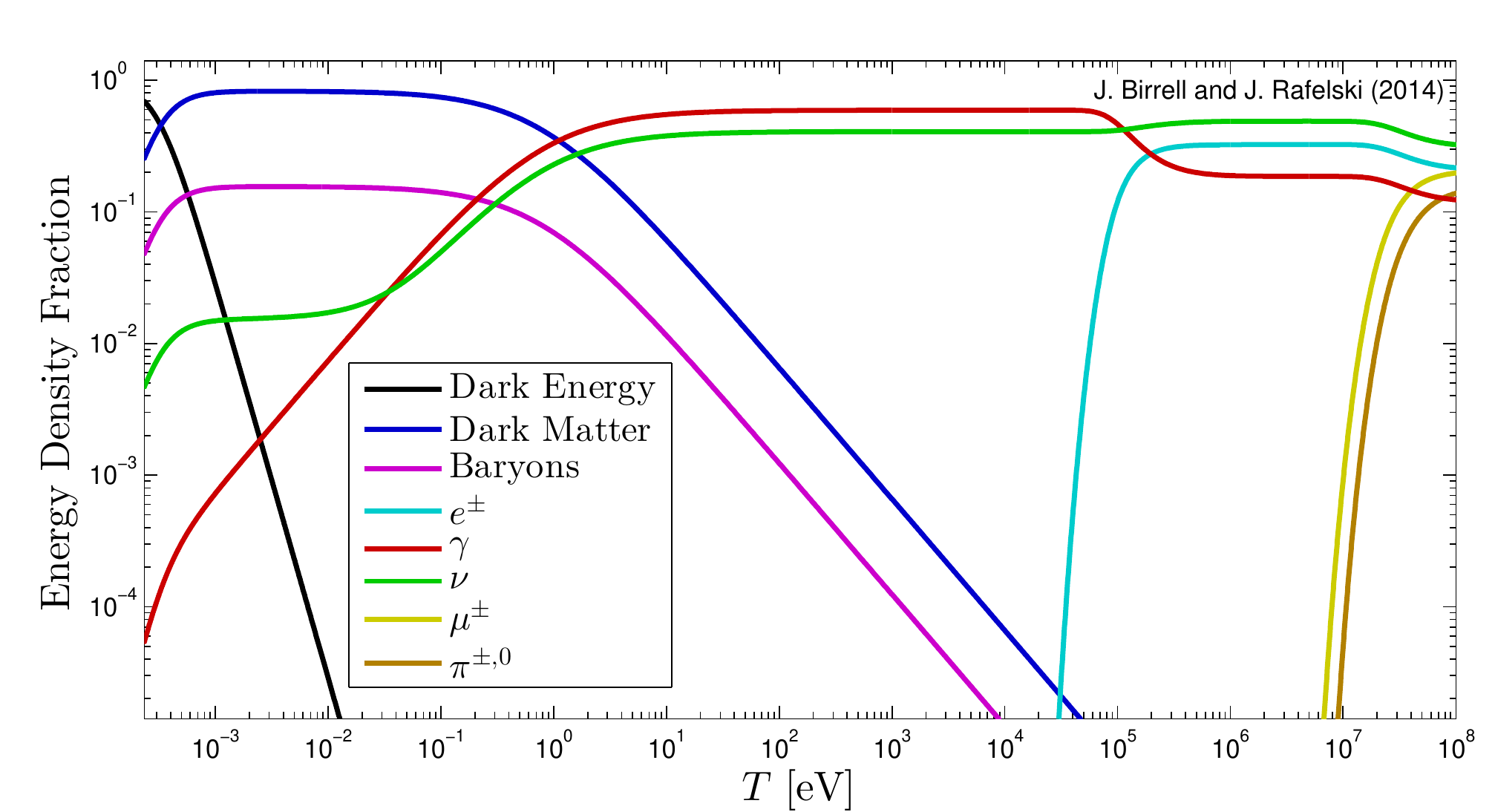}}\label{fig:energy_frac}
\caption{Current era: $69\%$ dark energy, $26\%$ dark matter, $5\%$ baryons, $<1\%$ photons and neutrinos, $1$ massless and $2\times .1$ eV neutrinos (Neutrino mass choice is just for illustration.  Other values are possible).}
 \end{figure}
In figure \ref{fig:energy_frac} we begin on the right at the end of the hadron era with the disappearance of muons and pions.  This constitutes a reheating period, with energy and entropy from these particles being transfered to the remaining $e^\pm$, photon, neutrino plasma.  Continuing to $T=O(1)$ MeV, we come to the annihilation  of $e^\pm$ and the photon reheating period.  Notice that only the photon energy density fraction increases here.  As discussed above, a common simplifying assumption is that neutrinos are already decoupled at this time and hence do not share in the reheating process, leading to a difference in photon and neutrino temperatures \req{T_nu_T_gamma}.

After passing through a long period, from $T=O(1)$ MeV until $T=O(1)$ eV, where the energy density is dominated by photons and free-streaming neutrinos, we then come to the beginning of the matter dominated regime, where the energy density is dominated by dark matter and baryonic matter.  This transition is the result of the redshifting of the photon and neutrino energy, $\rho\propto T^4$, whereas for non-relativistic matter $\rho\propto a^{-3}\propto T^3$.  Note that our inclusion of neutrino mass causes the leveling out of the neutrino energy density fraction during this period, as compared to the continued redshifting of the photon energy.

Finally, as we move towards the present day CMB temperature of $T_{\gamma,0}=0.235$ meV on the left hand side, we have entered the dark energy dominated regime.  For the present day values, we have used the fits from the Planck data \cite{Planck} of  $69\%$ dark energy, $26\%$ dark matter and $5\%$ baryons (and zero spatial curvature).  The photon energy density is fixed by the CMB temperature $T_{\gamma,0}$ and the neutrino energy density is fixed by $T_{\gamma,0}$ along with the photon to neutrino temperature ratio.  Both constitute $<1\%$ of the current energy budget.

\section{Deceleration Parameter}
We conclude our overview of cosmology with one final perspective, the Universe as seen through the deceleration parameter.  The deceleration parameter is another indicator of the transition between different eras of the Universe's history.  Recall the relation \req{qparam} (for $k=0$)  between deceleration parameter and matter content of the Universe. In particular we have the regimes

\begin{itemize}
\item Radiation dominated Universe: $P=\rho/3 \implies q=1$.\\

\item  (Non-relativistic) Matter dominated Universe: $P\ll\rho \implies q=1/2$.\\

\item Dark energy ($\Lambda$) dominated Universe: $P=-\rho \implies q=-1$.\\

\end{itemize}
We use $q$ first to characterize the era from today back to the end of neutrino freeze-out and then from freeze-out until the end of the hadron era.

\subsection{Back in time to Neutrino Freeze-out}\label{recomb}
In the following we use the mix of matter  (31\%) and dark energy (69\%) with photon and neutrino backgrounds favored by the latest Planck results \cite{Planck}, where we gave two neutrino species mass of $m_\nu=30\meV$ and a third neutrino remains  massless.  This is a different mass value than used above and again, it is only for illustration-- other mass choices are possible within present day constraints and will impact to some degree where exactly matter dominance emerges from the radiative Universe.  We presume  that neutrino kinetic freeze-out completed before the onset of $e^\pm$-annihilation into  photons, leading to the neutrino to photon temperature ratio \req{T_nu_T_gamma}. Again, this is a common simplifying assumption.  Much of the remainder of this work will involve improving on this approximation, but for the purposes of this overview it is sufficient.

Figure \ref{fig:today} shows in the left frame the temperature  (left axis) and deceleration parameter (right axis)  from shortly after the completion neutrino freeze-out until today.  The horizontal dot-dashed lines show  the pure radiation-dominated value of $q=1$ and the matter-dominated value of $q=1/2$. The expansion in this era starts off as radiation-dominated, but transitions to matter-dominated starting around $T=\mathcal{O}(10\eV)$ and begins to transition to a dark energy dominated era at $T=\mathcal{O}(1\meV)$. We are still in the midst of this transition today. The vertical dot-dashed lines show  the time of recombination at $T\simeq0.25\eV$, when the Universe became transparent to photons, and reionization at $T\simeq {\cal O}(1\meV)$, when hydrogen in the Universe was again ionized due to light from the first galaxies \cite{Zaroubi:2012in}. 

On the right in figure  \ref{fig:today}  we show the Hubble parameter $H$ and redshift $z+1\equiv a_0/a(t)$. We can see in figure \ref{fig:today} a visible deviation from power law behavior due to the transitions from radiation to matter dominated and from matter to dark energy dominated expansion.  These transitions are accentuated and more easily visualized in the form of the deceleration parameter $q$. The time span covered by the figure  \ref{fig:today} is in essence the entire lifespan of the Universe, but of course on a logarithmic time scale there is a lot of room for interesting physics in the tiny blip that happened beforehand.

\begin{figure}
\begin{minipage}{\linewidth}
\makebox[0.5\linewidth]%
{\includegraphics[keepaspectratio=true,scale=0.52]{./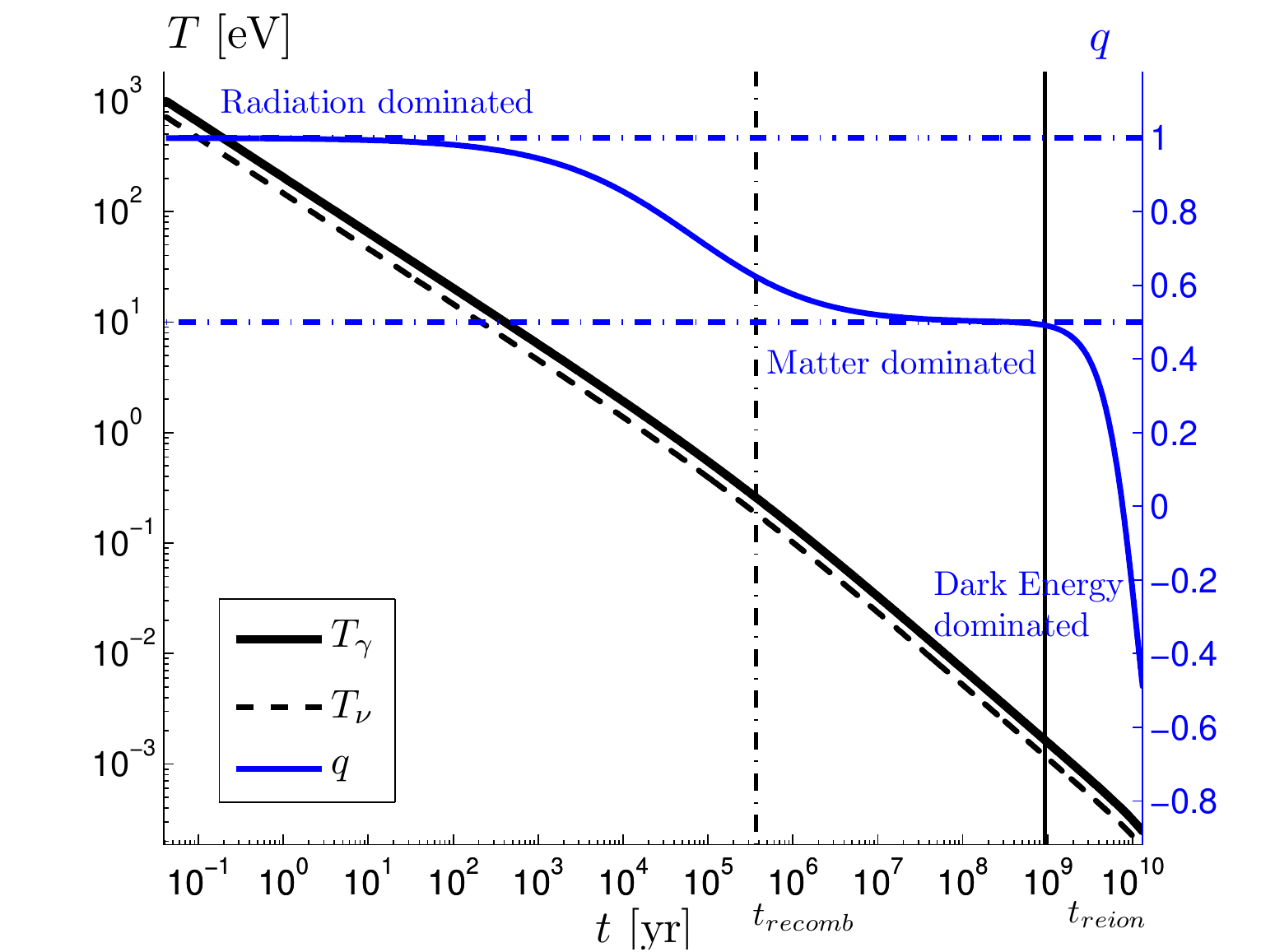}}
\makebox[0.5\linewidth]%
{\includegraphics[keepaspectratio=true,scale=0.52]{./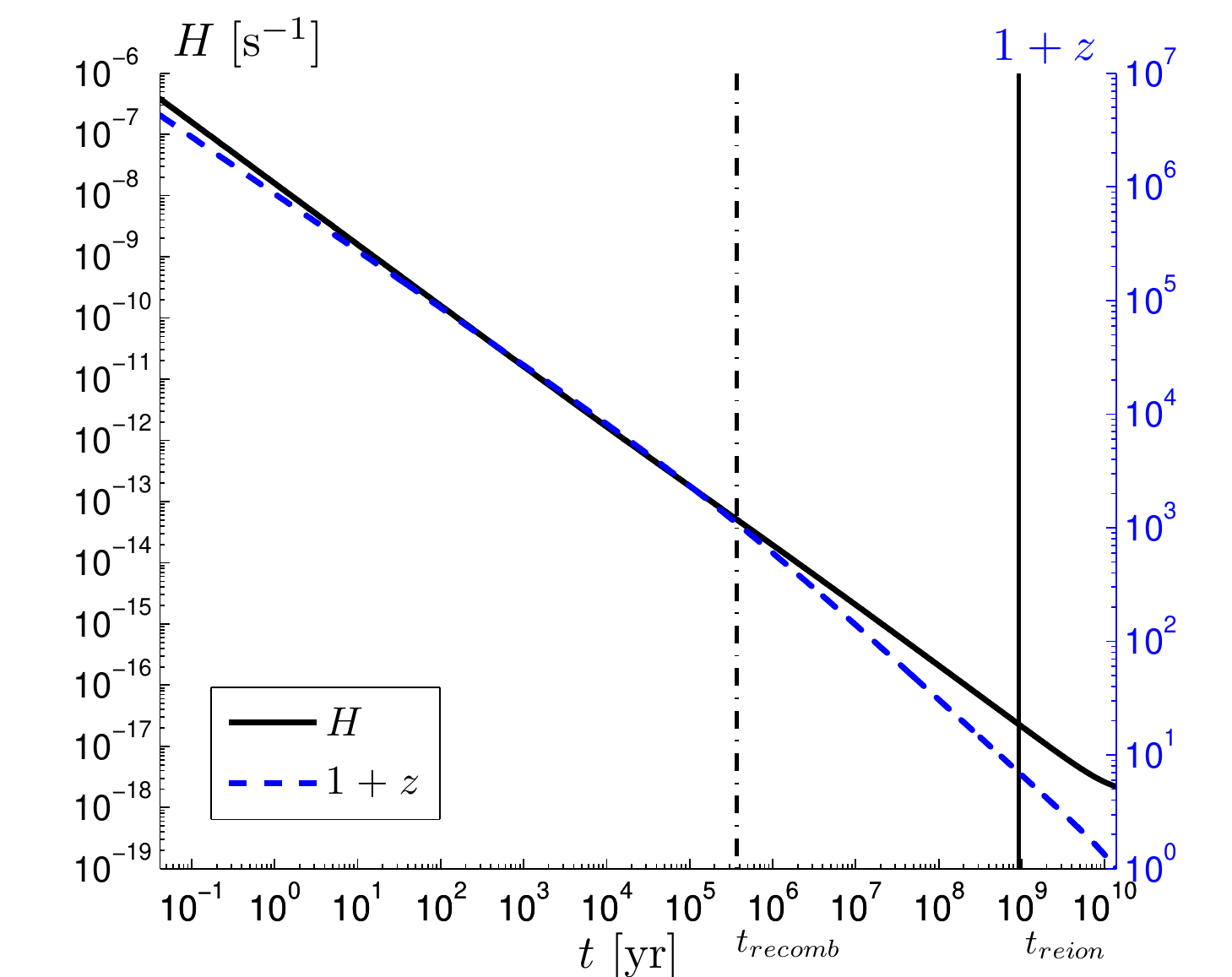}}
\caption{Transition periods in the composition of the Universe: on left -- evolution of temperature $T$  and deceleration parameter $q$; on right --  evolution of the Hubble parameter $H$ and redshift $z$.
\label{fig:today} }
\end{minipage}
\end{figure}

\subsection{Neutrino Freeze-out Era }\label{nudecoup}
The era separating the photon-neutrino-matter-dark energy Universe we just described from the end of the hadron Universe is quite complex in its evolution.   We begin when the number of $e^\pm$-pairs has decayed to the same abundance as the number of baryons in the Universe at the temperature  $T=\mathcal{O}(10\keV)$ and reach back to $T={\cal O}(30\MeV)$ where muons and pions are disappearing from the Universe.

\begin{figure}
\begin{minipage}{\linewidth}
\makebox[0.5\linewidth]%
{\includegraphics[keepaspectratio=true,scale=0.54]{./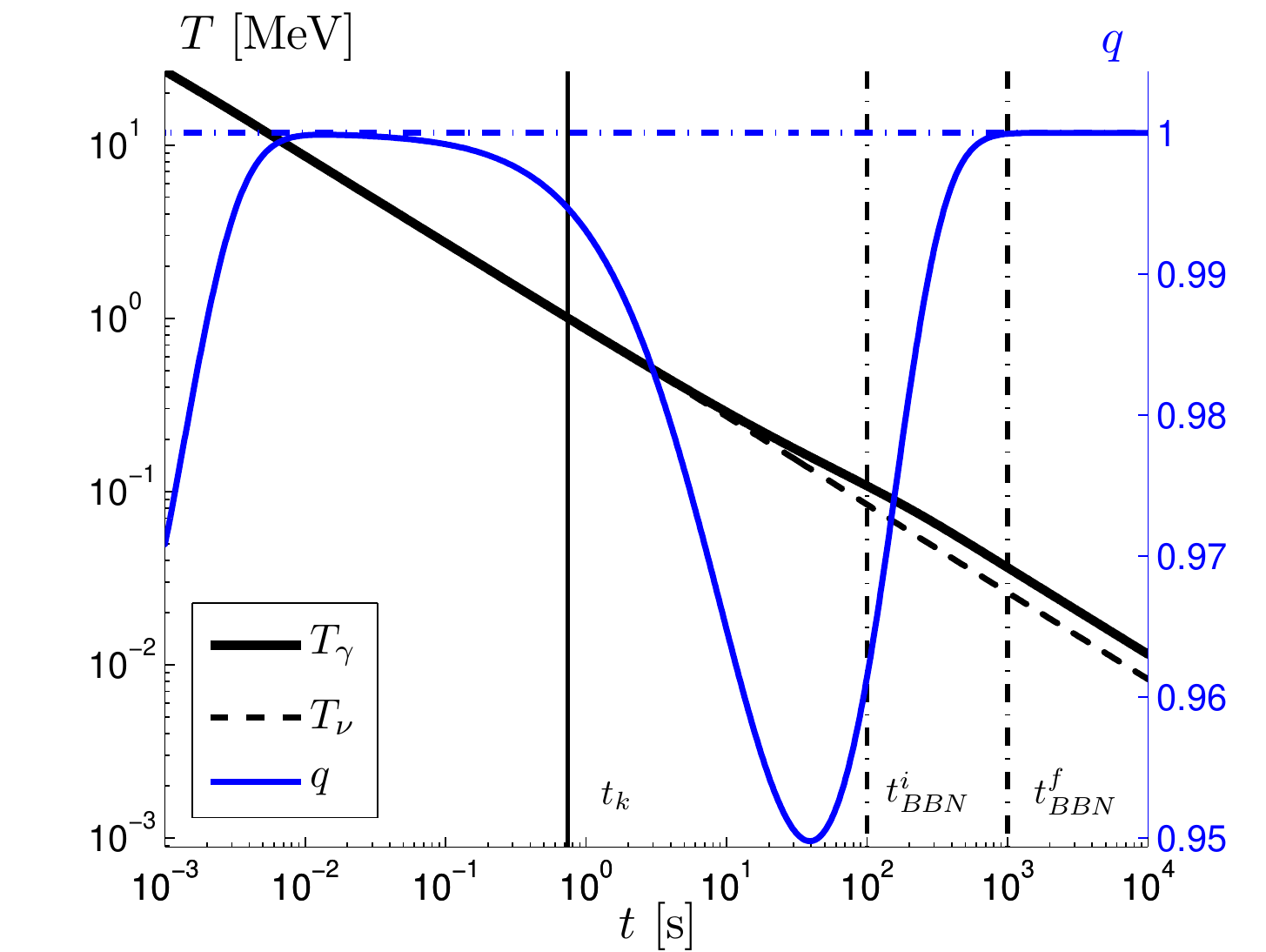}} 
\makebox[0.5\linewidth]%
{\includegraphics[keepaspectratio=true,scale=0.54]{./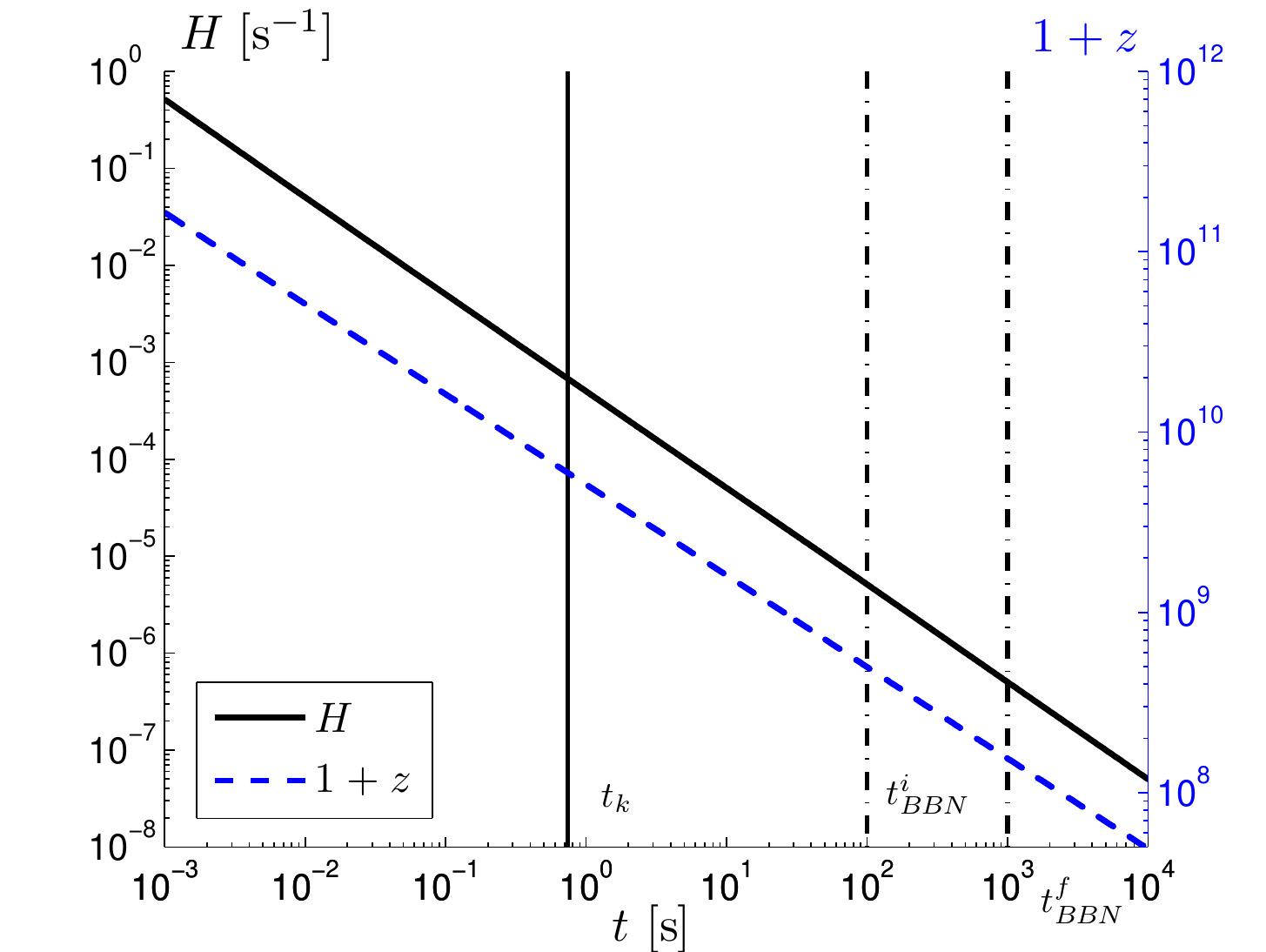}} 
\caption{From the end of baryon antimatter annihilation through BBN, see figure \ref{fig:today}.
\label{fig:BBN}  }
\end{minipage}
\end{figure}

 In figure~\ref{fig:BBN} the horizontal dot-dashed line for $q=1$  shows the pure radiation dominated value with two exceptions. First, the presence of massive pions  and muons reduce  the value of $q$ near to the maximal temperature shown.  Second, when the temperature is near the value of the electron mass, the $e^\pm$-pairs are not yet fully depleted but already sufficiently non-relativistic to cause another dip in $q$.  These are not large drops; the expansion is still predominately radiation dominated.  But $q$ provides a sensitive measure of when various mass scales become relevant and is a good indicator of the presence of a reheating period.

 The dashed line shows the neutrino temperature, which decouples from the $e^\pm$ and photon temperature at $T={\cal O}(1\MeV)$ when neutrinos freeze-out and begin free streaming. In figure~\ref{fig:BBN} the unit of time is seconds and the range spans the domain from fractions of a millisecond to a few hours. After neutrino freeze-out we come to Big Bang Nucleosynthesis, the period when the lighter elements were synthesized in a hot but relatively dilute plasma \cite{Iocco:2008va}. We left some time gap between this and the domain shown in figure \ref{fig:today}  describing the current era -- there is an uneventful evolution between the two domains. 

\section{Focusing on Neutrino Freeze-out}
Neutrino freeze-out is, as far as we know, the unique era in the history of the Universe when a significant matter fraction froze out at the same time that a reheating period was beginning, namely the start of $e^\pm$ annihilation.  It is this coincidence that makes neutrino freeze-out a rich and complicated period to study as compared to the many other reheating periods in the history of the Universe. This period has been studied before \cite{Madsen,Dolgov_Hansen,Gnedin,Esposito2000,Mangano2002,Mangano2005}, but the Planck satellite results \cite{Planck} motivate a reinvestigation of this period of cosmology.  We therefore make the interplay of neutrino freeze-out and reheating from $e^\pm$ annihilation the primary focus of the remainder of this work.

\begin{subappendices}
\section{Conventions}\label{app:conventions}

There are several sign conventions in use in general relativity.  As discussed in \cite{hobson}, these conventions differ by the sign factors $S1$, $S2$, $S3$, which appear in the following objects:
\vspace{3mm}

Metric Signature: $\eta^{\mu\nu}=(S1)\text{Diag}(-1,1,1,1)$
\vspace{3mm}

Riemann Tensor: $R^\mu_{\alpha\beta\gamma}=(S2)(\partial_{\beta}\Gamma^\mu_{\alpha\gamma}-\partial_{\gamma}\Gamma^\mu_{\alpha\beta}+\Gamma^\mu_{\sigma\beta}\Gamma^\sigma_{\gamma\alpha}-\Gamma^\mu_{\sigma\gamma}\Gamma^\sigma_{\beta\alpha})$
\vspace{3mm}

Einstein Equation: $G_{\mu\nu}-(S3)\Lambda g_{\mu\nu}=(S3)8\pi G_NT_{\mu\nu}$
\vspace{3mm}

Ricci Tensor: $R_{\mu\nu}=(S2)(S3)R^\alpha_{\mu\alpha\nu}$
\vspace{3mm}

The sign $S3$ comes from the choice of what index is contracted in forming the Ricci tensor.  Since that sign factor appears in both $R_{\mu\nu}$ and $R$ it affects the overall sign of $G_{\mu\nu}$ and therefore Einstein's equation as shown above. In this dissertation we will use the $(-,+,-)$ convention.

\end{subappendices}

\chapter{ Study of Neutrino Distribution using Conservation Laws}\label{ch:model_ind}

\section{Effective Number of Neutrinos}\label{sec:N_nu}
In the previous chapter we gave an overview of cosmology that included a simple model of neutrino freeze-out, wherein neutrinos decouple prior to the $e^\pm$ annihilation reheating period, leading to the reheating ratio in the decoupled limit \req{T_nu_T_gamma} which we now denote by $R_\nu$. However, as we mentioned several times this is only an approximate model. In reality, the freeze-out and reheating periods overlap to some degree which greatly complicates the picture, as some energy and entropy from the annihilating $e^\pm$ goes into neutrinos.  This overlap has observable consequences, as any extra energy in neutrinos impacts the speed of expansion of the Universe, through the Hubble equation \req{Hubble_eq}.

The additional energy and entropy fed into neutrinos is typically quantified by the effective number of neutrinos, $N_\nu$, defined by comparing the total neutrino energy density to the energy density of a massless fermion with two degrees of freedom and neutrino to photon temperature ratio $R_\nu$,
\begin{equation}
N_{\nu}=\frac{\rho_\nu}{\frac{7}{120}\pi^2  \left(R_\nu T_\gamma\right)^4}.
\end{equation}
 By definition, any transfer of energy from $e^\pm$ into neutrinos results in $N_\nu>N_\nu^f=3$, the number of physical neutrino flavors.  $N_\nu$ can be  measured by fitting to observational data, such as the Planck CMB measurements. A numerical computation based on the Boltzmann equation with two body scattering~\cite{Mangano2005} gives to $N_{\nu}^{\rm th}=3.046$. However the Planck CMB results contain several fits~\cite{Planck} based on different data sets which suggest that $N_\nu$ is in the range $3.30\pm 0.27$ to $3.62\pm0.25$ ($68\%$ confidence level). 

This tension between the Planck results and theoretical reheating studies motivates our work. This tension has inspired various theories, such as \cite{Weinberg:2013kea}, where it is postulated to be due to the presence of as yet undiscovered particle species. In this work, we avoid postulating the existence of additional particles, but rather explore the possibility that the increase in $N_\nu$ is the consequences of additional energy and entropy being transfered into neutrinos during $e^\pm$ annihilation.  In other words, we postulate additional neutrino reheating and explore its consequences.

\section{Matter Content}
In this work, matter will be modeled by a particle distribution function $f(t,x,p)$ that, roughly speaking, gives the probability of finding a particle per unit spacial volume per unit momentum space volume at a given time.  The distribution function gives the stress energy tensor, particle four-current, and entropy four-current via 
\begin{align}
T^{\mu,\nu}(t,x)=&\frac{g_p}{(2\pi)^3}\int p^\mu p^\nu f(t,x,p) \sqrt{|g|}\frac{d^3p}{p_0},\\
n^\nu(t,x)=&\frac{g_p}{(2\pi)^3}\int p^\nu f(t,x,p) \sqrt{|g|}\frac{d^3p}{p_0},\\
s^\nu(t,x)=&-\frac{g_p}{(2\pi)^3}\int(f\ln(f)\pm(1\mp f)\ln(1\mp f))p^\mu\sqrt{|g|}\frac{d^3p}{p_0}
\end{align}
where the upper signs are for fermions, the lower for bosons, $g_p$ is the degeneracy of the particle, and $g$ is the determinant of the metric.  In an FRW Universe, the expressions for the  energy density, pressure, number density, and entropy density of a particle of mass $m$ are
\begin{align}\label{moments}
\rho=&\frac{g_p}{(2\pi)^3}\int f(t,x,p)Ed^3p,\\
\rho=&\frac{g_p}{(2\pi)^3}\int f(t,x,p)\frac{p^2}{3E}d^3p,\\
n=&\frac{g_p}{(2\pi)^3}\int f(t,x,p) d^3p, \hspace{2mm} E=\sqrt{m^2+p^2},\\
s=&-\frac{g_p}{(2\pi)^3}\int (f\ln(f)\pm(1\mp f)\ln(1\mp f)) d^3p.
\end{align}

The dynamics of the distribution function,  and therefore the precise nature of neutrino freeze-out and the energy and entropy transfered into the neutrino sector, are governed by the Boltzmann equation
\begin{equation}
p^\alpha\partial_{x^\alpha} f-\Gamma^{j}_{\mu,\nu}p^\mu p^\nu\partial_{p^j}f=C[f]
\end{equation}
where repeated Greek indices indicate a sum over $0,...,3$ and Roman indices indicate a sum over the spacial components $1,...,3$.  The right hand side is the collision operator and incorporates the physics of any short range interactions that the particles participate in. The left hand side gives the dynamics under any long range forces. For us the only long range force will be gravity, encoded in the Christoffel symbols $\Gamma^j_{\mu\nu}$, and so the Boltzmann equation expresses the fact that particles undergo geodesic motion in between collisions. For much greater detail on the definition of the distribution function in a general spacetime, the geometric origin of the Boltzmann equation, and various properties and relations satisfied by moments of the distribution function, see for example \cite{andre,cercignani,bruhat,ehlers,kolb,bernstein2004kinetic}.

We will study neutrino freeze-out in detail using the Boltzmann equation in the second part of this dissertation, starting in chapter \ref{ch:boltz_orthopoly}. However, in this chapter we persue a model independent approach wherein we assume instantaneous chemical/kinetic equilibrium and sharp freeze-out transitions between them.  Though limited in the kinds of questions we can address and answer, this approach makes up for these limitations by letting us derive several important properties that are independent of microscopic dynamics, i.e. independent of $C[f]$, so long as these assumptions are sufficiently accurate.  The dynamics will be derived from conservation laws involving the moments \ref{moments}, but first we must describe the distinction between chemical and kinetic equilibrium.

\subsection{Chemical and Kinetic Equilibrium}
At sufficiently high temperatures, such as existed in the early Universe, both particle creation and annihilation (i.e. chemical) processes and momentum exchanging (i.e. kinetic) scattering processes can occur sufficiently rapidly to establish complete thermal equilibrium of a given particle species. The most probable canonical distribution function $f_{ch}^\pm$ of  fermions (+) and bosons (-) in both chemical and kinetic equilibrium is found by maximizing entropy subject to energy being conserved
\begin{equation}\label{ch_eq}
f_{ch}^\pm=\frac{1}{\exp(E/T)\pm 1}, \hspace{2mm} T>T_{ch}
\end{equation}
where $E$ is the particle energy, $T$ the temperature, and $T_{ch}$ the chemical freeze-out temperature.

For a physical system comprising {\em interacting} particles whose temperature is decreasing with time, there will be a period where the temperature is greater than the kinetic freeze-out temperature, $T_k$, but below chemical freeze-out. During this period, momentum exchanging processes continue to maintain an equilibrium distribution of energy among the available particles, which we call kinetic equilibrium, but particle number changing processes no longer occur rapidly enough to keep the equilibrium particle number yield, i.e. for $T<T_{ch}$ the particle number changing processes have `frozen-out'. In this condition the momentum distribution, which is in kinetic equilibrium but chemical non-equilibrium, is obtained by maximizing  entropy subject to  particle number and energy constraints and thus two parameters appear
\begin{equation}\label{k_eq}
f_{k}^\pm=\frac{1}{\Upsilon^{-1} \exp(E/T)\pm 1},\hspace{2mm} T_k<T\leq T_{ch}.
\end{equation}
The need to preserve the total particle number within the distribution introduces an additional parameter $\Upsilon$ called fugacity.

The fugacity, $\Upsilon(t)\equiv e^{\sigma(t)}$, controls the occupancy of phase space and is necessary once $T(t)<T_{ch}$ in order to conserve particle number.  A fugacity different from $1$ implies an over-abundance ($\Upsilon>1$) or under-abundance ($\Upsilon<1$) of particles compared to chemical equilibrium and in either of these  situations one speaks of chemical non-equilibrium. 

The effect of $\sigma$ is similar after that of chemical potential $\mu$, except that $\sigma$ is equal for particles and antiparticles, and not opposite. This means $\sigma>0$ ($\Upsilon>1$) increases the density of both particles and antiparticles, rather than increasing one and decreasing the other as is common when the chemical potential is associated with conserved quantum numbers.  Similarly, $\sigma<0$ $(\Upsilon<1)$ decreases both. The fact that $\sigma$ is not opposite for particles and antiparticles reflects the fact that both  the number of particles and the number of antiparticles are conserved after chemical freeze-out, and not just their difference.  Ignoring the small particle antiparticle asymmetry their equality reflects the fact that any process that modifies  the distribution would affect both particle and antiparticle distributions in the same fashion.   Such an asymmetry would be incorporated by replacing $\Upsilon\rightarrow \Upsilon e^{\pm\mu/T}$ where $\mu$ is the chemical potential, but we ignore it in this work as the matter antimatter asymmetry is on the order of $1$ part in $10^9$.

 We also emphasize that the fugacity is time dependent and not just an initial condition.  At high temperatures $\Upsilon=1$ and we will find that $\Upsilon<1$ emerges dynamically as a result of the freeze-out process. The importance of fugacity was first introduced in \cite{PhysRevLett.48.1066} in the context of quark-gluon plasma.  Its presence in cosmology was noted in  \cite{Bernstein:1985,Dolgov:1993} but its importance has been largely forgotten and the consequences unexplored in the literature.

Once the temperature drops below the kinetic freeze-out temperature $T_k$ we reach  the free streaming period where  particle scattering processes have completely frozen out and the resultant distribution is obtained by solving the collisionless Boltzmann equation with initial condition as given by the chemical non-equilibrium   distribution \req{k_eq}.  As already indicated, the two transitions between these three regimes constitute  the freeze-out process -- first we have at $T_{ch}$ the chemical freeze-out and at lower $T_k$ the kinetic freeze-out.

\subsection{Entropy Conservation}
In this section we show that in an FRW Universe and under the assumption of chemical or kinetic equilibrium, the total comoving entropy of all particle species is conserved. More specifically, we will consider a collection of particles with distinct fugacities $\Upsilon_i$, all of which are in kinetic equilibrium at a common temperature $T$.   For the following derivation, it is useful to define $\mu_i=\sigma_i T$.  This gives the expressions a familiar thermodynamic form with $\mu$ playing the role of chemical potential and helps with the calculations, but should not be confused with a chemical potential as discussed above.  

Integration by parts establishes the following identities for the kinetic equilibrium distribution \req{k_eq}
\begin{equation}\label{identities}
s_i=\frac{\partial P_i}{\partial T}=(P_i+\rho_i-\mu_i n_i)/T, \hspace{3mm} n_i=\frac{\partial P_i}{\partial \mu_i}.
\end{equation}
 Using \req{divTmn} and \req{identities}, we calculate $d/dt(a^3s)$ where $s=\sum_i s_i$ is the total entropy density.
\begin{align}\frac{1}{a^3}\frac{d}{dt}(a^3sT)&=\frac{1}{a^{3}}\frac{d}{dt}(a^3(P+\rho-\sum_i \mu_i n_i))\\
&=\dot{P}+\dot{\rho}-\sum_i \left(\dot{\mu_i}n_i+\mu_i\dot{n_i}\right)+3\left(P+\rho-\sum_i \mu_i n_i\right)\dot{a}/a\notag\\
&=\frac{\partial P}{\partial T} \dot{T}+\sum_i\frac{\partial P_i}{\partial \mu_i} \dot{\mu_i}-\sum_i \left(\dot{\mu_i}n_i+\mu_i\dot{n_i}+3\mu_i n_i \dot{a}/a\right)+\nabla_\mu \mathcal{T}^{\mu 0}\notag\\
&=s\dot{T}-\sum_i \left(\mu_i\dot{n_i}+3\mu_i n_i \dot{a}/a\right)\notag\\
&=s\dot{T}- a^{-3}\sum_i\mu_i\frac{d}{dt}(a^3n_i).
\end{align}
Therefore
\begin{align}\label{S_n_eq}
\frac{d}{dt}(a^3s)=&\frac{1}{T}\frac{d}{dt}(a^3sT)-a^3s\frac{\dot T}{T}=-\sum_i\sigma_i\frac{d}{dt}(a^3n_i).
\end{align}
If every particle is either in chemical equilibrium (i.e. $\sigma_i= 0$) or has frozen out chemically, and thus has a conserved comoving particle number, then this implies comoving entropy conservation.  

This observation completely fixes the dynamics of the system in the chemical or kinetic equilibrium regimes.  The dynamical quantities are the scale factor $a(t)$, the common temperature $T(t)$, and the fugacities of each particle species $\Upsilon_i(t)$ that is not in chemical equilibrium.  The dynamics are given by the Einstein equation, conservation of the total comoving entropy of all particle species, and conservation of comoving particle number for each species not in chemical equilibrium (otherwise $\Upsilon_i=1$ is constant)
\begin{equation}\label{eq_dynamics}
H^2=\frac{\rho}{3M_p^2}, \hspace{2mm} \frac{d}{dt}(a^3s)=0,\hspace{2mm} \frac{d}{dt}(a^3n_i)=0 \text{ when } \Upsilon_i\neq 1.
\end{equation}

\section{Key Results From our Study of Neutrino Freeze-out}\label{nu_freezeout_summary}
Using the dynamical equations \req{eq_dynamics} we studied the neutrino distribution after freeze-out under the instantaneous equilibrium approximation in the papers \cite{Birrell2013} and \cite{Birrell:2013_2}, attached as appendices \ref{app:chem_freezeout} and \ref{app:model_ind} respectively.  In these works we assumed that the chemical freeze-out occurs before reheating begins and hence the system is in kinetic but not chemical equilibrium from the beginning of reheating until kinetic freeze-out.  In \cite{Birrell2013} we showed numerically that this is the case for the reaction $e^+e^-\rightarrow \nu_e\bar\nu_e$ to high accuracy.  In  \cite{Birrell:2013_2} we characterized the dependence of the neutrino distribution on the kinetic freeze-out temperature $T_k$.  If one is interested in a model where the chemical freeze-out temperature also varies significantly, then the analysis presented in these papers should be repeated with both the chemical and kinetic freeze-out temperatures treated as free parameters. Below we give some of the key results from our analysis.

As discussed in  \cite{Birrell:2013_2}, a deviation from $\Upsilon=1$ and the reheating ratio in the decoupled limit, \req{T_nu_T_gamma}, is a necessary result of the transfer of entropy from the annihilating $e^\pm$ into neutrinos.  In that paper we used conservation laws to analytically derive an approximate relation between the fugacity $\Upsilon=e^\sigma$ and the photon to neutrino temperature ratio
\begin{align}\label{Upsilon_ratio}
\frac{T_\gamma}{T_\nu}&=a\Upsilon^{b}\left(1+c\sigma^2+O(\sigma^3)\right),\\
\label{value_a}
a&=\left(1+\frac{7}{8}\frac{g_{e^\pm}}{g_\gamma}\right)^{1/3}=\left(\frac{11}{4}\right)^{1/3}=R_\nu^{-1}\approx 1.4010,\\
\label{value_b}
b&\approx 0.367,\\
c&\approx -0.0209.
\end{align}
An approximate power law fit was first obtained numerically in \cite{Birrell2013}. In \cite{Birrell:2013_2} we also derived a relation between the effective number of neutrinos and the fugacity $\Upsilon=e^\sigma$ that results from neutrino freeze-out
\begin{equation}\label{N_nu_approx}
N_\nu=\frac{360}{7\pi^4}\frac{e^{-4b\sigma}}{(1+c\sigma^2)^4}\int_0^\infty \frac{u^3}{e^{u-\sigma}+1}du\left(1+O(\sigma^3)\right).
\end{equation}

\begin{figure}\label{fig:Tk_dependence}
\begin{minipage}{\linewidth}
\makebox[0.5\linewidth]%
{\includegraphics[height=5.8cm]{./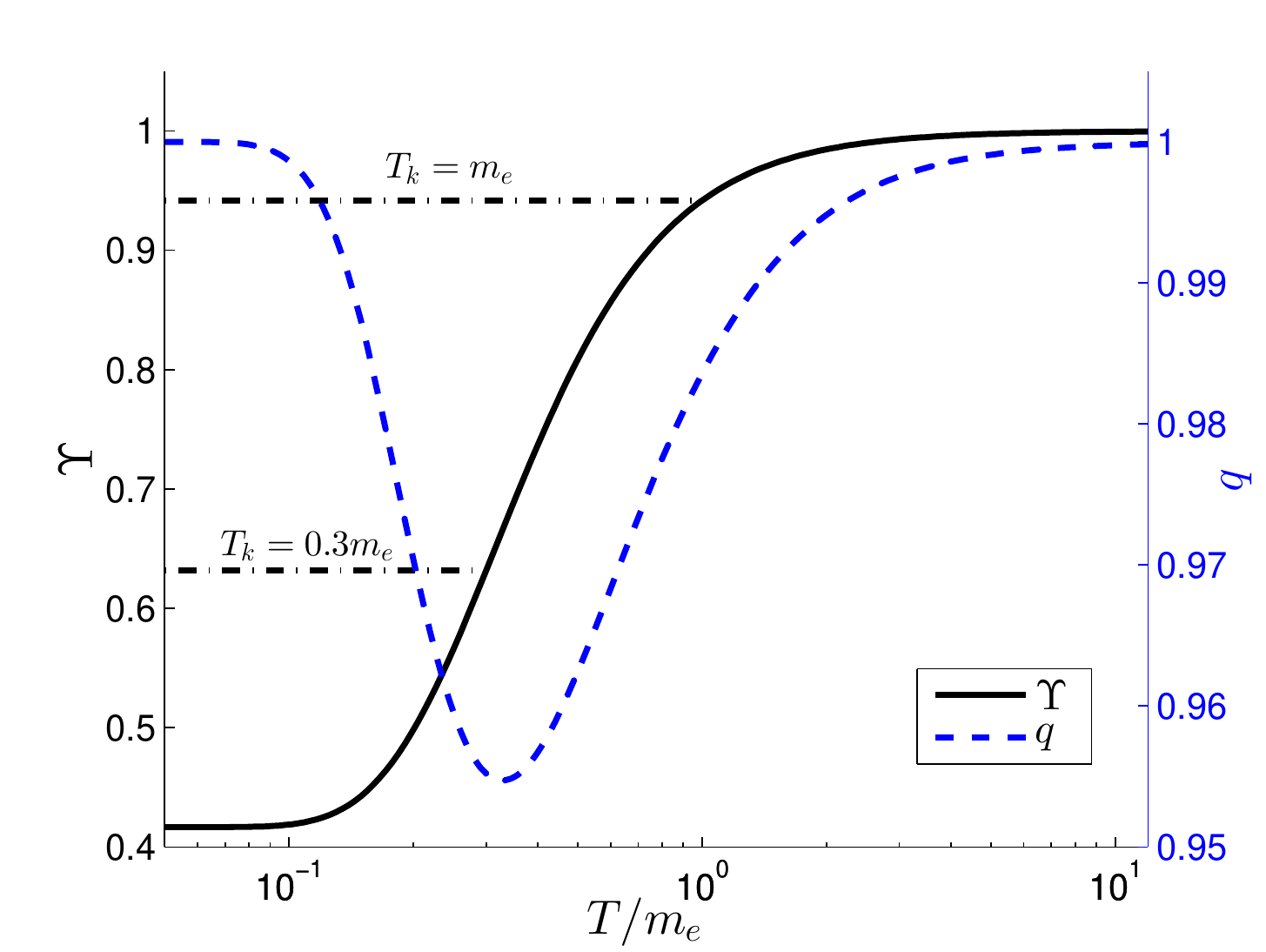}}
\makebox[0.5\linewidth]%
{\includegraphics[height=5.8cm]{./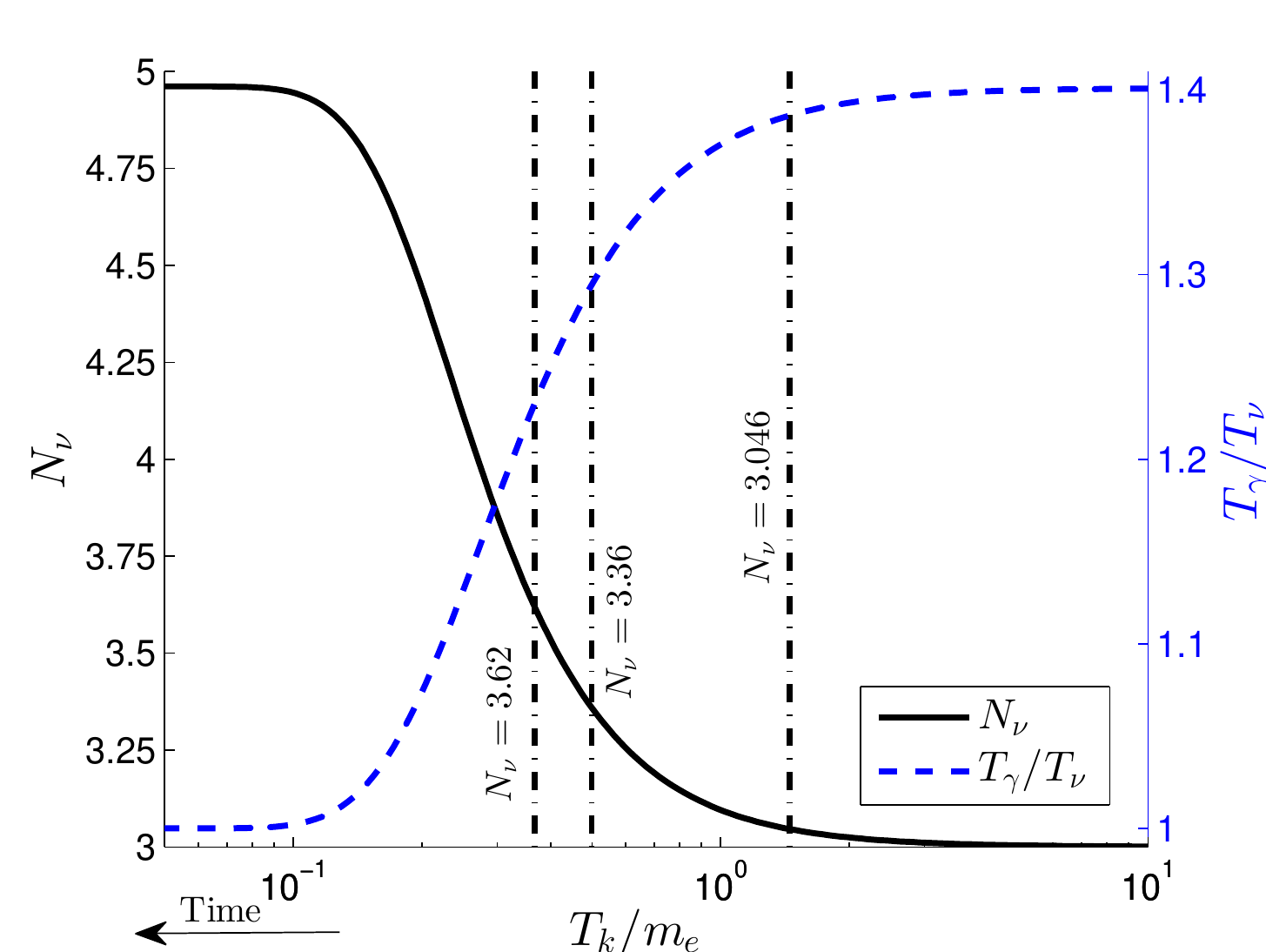}}
\caption{Dependence of neutrino fugacity (left) and effective number of neutrinos and reheating ratio (right) on the neutrino kinetic freeze-out temperature. We also show the evolution of the deceleration parameter through the freeze-out period (left).}
\end{minipage}
 \end{figure}

These two papers also contain several figures that show other relations relation between the quantities $T_k$, $N_\nu$, $\Upsilon$, and $T_\gamma/T_\nu$ for which we do not have simple analytic relations. In figure \ref{fig:Tk_dependence} we give slightly modified versions of two of these plots, showing the dependence of $N_\nu$, $\Upsilon$, and $T_\gamma/T_\nu$ on $T_k$.  In particular, the fugacity evolves following the solid black curve in the lefthand plot until it reaches the kinetic freeze-out temperature, at which point the neutrinos decouple and $\Upsilon$ remains constant thereafter, as shown in the dashed black curves for two sample values of $T_k$.

We showed in \cite{Birrell:2013_2} that after kinetic freeze-out, the free-streaming neutrino momentum distribution takes the form
\begin{equation}\label{neutrino_dist}
f(t,E)=\frac{1}{\Upsilon^{-1}e^{p/T_\nu}+ 1}
\end{equation}
where the neutrino effective temperature is redshifted as the universe expands
\begin{equation}\label{Tneutrino_dist}
T_\nu(t)\propto \frac{1}{a(t)}
\end{equation}
and the value of the fugacity  that developed during the freeze-out process is frozen into the distribution and remains constant while free-streaming. The resulting expressions for the energy density, pressure, and number density in the rest frame of the neutrino background are
\begin{align}
\rho&=\frac{g_\nu}{2\pi^2}\!\int_0^\infty\!\!\frac{\left(m_\nu^2+p^2\right)^{1/2}p^2dp }{\Upsilon^{-1}e^{p/T_\nu}+ 1},\label{neutrino_rho}\\[0.2cm]
P&=\frac{g_\nu}{6\pi^2}\!\int_0^\infty\!\!\frac{\left(m_\nu^2+p^2\right)^{-1/2}p^4dp }{\Upsilon^{-1} e^{p/T_\nu}+ 1},\label{neutrino_P}\\[0.2cm]
n&=\frac{g_\nu}{2\pi^2}\!\int_0^\infty\!\!\!\frac{p^2dp }{\Upsilon^{-1}e^{p/T_\nu}+ 1}.
\label{num_density}
\end{align}

  Finally, in  \cite{Birrell:2013_2} we presented for the first time a  physically consistent derivation of the equation of state of free-streaming neutrinos, including dependence on both $N_\nu$ and neutrino mass ($\beta=m_\nu/T_\gamma$). 
\begin{align}
&\rho^{EV}/\rho_0= N_\nu+0.1016\sum_i\beta_i^2+0.0015\delta N_\nu\sum_i\beta_i^2\notag\\
&-0.0001\delta N_\nu^2\sum_i\beta_i^2-0.0022\sum_i\beta_i^4,\\
&P^{EV}/P_0= N_\nu-0.0616\sum_i\beta_i^2-0.0049\delta N_\nu\sum_i\beta_i^2\notag\\
&+0.0005\delta N_\nu^2\sum_i\beta_i^2+0.0022\sum_i\beta_i^4.\label{tau_Ups}
\end{align}
The inclusion of fugacity was a crucial aspect in obtaining a physically consistent description, as it was in all of the above results.

\chapter{Neutrinos Today}\label{ch:nu_today}
Among the great science and technology challenges of this century is the development of the experimental capability to detect cosmic background neutrinos~\cite{Stodolsky:1975,Cabibbo:1982,Shvartsman,Langacker:1982,Smith,Ferreras:1995wf,Hagmann:1999kf,Duda:2001hd,Gelmini,Ringwald:2009,Liao:2012,Hedman}. With the  recently proposed PTOLEMY experiment, which aims to detect relic  elctron-neutrino capture by tritium \cite{PTOLEMY}, the characterization of the relic neutrino background is increasingly relevant.  Using our  characterization of the neutrino distribution after freeze-out and the subsequent free-streaming dynamics from \cite{Birrell:2013_2} and summarized in section \ref{nu_freezeout_summary}, we lay groundwork for a characterization of the present day relic neutrino spectrum, which we explore  from the  perspective of an observer moving relative to the neutrino background, including the dependence on neutrino mass and $N_\nu$. Beyond consideration of the observable neutrino distributions, we evaluate the ${\cal O}(G_F^2)$ mechanical drag force acting on the moving observer.  The work presented here can be found in our paper \cite{nu_today}

\section{Neutrino Distribution in a Moving Frame}

The neutrino background and the cosmic microwave background (CMB) were in equilibrium until decoupling (called freeze-out) at $T_k\simeq {\cal O}{\rm (MeV)}$, hence one surmises that an observer would have the same relative velocity relative to the relic neutrino background  as with CMB. As a particular example in considering the spectrum, we present in more detail the case of an observer comoving with  Earth velocity $v_\oplus=300$\,km/s relative to the CMB,  modulated by orbital velocity ($\pm29.8$\,km/s).  We will write velocities in units of $c$, though our specific results will be presented in km/s.

In the cosmological setting, for $T<T_k$ the neutrino spectrum evolves according to the well known Fermi-Dirac-Einstein-Vlasov (FDEV) free-streaming  distribution~\cite{Langacker:1982,bruhat,Wong,Birrell:2013_2}.  By casting it in a relativistically invariant form we can then make a transformation to the rest frame of an observer moving with relative velocity $v_{\text{rel}}$ and obtain
\begin{align}\label{neutrino_dist}
f(p^\mu)=&\frac{1}{\Upsilon^{-1} e^{\sqrt{(p^\mu U_\mu)^2-m_\nu^2}/T_\nu}+1}.
\end{align}
The 4-vector characterizing the rest frame of the neutrino FDEV distribution is
\begin{equation}\label{4_vel}
U^\mu=(\gamma,0,0,v_{\text{rel}}\gamma),\hspace{2mm} \gamma={1}/{\sqrt{1-v_{\text{rel}}^2}},
\end{equation} 
where we have chosen coordinates so that the relative motion is in the $z$-direction. The neutrino effective temperature $T_\nu(t)= T_k\,(a(t_k)/a(t))$ is the scale-shifted freeze-out temperature $T_k$. Here $a(t)$ is the cosmological scale factor where $\dot a(t)/a(t)\equiv H$ is the observable Hubble parameter. $\Upsilon$ is the  fugacity factor, here describing the underpopulation of neutrino phase space that was frozen into the neutrino FDEV distribution in the process of decoupling from the $e^\pm,\gamma$-QED background  plasma.

There are several available bounds on neutrino masses. Neutrino energy and pressure components are important before photon freeze-out and thus $m_\nu$ impacts Universe dynamics. The analysis of CMB data alone leads to $\sum_i m_\nu^i<0.66$eV ($i=e,\mu,\tau$) and including Baryon Acoustic Oscillation (BAO) gives $\sum m_\nu<0.23$eV~\cite{Planck}.  {\small PLANCK CMB} with lensing observations~\cite{Battye:2013xqa} lead to  $\sum m_{\nu}=0.32\pm0.081$ eV. Upper bounds have been placed on the electron neutrino mass in direct laboratory measurements  $m_{\bar\nu_e}<2.05$eV~\cite{Beringer:1900zz,Aseev}.   In the subsequent analysis we will focus on the neutrino mass range $0.05$eV to $2$eV in order to show that direct measurement sensitivity allows the exploration of a wide mass range.

 The relations derived in~\cite{Birrell:2013_2} and restated in section \ref{nu_freezeout_summary} determine $T_\nu/T_\gamma$ and  $\Upsilon$ in terms of the measured  value of  $N_\nu$ under the assumption of a strictly SM-particle inventory.  In the following we treat $N_\nu$  as a variable model parameter and use the above mentioned relations to characterize our results in terms of $N_\nu$.

\section{Velocity, Energy, and Wavelength Distributions}

Using \req{neutrino_dist}, the normalized FDEV velocity distribution for an observer in relative motion has the form
\begin{align} \label{fvdistrib}
&f_v=\frac{g_\nu}{n_\nu 4\pi^2}\!\!\!\int_0^\pi \!\!\!\!\frac{ p^2dp/dv\sin(\phi) d\phi}{\Upsilon^{-1}e^{\sqrt{( E-v_{\text{rel}} p \cos(\phi))^2\gamma^2-m_\nu^2}/T_\nu}+1},\notag\\
&p(v)=\frac{m_\nu v}{\sqrt{1-v^2}},\qquad \frac{dp}{dv}=\frac{m_\nu}{(1-v^2)^{3/2}}.
\end{align}
The normalization $n_\nu$ depends on $N_\nu$ but not on $m_\nu$ since decoupling occurred at $T_k\gg m_\nu$. For each neutrino flavor (all flavors are equilibrated by oscillations) we have, per neutrino or antineutrino and at non-relativistic relative velocity,
\begin{equation}\label{nnu}
n_\nu=[-0.3517\delta N_\nu^2+6.717\delta N_\nu+56.06]\,{\rm cm}^{-3}
\end{equation}
($\delta N_\nu\equiv N_\nu-3$), compare to Eq.(55) in Ref.~\cite{Birrell:2013_2}.

We show $f_v$ in figure \ref{fig:rel_v_dist_300}   for several values of the neutrino mass, $v_{\text{rel}}=300$ km/s, and $N_\nu=3.046$ (solid lines) and $N_\nu=3.62$ (dashed lines). As expected, the lighter the neutrino, the more $f_v$  is weighted towards higher velocities with the velocity becoming visibly peaked about $v_{\text{rel}}$ for $m_\nu=2$ eV. 
\begin{figure}
\centerline{\includegraphics[height=6cm]{./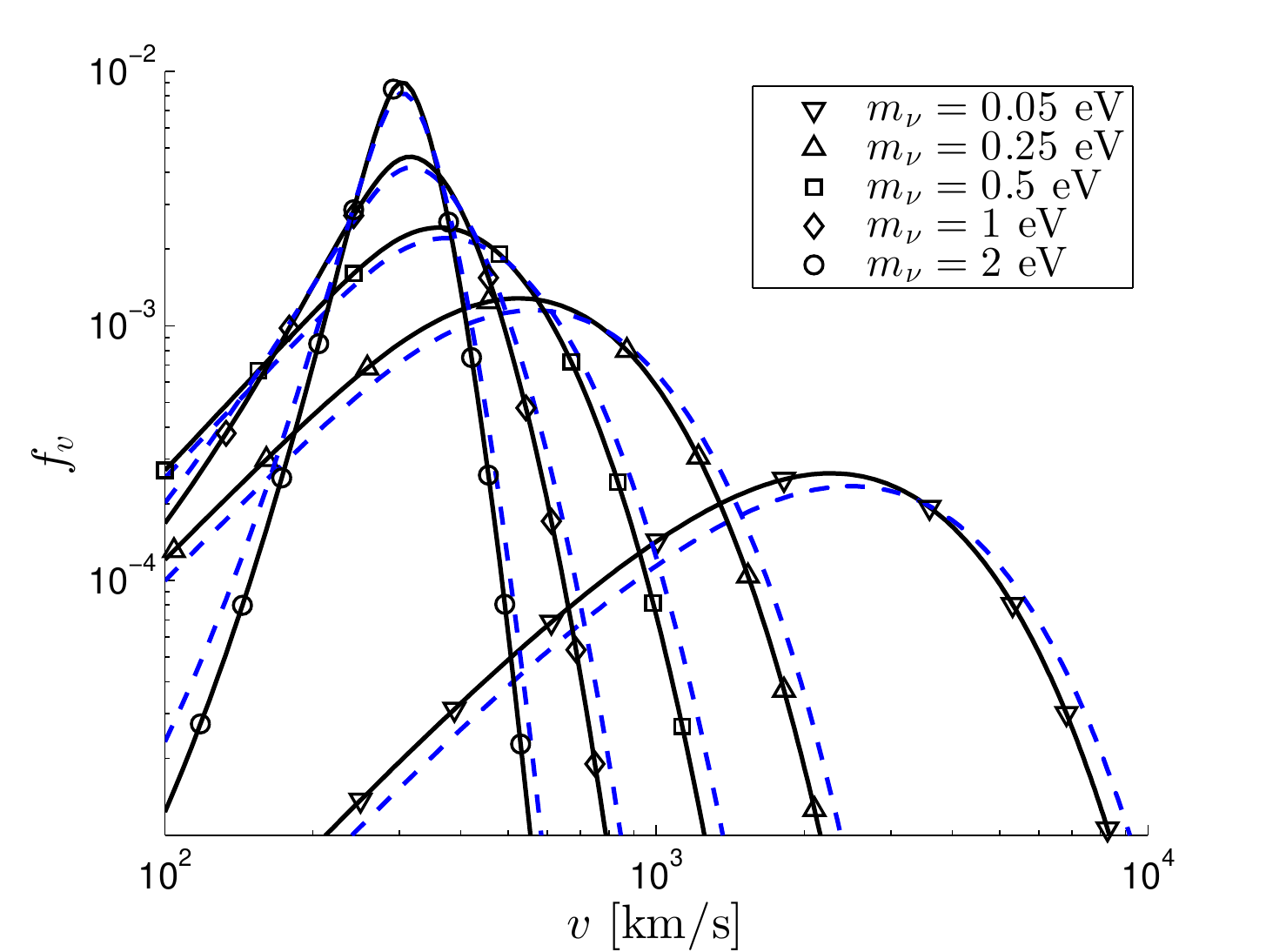}}
\caption{Normalized neutrino FDEV velocity distribution in the Earth frame. We show the distribution for $N_\nu=3.046$ (solid lines) and $N_\nu=3.62$ (dashed lines).}\label{fig:rel_v_dist_300}
 \end{figure}

A similar procedure produces the normalized FDEV energy distribution $f_E$.  In \req{fvdistrib} we replace $dp/dv\to dp/dE$ where it is understood that 
\begin{equation}
p(E)=\sqrt{E^2-m_\nu^2},\qquad \frac{dp}{dE}=\frac{E}{p}.
\end{equation}
We show $f_E$ in figure \ref{fig:E_dist_300}  for several values of the neutrino mass, $v_{\text{rel}}=300$ km/s, and $N_\nu=3.046$ (solid lines) and $N_\nu=3.62$ (dashed lines). The width of the FDEV energy distribution is on the micro-eV scale and the kinetic energy $T=E-m_\nu$ is peaked about $T=\frac{1}{2}m_\nu v_{\text{rel}}^2$, implying that the relative velocity between the Earth and the CMB is the dominant factor for $m_\nu>0.1$ eV.

\begin{figure}
\centerline{\includegraphics[height=6cm]{./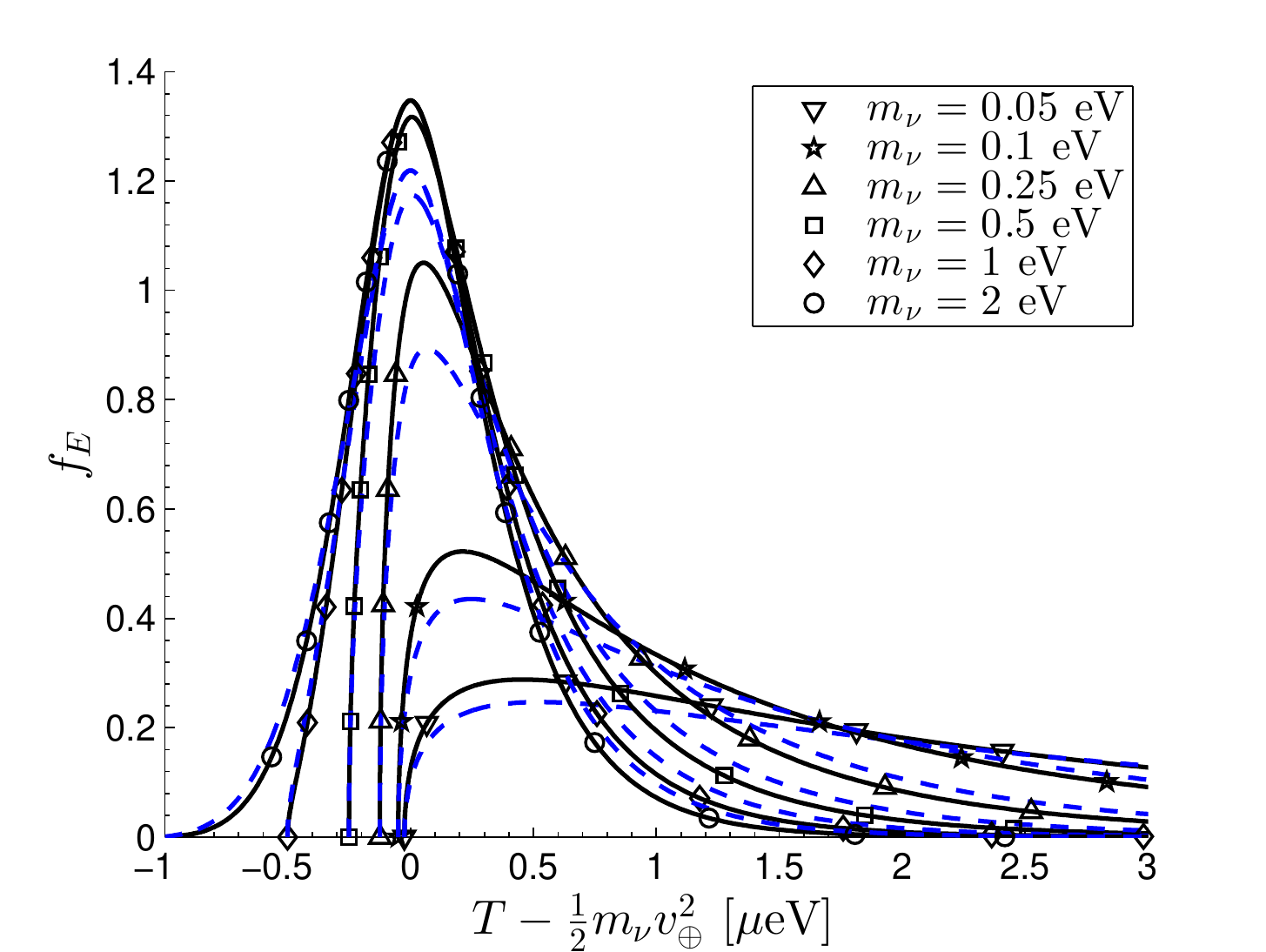}}
\caption{Neutrino FDEV energy distribution in the Earth frame. We show the distribution for $N_\nu=3.046$ (solid lines) and $N_\nu=3.62$ (dashed lines). }\label{fig:E_dist_300}
 \end{figure}

By multiplying $f_E$ by the neutrino velocity and number density for a single neutrino flavor (without anti-neutrinos) we obtain the particle flux density,
 \begin{equation}
 \frac{dJ}{dE}\equiv\frac{dn}{dAdtdE},
\end{equation} 
shown in figure \ref{fig:flux_dist}. We show the result for $N_\nu=3.046$ (solid lines) and $N_\nu=3.62$ (dashed lines). The flux is normalized in these cases to a local denisty $56.36$~cm${}^{-3}$ and $60.10$~cm${}^{-3}$ respectively. The precise neutrino flux in the Earth frame is significant for efforts to detect relic neutrinos, such as the PTOLEMY experiment~\cite{PTOLEMY}. The energy dependence of the flux shows a large sensitivity to the mass. However, the maximal fluxes do not vary significantly with $m$. In fact the maximum values are independent of $m$ when $v_{\text{rel}}=0$, as follows from the fact that $v=p/E=dE/dp$.  In the Earth frame, where $0<v_\oplus\ll c$, this translates into only a small variation in the maximal flux.
\begin{figure}
\centerline{\includegraphics[height=7cm]{./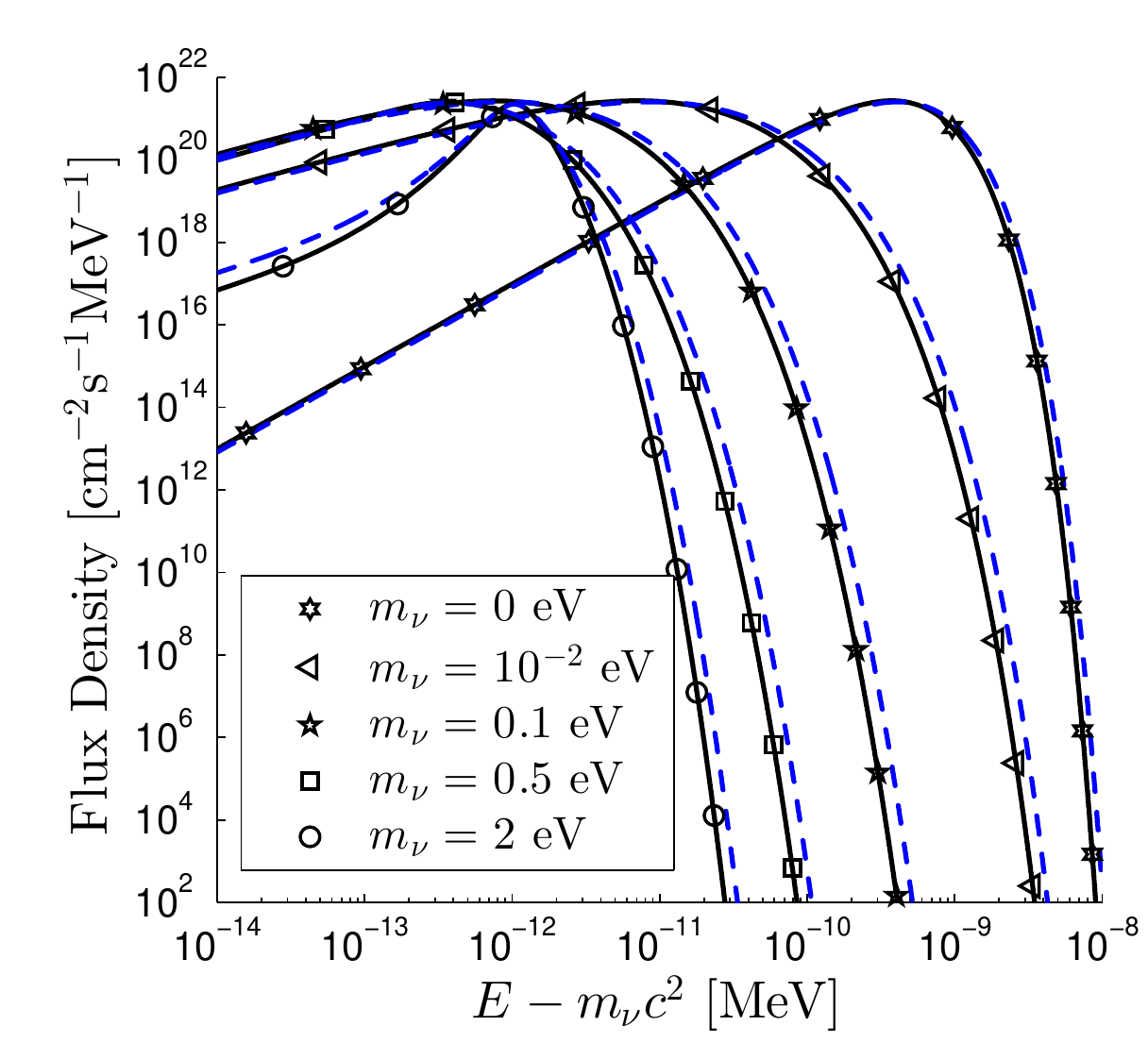}}
\caption{Neutrino flux density in the Earth frame. We show the result for $N_\nu=3.046$ (solid lines) and $N_\nu=3.62$ (dashed lines) for an observer moving with $v_\oplus=300$\,km/s.}\label{fig:flux_dist}
 \end{figure}

Using $\lambda=2\pi/p$ we find  in the the normalized FDEV de Broglie wavelength distribution
\begin{equation}
f_\lambda=\frac{ 2\pi g_\nu}{n_\nu\lambda^4}\!\!\int_0^\pi\!\!\! \!\frac{\sin(\phi) d\phi}{\Upsilon^{-1}e^{\sqrt{( E-v_{\text{rel}} p \cos(\phi))^2\gamma^2-m_\nu^2}/T_\nu}\!\!+\!1}
\end{equation}
shown in figure \ref{fig:deBrogle_300} for $v_{\text{rel}}=300$ km/s and for several values $m_\nu$ comparing  $N_\nu=3.046$ with $N_\nu=3.62$. 
\begin{figure}
\begin{minipage}{\linewidth}
\makebox[0.5\linewidth]%
{\includegraphics[height=5.8cm]{./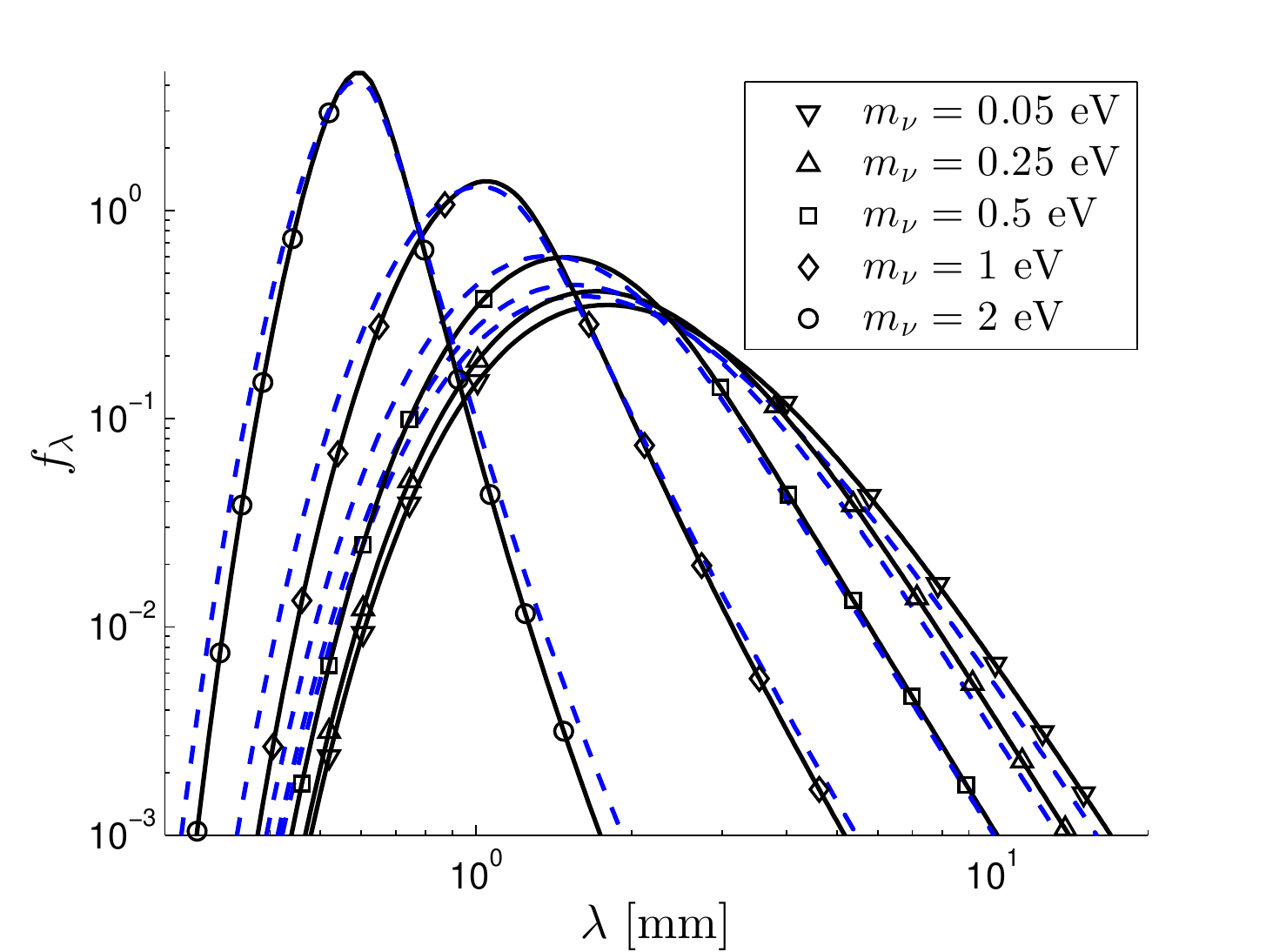}}
\makebox[0.5\linewidth]%
{\includegraphics[height=5.8cm]{./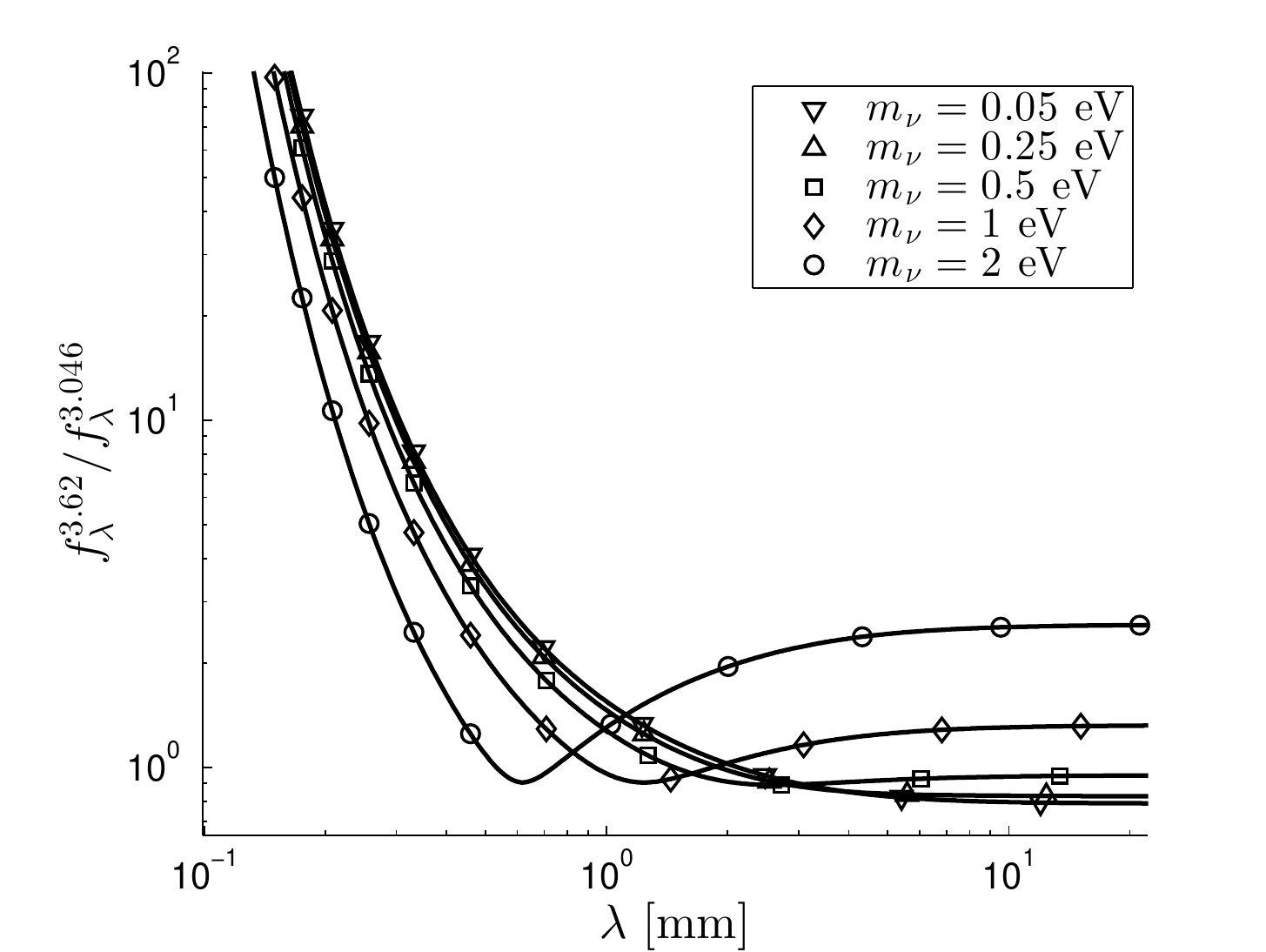}}
\end{minipage}
\caption{Neutrino  FDEV de Broglie wavelength  distribution in the Earth frame. We show in top panel the distribution for $N_\nu=3.046$ (solid lines) and $N_\nu=3.62$ (dashed lines) and in bottom panel their ratio.}\label{fig:deBrogle_300}
 \end{figure}

\section{Drag Force}
Given the neutrino distribution, we evaluate the drag force due to the anisotropy of the neutrino distribution in the rest frame of the moving object for $N_\nu=3.046$. The relic neutrinos will undergo potential scattering with the scale of the potential strength being
\begin{equation}\label{V0}
V_0=CG_F\rho_{N_c},\hspace{2mm} \rho_N\equiv N_c/R^3
\end{equation}
where $R$ is the linear size of the detector.  When the detector size is smaller than the quantum de Broglie wavelength of the neutrino, all scattering centers are added coherently to for the target effective `charge' $N_c$.  $\rho_{N_c}$ is the charge density, and C=O(1) and is depending on material composition of the object. Such considerations are of interest both for scattering from terrestrial detectors, as well as for ultra-dense objects of neutron star matter density, e.g.   strangelet  CUDOS~\cite{CUDO} - recall that such nuclear matter fragments with $R<\lambda$  despite their small size would have a mass rivaling that of large meteors. We find $V_0\simeq 10^{-13}$ eV for normal matter densities, but for nuclear target density a potential well with $V_0\simeq {\cal O}(10 {\rm eV})$.  

We consider relic neutrino potential scattering to obtain the average momentum transfer to the target and hence the drag force. The particle flux per unit volume in momentum space is
\begin{equation}\label{dn_quantum}
\frac{dn}{dtd Ad^3{\bf p}}({\bf p})=2f({\bf p})p/m_\nu,\hspace{2mm} p\equiv |{\bf p}|
\end{equation}
where the factor of two comes from combining neutrinos and anti-neutrinos of a given flavor. Our use of non-relativistic velocity is justified by figure \ref{fig:rel_v_dist_300}.  The recoil change in detector momentum per unit time is  
\begin{align}
\frac{d{\bf p}}{dt}=& \int  {\bf q}A \frac{dn}{dtdAd^3p}({\bf p})d^3p,\\
{\bf q}A\equiv &\int ({\bf p}-{\bf p_f})\frac{d\sigma}{ d\Omega}({\bf p_f},{\bf p})d\Omega.
\end{align}
Here ${\bf p}$ and ${\bf p}_f$, the incoming and outgoing momenta respectively, have the same magnitude. $qA$ is the momentum transfer times area, averaged over outgoing momenta, and $d\Omega$ is the solid angle for to ${\bf p}_f$.  

For a spherically symmetric potential the differential cross section depends only on the incoming energy and the angle $\phi$ between ${\bf p}$ and ${\bf p_f}$.  Therefore, for each ${\bf p}$ the integral over $d\Omega$ of the components orthogonal to ${\bf p}$ is zero by symmetry.  This implies
\begin{align}
{\bf q}A\equiv &2\pi{\bf p}\int(1-\cos(\phi))\frac{d\sigma}{ d\Omega}(p,\phi)\sin(\phi)d\phi.
\end{align}
The only angular dependence in the neutrino distribution is in ${\bf p}\cdot {\bf\hat z}$ and therefore the components of the force orthogonal to ${\bf \hat z}$ integrate to zero, giving
\begin{align}\label{drag}
\frac{d{\bf p}}{dt}=&\frac{8\pi^2{\bf\hat z}}{m_\nu } \int p^4g(p) f(p,\tilde\phi) \cos(\tilde\phi) \sin(\tilde\phi) dpd\tilde\phi,\\
g(p)\equiv& \int_0^\pi(1-\cos(\phi))\frac{d\sigma}{ d\Omega}(p,\phi)\sin(\phi)d\phi.\label{geq}
\end{align}

For the case of normal density matter, the Born approximation is valid due to the weakness of the potential compared to the neutrino energy seen in figure \ref{fig:E_dist_300}. To obtain an order of magnitude estimate, we take a Gaussian potential \begin{equation}
V(r)=V_0e^{-r^2/R^2}
\end{equation}
for which the differential cross section in the Born approximation can be analytically evaluated
\begin{align}
&\frac{d\sigma}{ d\Omega}(p,\phi)=\frac{\pi m_\nu^2V_0^2R^6}{4}e^{-q^2R^2/2},\notag\\
&q=|{\bf p}-{\bf p}_f|=2p\sin(\phi/2).
\end{align}
The integral over $\phi$ in \req{geq} can also be done analytically,
\begin{align}
g(p)=&\pi m_\nu^2V_0^2R^6\frac{1-(2R^2p^2+1)e^{-2R^2p^2}}{4R^4p^4}.
\end{align}
 In the long and short wavelength limit we have 
\begin{align}
\label{long_wave_drag}
&g(p)\simeq\frac{\pi}{2} m_\nu^2V_0^2R^6,\quad pR\ll 1\\
&F_L\simeq 4\pi^3m_\nu V_0^2R^6 \int p^4 f(p,\tilde\phi) \cos(\tilde\phi) \sin(\tilde\phi) dpd\tilde\phi\notag\\
\label{short_wave_drag}
&g(p)\simeq\frac{\pi m_\nu^2V_0^2R^2}{4p^4}, 
\quad pR\gg 1\\
&F_S\simeq 2\pi^3 m_\nu V_0^2R^2 \int f(p,\tilde\phi) \cos(\tilde\phi) \sin(\tilde\phi) dpd\tilde\phi\notag.
\end{align}
 We also note that in the short wavelength limit, our coherent scattering treatment is only applicable to properly prepared structured targets \cite{Liao:2012}.

Inserting \req{V0} we see that this force is $O(G_F^2)$, see also \cite{Shvartsman,Smith,Gelmini}, as compared to the $O(G_F)$ effects debated in  \cite{Opher,Lewis,Opher2,Cabibbo:1982,Langacker:1982,Smith,Ferreras:1995wf}. In long wavelength limit the size $R$ cancels, in favor of $N_c^2$ which explicitly shows that scattering is on the square of the charges of the target. This results in an enhancement of the force by a factor of $N_c$ over the incoherent scattering case, due to $V_0^2$ scaling with $N_c^2$. This effect exactly parallels the proposed detection of supernovae MeV energy scale neutrinos by means of collisions with the entire atomic nucleus~\cite{Divari:2012zz}.

Fits to the integrals in the above force formulas \req{long_wave_drag} and \req{short_wave_drag} can be obtained in the region $0.005 \text{eV}\leq m_\nu\leq 0.25\text{eV}$, $v_\text{rel}\leq 300$km/s, yielding
\begin{align}\label{F_L}
F_L\!=&2\,10^{-31}{\rm N}\!\left(\!\frac{m_\nu}{0.1 \text{eV}}\!\right)^{\!\!2}\!\! \left(\!\frac{V_0}{1\text{peV}}\!\right)^{\!\!2}\!\!\left(\!\frac{R}{1\text{mm}}\!\right)^{\!\! 6} \!\!\frac{v_{\text{rel}}}{v_\oplus},\\[0.2cm]
F_S=&4\, 10^{-33}{\rm N}\!\left(\!\frac{m_\nu}{0.1\text{eV}}\!\right)^{\!\!2}\! \left(\!\frac{V_0}{1\text{peV}}\!\right)^{\!\!2}\!\!\left(\!\frac{R}{1\text{mm}}\!\right)^{\!\!2}\times\notag \\
&\hspace*{1.5cm}\times\frac{v_{\text{rel}}}{v_\oplus}\!\left(\!1\!-\!0.2\frac{m_\nu}{0.1\text{eV}}\frac{v_{\text{rel}}}{v_\oplus}\right).
\end{align}
We emphasize that they are not valid in the limit as $m_\nu\rightarrow 0$. Considering that the current frontier of precision force measurements at the level of individual ions is on the order of $10^{-24}$N \cite{Biercuk}, the ${\cal O}(G_F^2)$ force on a coherent mm-sized terrestrial detector is negligible, despite the factor of $N_c$ enhancement. 

We now consider scattering from nuclear matter density $\rho_N\simeq 3 \, 10^{8}{\rm kg/mm}^3$ objects where $V_0={\cal O}(10\text{eV})$ is effectively infinite compared to the neutrino energy unless the object velocity relative to the neutrino background is ultra-relativistic.  Therefore we are in the hard `ball' scattering limit. As with the analysis for normal matter density, we will investigate both the long and short wavelength limits. In the long wavelength limit, only the S-wave contributes to hard sphere scattering and $d\sigma/d\Omega=R^2$, independent of angle. Using \req{drag} and a similar fit to \req{F_L} gives
\begin{align}\label{FLHard}
F_L=&\frac{16\pi^2R^2}{m_\nu} \int p^4 f(p,\tilde\phi) \cos(\tilde\phi) \sin(\tilde\phi) dpd\tilde\phi\\
\simeq &\, 4\, 10^{-20}{\rm N}\left(\frac{R}{1\text{mm}}\right)^2\frac{v_{\text{rel}}}{v_\oplus}.
\end{align}
In particular the force is independent of $m_\nu$.  We also note that at high velocity, \req{FLHard}  underestimates the drag force. The resulting acceleration is
\begin{equation}
a=10^{-28}\frac{m}{s^2}\frac{v_{\text{rel}} }{v_\oplus}\! \left(\frac{R}{1\text{mm}}\right)^{-1}\!\!\left(\frac{\rho}{\rho_N}\right)^{-1}.\!\!
\end{equation}
The Newtonian drag time constant, $v_{\rm rel}/a$, is
\begin{equation}
\tau= 10^{26}\text{yr}\frac{R}{1\text{mm}}\,\frac{\rho}{\rho_N}\!
\end{equation}
which suggests that the compact object produced early on in stellar evolution remain largely unaltered.

The last case to consider is the short wavelength hard sphere scattering limit.  This limit is classical and so we no longer treat it as quantum mechanical potential scattering, but rather as elastic scattering of point particle neutrinos from a hard sphere of radius $R$.  For a single scattering event where the component of the momentum normal to the sphere is ${\bf p}^\perp=({\bf p}\cdot \hat {\bf r}) \hat {\bf r}$, the change in particle momentum is  $\Delta {\bf p}=-2{\bf p}^\perp$. The particle flux per unit volume in momentum space at a point ${\bf r}$ on a radius $R$ sphere $S_R^2$ and inward pointing momentum ${\bf p}$ (i.e. ${\bf p}\cdot \hat{\bf  r}<0$) is
\begin{equation}\label{dn_classical}
\frac{dn}{dtd Ad^3{\bf p}}({\bf x},{\bf p})=2f({\bf p})|{\bf v}\cdot \hat{\bf r}|
\end{equation}
where the factor of two comes from combining neutrinos and anti-neutrinos of a given flavor.  Note that for point particles the flux is proptional to the normal component of the velocity, as opposed to wave scattering where it is proportional to the magnitude of the velocity, seen in \req{dn_quantum}.

Using \req{dn_classical}, the recoil change in momentum per unit time is  
\begin{equation}
\frac{d{\bf p}}{dt}= -\int_{{\bf p}\cdot \hat{\bf r}<0} \!\!\!\!\!\!2\Delta {\bf p}  f({\bf p}) \frac{1}{m_\nu}|{\bf p}\cdot \hat{\bf r}| d^3{\bf p}R^2d\Omega.
\end{equation}
The only angular dependence in $f$ is through ${\bf p}\cdot \hat {\bf z}$ so by symmetry, the ${\bf \hat x}$ and ${\bf \hat y}$ components integrate to $0$.  Therefore we have
\begin{equation}
\frac{d{\bf p}}{dt}=\frac{-4R^2\hat{\bf z}}{m_\nu}\int_{{\bf p}\cdot \hat{\bf r}<0}\!\!\!   f({\bf p}) ({\bf p}\cdot \hat{\bf r})^2 \hat {\bf r}\cdot\hat{\bf z}\, d^3{\bf p}d\Omega.  
\end{equation}

We perform this integration in spherical coordinates for ${\bf r}$ and in the spherical coordinate vector field basis for ${\bf p}=p_r\hat{\bf r} +p_\theta\hat{\bf r}_\theta+p_\phi\hat{\bf r}_\phi,\hspace{2mm} p_r<0$, where we recall
\begin{align}
&\hat{\bf r}=\cos  \theta \sin \phi \, \hat{\bf x}+\sin \theta \sin \phi \hat{\bf y}+\cos  \phi\,\hat{\bf z}\notag,\\
&\hat{\bf r}_\theta=-\sin \theta \hat{\bf x}+\cos \theta \hat{\bf y},\\
&\hat{\bf r}_\phi=\cos \theta \cos \phi\,\hat{\bf x}+\sin \theta \cos \phi \,\hat{\bf y}-\sin \phi \, \hat{\bf z}\notag.
\end{align}
Therefore the force per unit surface area is
\begin{align}\label{drag_eq}
\frac{1}{A}\frac{d{\bf p}}{dt}=&\frac{-2}{m_\nu}\int_0^\pi\!\!\int_{p_r<0}\!\!\!\!\!\!\!\! f({\bf p}) p_r^2  d^3{\bf p}\cos \phi \sin \phi  d\phi\hat{\bf z},\notag\\
f({\bf}p)=&\frac{1}{  \Upsilon^{-1}e^{\sqrt{( E-V_\oplus {\bf p}\cdot \hat {\bf z})^2\gamma^2-m_\nu^2}/T_\nu}+1 },\\
          &{\bf p}\cdot\hat{\bf z}=p_r\cos \phi-p_\phi\sin\phi .\notag
\end{align}

We obtain an approximation over the range $v_{\text{rel}}\leq v_\oplus ;\ 0.05\text{eV}\leq m_\nu\leq 0.25\text{eV}$ given by
\begin{align}\label{drag_fit}
&F_S =  10^{-20} \text{N}\left(\frac{R}{1{\rm mm}}\right)^2\frac{v_{\text{rel}}}{ v_\oplus}.
\end{align}
This is a similar result to the long wavelength hard sphere limit \req{FLHard}, but the fact that it is only applicable to objects larger than the neutrino wavelength means that the acceleration it generates is negligible on the timescale of the Universe.

\section{Summary and Conclusion}
In summary, we have characterized the relic cosmic neutrinos and their velocity, energy, and de Broglie wavelength distributions in a frame of reference moving relative to the neutrino background. We have shown explicitly the mass $m_\nu$ dependence and the dependence on neutrino reheating expressed by $N_\nu$, choosing a range within the experimental constraints. This is a necessary input for the measurement of $N_\nu$ and neutrino mass by future detection efforts.  Finally, we have discussed in detail the $O(G_F^2)$ mechanical drag force  originating in the dipole anisotropy induced by motion relative to the neutrino background.  Despite enhancement with the total target charge found within the massive neutrino wavelength, the magnitude of the force is found to be well below the reach of current  precision force measurements.

Our results are derived under the assumption that $N_\nu$ is due entirely to SM neutrinos, with no contribution from new particle species. In principle future, relic neutrino detectors, such as PTOLEMY~\cite{PTOLEMY}, will be able to distinguish between these alternatives since the effect of $N_\nu$ as presented here is to increase neutrino flux~\cite{Birrell:2013_2}, see \req{nnu}. However, to this end one must gain precise control over the enhancement of neutrino galactic relic density due to  gravitational effects~\cite{Ringwald:2004np} as well as the annual modulation~\cite{Safdi}.

\part{General Relativistic Boltzmann Equation and Neutrino Freeze-out }
\chapter{Geometry Background: Volume Forms on Submanifolds}\label{ch:vol_forms}
We now begin our transition from chemical and kinetic equilibrium matter models to full nonequilibrium.  This requires the machinery of general relativistic kinetic theory.  Before we make that transition, in this chapter we address the purely geometrical problem of inducing volume forms on submanifolds, a problem that will be of interest to us in concrete computations later on and is necessary for the definition of kinetic theory in its usual form. This chapter is much more mathematical than the remainder of this dissertation and, when standard, we use geometrical language and notation here without further explanation.  See for example \cite{lee2003introduction,lee1997riemannian}.  We found this formalism to be a great aid, especially in our development of an improved method for computing scattering integrals, presented in chapter \ref{ch:coll_simp}.  However, if one is content with simply using the results then this chapter is non-essential. See also our paper \cite{Birrell:2014uka}.

\section{Inducing Volume Forms on Submanifolds}

Given a Riemannian manifold $(M,g)$ with volume form $dV_g$ and a  hypersurface $S$, the standard Riemannian hypersurface area form, $dA_g$ is defined on $S$ as the volume form of the pullback metric tensor on $S$.   Given vectors $v_1,...,v_k$ we define the interior product (i.e. contraction)  operator acting on a form $\omega$ of degree $n\geq k$ as the $n-k$ form 
\begin{equation}
i_{(v_1,...,v_k})\omega=\omega(v_1,...,v_k,\cdot).
\end{equation}
 With this notation, the hypersurface area form can equivalently be computed as
\begin{equation}
dA_g=i_v dV_g
\end{equation}
where $v$ is a unit normal vector to $S$.  This method extends to submanifolds of codimension greater than one as well as to semi-Riemannian manifolds, as long as the metric restricted to the submanifold is non-degenerate. 

However, there are many situations where one would like to define a natural volume form on a submanifold that is induced by a volume form in the ambient space, but where the above method is inapplicable, such as defining a natural volume form on the light cone or other more complicated degenerate submanifolds in relativity. In this section, we will describe a method for inducing volume forms on regular level sets of a function that is applicable in cases where there is no metric structure and show its relation to more widely used semi-Riemannian case.  We prove analogues of the coarea formula and Fubini's theorem in this setting. 

Let $M$, $N$ be smooth manifolds, $c$ be a regular value of a smooth function $F:M\rightarrow N$, and $\Omega^M$ and $\Omega^N$ be volume forms on $M$ and $N$ respectively.  Using this data, we will be able to induce a natural volume form on the level set $F^{-1}(c)$.  The absence of a metric on $M$ is made up for by the additional information that the function $F$ and volume form $\Omega^N$ on $N$ provide. The following proposition makes our definition precises and proves the existence and uniqueness of the induced volume form.

\begin{proposition}\label{induced_vol_form}
Let $M$, $N$ be $m$ (resp. $n$)-dimensional smooth manifolds with volume forms $\Omega^M$ (resp. $\Omega^N$). Let $F:M\rightarrow N$ be smooth and $c$ be a regular value.  Then there is a unique volume form $\omega$  (also denoted $\omega^M$) on $F^{-1}(c)$ such that $\omega_x=i_{(v_1,...,v_n)}\Omega^M_x$ whenever $v_i\in T_xM$ are such that 
\begin{equation}\label{unit_volume}
\Omega^N(F_*v_1,...,F_* v_n)=1.
\end{equation}
We call $\omega$ the {\bf volume form induced by $F:(M,\Omega^M)\rightarrow (N,\Omega^N)$}.
\end{proposition}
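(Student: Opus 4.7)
The plan is to define $\omega$ pointwise, verify that the definition is independent of the choice of $v_i$, is smooth, is nowhere zero, and is uniquely determined. Since $c$ is a regular value, $F^{-1}(c)$ is a smooth embedded submanifold of codimension $n$, and at every $x \in F^{-1}(c)$ one has $T_x F^{-1}(c) = \ker F_{*,x}$. Surjectivity of $F_{*,x}$ guarantees the existence of vectors $v_1,\dots,v_n \in T_xM$ satisfying the normalization $\Omega^N(F_*v_1,\dots,F_*v_n) = 1$, so the prescription
\begin{equation*}
\omega_x(u_1,\dots,u_{m-n}) \;=\; \Omega^M_x(v_1,\dots,v_n,u_1,\dots,u_{m-n}), \qquad u_i \in T_x F^{-1}(c),
\end{equation*}
makes sense as a candidate $(m-n)$-form on $T_x F^{-1}(c)$.

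The main technical step is to show this value is independent of the choice of $v_i$. Given a second such family $v_1',\dots,v_n'$, one has $F_*v_i' = \sum_j A_{ij} F_*v_j$ with $A \in SL(n)$, since both tuples satisfy the same normalization against $\Omega^N$, forcing $\det A = 1$. Thus $v_i' = \sum_j A_{ij} v_j + w_i$ with $w_i \in \ker F_{*,x} = T_x F^{-1}(c)$. Expanding $\Omega^M_x(v_1',\dots,v_n',u_1,\dots,u_{m-n})$ multilinearly, any summand containing some $w_i$ inserts a vector from the $(m-n)$-dimensional subspace $T_x F^{-1}(c)$ into one of the first $n$ slots while the last $m-n$ slots already hold $u_1,\dots,u_{m-n} \in T_x F^{-1}(c)$; antisymmetry of $\Omega^M$ forces such a summand to vanish by a simple dimension count. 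What remains is $\det(A)\,\Omega^M_x(v_1,\dots,v_n,u_1,\dots,u_{m-n})$, which equals the original expression since $\det A = 1$.

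For smoothness and nonvanishing, I would work locally: near any point of $F^{-1}(c)$, choose smooth vector fields $V_1,\dots,V_n$ on $M$ whose pushforwards form a frame satisfying $\Omega^N(F_*V_1,\dots,F_*V_n) = 1$ (start from any local frame along $F$ and rescale by the positive smooth factor produced by $\Omega^N$). Then $i_{(V_1,\dots,V_n)}\Omega^M$ is a smooth $(m-n)$-form on a neighborhood in $M$, and by the well-definedness already shown its restriction to $F^{-1}(c)$ realizes $\omega$; local pieces match on overlaps and assemble to a smooth global form on $F^{-1}(c)$. Non-vanishing follows because if $\omega_x$ vanished on some basis $\{u_i\}$ of $T_xF^{-1}(c)$, then $\Omega^M_x$ would vanish on the basis $\{v_1,\dots,v_n,u_1,\dots,u_{m-n}\}$ of $T_xM$, contradicting that $\Omega^M$ is a volume form. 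Uniqueness is immediate: the defining property pins down $\omega_x$ on every basis of $T_xF^{-1}(c)$, so any two forms satisfying it must coincide.

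The main obstacle I anticipate is the bookkeeping in the well-definedness argument, and the conceptual point worth highlighting is that the absence of a metric on $M$ is compensated for by a pure dimension count: inserting more than $m-n$ vectors of $T_xF^{-1}(c)$ into $\Omega^M$ forces vanishing, which is exactly what kills the $w_i$ corrections and leaves only the $SL(n)$-invariant leading term.
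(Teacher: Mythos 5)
Your proposal is correct and follows essentially the same route as the paper's proof: the same pointwise definition by contraction, the same $SL(n)$ change-of-basis argument in which the $\ker F_*$ corrections are killed by antisymmetry and a dimension count, and the same local-frame construction for smoothness (the paper divides $i_{(V_1,\dots,V_n)}\Omega^M$ by the nonvanishing smooth function $\Omega^N(F_*V_1,\dots,F_*V_n)$ rather than normalizing the frame first, but this is the same idea). No gaps.
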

\begin{proof}
$F_*$ is onto $T_{F(x)}N$ for any $x\in F^{-1}(c)$.  Hence there exists $\{v_i\}_1^n\subset T_xM$ such that $\Omega^N(F_*v_1,...,F_* v_n)=1$.  In particular, $F_* v_i$ is a basis for $T_{F(x)} N$.  Define $\omega_x=i_{(v_1,...,v_n)}\Omega_x$. This is obviously a nonzero $m-n$ form on $T_xF^{-1}(c)$ for each $x\in F^{-1}(c)$.  We must show that this definition is independent of the choice of $v_i$ and the result is smooth.\\

 Suppose $F_*v_i$ and $F_*w_i$ both satisfy \req{unit_volume}.  Then $F_*v_i=A_i^jF_*w_j$ for $A\in SL(n)$. Therefore $v_i-A_i^jw_j\in \ker F_{*x}$.  This implies
\begin{equation}
i_{(v_1,...,v_n)}\Omega^M_x=\Omega^M_x(A_1^{j_1}w_{j_1},...,A_n^{j_n}w_{j_n},\cdot)
\end{equation}
since the terms involving $\ker F_*$ will vanish on $T_x F^{-1}(c)=\ker F_{*x}$.  Therefore
\begin{align}\label{ind_of_v_proof}
i_{(v_1,...,v_n)}\Omega^M_x&=A_1^{j_1}...A_n^{j_n}\Omega^M_x(w_{j_1},...,w_{j_n},\cdot)\\
&=\sum_{\sigma\in S_n} \pi(\sigma)A_1^{\sigma(1)}...A_n^{\sigma(n)}\Omega^M_x(w_1,...,w_n,\cdot)\\
&=\det(A)i_{(w_1,...,w_n)}\Omega^M_x\\
&=i_{(w_1,...,w_n)}\Omega^M_x.
\end{align}
This proves that $\omega$ is independent of the choice of $v_i$.  If we can show $\omega$ is smooth then we are done.  We will do better than this by proving that for any  $v_i\in T_xM$ the following holds
\begin{equation}
i_{(v_1,...,v_n)}\Omega^M_x=\Omega^N(F_*v_1,...,F_*v_n)\omega_x.
\end{equation}
To see this, take $w_i$ satisfying \req{unit_volume}.  Then $F_*v_i=A_i^j F_*w_j$. This determinant can be computed from
\begin{align}
\Omega^N(F_*v_1,...,F_*v_n)=\det(A)\Omega^N(F_*w_1,...,F_*w_n)=\det(A).
\end{align}
 Therefore, the same computation as \req{ind_of_v_proof} gives
\begin{align}
i_{(v_1,...,v_n)}\Omega^M_x=\det(A)\omega_x=\Omega^N(F_*v_1,...,F_*v_n)\omega_x
\end{align}
as desired.  To prove that $\omega$ is smooth, take a smooth basis of vector fields $\{V_i\}_1^m$ in a neighborhood of $x$.  After relabeling, we can assume $\{F_*V_i\}_1^n$ are linearly independent at $F(x)$ and hence, by continuity, they are linearly independent at $F(y)$ for all $y$ in some neighborhood of $x$.  In that neighborhood, $\Omega^N(F_*V_1,...,F_*V_n)$ is non-vanishing and therefore
\begin{equation}
\omega=(\Omega^N(F_*V_1,...,F_*V_n))^{-1}i_{(V_1,...,V_n)}\Omega
\end{equation} 
which is smooth.
\end{proof}

\begin{corollary}\label{induced_vol_eq}
For any  $v_i\in T_xM$ the following holds
\begin{equation}\label{vol_formula1}
i_{(v_1,...,v_n)}\Omega^M_x=\Omega^N(F_*v_1,...,F_*v_n)\omega_x.
\end{equation}
\end{corollary}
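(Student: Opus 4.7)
My plan is to recognize that this identity was essentially already established inside the proof of Proposition~\ref{induced_vol_form}, but only under the normalization assumption $\Omega^N(F_*v_1,\dots,F_*v_n)=1$. The corollary drops that assumption, so the task is to extend the identity to arbitrary $v_1,\dots,v_n\in T_xM$, and I would split into two cases according to whether $F_*v_1,\dots,F_*v_n$ are linearly independent in $T_{F(x)}N$.

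In the nondegenerate case, I would pick auxiliary vectors $w_1,\dots,w_n\in T_xM$ satisfying the unit-volume condition \req{unit_volume}, which exist by the surjectivity of $F_{*x}$ used in the proposition. Since both $\{F_*v_i\}$ and $\{F_*w_i\}$ are bases of $T_{F(x)}N$, there is a matrix $A$ with $F_*v_i = A_i^j F_*w_j$, and multilinearity of $\Omega^N$ gives $\det A = \Omega^N(F_*v_1,\dots,F_*v_n)$. The key observation is $v_i-A_i^j w_j\in\ker F_{*x}=T_xF^{-1}(c)$, so when the interior product $i_{(v_1,\dots,v_n)}\Omega^M_x$ is evaluated on vectors lying in $T_xF^{-1}(c)$, every contribution involving the difference $v_i-A_i^j w_j$ vanishes by antisymmetry (repeated argument in $T_xF^{-1}(c)$). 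Expanding out the antisymmetrized sum over permutations then yields the factor $\det A$ and reduces to $i_{(w_1,\dots,w_n)}\Omega^M_x$, which by definition is $\omega_x$; this is precisely the computation displayed around \req{ind_of_v_proof}, just carried out for general $v_i$ rather than a second unit-volume basis.

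In the degenerate case, the right-hand side vanishes because $\Omega^N(F_*v_1,\dots,F_*v_n)=0$, so it suffices to show the left-hand side vanishes as a form on $T_xF^{-1}(c)$. A nontrivial linear relation $\sum c_i F_*v_i=0$ says $\sum c_i v_i\in\ker F_{*x}=T_xF^{-1}(c)$; solving for one of the $v_i$ (say $v_1$, assuming $c_1\ne 0$) as an element of $T_xF^{-1}(c)$ plus a linear combination of the remaining $v_j$, substituting into $\Omega^M_x(v_1,\dots,v_n,w_1,\dots,w_{m-n})$ with $w_k\in T_xF^{-1}(c)$, and using antisymmetry kills the combination-of-$v_j$ terms, leaving one top-form evaluation with $m-n+1$ arguments all lying in the $(m-n)$-dimensional subspace $T_xF^{-1}(c)$, which must vanish.

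No step is really an obstacle; the main subtlety is simply to remember throughout that both sides are to be interpreted as $(m-n)$-forms on $T_xF^{-1}(c)$ (not on all of $T_xM$), which is what licenses discarding $\ker F_{*x}$-components in the first case and forces the dimension count in the second. Once this is kept in mind, the proof is a clean two-case argument extending the computation already present in Proposition~\ref{induced_vol_form}.
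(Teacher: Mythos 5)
Your proposal is correct and follows essentially the same route as the paper: the paper proves this identity inside the proof of Proposition~\ref{induced_vol_form} by writing $F_*v_i=A_i^jF_*w_j$ for a unit-volume basis $\{w_j\}$ and repeating the computation of \req{ind_of_v_proof} to extract $\det(A)=\Omega^N(F_*v_1,\dots,F_*v_n)$. The only difference is that your separate degenerate case is unnecessary, since that computation does not require $A$ to be invertible and already yields $\det(A)=0$ when the $F_*v_i$ are linearly dependent.
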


\begin{corollary}
If $\phi:M\rightarrow\mathbb{R}$ is smooth and $c$ is a regular value then by equipping $\mathbb{R}$ with its canonical volume form we have 
\begin{equation}
\omega_x=i_v\Omega^M_x
\end{equation}
where $v\in T_xM$ is any vector satisfying $d\phi(v)=1$.
\end{corollary}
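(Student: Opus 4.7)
The plan is to recognize this as a direct specialization of Proposition \ref{induced_vol_form} to the case $N = \mathbb{R}$, $F = \phi$, equipped with the canonical volume form $\Omega^N = dt$. First I would verify that the hypotheses of the proposition are satisfied: $c$ is by assumption a regular value of $\phi$, and $\mathbb{R}$ is a smooth $1$-dimensional manifold with the standard volume form, so the machinery applies with $n = 1$.

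Next I would translate the condition $\Omega^N(F_* v_1, \dots, F_* v_n) = 1$ from Proposition \ref{induced_vol_form} into the statement $d\phi(v) = 1$. For any $v \in T_xM$, the pushforward $\phi_* v \in T_{\phi(x)}\mathbb{R}$ is given by $\phi_* v = d\phi(v)\,\partial_t|_{\phi(x)}$, so $\Omega^N(\phi_* v) = dt\bigl(d\phi(v)\,\partial_t\bigr) = d\phi(v)$. Hence the normalization condition \eqref{unit_volume} specializes precisely to $d\phi(v) = 1$.

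With this translation in hand, Proposition \ref{induced_vol_form} immediately yields $\omega_x = i_v \Omega^M_x$ for any such $v$, completing the proof. There is no real obstacle here: the corollary is essentially a restatement of the proposition in the codimension-one real-valued case, and all of the work (existence, uniqueness, independence of the choice of $v$, smoothness) has already been carried out in the proof of Proposition \ref{induced_vol_form}. If I wanted to avoid invoking the full proposition, I could instead appeal to Corollary \ref{induced_vol_eq}, which gives the slightly stronger identity $i_v \Omega^M_x = d\phi(v)\,\omega_x$ for arbitrary $v \in T_xM$, and specialize to $d\phi(v) = 1$.
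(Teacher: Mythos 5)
Your proof is correct and matches the paper's intent exactly: the paper states this corollary without proof as an immediate specialization of Proposition \ref{induced_vol_form} (equivalently Corollary \ref{induced_vol_eq}) to $N=\mathbb{R}$ with $\Omega^N=dt$, and your translation of the normalization condition \eqref{unit_volume} into $d\phi(v)=1$ via $\phi_*v=d\phi(v)\,\partial_t$ is precisely the omitted step.
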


It is useful to translate  \req{vol_formula1} into a form that is more readily applicable to computations in coordinates.  Choose arbitrary coordinates $y^i$ on $N$ and write $\Omega^N=h^N(y) dy^n$. Choose coordinates $x^i$ on $M$ such that $F^{-1}(c)$ is the coordinate slice
\begin{equation}
F^{-1}(c)=\{x:x^1=...=x^n=0\}
\end{equation}
and write $\Omega^M=h^M(x)dx^m$. The coordinate vector fields $\partial_{x^i}$ are transverse to $F^{-1}(c)$ and so
\begin{equation}
\Omega^N(F_*\partial_{x^1},...,F_*\partial_{x^n})=h^N(F(x))\det \left(\frac{\partial F^i}{\partial x^j}\right)_{i,j=1..n}
\end{equation}
and
\begin{equation}
i_{(\partial_{x^1},...,\partial_{x^n})}\Omega^M=h^M(x) dx^{n+1}...dx^m.
\end{equation}
Therefore we obtain
\begin{equation}\label{vol_form_coords}
\omega_x=\frac{h^M(x)}{h^N(F(x))}\det \left(\frac{\partial F^i}{\partial x^j}\right)^{-1}_{i,j=1..n}dx^{n+1}...dx^m.
\end{equation}

Just like in the (semi)-Riemannian case, the induced measure allows us to prove a coarea formula where we break integrals over $M$ into slices. In this theorem and the remainder of the section, we consider integration with respect to the density defined by any given volume form i.e. we ignore the question of defining consistent orientations.
\begin{theorem}[Coarea formula]\label{vol_form_coarea}
Let $M$ be a smooth manifold with volume form $\Omega^M$, $N$ a smooth manifold with volume form $\Omega^N$ and $F:M\rightarrow N$ be a smooth map.  If $F_*$ is surjective at a.e. $x\in M$ then for $f\in L^1(\Omega^M)\bigcup L^+(M)$
\begin{equation}\label{coarea_formula}
\int_Mf(x) \Omega^M(dx)=\int_{N}\int_{F^{-1}(z)} f(y)\omega^M_z(dy) \Omega^N(dz)
\end{equation}
where $\omega^M_z$ is the volume form induced on $F^{-1}(z)$ as in lemma \ref{induced_vol_form}.
\end{theorem}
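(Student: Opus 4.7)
The plan is to reduce the global identity to a purely local computation by means of a partition of unity and then exploit the rank theorem to put $F$ into a normal form. First, let $U\subset M$ be the open set of regular points of $F$; by hypothesis $\Omega^M(M\setminus U)=0$, so the left-hand side of \req{coarea_formula} equals $\int_U f\,\Omega^M$. By Sard's theorem applied in charts, the set of critical values $F(M\setminus U)$ is $\Omega^N$-negligible (the volume form $\Omega^N$ is absolutely continuous with respect to Lebesgue measure in any chart), so the integrand $z\mapsto\int_{F^{-1}(z)}f\,\omega^M_z$ on the right-hand side need only be defined for regular values, which is an $\Omega^N$-a.e.\ statement.

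Next I would cover $U$ by a countable, locally finite collection of coordinate charts $\{V_\alpha\}$ and pick a subordinate partition of unity $\{\chi_\alpha\}$. By linearity and monotone convergence it suffices to verify \req{coarea_formula} for the nonnegative pieces $\chi_\alpha f$, each supported in a single chart $V_\alpha$. Shrinking $V_\alpha$ if necessary, the submersion form of the rank theorem yields coordinates $(x^1,\dots,x^m)$ on $V_\alpha$ and $(y^1,\dots,y^n)$ on a neighborhood of $F(V_\alpha)$ in which $F$ becomes the projection $F(x)=(x^1,\dots,x^n)$. In these coordinates the slice $F^{-1}(z)\cap V_\alpha$ is $\{x^1=z^1,\dots,x^n=z^n\}$.

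Writing $\Omega^M=h^M(x)\,dx^1\wedge\cdots\wedge dx^m$ and $\Omega^N=h^N(y)\,dy^1\wedge\cdots\wedge dy^n$, formula \req{vol_form_coords} applied to this projection (whose $n\times n$ Jacobian is the identity) gives
\begin{equation*}
\omega^M_z=\frac{h^M(z^1,\dots,z^n,x^{n+1},\dots,x^m)}{h^N(z)}\,dx^{n+1}\wedge\cdots\wedge dx^m
\end{equation*}
on the slice. Substituting into the right-hand side of \req{coarea_formula}, the factor $h^N(z)$ cancels the density $h^N(y)$ coming from $\Omega^N$, leaving
\begin{equation*}
\int_{\mathbb{R}^n}\!\int_{\mathbb{R}^{m-n}}(\chi_\alpha f)(x)\,h^M(x)\,dx^{n+1}\cdots dx^{m}\,dx^{1}\cdots dx^{n},
\end{equation*}
which by Fubini's theorem equals $\int_{V_\alpha}\chi_\alpha f\,\Omega^M$, the corresponding local contribution to the left-hand side. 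Summing over $\alpha$ delivers \req{coarea_formula} for nonnegative $f\in L^+(M)$; the $L^1$ case follows by splitting $f=f^+-f^-$ and using that each piece has finite integral.

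The main obstacle is measure-theoretic bookkeeping rather than geometry: one must verify that $z\mapsto\int_{F^{-1}(z)}f\,\omega^M_z$ is $\Omega^N$-measurable (which follows in the adapted charts from Fubini-Tonelli applied to $h^M\chi_\alpha f$) and confirm that the Sard-type exceptional sets on both sides are null for the respective volume forms. Everything else reduces to the normal-form computation above, where the induced volume form of Proposition~\ref{induced_vol_form} is engineered so that Fubini on $\mathbb{R}^m=\mathbb{R}^n\times\mathbb{R}^{m-n}$ produces the coarea identity with no Jacobian correction.
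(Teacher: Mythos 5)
Your proposal is correct and follows essentially the same route as the paper's proof: restrict to the open set of regular points (justified a.e.\ by hypothesis and Sard's theorem), use the rank theorem to obtain adapted product charts with a subordinate partition of unity, identify the induced slice volume form in those coordinates, and conclude by Fubini. The only cosmetic difference is that you invoke the explicit coordinate formula \req{vol_form_coords} for $\omega^M_z$ where the paper re-derives the same density factorization $|g^M|=|g^M_z\,g^N|$ directly from Corollary \ref{induced_vol_eq}.
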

\begin{proof}
First suppose $F$ is a submersion. By the rank theorem there exists a countable collection of charts $(U_i,\Phi_i)$ that cover $M$ and corresponding charts $(V_i,\Psi_i)$ on $N$ such that 
\begin{align}
\Psi_i\circ F\circ \Phi_i^{-1}(y^1,...,y^{m-n},z^1,...,z^n)=(z^1,...,z^n).
\end{align}
Let $\sigma_i$ be a partition of unity subordinate to $U_i$.  For each $i$ and $z$ we have $\Phi_i(U_i\cap F^{-1}(z))=\left(\mathbb{R}^{m-n}\times\{\Psi_i(z)\}\right)\cap \Phi_i(U_i)$.  We can assume that the $\Phi_i(U_i)=U_i^1\times U_i^2\subset \mathbb{R}^{m-n}\times \mathbb{R}^n$ and therefore each $\Phi_i$ is a slice chart for $F^{-1}(z)$ for all $y$ such that $F^{-1}(z)\cap U_i\neq \emptyset$.  In other words, $\Phi_i(U_i\cap F^{-1}(z))= U_i^1\times \{\Psi(z)\}$.  This lets us compute the left and right hand sides of \req{coarea_formula} for $f\in L^+(M)$
\begin{align}
\int_Mf(x) \Omega^M(dx)&=\sum_i\int_{U_i}(\sigma_if)(x) \Omega^M(dx)\\
&=\sum_i\int_{\Phi_i(U_i)}(\sigma_if)\circ \Phi^{-1}(y,z) \Phi^{-1*}\Omega^M(dy,dz)\\
&=\sum_i\int_{\Phi_i(U_i)}(\sigma_if)\circ \Phi^{-1}(y,z)|g^M(y,z)| dy^{m-n}dz^n\\
&=\sum_i\int_{U_i^2}\left[\int_{U_i^1}(\sigma_if)\circ \Phi^{-1}(y,z)|g^M(y,z)| dy^{m-n}\right]dz^n\\
&\text{where }\Omega^M=g^M dy^1\wedge...\wedge dy^{m-n}\wedge dz^1\wedge...\wedge dz^n.
\end{align}
and
\small
\begin{align}
&\int_{N}\int_{F^{-1}(z)} f(y)\omega^M_z(dy) \Omega^N(dz)\\
=&\sum_i \int_{N}\left[\int_{\Phi_i(U_i\cap F^{-1}(z))} (\sigma_if)\circ\Phi_i^{-1}(y,\Psi(z))\Phi_i^{-1*}\omega^M_z(dy)\right] \Omega^N(dz)\\
=&\sum_i \int_{V_i}\left[\int_{\Phi_i(U_i\cap F^{-1}(z))} (\sigma_if)\circ\Phi_i^{-1}(y,\Psi(z))\Phi_i^{-1*}\omega^M_z(dy)\right] \Omega^N(dz)\\
=&\sum_i \int_{\Psi_i(V_i)}\left[\int_{\Phi_i(U_i\cap F^{-1}(\Psi^{-1}(z))} (\sigma_if)\circ\Phi_i^{-1}(y,z)\Phi_i^{-1*}\omega^M_z(dy)\right] \Psi^{-1*}\Omega^N(dz)\\
=&\sum_i \int_{U_i^2}\left[\int_{U_i^1\times \{z\}} (\sigma_if)\circ\Phi_i^{-1}(y,z)|g^M_z(y)| dy^{m-n}\right] |g^N(z)| dz^n\\
&\text{where }\omega^M_z=g^M_z dy^1\wedge...\wedge dy^{m-n} \text{ and }\Omega^N=g^N dz^1\wedge...\wedge dz^n \text{ for } g_1^M,g_N>0.
\end{align}
\normalsize

Therefore, if we can show $|g^M(y,z)|=|g_z^M(y)g^N(z)|$ on $U_i^1\times U_i^2$ we are done. From corollary \ref{induced_vol_eq} we have
\begin{equation}
(-1)^{n(m-n)} g^M(y,z)=\Omega^M(\partial_{z^1},...,\partial_{z^n},\partial_{y^1},...,\partial_{y^{m-n}})=\Omega^N(F_*\partial_{z^n},...,F_*\partial_{z^n})g_z^M(y).
\end{equation}
Since $\Psi\circ F\circ\Phi^{-1}=\pi_2$ we have $F_*\partial_{z^j}=\partial_{z_j}$ and so $\Omega^N(F_*\partial_{z^n},...,F_*\partial_{z^n})=g^N$ which completes the proof in the case where $F$ is a submersion.  The generalization to the case where $F_*$ is surjective a.e. follows from Sard's theorem and the fact that the set of $x\in M$ at which $F_*$ is surjective is open.

\end{proof}
\section{Comparison to Riemannian Coarea Formula}
We now recall the classical coarea formula for semi-Riemannian metrics, see \cite{chavel1995riemannian} for example,  and give its relation to theorem \ref{vol_form_coarea}.
\begin{definition}
Let $F:(M,g)\rightarrow (N,h)$ be a smooth map between semi-Riemannian manifolds.  The {\bf normal Jacobian} of $F$ is
\begin{equation}
NJF(x)=|\det(F_*|_x(F_*|_x)^T)|^{1/2}
\end{equation}
where where $(F_*|_x)^T$ denotes the adjoint map $T_xN\rightarrow T_xM$ obtained pointwise from the pullback $T^*N\rightarrow T^*M$ combined with the tangent-cotangent bundle isomorphisms defined by the metrics.
\end{definition}

\begin{lemma}
The normal Jacobian has the following properties.
\begin{itemize}
\item $(F_*|_x)^T:T_{F(x)}N\rightarrow (\ker F_*|_x)^\perp$.
\item If $F_*|_x$ is surjective then $(F_*|_x)^T$ is 1-1.
\item In coordinates
\begin{equation}
NJF(x)=\left|\det\left(h_{ik}(F(x))\frac{\partial F^k}{\partial x^l}(x)g^{lm}(x)\frac{\partial F^j}{\partial x^m}(x)\right)\right|^{1/2}.
\end{equation}
\item  If $F_*|_x$ is surjective and $g$ is nondegenerate on $ker F_*|_x$ then $F_*|_x(F_*|_x)^T$ is invertible.
\item If $c\in N$ is a regular value of $F$ and $g$ is nondegenerate on $F^{-1}(c)$ then $NJF(x)$ is non-vanishing and smooth on $F^{-1}(c)$.
\end{itemize}
\end{lemma}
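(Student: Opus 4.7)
The plan is to work through the five claims in order, unpacking the definition of the adjoint map $g(v,(F_*|_x)^T w) = h(F_*|_x v, w)$ and leveraging the direct-sum decomposition that nondegeneracy of $g|_{\ker F_*|_x}$ provides. Parts (1) and (2) are quick: for (1), if $v \in \ker F_*|_x$ then $h(F_*|_x v, w) = 0$ for every $w\in T_{F(x)}N$, hence $g(v,(F_*|_x)^T w) = 0$, which identifies $(F_*|_x)^T w$ as an element of $(\ker F_*|_x)^\perp$; for (2), surjectivity of $F_*|_x$ combined with nondegeneracy of $h$ forces any $w$ in the kernel of $(F_*|_x)^T$ to vanish, since $h(u,w)=0$ for all $u\in T_{F(x)}N$ then implies $w=0$.

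For (3) I would argue by a direct coordinate computation. Representing $F_*|_x$ by the Jacobian $\partial F^k/\partial x^l$, the $T^*N\to T^*M$ pullback is the transpose of this matrix, while the tangent-cotangent isomorphisms on $N$ and $M$ contribute the tensors $h_{ik}(F(x))$ and $g^{lm}(x)$. Reading off the composition gives the matrix representation of $(F_*|_x)^T$, and then composing once more with $F_*|_x$ produces precisely the matrix inside the determinant in the statement. The formula then follows immediately from $NJF(x) = |\det(F_*|_x (F_*|_x)^T)|^{1/2}$.

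For (4), the key structural fact is that when $g|_{\ker F_*|_x}$ is nondegenerate one has $\ker F_*|_x \cap (\ker F_*|_x)^\perp = \{0\}$, and since $\dim(\ker F_*|_x)+\dim(\ker F_*|_x)^\perp = \dim T_xM$ (nondegeneracy of $g$ on $T_xM$), a dimension count yields $T_xM = \ker F_*|_x \oplus (\ker F_*|_x)^\perp$. Combining (1) and (2) with this decomposition promotes $(F_*|_x)^T$ to a linear isomorphism $T_{F(x)}N \to (\ker F_*|_x)^\perp$ (the image has the right dimension $\dim T_{F(x)}N = \dim T_xM - \dim\ker F_*|_x$), while surjectivity of $F_*|_x$ makes its restriction to $(\ker F_*|_x)^\perp$ an isomorphism onto $T_{F(x)}N$; the composition is therefore invertible. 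Part (5) is then an immediate corollary: regularity of $c$ supplies the surjectivity hypothesis at every $x\in F^{-1}(c)$, the nondegeneracy hypothesis lets us invoke (4), and so the determinant inside $NJF$ never vanishes on $F^{-1}(c)$; smoothness is automatic from the coordinate formula in (3) wherever this determinant is non-zero. The main subtle point will be the direct-sum decomposition step, since in the semi-Riemannian category nondegeneracy of $g$ on $T_xM$ does not automatically descend to arbitrary subspaces, so the explicit nondegeneracy hypothesis on $\ker F_*|_x$ is essential and cannot be dropped.
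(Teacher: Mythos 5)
Your proposal is correct. Note that the paper itself states this lemma without proof, so there is nothing to compare against; your argument is the natural one and fills that gap. Each step checks out: parts (1) and (2) follow directly from the defining relation $g(v,(F_*|_x)^T w)=h(F_*|_x v,w)$ together with nondegeneracy of $h$; the coordinate formula in (3) is exactly the composition $\sharp_g\circ F^*\circ\flat_h$ followed by $F_*$ (and the cyclic reordering of factors inside the determinant is harmless since $\det(AB)=\det(BA)$); and the decomposition $T_xM=\ker F_*|_x\oplus(\ker F_*|_x)^\perp$ in (4), which you correctly identify as the one step that genuinely needs the nondegeneracy hypothesis on $\ker F_*|_x$ in the semi-Riemannian setting, is precisely what promotes the two injections to isomorphisms and makes $F_*|_x(F_*|_x)^T$ invertible. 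Part (5) then follows as you say, with smoothness read off from the coordinate expression wherever the determinant is nonzero.
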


Combining these lemmas with the rank theorem, one can prove the standard semi-Riemannian coarea formula
\begin{theorem}[Coarea formula]
Let $F:(M,g)\rightarrow (N,h)$ be a smooth map between semi-Riemannian manifolds such that $F_*$ is surjective at a.e. $x\in M$ and $g$ is nondegenerate on $F^{-1}(c)$ for a.e $c\in N$.  Then for $\phi\in L^1(dV_g)$ we have
\begin{equation}
\int_M\phi(x)dV_g=\int_{y\in N}\int_{x\in F^{-1}(y)}\frac{\phi(x)}{NJF(x)}dA_g dV_h
\end{equation}
where $dA_g$ is the volume measure induced on $F^{-1}(y)$ by pulling back the metric $g$.  In particular, if $N=\mathbb{R}$ with its canonical metric then $NJF=|\nabla F|$ and 
\begin{equation}
\int_M \phi dV_g=\int_\mathbb{R}\int_{F^{-1}(r)}\frac{\phi(x)}{|\nabla F(x)|} dA_g dr.
\end{equation}
\end{theorem}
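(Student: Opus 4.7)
The plan is to reduce this theorem to the general Coarea formula (Theorem \ref{vol_form_coarea}) applied with $\Omega^M = dV_g$ and $\Omega^N = dV_h$. With that substitution, the only remaining task is to identify, pointwise on each regular level set $F^{-1}(c)$, the induced volume form $\omega^M_c$ of Proposition \ref{induced_vol_form} with $dA_g/NJF$. By Sard's theorem together with the assumption on nondegeneracy, the set of $c \in N$ where either $c$ is a critical value or $g$ fails to be nondegenerate on $F^{-1}(c)$ is $dV_h$-null, so it suffices to establish this identification at points where both conditions hold. I would work throughout with densities (positive measures) rather than oriented forms, to sidestep orientability issues on the various pieces of $M$, $N$, and $F^{-1}(c)$.

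Fix such a regular value $c$ and a point $x \in F^{-1}(c)$. Since $g$ is nondegenerate on $T_xF^{-1}(c) = \ker F_*|_x$, we have the $g$-orthogonal splitting $T_xM = \ker F_*|_x \oplus (\ker F_*|_x)^\perp$. Choose a $g$-orthonormal basis $\{e_1,\ldots,e_{m-n}\}$ of $\ker F_*|_x$, a $g$-orthonormal basis $\{e_{m-n+1},\ldots,e_m\}$ of the complement, and an $h$-orthonormal basis $\{\tilde e_1,\ldots,\tilde e_n\}$ of $T_{F(x)}N$. Write $F_*e_{m-n+j} = A^i_j\tilde e_i$. Corollary \ref{induced_vol_eq} applied to the transverse $n$-tuple $(e_{m-n+1},\ldots,e_m)$ gives
\begin{equation}
i_{(e_{m-n+1},\ldots,e_m)}dV_g|_x = dV_h(F_*e_{m-n+1},\ldots,F_*e_m)\,\omega^M_c|_x.
\end{equation}
Evaluating both sides on the basis $(e_1,\ldots,e_{m-n})$ of $T_xF^{-1}(c)$ and passing to densities, the left-hand side equals $|dV_g|(e_1,\ldots,e_m) = 1 = |dA_g|_x(e_1,\ldots,e_{m-n})$, while the right-hand side equals $|\det(A)|\,|\omega^M_c|_x(e_1,\ldots,e_{m-n})|$. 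As top-dimensional densities on $T_xF^{-1}(c)$ are determined by their value on one orthonormal basis, this yields $|\omega^M_c|_x| = |dA_g|_x|/|\det(A)|$.

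It remains to identify $|\det(A)|$ with $NJF(x)$. Using the defining relation $g(v,F_*^Tw) = h(F_*v,w)$ on the chosen orthonormal bases, a direct computation shows the matrix of $F_*|_{(\ker F_*)^\perp}\circ F_*^T$ acting on $T_{F(x)}N$ equals $AEA^TH$, where $E$ and $H$ are the diagonal signature matrices associated to the chosen bases. Hence $|\det(F_*F_*^T)| = \det(A)^2$ and $NJF(x) = |\det(A)|$. Substituting $\omega^M_c = dA_g/NJF$ into Theorem \ref{vol_form_coarea} yields the claimed identity. For the special case $N=\mathbb{R}$, the adjoint pulls back the unit $1\in T_c^*\mathbb{R}$ to the gradient, so $F_*^T(1) = \nabla F$ and $NJF = |g(\nabla F,\nabla F)|^{1/2} = |\nabla F|$.

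The main obstacle is the careful bookkeeping of the signature factors $\epsilon_i,\eta_j \in \{\pm 1\}$ that appear at every step in the semi-Riemannian setting: both the contraction identity from Corollary \ref{induced_vol_eq} and the adjoint formula acquire signs depending on the signature of $g$ transverse to $F^{-1}(c)$ and of $h$ on $N$. The key point is that these signs enter only through the determinants of the diagonal matrices $E$ and $H$, which are $\pm 1$, and hence drop out once one passes to absolute values. The geometric content of the identity — that an infinitesimal slab of width $dt$ about a regular level set has volume $dA_g\cdot dt/|\nabla F|$ in the Riemannian model case — is what drives the calculation; the rest is signature tracking.
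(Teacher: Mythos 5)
Your proposal is correct and follows essentially the same route the paper takes: the paper states the theorem with only the remark that it follows from the normal-Jacobian lemmas and the rank theorem, but the substantive content is supplied by the immediately following proposition, which proves exactly your key identification $\omega^M_c = NJF^{-1}\,dA_g$ (as densities) via Corollary \ref{induced_vol_eq}, and combining that with Theorem \ref{vol_form_coarea} is precisely your argument. The only difference is presentational: you work with a $g$-orthonormal frame adapted to the splitting $\ker F_*\oplus(\ker F_*)^\perp$ and the matrix $A$ of $F_*$ on the transverse factor, showing $NJF=|\det A|$ by tracking the signature matrices $E$, $H$ through the adjoint, whereas the paper chooses adapted coordinates with $\partial_{x^i}=v_i$ and reads off $NJF=|\det(\partial F^j/\partial x^i)|=|dV_h(F_*v_1,\ldots,F_*v_n)|$; the signature bookkeeping and the conclusion are the same in both versions.
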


The relation between the Riemannian coarea theorem and theorem \ref{vol_form_coarea} follows from the following proposition.
\begin{proposition}
Let $F:(M,g)\rightarrow (N,h)$  be a smooth map between semi-Riemannian manifolds and $c$ be a regular value.  Suppose $g$ is nondegenerate on $F^{-1}(c)$.  Let $\omega$ be the volume form on $F^{-1}(c)$ induced by $F:(M,dV_g)\rightarrow (N,dV_h)$.  Then
\begin{equation}
\omega=NJF^{-1}dA_g
\end{equation}
as densities.
\end{proposition}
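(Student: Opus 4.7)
The plan is to prove this pointwise at an arbitrary $x \in F^{-1}(c)$ by choosing compatible orthonormal bases adapted to the splitting $T_xM = \ker F_*|_x \oplus (\ker F_*|_x)^\perp$, which exists precisely because $g$ is nondegenerate on $\ker F_*|_x = T_x F^{-1}(c)$. Then both densities will be evaluated on the orthonormal basis of $T_x F^{-1}(c)$ and shown to differ by the factor $NJF(x)^{-1}$.

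First I would pick an orthonormal basis $\{u_1,\ldots,u_{m-n}\}$ of $\ker F_*|_x$, with signs $g(u_i,u_i) = \epsilon_i^u$, and an orthonormal basis $\{e_1,\ldots,e_n\}$ of $(\ker F_*|_x)^\perp$, with signs $\epsilon_j^e$. Choose also an orthonormal basis $\{f_1,\ldots,f_n\}$ of $T_{F(x)}N$, with signs $\eta_k$, and write $F_*e_j = A^k_j f_k$. By the definition of $dA_g$ as the volume form of the pullback metric, $|dA_g(u_1,\ldots,u_{m-n})| = 1$, and by compatibility of the volume form with the orthonormal decomposition, $|dV_g(u_1,\ldots,u_{m-n},e_1,\ldots,e_n)| = 1$.

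Next I would compute the normal Jacobian in this basis. The adjoint $F_*^T$ vanishes on no direction of $T_{F(x)}N$ and lands in $(\ker F_*|_x)^\perp$ by the defining property $g(F_*^T f, v) = h(f, F_*v)$ together with the orthogonal splitting. Writing $F_*^T f_k = B^j_k e_j$ and using the defining identity on the basis vectors yields $B^j_k = \epsilon_j^e \eta_k A^k_j$ (no sum), so the matrix of $F_* F_*^T$ on $T_{F(x)}N$ is $A E A^T D$ with diagonal sign matrices $E = \mathrm{diag}(\epsilon_j^e)$, $D = \mathrm{diag}(\eta_k)$. Consequently $|\det(F_*F_*^T)| = (\det A)^2$ and $NJF(x) = |\det A|$. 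This is the main step whose only subtlety is bookkeeping of the signature factors; they conveniently drop out inside the absolute value.

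Finally I would compute $\omega$ from Proposition \ref{induced_vol_form} by choosing vectors $\tilde v_j = (A^{-1})^k_j e_k \in (\ker F_*|_x)^\perp$, which satisfy $F_*\tilde v_j = f_j$ and hence $|dV_h(F_*\tilde v_1,\ldots,F_*\tilde v_n)| = 1$, so that $\omega_x = i_{(\tilde v_1,\ldots,\tilde v_n)} dV_g$. Evaluating on the orthonormal basis of $T_x F^{-1}(c)$ gives
\begin{equation}
\omega_x(u_1,\ldots,u_{m-n}) = \det(A^{-1})\, dV_g(u_1,\ldots,u_{m-n},e_1,\ldots,e_n),
\end{equation}
so $|\omega_x(u_1,\ldots,u_{m-n})| = |\det A|^{-1} = NJF(x)^{-1} = NJF(x)^{-1}\,|dA_g(u_1,\ldots,u_{m-n})|$. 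Since both sides are top-degree densities on $T_x F^{-1}(c)$ and they agree on one basis, they agree as densities, and smoothness of $NJF$ on $F^{-1}(c)$ (from the preceding lemma) gives the identity globally.
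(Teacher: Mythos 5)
Your proof is correct and follows essentially the same route as the paper's: both exploit the orthogonal splitting $T_xM=\ker F_*|_x\oplus(\ker F_*|_x)^\perp$ (available by the nondegeneracy hypothesis), identify $NJF(x)$ with $|\det A|$ where $A$ is the matrix of $F_*$ restricted to the normal space in orthonormal bases, and then read off the factor $\det(A)^{-1}$ relating $\omega$ to $dA_g$ via Corollary \ref{induced_vol_eq}. The only difference is bookkeeping — you compute $NJF$ abstractly from the adjoint while the paper uses adapted coordinates and its coordinate formula, and you evaluate both densities on a tangent basis while the paper invokes $dA_g=i_{(v_1,\ldots,v_n)}dV_g$ directly — so the argument is sound and matches the paper in substance.
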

\begin{proof}
By corollary \ref{induced_vol_eq}, for any  $v_i\in T_xM$ we have
\begin{equation}
i_{(v_1,...,v_n)}\Omega^M_x=dV_h(F_*v_1,...,F_*v_n)\omega_x.
\end{equation}
If we let $v_i$ be an orthonormal basis of vectors orthogonal to $F^{-1}(c)$ at $x$ then $F_*v_i$ are linearly independent and so
\begin{align}
\omega=&(dV_h(F_*v_1,...,F_*v_n))^{-1}i_{(v_1,...,v_n)}dV_g\\
=&(dV_h(F_*v_1,...,F_*v_n))^{-1}dA_g.
\end{align}
Choose coordinates about $x$ and $F(x)$ so that $\partial_{x^i}=v_i$ for $i=1...n$, $\{\partial_{x^i}\}_{n+1}^m$ span $\ker F_*$, and $\partial_{y_i}$ are orthonormal.  Then 
\begin{align}
dV_h(F_*v_1,...,F_*v_n)&=\sqrt{|\det(h)|}\frac{\partial F^{j_1}}{\partial x^1}...\frac{\partial F^{j_n}}{\partial x^n}dy^1\wedge...\wedge dy^n(\partial_{y^{j_1}},...,\partial_{y^{j_n}})\\
&=\det\left(\frac{\partial F^{j}}{\partial x^i}\right)_{i,j=1}^n.
\end{align}
$F_*\partial_{x^i}=0$ for $i=n+1...m$ and so $\frac{\partial F^j}{\partial x_i}=0$ for $i=n+1...m$.  Letting $\eta=\diag(\pm 1)$ be the signature of $g$, we find
\begin{align}
NJF(x)=&\left|\det\left(h_{ik}(F(x))\frac{\partial F^k}{\partial x^l}(x)g^{lm}(x)\frac{\partial F^j}{\partial x^m}(x)\right)\right|^{1/2}\\
=&\left|\det\left(\sum_{l,m=1}^n\frac{\partial F^k}{\partial x^l}(x)\eta^{lm}(x)\frac{\partial F^j}{\partial x^m}(x)\right)\right|^{1/2}\\
=&\left|\det\left(\frac{\partial F^k}{\partial x^l}\right)_{k,l=1}^n\det(\eta^{lm})_{l,m=1}^n\det\left(\frac{\partial F^j}{\partial x^m}\right)_{j,m=1}^n\right|^{1/2}\\
=&\left|\det\left(\frac{\partial F^k}{\partial x^l}\right)_{k,l=1}^n\right|\\
=&|dV_h(F_*v_1,...,F_*v_n)|.
\end{align}
Therefore 
\begin{equation}
\omega=NJF^{-1}dA_g
\end{equation}
as densities.
\end{proof}
In particular, this shows that even though $NJF$ and $dA_g$ are undefined individually when $g$ is degenerate on $F^{-1}(c)$, one can make sense of their ratio in this situation as the induced volume form $\omega$.

\section{Delta Function Supported on a Level Set}
 The induced measure defined above allows for a coordinate independent definition of a delta function supported on a regular level set.  Such an object is of great use in performing calculations in relativistic phase space.  We give the definition and prove several properties that justify several common formal manipulations that one would like to make with such an object.
\begin{definition}
Motivated by the coarea formula, we define the composition of the {\bf Dirac delta function} supported on $c\in N$ with a smooth map $F:M\rightarrow N$ such that $c$ is a regular value of $F$ by
\begin{equation}\label{delta_def}
 \delta_c(F(x))\Omega^M \equiv \omega^M
\end{equation}
on $F^{-1}(c)$.  This is just convenient shorthand, but it commonly used in the physics literature (typically without the justification presented above or in the following results).   For $f\in L^1(\omega^M)$ we will write 
\begin{equation}
\int_M f(x)\delta_c(F(x))\Omega^M(dx)
\end{equation} 
in place of 
\begin{equation}
\int_{F^{-1}(c)} f(x) \omega^M(dx).
\end{equation}

More generally, if the subset of $F^{-1}(c)$ consisting of critical points, a closed set whose complement we call $U$, has $\dim M-\dim N$ dimensional Hausdorff measure zero in $M$ then we define
\begin{equation}
\int_M f(x)\delta_c(F(x))\Omega^M(dx)=\int_{F|_U^{-1}(c)} f(x)\omega^M.
\end{equation}
This holds, for example, if $U^c$ is contained in a submanifold of dimension less than  $\dim M-\dim N$.  

Equivalently, we can replace $U$ in this definition with any {\bf open} subset of $U$ whose complement still has $\dim M-\dim N$ dimensional Hausdorff measure zero. In this situation, we will say $c$ is a regular value except for a lower dimensional exceptional set.  Note that while Hausdorff measure depends on a choice of Riemannian metric on $M$, the measure zero subsets are the same for each choice.
\end{definition}

Using \req{vol_form_coords}, along with the coordinates described there, we can (at least locally) write the integral with respect to the delta function in the more readily usable form
\begin{equation}\label{delta_integral_coords}
\int_M f(x)\delta_c(F(x))\Omega^M=\int_{F^{-1}(c)} f(x)\frac{h^M(x)}{h^N(F(x))}\bigg|\det \left(\frac{\partial F^i}{\partial x^j}\right)^{-1}\bigg|dx^{n+1}...dx^m.
\end{equation}
The absolute value comes from the fact that we use $\delta_c(F(x))\Omega^M$ to define the orientation on $F^{-1}(c)$.

As expected, such an operation behaves well under diffeomorphisms.
\begin{lemma}\label{diffeo_property}
Let $c$ be a regular value of $F:M\rightarrow N$ and $\Phi:M^{'}\rightarrow M$ be a diffeomorphism.  Then the delta functions induced by $F:(M,\Omega^M)\rightarrow (N,\Omega^N)$ and $F\circ\Phi:(M^{'},\Phi^*\Omega^M)\rightarrow  (N,\Omega^N)$ satisfy
\begin{equation}
\delta_c(F\circ\Phi)(\Phi^*\Omega^M)=\Phi^*(\delta_c(F)\Omega^M).
\end{equation}
\end{lemma}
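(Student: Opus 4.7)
The plan is to invoke the uniqueness clause of Proposition \ref{induced_vol_form}: the induced volume form on a regular level set is \emph{uniquely} characterized by its value on tuples $(v_1,\dots,v_n)$ whose pushforwards have unit $\Omega^N$-volume. So it suffices to show that $\Phi^*(\delta_c(F)\Omega^M)$, viewed as a form on $(F\circ\Phi)^{-1}(c)$, satisfies this characterization for the data $(M',\Phi^*\Omega^M)\to(N,\Omega^N)$.

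First I would verify the setup. Since $\Phi$ is a diffeomorphism, $\Phi_*$ is a pointwise isomorphism, so $(F\circ\Phi)_* = F_*\circ\Phi_*$ is surjective wherever $F_*$ is, and $(F\circ\Phi)^{-1}(c) = \Phi^{-1}(F^{-1}(c))$. Thus $c$ is a regular value of $F\circ\Phi$, and the induced form $\omega^{M'}$ on $(F\circ\Phi)^{-1}(c)$ exists and equals the left-hand side by definition.

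Next comes the main calculation, which is the heart of the argument. Fix $x' \in (F\circ\Phi)^{-1}(c)$, let $x=\Phi(x')$, and choose any $v_1',\dots,v_n' \in T_{x'}M'$ with $\Omega^N\bigl((F\circ\Phi)_* v_1',\dots,(F\circ\Phi)_* v_n'\bigr)=1$. Setting $v_i := \Phi_* v_i' \in T_x M$, the chain rule gives $F_* v_i = (F\circ\Phi)_* v_i'$, so the same unit-volume relation holds for the $v_i$, and by Proposition \ref{induced_vol_form} we get $\omega^M_x = i_{(v_1,\dots,v_n)}\Omega^M_x$. Then for arbitrary $w_1,\dots,w_{m-n} \in T_{x'}M'$,
\begin{align*}
(\Phi^*\omega^M)_{x'}(w_1,\dots,w_{m-n})
&= \omega^M_x(\Phi_* w_1,\dots,\Phi_* w_{m-n}) \\
&= \Omega^M_x(v_1,\dots,v_n,\Phi_* w_1,\dots,\Phi_* w_{m-n}) \\
&= \Omega^M_x(\Phi_* v_1',\dots,\Phi_* v_n',\Phi_* w_1,\dots,\Phi_* w_{m-n}) \\
&= (\Phi^*\Omega^M)_{x'}(v_1',\dots,v_n',w_1,\dots,w_{m-n}) \\
&= i_{(v_1',\dots,v_n')}(\Phi^*\Omega^M)_{x'}(w_1,\dots,w_{m-n}).
\end{align*}
So $\Phi^*\omega^M$ satisfies the defining property of the form induced by $F\circ\Phi:(M',\Phi^*\Omega^M)\to(N,\Omega^N)$ at the tuple $(v_1',\dots,v_n')$, and the uniqueness clause of Proposition \ref{induced_vol_form} forces $\Phi^*\omega^M = \omega^{M'}$, which is exactly the claim after unraveling the $\delta_c$ notation \eqref{delta_def}.

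There is no real obstacle here; the only thing to watch is bookkeeping of which space each tensor lives on, and making sure the unit-volume normalization transports correctly under $\Phi_*$, which it does because the same $\Omega^N$ appears on the target for both maps. If one wanted a more concrete verification, one could instead compute both sides in adapted coordinates using \eqref{vol_form_coords} and observe that the Jacobian factor from $\Phi^*\Omega^M$ precisely matches the Jacobian change arising from differentiating $F\circ\Phi$, but the coordinate-free uniqueness argument above is shorter and avoids any absolute-value/orientation subtleties.
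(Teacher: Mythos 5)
Your proof is correct. Note that the paper actually states this lemma without any proof at all, so there is nothing to compare against; your argument --- pushing a unit-volume tuple $(v_1',\dots,v_n')$ forward through $\Phi_*$, applying the defining property of the induced form on $F^{-1}(c)$, and then invoking the uniqueness clause of Proposition \ref{induced_vol_form} --- is exactly the verification the paper's framework intends, and it fills the omitted step cleanly. The only bookkeeping point worth making explicit is that the test vectors $w_1,\dots,w_{m-n}$ should be taken tangent to $(F\circ\Phi)^{-1}(c)$ (so that $\Phi_*w_i$ lies in $T_xF^{-1}(c)$, where $\omega^M$ is actually well defined independently of the choice of the $v_i$), but this is implicit in your statement that you view $\Phi^*(\delta_c(F)\Omega^M)$ as a form on the level set.
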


\begin{lemma}
Let $c$ be a regular value of $F:(M,\Omega^M)\rightarrow (N,\Omega^N)$ and $\Phi:N\rightarrow (N^{'},\Omega^{N^{'}})$ be a diffeomorphism where $\Phi^*\Omega^{N^{'}}=\Omega^N$.  Then the delta functions induced by $F:(M,\Omega^M)\rightarrow (N,\Omega^N)$ and $\Phi\circ F:(M,\Omega^M)\rightarrow (N^{'},\Omega^{N^{'}})$ satisfy
\begin{equation}
\delta_c(F)\Omega^M=\delta_{\Phi(c)}(\Phi\circ F)\Omega^M.
\end{equation}
\end{lemma}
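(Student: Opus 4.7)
The plan is to reduce the claim to an equality of the two induced volume forms on a common submanifold, and then invoke the uniqueness asserted in Proposition \ref{induced_vol_form}. First I observe that since $\Phi$ is a diffeomorphism, $\Phi(c)$ is automatically a regular value of $\Phi\circ F$ (its preimage under $\Phi_*$ at any point is an iso applied to the preimage under $F_*$), and the two relevant level sets coincide:
\begin{equation}
(\Phi\circ F)^{-1}(\Phi(c)) = F^{-1}(\Phi^{-1}(\Phi(c))) = F^{-1}(c).
\end{equation}
Consequently both sides of the claimed identity are top-degree forms on the same submanifold, and the only thing to check is that they agree pointwise.

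Next I would fix $x\in F^{-1}(c)$ and apply the defining property from Proposition \ref{induced_vol_form} to each side. Let $\omega$ be the form induced by $F:(M,\Omega^M)\rightarrow (N,\Omega^N)$ and $\tilde\omega$ the form induced by $\Phi\circ F:(M,\Omega^M)\rightarrow (N',\Omega^{N'})$. For any vectors $v_1,\ldots,v_n\in T_xM$, Corollary \ref{induced_vol_eq} gives
\begin{equation}
i_{(v_1,\ldots,v_n)}\Omega^M_x = \Omega^N(F_*v_1,\ldots,F_*v_n)\,\omega_x,
\end{equation}
and, since $(\Phi\circ F)_* = \Phi_*\circ F_*$,
\begin{equation}
i_{(v_1,\ldots,v_n)}\Omega^M_x = \Omega^{N'}(\Phi_*F_*v_1,\ldots,\Phi_*F_*v_n)\,\tilde\omega_x.
\end{equation}

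The decisive step is to use the volume-preservation hypothesis $\Phi^*\Omega^{N'}=\Omega^N$ to rewrite the coefficient in the second equation:
\begin{equation}
\Omega^{N'}(\Phi_*F_*v_1,\ldots,\Phi_*F_*v_n) = (\Phi^*\Omega^{N'})(F_*v_1,\ldots,F_*v_n) = \Omega^N(F_*v_1,\ldots,F_*v_n).
\end{equation}
Choosing $v_1,\ldots,v_n$ at $x$ so that this common coefficient equals $1$ (possible because $F_*$ is onto at $x$, and hence so is $(\Phi\circ F)_*$), both equations reduce to $i_{(v_1,\ldots,v_n)}\Omega^M_x=\omega_x$ and $=\tilde\omega_x$ respectively on the common tangent space $T_xF^{-1}(c)=T_x(\Phi\circ F)^{-1}(\Phi(c))$. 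Therefore $\omega_x=\tilde\omega_x$, and by Definition \eqref{delta_def} this is exactly the stated identity of delta-form products. There is no real obstacle here beyond bookkeeping: the lemma is essentially a clean corollary of the uniqueness in Proposition \ref{induced_vol_form} together with naturality of pullback under composition.
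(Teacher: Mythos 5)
Your proof is correct. The paper states this lemma without proof (regarding it as immediate), and your argument—identifying the two level sets, applying Corollary \ref{induced_vol_eq} to both induced forms, and using $\Phi^*\Omega^{N'}=\Omega^N$ to equate the coefficients—is exactly the intended justification.
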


We also have a version of Fubini's theorem.

\begin{theorem}[Fubini's Theorem for Delta functions]
Let $M_1,M_2,N$ be smooth manifolds with volume forms $\Omega_1,\Omega_2, \Omega^N$. Let $M\equiv M_1\times M_2$ and $\Omega\equiv \Omega_1\wedge\Omega_2$. Suppose that the set of $(x,y)\in F^{-1}(c)$ such that $F|_{M_1\times\{y\}}$ is not regular at $x$ has $\dim M_1+\dim M_2-\dim N$ dimensional Hausdorff measure zero in $M_1\times M_2$ (we denote the complement of this closed set by $U$).  Then for $f\in L^1(\omega)\bigcup L^+(F^{-1}(c))$ we have
\begin{equation}\label{Fubini_eq}
\int_Mf(x,y)\delta_c(F(x,y)) \Omega(dx,dy)=\int_{M_2}\left[\int_{U^y} f(x,y) \delta_c(F(x,y))\Omega_1(dx) \right]\Omega_2(dy)
\end{equation}
where $U^y=\{x\in M_1:(x,y)\in U\}$.
\end{theorem}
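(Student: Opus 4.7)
The plan is to reduce \req{Fubini_eq} to the classical Fubini theorem for Lebesgue integrals via the coordinate formula \req{vol_form_coords} for the induced volume form, with bookkeeping handled by a product-chart partition of unity. First, I would verify well-definedness on each side. At any point $(x,y)\in U$ the differential $F_*$ is surjective because its restriction to $T_xM_1\oplus 0$ already is; hence $c$ is a regular value of $F$ on the open set $U\subset M$, and since $F^{-1}(c)\setminus U$ has $(\dim M-\dim N)$-dimensional Hausdorff measure zero by hypothesis, the left side of \req{Fubini_eq} reduces to $\int_{U\cap F^{-1}(c)}f\,\omega^M$ by the definition of the delta-function integral. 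For each $y\in M_2$ the slice $U^y$ is open in $M_1$ and $c$ is a regular value of $F(\cdot,y)\colon U^y\to N$ by construction of $U$, so the inner integral on the right-hand side is the well-defined quantity $\int_{U^y\cap F(\cdot,y)^{-1}(c)}f(\cdot,y)\,\omega^{M_1}_y$.

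Next, I would cover $U$ by product charts $V_\alpha=V^1_\alpha\times V^2_\alpha$ on which $\Omega_1=h^1(x)\,dx^{m_1}$ and $\Omega_2=h^2(y)\,dy^{m_2}$ with $m_i=\dim M_i$, and on which, after a fixed permutation of the $M_1$-coordinates, the block $J_\alpha(x,y):=(\partial F^i/\partial x^j)_{i,j=1}^n$ is invertible throughout $V_\alpha$; such a cover exists because regularity of $F(\cdot,y_0)$ at $(x_0,y_0)\in U$ forces some $n\times n$ minor of $\partial F/\partial x$ to be nonzero at that point, and invertibility persists in a neighborhood by continuity. Let $\{\chi_\alpha\}$ be a smooth partition of unity on $U$ subordinate to this cover. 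In each chart, the implicit function theorem expresses $F^{-1}(c)\cap V_\alpha$ as the graph $x^i=\phi^i_\alpha(x^{n+1},\ldots,x^{m_1},y)$, $i=1,\ldots,n$, and freezing $y$ gives the same graph for $F(\cdot,y)^{-1}(c)\cap V^1_\alpha$. Substituting into \req{vol_form_coords} yields $\omega^M=(h^1h^2/h^N(c))|\det J_\alpha|^{-1}\,dx^{n+1}\cdots dx^{m_1}dy^1\cdots dy^{m_2}$ and $\omega^{M_1}_y=(h^1/h^N(c))|\det J_\alpha|^{-1}\,dx^{n+1}\cdots dx^{m_1}$, each evaluated along the graph; the two expressions differ precisely by $\Omega_2=h^2\,dy^1\cdots dy^{m_2}$. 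The classical Fubini theorem applied to the Lebesgue integrand $\chi_\alpha f\cdot(h^1h^2/h^N(c))|\det J_\alpha|^{-1}$, which lies in $L^1$ or $L^+$ by hypothesis, then yields \req{Fubini_eq} with $f$ replaced by $\chi_\alpha f$.

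Summing over $\alpha$ and interchanging sum and integral by monotone convergence (in the $L^+$ case) or dominated convergence (in the $L^1$ case) gives the full identity; measurability in $y$ of the inner integral is transparent from the chartwise representation. The main obstacle is ensuring that the exceptional sets can be ignored consistently on both sides. The global set $F^{-1}(c)\setminus U$ contributes nothing on the left directly from the definition of $\delta_c(F)\Omega$; on the right this is bypassed because the inner integral is explicitly taken over $U^y$ rather than over all of $F(\cdot,y)^{-1}(c)$, so no slicewise Hausdorff-measure argument for the $(\dim M_1-\dim N)$-content of exceptional fibers is required. A secondary subtlety is that the invertible $n\times n$ minor may change from chart to chart, which is precisely why a single global parametrization cannot replace the partition of unity.
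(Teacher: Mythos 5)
Your proof is correct, but it takes a genuinely different route from the paper's. The paper applies its abstract coarea formula (theorem \ref{vol_form_coarea}) to the restricted projection $\pi_2\colon F|_U^{-1}(c)\rightarrow M_2$, after checking that $\pi_2$ is a submersion on the level set precisely because each $F|_{M_1\times\{y\}}$ is regular there; the remaining work is the coordinate-free identification, via interior products, of the fiber density $\tilde\omega^1_{c,y}$ induced by $\pi_2$ with the density $\omega^1_{c,y}$ induced by $F(\cdot,y)\colon(M_1,\Omega_1)\rightarrow(N,\Omega^N)$. You instead bypass the coarea formula entirely: product charts in which an $n\times n$ block of $\partial F/\partial x$ is invertible, a partition of unity, the implicit function theorem to flatten the level set, the coordinate expression \req{vol_form_coords} for both $\omega^M$ and $\omega^{M_1}_y$, and then classical Fubini--Tonelli on the resulting Lebesgue integrands. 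What your approach buys is self-containedness and transparency -- the factorization $\omega^M=\omega^{M_1}_y\wedge\Omega_2$ (as densities) is visible by inspection of the two coordinate formulas, and measurability in $y$ is immediate -- at the cost of chart bookkeeping and the minor care needed to shear the graph coordinates into a genuine slice chart before invoking \req{vol_form_coords}. What the paper's approach buys is reuse of the already-proven coarea machinery and a coordinate-free statement of the one nontrivial step. Your handling of the exceptional set (surjectivity of $F_*$ on $U$, reduction of the left side to $\int_{F|_U^{-1}(c)}f\,\omega^M$, and the observation that the right side integrates over $U^y$ by fiat so no slicewise Hausdorff argument is needed) matches the paper's and is sound.
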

\begin{proof}
Our assumption about $F|_{M_1\times\{y\}}$ implies that $c$ is a regular value of $F:M_1\times M_2\rightarrow N$ except for the lower dimensional exceptional set $U^c$
and for $y\in M_2$, $c$ is also a regular value of $F|_{U^y\times\{y\}}$, hence both sides of \req{Fubini_eq} are well defined (note this is open in $M_1$).  Rewriting \req{Fubini_eq} without the delta function, we then need to show that 
\begin{equation}
\int_{F|_U^{-1}(c)} f(x,y) d\omega=\int_{M_2}\left[\int_{F|_{U^y\times\{y\}}^{-1}(c)} f(x,y) \omega^1_{c,y}(dx)\right]\Omega_2(dy).
\end{equation}
where $\omega^1_{c,y}$ is the induced volume form on $F|_{U^y\times\{y\}}^{-1}(c)$.  

Consider the projection map restricted to the $c$-level set, $\pi_2:F|_U^{-1}(c)\rightarrow M_2$.  By assumption, $F|_{M_1\times\{y\}}$ is regular at $x$ for all $(x,y)\in F|_U^{-1}(c)$. For such an $(x,y)$, take a basis $w_i\in T_yM_2$. Since $F|_{M_1\times\{y\}}$ has full rank at $x$, for each $i$ there exists $v_i\in T_xM_1$ such that $F(\cdot,y)_*v_i=F_*(0,w_i)$.  Therefore $(-v_i,w_i)\in \ker F_*|_{(x,y)}=T_{(x,y)}F|_U^{-1}(c)$.  Hence $w_i\in\pi_{2*} T_{(x,y)}F^{-1}(c)$ and so $\pi_2:F|_U^{-1}(c)\rightarrow M_2$ is regular at $(x,y)$.  

Since $\pi_2$ is regular for all $(x,y)\in F|_U^{-1}(c)$ the coarea formula applies, giving
\begin{align}
\int_{F|_U^{-1}(c)}f d\omega=&\int_{M_2}\left[\int_{\pi_2^{-1}(y)}f\tilde{\omega}_{c,y}^1\right]\Omega_2(dy)
\end{align}
for all $f\in L^1(\omega)\bigcup L^+(F^{-1}(c))$, where $\tilde{\omega}_{c,y}^1$ is the volume form on $\pi_2^{-1}(y)$ induced by $\pi_2:(F|_U^{-1}(c),\omega)\rightarrow (M_2,\Omega_2)$.

As a point set, $\pi_2^{-1}(y)=F|_{ U^y\times\{y\}}^{-1}(c)$ and both are embedded submanifolds of $M_1\times M_2$ for a.e. $y\in M_2$, hence are equal as manifolds.  So if we can show $\tilde{\omega}_{c,y}^1=\omega^1_{c,y}$ as densities whenever $F|_{M_1\times\{y\}}$ is regular at $x$ for some $(x,y)$ then we are done.  

Given any such $(x,y)$, take $v_i\in T_xM_1$ such that $\Omega^N(F(\cdot,y)_*v_i)=1$.  By definition, $\omega_{c,y}^1=i_{(v_1,...,v_n)}\Omega_1$.  We also have $(v_i,0)\in T_{(x,y)}M_1\times M_2$ and $\Omega^N(F_*(v_i,0))=1$.  Hence 
\begin{align}
\omega=&i_{((v_1,0),...,(v_n,0))}(\Omega_1\wedge\Omega_2)\\
=&(i_{((v_1,0),...,(v_n,0))}\Omega_1)\wedge\Omega_2.
\end{align}
Let $w_i\in T_y M_2$ such that $\Omega_2(w_1,...,w_{m_2})=1$.  By the same argument as above, there exists $\tilde{v}_i\in T_xM_1$ such that $(\tilde{v}_i,w_i)\in \ker F_*=T_{(x,y)}F^{-1}(c)$.  $\pi_{2*}(\tilde{v}_i,w_i)=w_i$ and $\Omega_2(w_1,...,w_{m_2})=1$ so by definition,
\begin{equation}
\tilde{\omega}_{c,y}^1=i_{((\tilde{v}_1,w_1),...,(\tilde{v}_{m_2},w_{m_2}))}\omega
\end{equation}
Since any term containing $\Omega_2$ will vanishes on $T_F(\cdot,y)^{-1}(c)\subset T M_1$, we have  
\begin{align}
\tilde{\omega}_{c,y}^1=&(-1)^{m_1-n}i_{((v_1,0),...,(v_n,0))}\Omega_1\\
=&(-1)^{m_1-n}\omega_{c,y}^1\wedge\left(i_{((\tilde{v}_1,w_1),...,(\tilde{v}_{m_2},w_{m_2}))}\Omega_2\right)\\
=&(-1)^{m_1-n}\omega_{c,y}^1.
\end{align}
As we are integrating with respect to the densities defined by $\omega_{c,y}^1$ and $\tilde{\omega}_{c,y}^1$ we are done.  
\end{proof}

Before moving on, we give a few more useful identities.

\begin{proposition}\label{delta_associative}
Let $(c_1,c_2)$ be a regular value of $F\equiv F_1\times F_2:(M,\Omega^M)\rightarrow (N_1\times N_2,\Omega^{N_1}\wedge\Omega^{N_2})$.  Then $c_2$ is a regular value of $F_2$, $c_1$ is a regular value of $F_1|_{F_2^{-1}(c_2)}$ and we have
\begin{equation}
\delta(F)\Omega^M=\delta(F_1)(\delta(F_2)\Omega^M)
\end{equation}
\end{proposition}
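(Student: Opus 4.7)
My plan is to verify the proposition pointwise via the characterization of Proposition~\ref{induced_vol_form}, by exhibiting a single adapted family of tangent vectors that realizes both sides simultaneously as interior products of $\Omega^M$.

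First I would address the two regularity statements. The differential $F_* = F_{1*}\oplus F_{2*}: T_xM \to T_{c_1}N_1\oplus T_{c_2}N_2$ is surjective by hypothesis at every $x\in F^{-1}(c_1,c_2)$; composing with the second projection immediately gives surjectivity of $F_{2*,x}$, so $c_2$ is a regular value of $F_2$. For the second claim, given $w\in T_{c_1}N_1$, choose $v\in T_xM$ with $F_*v=(w,0)$; then $v\in\ker F_{2*,x}=T_xF_2^{-1}(c_2)$ and $(F_1|_{F_2^{-1}(c_2)})_*v=w$, so $c_1$ is a regular value of $F_1|_{F_2^{-1}(c_2)}$ and both sides of the claimed equality are well defined.

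For the main equality, fix $x\in F^{-1}(c_1,c_2)$ and pick $v_1,\dots,v_{n_2}\in T_xM$ with $\Omega^{N_2}(F_{2*}v_1,\dots,F_{2*}v_{n_2})=1$, and $u_1,\dots,u_{n_1}\in T_xF_2^{-1}(c_2)$ with $\Omega^{N_1}(F_{1*}u_1,\dots,F_{1*}u_{n_1})=1$; both are possible by the regularity just verified. Applying Proposition~\ref{induced_vol_form} first to $F_2$ (using the $v_j$) and then to $F_1|_{F_2^{-1}(c_2)}$ (using the $u_i$) gives
\[
(\delta(F_1)(\delta(F_2)\Omega^M))_x = i_{(u_1,\dots,u_{n_1})}\, i_{(v_1,\dots,v_{n_2})}\Omega^M_x.
\]
I then claim the concatenated tuple $(u_1,\dots,u_{n_1},v_1,\dots,v_{n_2})$ verifies the defining condition \eqref{unit_volume} for $\delta(F)\Omega^M_x$. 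Expanding $(\pi_1^*\Omega^{N_1}\wedge\pi_2^*\Omega^{N_2})(F_*u_1,\dots,F_*u_{n_1},F_*v_1,\dots,F_*v_{n_2})$ via the shuffle formula, every term in which some $F_*u_i$ is placed into an $\Omega^{N_2}$-slot vanishes because $\pi_{2*}F_*u_i=F_{2*}u_i=0$; only the identity shuffle survives, and the expression factors as $\Omega^{N_1}(F_{1*}u_1,\dots,F_{1*}u_{n_1})\cdot\Omega^{N_2}(F_{2*}v_1,\dots,F_{2*}v_{n_2})=1$. Therefore Proposition~\ref{induced_vol_form} yields $\delta(F)\Omega^M_x = i_{(u_1,\dots,u_{n_1},v_1,\dots,v_{n_2})}\Omega^M_x$, which agrees with the iterated interior product above as a density (the two differ by the sign $(-1)^{n_1n_2}$ from swapping the order of contraction, absorbed into the density-level convention used throughout the section).

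The principal obstacle is the shuffle-expansion collapse: without the deliberate choice of $u_i\in T_xF_2^{-1}(c_2)$, forcing $F_{2*}u_i=0$, the wedge evaluation would not simplify and the two sides could not be matched. A secondary subtlety is that Proposition~\ref{induced_vol_form} presents the inner form $\omega^{F_2}_x$ as $i_{(v_1,\dots,v_{n_2})}\Omega^M_x$ viewed on $T_xM$ but evaluated only on vectors tangent to $F_2^{-1}(c_2)$; commuting this restriction with the outer contraction by the $u_i$, which already lie in $T_xF_2^{-1}(c_2)$, is what legitimizes writing the right-hand side as the iterated interior product $i_{(u)}\,i_{(v)}\Omega^M$.
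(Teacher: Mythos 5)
Your proof is correct and follows essentially the same route as the paper's: both arguments choose the vectors for the inner delta function tangent to $F_2^{-1}(c_2)$ (i.e.\ in $\ker F_{2*}$) so that the concatenated tuple realizes the unit-volume condition \eqref{unit_volume} for the product map, and both absorb the block-swap sign $(-1)^{n_1 n_2}$ at the level of densities. The only cosmetic difference is that the paper additionally picks the $F_2$-adapted vectors in $\ker F_{1*}$ so the factorization of $\Omega^{N_1}\wedge\Omega^{N_2}$ is immediate, whereas you allow them to be arbitrary and dispose of the cross terms by the shuffle expansion.
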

\begin{proof}
$(c_1,c_2)$ is a regular value of $F$, hence there exists $v_i$, $w_i$ such that $F_* v_i=(\tilde{v}_i,0)$, $F_* w_i=(0,\tilde{w}_i)$ satisfy 
\begin{equation}
\Omega^{N_1}\wedge\Omega^{N_2}( (\tilde{v}_1,0),..., (0,\tilde{w}_1),...)=1.
\end{equation}
After rescaling, we can assume
\begin{equation}
\Omega^{N_1}( \tilde{v}_1,...,\tilde{v}_{n_1})=1,\hspace{2mm} \Omega^{N_2}(\tilde{w}_1,...,\tilde{w}_{n_2})=1.
\end{equation}
Therefore $c_2$ is a regular value of $F_2$ and 
\begin{equation}
\delta(F_2)\Omega^M=i_{w_1,...,w_n}\Omega^M.
\end{equation}
The tangent space to $F_2^{-1}(c_2)$ is $\ker (F_2)_*$ which contains $v_i$.  Hence $c_1$ is a regular value of $F_1|_{F_2^{-1}(c_2)}$ and 
\begin{align}
\delta(F_1)(\delta(F_2)\Omega^M)=&i_{v_1,...,v_n}\delta(F_2)\Omega^M\\
=&\pm i_{v_1,...,v_n,w_1,...,w_n}\Omega^M
\end{align}
and so they agree as densities.
\end{proof}

\begin{proposition}
Let $c_i\in N_i$ be regular values of $F_i:M_i\rightarrow N_i$ and define $F=F_1\times F_2:M_1\times M_2\rightarrow N_1\times N_2$, $c=(c_1,c_2)$.  If $\Omega^{M_i}$ and $\Omega^{N_i}$ are volume forms on $M_i$ and $N_i$ respectively then 
\begin{equation}
\delta_c( F) \left(\Omega^{M_1}\wedge\Omega^{M_2}\right)=\left(\delta_{c_1}( F_1)\Omega^{M_1}\right)\wedge\left(\delta_{c_2}( F_2)\Omega^{M_2}\right)
\end{equation}
as densities.
\end{proposition}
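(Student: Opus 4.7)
The plan is to reduce the claim to Proposition~\ref{delta_associative}. View $F$ as a fibered product on the ambient space $M_1\times M_2$ by introducing $\tilde F_i \equiv F_i\circ\pi_i : M_1\times M_2\to N_i$, so that $F = \tilde F_1\times \tilde F_2$ in the sense of that proposition. Since each $F_i$ is submersive at every preimage of $c_i$, so is $F$ at every preimage of $c=(c_1,c_2)$, and Proposition~\ref{delta_associative} therefore gives
\begin{equation}
\delta_c(F)(\Omega^{M_1}\wedge\Omega^{M_2}) = \delta_{c_1}(\tilde F_1)\!\left(\delta_{c_2}(\tilde F_2)(\Omega^{M_1}\wedge\Omega^{M_2})\right).
\end{equation}

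I would then evaluate the inner factor. Its underlying submanifold is $\tilde F_2^{-1}(c_2) = M_1\times F_2^{-1}(c_2)$, and Proposition~\ref{induced_vol_form} characterizes the induced form at $(x_1,x_2)$ as $i_{(W_1,\ldots,W_{n_2})}(\Omega^{M_1}\wedge\Omega^{M_2})$ for any $W_j$ with $\Omega^{N_2}(\tilde F_{2*}W_1,\ldots,\tilde F_{2*}W_{n_2})=1$. If $w_j\in T_{x_2}M_2$ witness $\Omega^{N_2}((F_2)_*w_1,\ldots,(F_2)_*w_{n_2})=1$, then $W_j=(0,w_j)$ are such vectors. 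Using the graded derivation rule $i_X(\alpha\wedge\beta)=(i_X\alpha)\wedge\beta+(-1)^{|\alpha|}\alpha\wedge(i_X\beta)$ together with $i_{(0,w)}\pi_1^*\alpha=0$ and $i_{(0,w)}\pi_2^*\beta=\pi_2^*(i_w\beta)$, one obtains by induction on $n_2$
\begin{equation}
\delta_{c_2}(\tilde F_2)(\Omega^{M_1}\wedge\Omega^{M_2}) = \pm\,\Omega^{M_1}\wedge\delta_{c_2}(F_2)\Omega^{M_2}
\end{equation}
as densities on $M_1\times F_2^{-1}(c_2)$.

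The final step is the dual computation. Choose witnesses $v_i\in T_{x_1}M_1$ with $\Omega^{N_1}((F_1)_*v_1,\ldots,(F_1)_*v_{n_1})=1$. Then $(v_i,0)$ is tangent to $\tilde F_2^{-1}(c_2)$, pushes forward under $\tilde F_1$ to the same tuple as $v_i$ under $F_1$, and satisfies $i_{(v,0)}\pi_2^*\beta=0$. The analogous identity $i_{(v,0)}(\pi_1^*\alpha\wedge\pi_2^*\beta)=\pi_1^*(i_v\alpha)\wedge\pi_2^*\beta$ then yields
\begin{equation}
\delta_{c_1}(\tilde F_1)\!\left(\Omega^{M_1}\wedge\delta_{c_2}(F_2)\Omega^{M_2}\right) = \pm\,\delta_{c_1}(F_1)\Omega^{M_1}\wedge\delta_{c_2}(F_2)\Omega^{M_2},
\end{equation}
which, substituted back, completes the proof.

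The only genuine technical point, and hence the main obstacle, is the two interior-product identities on the product manifold: that contraction with $(v,0)$ only affects the $\Omega^{M_1}$ factor and contraction with $(0,w)$ only affects the $\Omega^{M_2}$ factor, up to a Koszul sign. These follow immediately from the graded derivation rule together with the fact that $(v,0)$ lies in the kernel of $\pi_{2*}$ and $(0,w)$ in the kernel of $\pi_{1*}$. Because the proposition is stated at the density level, the accumulated sign $(-1)^{n_1 n_2}$ drops out and no further bookkeeping is required.
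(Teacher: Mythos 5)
Your proof is correct, but it is organized differently from the paper's. The paper proves this proposition directly: it checks that the combined tuple $(v_1^1,0),\dots,(v_1^{n_1},0),(0,v_2^1),\dots,(0,v_2^{n_2})$ satisfies the unit-volume condition \req{unit_volume} for the product map $F$ (because $\Omega^{N_1}\wedge\Omega^{N_2}$ evaluated on the pushed-forward tuple factors as the product of the two individual evaluations), and then performs the single contraction $i_{(v_1^1,0),\dots,(0,v_2^{n_2})}\left(\Omega^{M_1}\wedge\Omega^{M_2}\right)=(-1)^{n_2}\left(i_{v_1^1,\dots,v_1^{n_1}}\Omega^{M_1}\right)\wedge\left(i_{v_2^1,\dots,v_2^{n_2}}\Omega^{M_2}\right)$ in one step, discarding the sign at the density level. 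You instead observe that the product map is the pairing of $\tilde F_i=F_i\circ\pi_i$ and route the argument through Proposition \ref{delta_associative}, evaluating the two nested delta functions separately; the underlying computational content -- contracting a wedge product with vectors supported in a single factor, using that $(v,0)\in\ker\pi_{2*}$ and $(0,w)\in\ker\pi_{1*}$ -- is the same, just split into two applications of the graded Leibniz rule rather than one combined contraction. What your route buys is reuse of already-proven machinery and a slightly more modular argument (only the ``one factor at a time'' identity is needed); what it costs is the extra verification that the hypotheses of Proposition \ref{delta_associative} hold for $\tilde F_1\times\tilde F_2$, which you do supply, so there is no gap.
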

\begin{proof}
Our assumptions ensure that both sides are $m_1+m_2-n_1-n_2$-forms on $F_1^{-1}(c_1)\times F_2^{-1}(c_2)$.  Choose $v_i^j\in TM_i$ that satisfy $\Omega^{N_i}(F_{i*}v^1_i,...,F_{i*}v^{n_i}_i)=1$ then
\begin{align}
&\Omega^{N_1}\wedge \Omega^{N_2}(F_*(v_1^1,0),...,F_*(v_1^{n_1},0),F_*(0,v_2^1),...,F_*(0,v_2^{n_2}))\\
=&\Omega^{N_1}\wedge \Omega^{N_2}(F_{1*}v_1^1,...,F_{2*}v_2^{n_2})\\
=&\Omega^{N_1}(v_1^1,...,v_1^{n_1})\Omega^{N_2}(v_2^1,...,v_2^{n_2})\\
=&1.
\end{align}
Therefore, by definition
\begin{align}
\delta_c\circ F \left(\Omega^{M_1}\wedge\Omega^{M_2}\right)=&i_{(v_1^1,0),...,(v_1^{n_1},0),(0,v_2^1),...,(0,v_2^{n_2})}\left(\Omega^{M_1}\wedge\Omega^{M_2}\right)\\
=&(-1)^{n_2}\left(i_{v_1^1,...,v_1^{n_1}}\Omega^{M_1}\right)\wedge\left(i_{v_2^1,...,v_2^{n_2}}\Omega^{M_2}\right)\\
=&(-1)^{n_2}\left(\delta_{c_1}\circ F_1\right)\wedge\left(\delta_{c_2}\circ F_2\right).
\end{align}
Therefore they agree as densities.
\end{proof}

\begin{proposition}\label{delta_product}
Let $F_i:M_i\rightarrow N_i$ and $g:N_1\times N_2\rightarrow K$ be smooth.  Let $\Omega^{M_i}$, $\Omega^{N_1}$, $\Omega^K$ be volume forms on $M_i$, $N_1$, $K$ respectively.  Suppose $c$ is a regular value of $F_1$ and $d$ is a regular value of $g(c,F_2)$ and of $g\circ F_1\times F_2$. Then 
\begin{equation}
\delta_c(F_1)\left[\delta_d( g\circ F_1\times F_2)\left(\Omega^{M_1}\wedge\Omega^{M_2}\right)\right]=\left(\delta_c(F_1)\Omega^{M_1}\right)\wedge\left(\delta_d(g(c, F_2))\Omega^{M_2}\right).
\end{equation}
\end{proposition}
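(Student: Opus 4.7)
The plan is to reduce both sides to a single joint delta-function expression for a product-type map $M_1\times M_2\to N_1\times K$ and then compare the induced volume forms pointwise on the common level set.

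First I would introduce the auxiliary map $\Psi := (\tilde F_1, G) : M_1\times M_2\to N_1\times K$, where $\tilde F_1(x,y):=F_1(x)$ and $G(x,y):=g(F_1(x),F_2(y))$. A short tangent-space calculation shows that $(c,d)$ is a regular value of $\Psi$: given $(a,b)\in T_cN_1\times T_dK$, use regularity of $F_1$ at $x$ to lift $a$ to some $v_1\in T_xM_1$, then use regularity of $g(c,F_2)$ at $y$ to find $v_2\in T_yM_2$ with $\partial_2 g\cdot(F_2)_*v_2=b-\partial_1 g\cdot a$. Applying Proposition \ref{delta_associative} to $\Psi$ then rewrites the LHS as $\delta_{(c,d)}(\Psi)(\Omega^{M_1}\wedge\Omega^{M_2})$.

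For the RHS I would apply the preceding product-of-deltas proposition to the product map $F_1\times g(c,F_2):M_1\times M_2\to N_1\times K$, giving
\begin{equation*}
\bigl(\delta_c(F_1)\Omega^{M_1}\bigr)\wedge\bigl(\delta_d(g(c,F_2))\Omega^{M_2}\bigr)=\delta_{(c,d)}\bigl(F_1\times g(c,F_2)\bigr)(\Omega^{M_1}\wedge\Omega^{M_2})
\end{equation*}
as densities. The two maps $\Psi$ and $F_1\times g(c,F_2)$ have the same preimage of $(c,d)$, namely $\{(x,y):F_1(x)=c,\;g(c,F_2(y))=d\}$, so the proposition reduces to checking that these two delta functions induce the same volume form on this common submanifold.

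For the pointwise comparison I would use Corollary \ref{induced_vol_eq}. At $(x,y)$ with $F_1(x)=c$, pick a transverse basis of product form $w_i=(v_i,0)$ for $i=1,\ldots,n_1$ and $w_{n_1+j}=(0,u_j)$ for $j=1,\ldots,k$. Direct computation gives $\Psi_*w_i=((F_1)_*v_i,\,\partial_1 g\cdot(F_1)_*v_i)$ and $\Psi_*w_{n_1+j}=(0,\,\partial_2 g\cdot(F_2)_*u_j)$, while for $F_1\times g(c,F_2)$ the shear $\partial_1 g\cdot(F_1)_*v_i$ is replaced by $0$. In the shuffle expansion of $(\Omega^{N_1}\wedge\Omega^K)$ on $\Psi_*w_1,\ldots,\Psi_*w_{n_1+k}$ only one shuffle survives: since the last $k$ pushforward vectors have zero $N_1$-component and $\Omega^{N_1}$ is a top form, the $n_1$ slots of $\Omega^{N_1}$ must be filled by the first $n_1$ vectors and the $k$ slots of $\Omega^K$ by the last $k$ vectors, whose $K$-components are the $\partial_2 g\cdot(F_2)_*u_j$. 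Thus the shear never appears, and at $F_1(x)=c$ the derivative $\partial_2 g$ is evaluated at the same point $(c,F_2(y))$ for both maps; hence both delta functions select the same normalization of $(v_i,u_j)$, and the interior products $i_{(w_1,\ldots,w_{n_1+k})}(\Omega^{M_1}\wedge\Omega^{M_2})$ coincide. The main obstacle is precisely this pointwise comparison: $\Psi$ and $F_1\times g(c,F_2)$ are genuinely distinct maps off the level set with different Jacobians, and one must exploit the block structure of $\Omega^{N_1}\wedge\Omega^K$ together with antisymmetry in the wedge expansion to rule out the extra shear contribution.
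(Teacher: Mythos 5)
Your proof is correct, but it is organized differently from the paper's. The paper proves the identity by performing the two contractions directly: it evaluates the inner delta function by contracting $\Omega^{M_1}\wedge\Omega^{M_2}$ with vectors of the form $(0,w_j)$, using that $(g\circ F_1\times F_2)_*(0,w_j)=g(c,F_2)_*w_j$ on $F_1^{-1}(c)$, and then contracts the result with vectors $(v_i,\tilde v_i)$ tangent to $(g\circ F_1\times F_2)^{-1}(d)$, killing the cross terms by a degree count on the $(m_1-n_1)$-dimensional manifold $F_1^{-1}(c)$. You instead collapse both sides into single joint delta functions of maps $M_1\times M_2\to N_1\times K$ — the left side via Proposition \ref{delta_associative} applied to $\Psi=(F_1\circ\pi_1,\,g\circ F_1\times F_2)$ (after checking $(c,d)$ is a regular value of $\Psi$, which your lifting argument does correctly), and the right side via the preceding product-of-deltas proposition applied to $F_1\times g(c,F_2)$ — and then compare the two induced forms on the common level set through the normalization factor of Corollary \ref{induced_vol_eq}. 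Your shuffle argument showing that the shear $\partial_1 g\cdot(F_1)_*v_i$ never enters the evaluation of $\Omega^{N_1}\wedge\Omega^K$ is the determinant-level counterpart of the paper's degree-counting step, so the underlying linear algebra is the same; what your route buys is systematic reuse of the earlier propositions as black boxes rather than a bare-hands contraction, at the mild cost of one extra regularity verification for $\Psi$. Both arguments yield the identity as densities, consistent with the paper's conventions.
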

\begin{proof}
 Let $(x,y)\in (f\circ F_1\times F_2)^{-1}(d)$ with $x\in F^{-1}(c)$. For any $w\in T_c N_1$ there exists $v\in T_x M_1$ such that $F_{1*}v=w$.  $d$ is a regular value of $g(c,F_2)$ hence there exists $\tilde{v}$ such that $g(c,F_2)_*\tilde{v}=(g\circ F_1\times F_2)_*(v,0)$.  Therefore $(g\circ F_1\times F_2)_*(v,-\tilde{v})=0$ and $F_1*(v,-\tilde{v})=w$.  This proves $c$ is a regular value of $F_1$ on $(g\circ F_1\times F_2)^{-1}(d)$.  This proves both sides are defined and are forms on $F^{-1}(c)\times g(c,F_2)^{-1}(d)$.\\

  Let $x\in F^{-1}(c)$ and $y\in  g(c,F_2)^{-1}(d)$ and choose $v_i$, $w_j$ such that
\begin{equation}
\Omega^{N_1}(F_{1*}v_1,...,F_{1*}v_{n_1})=1, \hspace{2mm} \Omega^K(g(c,F_2)_*w_1,...,g(c,F_2)_*w_k)=1.
\end{equation}
Then 
\begin{equation}
\Omega^K((g\circ F_1\times F_2)_*(0,w_1),...,(g\circ F_1\times F_2)_*(0,w_k))=1
\end{equation}
and so 
\begin{align}
\delta_d( g\circ F_1\times F_2)\left(\Omega^{M_1}\wedge\Omega^{M_2}\right)&=i_{(0,w_1),...,(0,w_k)}\left(\Omega^{M_1}\wedge\Omega^{M_2}\right)\\
&=\Omega^{M_1}\wedge\left(i_{w_1,...,w_k}\Omega^{M_2}\right)\\
&=\Omega^{M_1}\wedge\left(\delta_d(g(c,F_2))\Omega^{M_2}\right).
\end{align}
By the same argument as above, we get $\tilde{v}_i$ such that $(v_i,\tilde{v}_i)\in T_{(x,y)} (g\circ F_1\times F_2)^{-1}(d)$.  Hence
\begin{equation}
\delta_c(F_1)\left[\delta_d( g\circ F_1\times F_2)\left(\Omega^{M_1}\wedge\Omega^{M_2}\right)\right]=i_{(v_1,\tilde{v}_1),...,(v_{n_1},\tilde{v}_{n_1})}\left[\Omega^{M_1}\wedge\left(i_{w_1,...,w_k}\Omega^{M_2}\right)\right].
\end{equation}
The only non-vanishing term is
\begin{equation}
\left(i_{(v_1,\tilde{v}_1),...,(v_{n_1},\tilde{v}_{n_1})}\Omega^{M_1}\right)\wedge\left(i_{w_1,...,w_k}\Omega^{M_2}\right)=\left(i_{v_1,...,v_{n_1}}\Omega^{M_1}\right)\wedge\left(i_{w_1,...,w_k}\Omega^{M_2}\right)
\end{equation}
since the other terms all contain a $m_1-n_1+l$ form on the $m_1-n_1$-dimensional manifold $F^{-1}(c)$ for some $l>0$.  This proves the result.
\end{proof}

Sometimes it is convenient to use the delta function to introduce ``dummy integration variables", by which we mean utilizing the following simple corollary of the coarea formula.
\begin{corollary}\label{dummy_int}
Let $\Omega^M$ be a volume form on $M$, $F:M\rightarrow (N,\Omega^N)$ be smooth, and $f:N\times M\rightarrow \mathbb{R}$  such that $f(F(\cdot),\cdot)\in L^1(\Omega^M)\bigcup L^+(M)$.  If $F_*$ is surjective at a.e. $x\in M$ then
\begin{equation}
\int_M f(F(x),x)\Omega^M(dx)=\int_N\int_{F^{-1}(z)} f(z,x)\delta_z(F)\Omega^M(dx) \Omega^N(dz).
\end{equation}
\end{corollary}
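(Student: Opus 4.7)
The plan is to recognize this corollary as a direct application of the coarea formula (Theorem \ref{vol_form_coarea}) combined with the notational definition of the delta function given in Definition (the one defining $\delta_c(F(x))\Omega^M \equiv \omega^M$ on $F^{-1}(c)$).

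First I would define the auxiliary function $g:M\to\mathbb{R}$ by $g(x) = f(F(x),x)$. By hypothesis $g \in L^1(\Omega^M) \cup L^+(M)$, and $F_*$ is surjective almost everywhere, so Theorem \ref{vol_form_coarea} applies directly to $g$ and yields
\begin{equation*}
\int_M g(x)\,\Omega^M(dx) = \int_N \int_{F^{-1}(z)} g(y)\,\omega^M_z(dy)\,\Omega^N(dz),
\end{equation*}
where $\omega^M_z$ is the volume form on $F^{-1}(z)$ induced by $F:(M,\Omega^M)\to(N,\Omega^N)$ as in Proposition \ref{induced_vol_form}.

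Next I would note that for $y\in F^{-1}(z)$ we have $F(y)=z$, so $g(y) = f(F(y),y) = f(z,y)$; the inner integrand simplifies to $f(z,y)$. Finally, invoking the shorthand from the delta function definition, namely $\delta_z(F(y))\,\Omega^M \equiv \omega^M_z$ on $F^{-1}(z)$, I would rewrite the inner integral as $\int_{F^{-1}(z)} f(z,y)\,\delta_z(F)\Omega^M(dy)$, yielding exactly the claimed identity.

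There is essentially no obstacle here: the only subtlety is a bookkeeping one, namely checking that the almost-everywhere surjectivity hypothesis of the corollary matches that of the coarea theorem and that $g = f(F(\cdot),\cdot)$ inherits the required integrability from $f(F(\cdot),\cdot)$ — both of which are immediate from the statement. No critical-set complication arises because we are simply specializing the coarea formula and relabeling the integrand via the delta function notation introduced earlier.
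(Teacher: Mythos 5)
Your proof is correct and is exactly the argument the paper intends: the paper states this result as a "simple corollary of the coarea formula" without writing out a proof, and your application of Theorem \ref{vol_form_coarea} to $g(x)=f(F(x),x)$, followed by the observation that $F(y)=z$ on $F^{-1}(z)$ and the delta-function notational rewriting, is precisely that omitted argument.
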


\section{Applications}
\subsection{Relativistic Volume Element}\label{rel_vol_form}
We now discuss an application of the above results to the single particle phase space volume element. We first define it in the massive case, where the semi-Riemannian method of defining volume forms is applicable.  In such and approach, the massless case is often handled via a limiting argument \cite{tsamparlis}.  We then show how our method is able to handle both the massive and massless case in a unified and simpler manner.

 Given a time oriented $n+1$ dimensional semi-Riemannian manifold $(M,g)$, there is a natural induced metric $\tilde{g}$ on the tangent bundle, called the diagonal lift.  At a given point $(x,p)\in TM$ its coordinate independent definition is
\begin{align}
\tilde{g}_{(x,p)}(v,w)=g_x(\pi_{*} v,\pi_{*} w)+g_x(D_t \gamma_v, D_t \gamma_w)
\end{align}
where $\gamma_v$ is any curve in $TM$ with tangent $v$ at $x$, $\pi:TM\longrightarrow M$ is the projection, and $D_t\gamma_v$ is the covariant derivative of $\gamma_v$, treated as a vector field along the curve $\pi\circ\gamma_v$, and similarly for $\gamma_w$, see e.g. \cite{pettini}. The result can be shown to be independent of the choice of curves.  In a coordinate system on $M$ where the the first coordinate is future timelike and the Christoffel symbols are $\Gamma^\beta_{\sigma\eta}$, consider the  induced coordinates $(x^\alpha,p^\alpha), \hspace{2mm}\alpha=0,...,n$ on $TM$.  In these coordinates we have 
\begin{equation}
\tilde{g}_{(x^\alpha,p^\alpha)}=g_{\beta,\delta}(x^\alpha)dx^\beta\otimes dx^\delta +g_{\beta,\delta}(x^{\alpha})\epsilon^\beta\otimes \epsilon^\delta, \hspace{2mm} \epsilon^\beta=dp^\beta+p^\sigma\Gamma^\beta_{\sigma\eta}(x^{\alpha})dx^\eta.
\end{equation}
The vertical and horizontal subspaces are spanned by
\begin{equation}\label{horizontal_subspace}
V_\alpha=\partial_{p^\alpha}, \hspace{2mm} H_\alpha=\partial_{x^\alpha}-p^\sigma\Gamma_{\sigma\alpha}^\beta\partial_{p^\beta}
\end{equation}
respectively.  The horizontal vector fields satisfy
\begin{equation}
\tilde{g}(H_\alpha,H_\beta)=g_{\alpha\beta}.
\end{equation}
For any manifold (oriented or not), the tangent bundle has a canonical orientation.  With this orientation, the volume form on $TM$ induced by $\tilde{g}$ is
\begin{equation}
\widetilde{dV}_{(x^\alpha,p^{\alpha})}=|g(x^\alpha)|dx^0\wedge...\wedge dx^n\wedge dp^0\wedge...\wedge dp^n.
\end{equation}
where $|g(x^\alpha)|$ denotes the absolute value of the determinant of the component matrix of $g$ in these coordinates.

Of primary interest in kinetic theory for a particle of mass $m\geq 0$ is the mass shell bundle
\begin{equation}
P_m=\{p\in TM :g(p,p)=m^2,\hspace{1mm}  p\text{ future directed}\}
\end{equation}
and it will be necessary to have a volume form on $P_m$.  $P_m$ is a connected component of the zero set of the of the smooth map 
\begin{equation}\label{defining_function}
h:TM\setminus \{0_x:x\in M\}\longrightarrow \mathbb{R},\hspace{2mm} h(x,p)= \frac{1}{2}(g_x(p,p)-m^2).  
\end{equation} 
We remove the image of the zero section to avoid problems when $m=0$.  Its differential is
\begin{equation}\label{dh}
dh=\frac{1}{2}\frac{\partial g_{\sigma\delta}}{\partial x^\alpha}p^\sigma p^\delta dx^\alpha+g_{\sigma\delta}p^\sigma dp^\delta=g_{\sigma\delta}p^\sigma\epsilon^\delta.
\end{equation}
$g$ is nondegenerate, so for $p=p^{\alpha}\partial_{x^\alpha}\in TM_x\setminus{\{0_x\}}$ there is some $v=v^\alpha\partial{x^\alpha}\in TM_x$ with $g(v,p)\neq 0$.  Therefore
\begin{equation}
dh_{(x,p)}(v^\alpha\partial_{p^\alpha})=g(v,p)\neq 0.
\end{equation}
This proves $P_m$ is a regular level set of $h$, and hence is a closed embedded hypersurface of $TM\setminus \{0_x:x\in M\}$.  For $m\neq 0$ it is also closed in $TM$, but for $m=0$ every zero vector is a limit point of $P_m$.\\

\noindent{\bf Massive Case:}\\
For $m\neq 0$, we will show that $P_m$ is a semi-Riemannian hypersurface in $TM$ and hence inherits a volume form from $TM$. This is the standard method of inducing a volume form, as presented in \cite{tsamparlis}.  

The normal to $P_m$ is 
\begin{equation}
\grad h=\tilde{g}^{-1}(dh)=p^\alpha\partial_{p^\alpha}
\end{equation}
which has norm squared 
\begin{equation}
\tilde{g}(\grad h,\grad h)=g(p,p)=m^2.
\end{equation}
Therefore, for $m\neq 0$, $P_m$ has a unit normal $N=\grad h/m$ and so it is a semi-Riemannian hypersurface with volume form
\begin{equation}
\widetilde{dV}_m=i_N \widetilde{dV}=\frac{|g|}{m}dx^0\wedge...\wedge dx^n\wedge\left(\sum_\alpha (-1)^\alpha p^\alpha dp^0\wedge...\wedge\widehat{dp^\alpha}\wedge...\wedge dp^n\right)
\end{equation}
where $i_N$ denotes the interior product (or contraction) and a hat denotes an omitted term.  We are also interested in the volume form on $P_{m,x}$ the fiber of $P_m$ over a point $x\in M$.  We obtain this by contracting $\widetilde{dV}$ with an orthonormal basis of vector fields normal to $P_{m,x}$. Such a basis is composed of $N$ together with an orthonormalization of the basis of horizontal fields, $W_\alpha=\Lambda^\beta_\alpha H_\beta$, where $H_\beta$ are defined in \req{horizontal_subspace}. Therefore we have
\begin{equation}\label{contract_horiz}
\widetilde{dV}_{m,x}=i_{W_0}...i_{W_n}\widetilde{dV}_m.
\end{equation}
 We can simplify these expressions by defining a coordinate system on the momentum bundle, writing $p^0$ as a function of the $p^i$.  The details, which are standard, are carried  out in Appendix  \ref{coord_comp}.  The results are
\begin{equation}
\widetilde{dV}_m=\frac{m|g|}{p_0}dx^0\wedge...\wedge dx^n\wedge dp^1\wedge...\wedge dp^n,\\
\end{equation}
\begin{equation}
\widetilde{dV}_{m,x}=\frac{m|g|^{1/2}}{p_0}dp^1\wedge...\wedge dp^n.
\end{equation}
We define $\pi$ and $\pi_x$ by
\begin{equation}
\pi=\frac{1}{m}\widetilde{dV}_m=\frac{|g|}{p_0}dx^0\wedge...\wedge dx^n\wedge dp^1\wedge...\wedge dp^n,\\
\end{equation}
\begin{equation}\label{pi_x}
\pi_x=\frac{1}{m}\widetilde{dV}_{m,x}=\frac{|g|^{1/2}}{p_0}dp^1\wedge...\wedge dp^n.
\end{equation}
We will typically omit the subscript $x$ and let the context distinguish whether we are integrating over the full momentum bundle (i.e. both over spacetime and momentum variables) or just momentum space at a single point in spacetime.  \\

\noindent{\bf Massless Case:}\\
When $m=0$ the above construction fails.  However, we can use proposition \ref{induced_vol_form} to induce a volume form using the map \req{defining_function} defined above. Here we carry out the construction for the induced volume form on $P_{m,x}$ for any $m\geq 0$. The volume form on each tangent space $T_xM$ is
\begin{equation}
\tilde{dV}_x=|g(x)|^{1/2}dp^0\wedge...\wedge dp^n.
\end{equation}
We assume that the coordinates are chosen so that the vector field $\partial_{p^0}$ is timelike. By \req{dh} we find
\begin{equation}
dh(\partial_{p^0})=g_{\alpha 0}p^\alpha\neq 0
\end{equation}
on $P_{m,x}$.  Therefore, by corollary \ref{induced_vol_eq} the induced volume form is
\begin{align}\label{mass_shell_vol}
\omega=&\frac{1}{dh(\partial_{p^0})} i_{\partial_{p^0}} \tilde{dV}_x\\
=&\frac{|g|^{1/2}}{p_0}dp^1\wedge...\wedge dp^n.
\end{align}
We can also pull this back under the coordinate chart on $P_{m,x}$ defined in Appendix \ref{coord_comp} and obtain the same expression in coordinates. This result agrees with our prior definition of \req{pi_x} in the case where $m>0$ but is also able to handle the massless case in a uniform manner, without resorting to a limiting argument as $m\rightarrow 0$.

We also point out another convention in common use where $h$ is replaced by $2h$.  This leads to an additional factor of $1/2$ in the volume element, distinguishing this definition from the one based on semi-Riemannian geometry.  However, the convention
\begin{equation}
\omega=\frac{|g|^{1/2}}{2p_0}dp^1\wedge...\wedge dp^n
\end{equation}
 is in common use and will be employed in the scattering integral computations in chapter \ref{ch:coll_simp}.

\subsection{Relativistic Phase Space}
Here we justify several manipulations that are useful for working with relativistic phase space integrals.

\begin{lemma}\label{parallel_lemma}
Let $V$ be an $n$-dimensional vector space.  The subset of $\prod_1^N V\setminus\{0\}$ consisting of $N$-tuples of parallel vectors is an $n+N-1$ dimensional closed submanifold of $\prod_1^N V\setminus\{0\}$.
\end{lemma}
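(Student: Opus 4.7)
The plan is to exhibit the parallel locus $S \subset \prod_1^N(V\setminus\{0\})$ as the image of an explicit proper injective immersion whose domain is manifestly a smooth manifold of dimension $n+N-1$, and to check closedness separately via a wedge-product condition.

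First I would fix the parameterization
\begin{equation*}
\Psi:(V\setminus\{0\})\times(\mathbb{R}\setminus\{0\})^{N-1}\longrightarrow \prod_1^N(V\setminus\{0\}),\qquad \Psi(v,\lambda_2,\dots,\lambda_N)=(v,\lambda_2 v,\dots,\lambda_N v).
\end{equation*}
This is smooth, its image lies in $S$ (each component is a nonzero scalar multiple of $v$), and it is a bijection onto $S$: an arbitrary $(v_1,\dots,v_N)\in S$ has all entries nonzero and parallel, so $v_i=\lambda_i v_1$ with $\lambda_i\neq 0$, and then $\Psi(v_1,\lambda_2,\dots,\lambda_N)=(v_1,\dots,v_N)$. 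The domain has the correct dimension $n+(N-1)=n+N-1$, so the remaining task is to show $\Psi$ is an embedding.

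Next I would verify the immersion property by direct differentiation: the differential sends $(w,t_2,\dots,t_N)$ to $(w,\lambda_2 w+t_2 v,\dots,\lambda_N w+t_N v)$, which vanishes only if $w=0$ (from the first slot) and then $t_i v=0$, forcing $t_i=0$ for all $i$ since $v\neq 0$. To upgrade to a topological embedding I would use properness: if a sequence $\Psi(v^{(k)},\lambda_2^{(k)},\dots,\lambda_N^{(k)})$ lies in a compact $K\subset \prod(V\setminus\{0\})$, then because $K$ is bounded and bounded away from $0$ componentwise, the vectors $v^{(k)}=v_1^{(k)}$ stay in a compact subset of $V\setminus\{0\}$ and each ratio $\lambda_i^{(k)}=\|v_i^{(k)}\|/\|v_1^{(k)}\|\cdot(\pm 1)$ is bounded above and bounded away from $0$. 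Hence $\Psi^{-1}(K)$ has compact closure in $(V\setminus\{0\})\times(\mathbb{R}\setminus\{0\})^{N-1}$, so $\Psi$ is proper; a proper injective immersion between manifolds is an embedding, giving $S$ the structure of an embedded submanifold of the desired dimension.

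For closedness, I would observe that parallelism of two nonzero vectors is equivalent to $v_i\wedge v_j=0$ in $\Lambda^2 V$, so $S$ is the preimage of $0$ under the continuous map $(v_1,\dots,v_N)\mapsto (v_i\wedge v_j)_{i<j}$ from $\prod(V\setminus\{0\})$ to $\bigoplus_{i<j}\Lambda^2 V$, hence closed. The only mildly subtle step is the properness argument for $\Psi$: it is precisely the exclusion of the zero vector from the target that keeps the $\lambda_i$ from accumulating at $0$ along sequences in $K$, and without this the argument for embedding (as opposed to mere injective immersion) would fail.
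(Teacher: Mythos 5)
Your proposal follows essentially the same route as the paper: the paper's entire proof is to exhibit the map $F(p,a^2,\dots,a^N)=(p,a^2p,\dots,a^Np)$ and assert that it is an injective immersion onto the parallel locus. Your version is in fact more complete than the paper's — you correct the domain to $(V\setminus\{0\})\times(\mathbb{R}\setminus\{0\})^{N-1}$ so the map actually lands in and surjects onto the stated set, and you supply the properness and wedge-product closedness arguments that are needed to upgrade ``injective immersion onto $S$'' to ``closed embedded submanifold,'' steps the paper leaves implicit.
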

\begin{proof}
The map $V\times \mathbb{R}^{N-1}\rightarrow  \prod_1^N V\setminus\{0\}$ given by
\begin{equation}
F(p,a^2,...,a^N)=(p,a^2p,...,a^{N}p)
\end{equation}
is an injective immersion and maps onto the desired set.
\end{proof}
For reactions converting $k$ particles to $l$ particles, the relevant phase space is $3(k+l)-4$ dimensional and so for $k+l\geq 4$ (in particular for $2$-$2$ reactions), the set of parallel $4$-momenta is lower dimensional and can be ignored. This will be useful as we proceed.

\begin{lemma}
Let $N\geq 4$. Then
\begin{equation}
\prod_i \delta(p_i^2-m_i^2)d^4p_i=\left(\prod_i \delta(p_i^2-m_i^2)\right)\prod_i d^4p_i
\end{equation}
 and 
\begin{equation}
\delta(\Delta p)\left[\left(\prod_i \delta(p_i^2-m_i^2)\right)\prod_i d^4p_i\right]=\left(\delta(\Delta p)\prod_i \delta(p_i^2-m_i^2)\right)\prod_i d^4p_i
\end{equation}
where each $d^4p_i$ is the standard volume form on future directed vectors, $\{p:p^2\geq 0, p^0>0\}$, we give $\mathbb{R}$ its standard volume form, and $\Delta p=a^ip_i$, $a^i=1$, $i=1,...,l$, $a^i=-1$, $i=l,...,N$. 
\end{lemma}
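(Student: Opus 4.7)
The strategy is to reduce both identities to the factorization propositions for delta-forms proved earlier in the chapter, namely Proposition \ref{delta_associative} and the product proposition displayed just before Proposition \ref{delta_product}. The plan splits into (i) verifying the regular-value hypotheses and (ii) iterating these factorization results.

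First I would verify regularity. By the computation leading to equation \req{dh}, each map $F_i\colon V_i\to\mathbb{R}$ with $F_i(p_i)=p_i^2-m_i^2$ is submersive on $F_i^{-1}(0)$. Hence the product $F = F_1\times\cdots\times F_N$ on $\prod_i V_i$ has block-diagonal differential of maximal rank, so $(0,\ldots,0)$ is a regular value of $F$. For the first identity, I would then apply the unnumbered product-of-deltas proposition (stated just before Proposition \ref{delta_product}) inductively. Because each $F_k$ depends only on the factor $V_k$ and the ambient form factors as $d^4p_k\wedge\bigl(\bigwedge_{i\ne k}d^4p_i\bigr)$, each application peels off one factor $\delta(p_k^2-m_k^2)\,d^4p_k$ cleanly; iterating $N-1$ times yields the claimed product decomposition.

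For the second identity, I would take $G_1=\Delta p$ and $G_2=F$ and apply Proposition \ref{delta_associative} to the splitting $G_1\times G_2\colon\prod_i V_i\to\mathbb{R}^4\times\mathbb{R}^N$, giving
\begin{equation*}
\delta(G_1\times G_2)\bigwedge_i d^4p_i \;=\; \delta(G_1)\Bigl(\delta(G_2)\bigwedge_i d^4p_i\Bigr).
\end{equation*}
Substituting the first identity for the inner delta on the right then produces the claim.

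The main obstacle is the regularity hypothesis for $G_1\times G_2$. A direct rank count shows that $dG_1$ restricted to $\ker dG_2=\{(v_i):\eta(p_i,v_i)=0\}$ fails to surject onto $\mathbb{R}^4$ precisely when all the Minkowski-orthogonal hyperplanes $\{v:\eta(p_i,v)=0\}$ coincide, which occurs iff the $p_i$ are all proportional. By Lemma \ref{parallel_lemma} this locus is $(N+3)$-dimensional inside the $4N$-dimensional space $\prod_i V_i$; combined with the mass-shell and conservation constraints, its intersection with the joint zero set has dimension strictly below that of the regular level set, $3N-4$, once $N\geq 4$. Thus the failure of regularity is confined to a lower-dimensional exceptional set of precisely the sort accommodated by the extended definition of $\delta_c$ in \req{delta_def}, and the two factorization propositions apply. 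This dimensional count is exactly the reason $N\geq 4$ is imposed, and it is the only genuinely new ingredient beyond invoking the earlier propositions.
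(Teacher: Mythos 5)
Your proposal is correct and follows essentially the same route as the paper: both identities are reduced to Proposition \ref{delta_associative} and the product-of-deltas proposition, regularity of the joint map is checked away from the locus of mutually parallel momenta, and Lemma \ref{parallel_lemma} together with the count $N+3<3N-4$ for $N\geq 4$ disposes of the exceptional set exactly as the paper's definition of $\delta_c$ permits. The only cosmetic difference is that you verify surjectivity of $dG_1$ on $\ker dG_2$ by noting that $\sum_i p_i^\perp=\mathbb{R}^4$ as soon as two of the $p_i$ are non-parallel, whereas the paper constructs an explicit preimage $q$ for each basis vector of $\mathbb{R}^4$, splitting on whether $p_j$ is null.
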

\begin{proof}
Let $F_1(p_i)=(p_1^2,...,p_N^2)$ and $F_2(p_i)=(\Delta p,F_1(p_i))$.  We need to show that $(m_1^2,...,m_N^2)$ is a regular value of $F_1$ and $(0,m_1^2,...,m_k^2)$ is a regular value of $F_2$.  The result then follows from proposition \ref{delta_associative}.

It holds for $F_1$ since each $p_i\neq 0$. For $F_2$, the differential is
\begin{equation}
(F_2)*=\left( \begin{array}{cccc}
a^{1}I&a^{2}I&...& a^{N}I \\
2 \eta_{ij}p^j_1&0&...&0\\
\vdots&&&\vdots\\
0&...&0&2 \eta_{ij}p^j_N\\
\end{array} \right)
\end{equation}
where $I$ is the $4$-by-$4$ identity.  The fact that $(F_1)_*$ is onto means that we need only show $(F_2)_*$ maps onto $\mathbb{R}^4\times(0,...,0)$.  

By lemma \ref{parallel_lemma} we assume there exists $i,j$ such that $p_i,p_j$ are not parallel. We are done if for each standard basis vector $e_k\in\mathbb{R}^4$ there exists $q\in\mathbb{R}^4$ such that
\begin{equation}
p_i\cdot q=\frac{1}{a^j}p_i\cdot e_k,\hspace{2mm} p_j\cdot q=0.
\end{equation}
If $p_j$ is null then there is a $c$ such that $q=c p_j$ satisfies these conditions. If $p_j$ is non-null then complete it to an orthonormal basis.  $p_i$ must have a component along the orthogonal complement of $p_j$ and we can take $q$ to be proportional to that component.

\end{proof}

\begin{subappendices}

\section{Volume Form in Coordinates}\label{coord_comp}
Here we derive a useful formula for the volume form on the momentum bundle in a simple coordinate system, continuing from section \ref{rel_vol_form}.  We begin in a coordinate system $x^\alpha$ on $U\subset M$ and the induced coordinates $p^\alpha$ on $TM$ where our only assumption is that the $0$'th coordinate direction is future timelike, and so $g_{00}>0$.  For any $v^i\in \mathbb{R}^n$, let $v^0=-g_{0i}v^i/g_{00}$.  $v^\alpha$ is orthogonal to the $0$'th coordinate direction, and therefore spacelike. Hence 
\begin{equation}
0\geq g_{\alpha \beta}v^\alpha v^\beta=-(g_{0i}v^i)^2/g_{00}+g_{ij}v^iv^j.
\end{equation}
and is zero iff $v^\alpha=0$. Therefore, the following map is well defined
\begin{align}
(x^\alpha,p^j)&\longrightarrow (x^\alpha,p^0(x^\alpha,p^j),p^1,...,p^n),  \hspace{2mm} \alpha=0...n, \hspace{1mm} j=1...n \notag\\
 p^0&=-g_{0j}p^j/g_{00}+\left((g_{0j}p^j/g_{00})^2+(m^2-g_{ij}p^ip^j)/g_{00}\right)^{1/2}.
\end{align}
and smooth on $\mathbb{R}^{n+1}\times\mathbb{R}^n$ if $m\neq 0$, and on $\mathbb{R}^{n+1}\times\left(\mathbb{R}^n\setminus{0}\right)$ if $m=0$.  We also have $g_{00}p^0+g_{0j}p^j>0$ under either of these cases, and so the resulting element of $TM$ is future directed and has squared norm $m^2$, so it maps into $P_m$.  It is a bijection and has full rank, hence it is a coordinate system on $P_m$.  In these coordinates, the volume form is
\begin{align}
\widetilde{dV}_m=&\frac{|g|}{m}dx^0\wedge...\wedge dx^n\wedge\left(p^0dp^1\wedge...\wedge dp^n+\sum_j (-1)^j p^j dp^0\wedge...\wedge\widehat{dp^j}\wedge...\wedge dp^n\right),\\
dp^0=&\partial_{x^\alpha} p^0dx^\alpha+\partial_{p^j}(p^0) dp^j.\notag
\end{align}
The terms in $dp^0$ involving $dx^\alpha$ drop out once they are wedged with $dx^0\wedge...\wedge dx^n$, hence
\small
\begin{align}
\widetilde{dV}_m=&\frac{|g|}{m}dx^0\wedge...\wedge dx^n\wedge\left(p^0dp^1\wedge...\wedge dp^n+\sum_{i,j} (-1)^j p^j \partial_{p^i}p^0 dp^i\wedge...\wedge\widehat{dp^j}\wedge...\wedge dp^n\right)\notag\\
=&\frac{|g|}{m}\left(p^0-\sum_{j}p^j \partial_{p^j}(p^0) \right)dx^0\wedge...\wedge dx^n\wedge dp^1\wedge...\wedge dp^n\notag
\end{align}
\normalsize

\begin{align}
p^0-p^j\partial_{p^j}(p^0)=& p^0+g_{0j}p^j/g_{00}-\frac{(g_{0j}p^j/g_{00})^2-g_{ij}p^ip^j/g_{00}}{\left((g_{0j}p^j/g_{00})^2+(m^2-g_{ij}p^ip^j)/g_{00}\right)^{1/2}}\notag\\
=&\frac{1}{p_0}\left(\frac{1}{g_{00}}(g_{00}p^0+g_{0,j}p^j)^2-(g_{0j}p^j)^2/g_{00}+g_{ij}p^ip^j\right)=\frac{m^2}{p_0}.
\end{align}
Therefore
\begin{equation}
\widetilde{dV}_m=\frac{m|g|}{p_0}dx^0\wedge...\wedge dx^n\wedge dp^1\wedge...\wedge dp^n.
\end{equation}

To compute the volume form on $P_{m,x}$, recall  that 
\begin{equation}\label{contract_horiz}
\widetilde{dV}_{m,x}=i_{W_0}...i_{W_n}\widetilde{dV}_m.
\end{equation}
Where $W_i$ is an orthonormalization of the basis of horizontal fields, $W_\alpha=\Lambda^\beta_\alpha H_\beta$, where $H_\beta$ are defined in \req{horizontal_subspace}. All of the contractions in \req{contract_horiz} that involve the $dp^\alpha$'s will be zero when restricted to $P_{m,x}$ since the $dx^\alpha$ are zero there. Hence we obtain
\begin{align}\label{dV_x}
\widetilde{dV}_{m,x}=&\frac{|g|}{m}\left(p^0-\sum_{j}p^j \partial_{p^j}(p^0) \right)dx^0\wedge...\wedge dx^n\left(W_0,...,W_n)\right) dp^1\wedge...\wedge dp^n\notag\\
=&\frac{|g|\det(\Lambda)}{m}\left(p^0-\sum_{j}p^j \partial_{p^j}(p^0) \right)dx^0\wedge...\wedge dx^n\left(H_0,...,H_n)\right) dp^1\wedge...\wedge dp^n\notag\\
=&\frac{|g|^{1/2}}{m}\left(p^0-\sum_{j}p^j \partial_{p^j}(p^0) \right) dp^1\wedge...\wedge dp^n
\end{align}
since $\det(\Lambda^\sigma_\alpha g_{\sigma\delta}\Lambda^\delta_\beta)=1$.
 In the coordinate system on $P_{m,x}$
\begin{align}
(p^j)&\longrightarrow (p^0(x^\alpha,p^j),p^1,...,p^n),\notag\\
 p^0&=-g_{0j}(x)p^j/g_{00}(x)+\left((g_{0j}(x)p^j/g_{00}(x))^2+(m^2-g_{ij}(x)p^ip^j)/g_{00}(x)\right)^{1/2}.
\end{align}
 the same calculation as above gives the formula
\begin{equation}
\widetilde{dV}_{m,x}=\frac{m|g|^{1/2}}{p_0}dp^1\wedge...\wedge dp^n.
\end{equation}
\end{subappendices}

\chapter{Boltzmann Equation Solver Adapted to Emergent Chemical Non-equilibrium}\label{ch:boltz_orthopoly}

Having completed our brief geometrical interlude, we now return to our study of neutrino freeze-out. The analysis in chapter \ref{ch:model_ind} was based on exact chemical and kinetic equilibrium and sharp freeze-out transitions at $T_{ch}$ and $T_k$, but these are  only approximations.  The  Boltzmann equation is a more precise model of the dynamics of the freeze-out process and furthermore, given the collision dynamics it is capable of capturing in a {\em quantitative manner} the non-thermal distortions from equilibrium, for example the emergence of actual distributions and the approximate values  of $T_{ch}$, $T_k$, and $\Upsilon$.  Indeed,  in  such a dynamical description no hypothesis about the presence of kinetic or chemical (non) equilibrium needs to be made, as the distribution close to \req{k_eq} with   $\Upsilon\ne  1$ emerges naturally as the outcome of collision processes, even when the particle system approaches the freeze-out temperature domain  in chemical equilibrium.

Considering the natural way in which chemical non-equilibrium emerges from chemical equilibrium during freeze-out, it is striking that the literature on Boltzmann solvers does not reflect on the accommodation of emergent chemical non-equilibrium into the method of solution. For an all-numerical solver this may not be a necessary step as long as there are no constraints that preclude development of a general non-equilibrium solution. However, when strong chemical non-equilibrium is present either in the intermediate time period or/and at the end of the evolution, a brute force approach could be very costly in computer time. Motivated by this circumstance and past work with physical environments in which chemical non-equilibrium arose,  we introduce here a  spectral method for solving the Boltzmann equation that utilizes a dynamical basis of orthogonal polynomials which is adapted to the case of emerging chemical non-equilibrium. We validate our method via a  model problem  that captures the essential physical characteristics of interest and use it to highlight the type of situation where this new method exhibits its advantages.

In the cosmological neutrino freeze-out context, the general relativistic Boltzmann equation has been used to study neutrino freeze-out in the early universe and has been successfully solved using both discretization in momentum space \cite{Madsen,Dolgov_Hansen,Gnedin,Mangano2005} and a spectral method based on a fixed basis of orthogonal polynomials \cite{Esposito2000,Mangano2002}.    In Refs.\cite{Wilkening,Wilkening2} the non-relativistic Boltzmann equation was solved via a spectral method similar in  one important mathematical idea to the approach we present here.  For near equilibrium solutions, the spectral methods have the advantage of requiring a relatively small number of modes to obtain an accurate solution, as opposed to momentum space discretization which in general leads to a large highly coupled nonlinear system of odes irrespective of the near equilibrium nature of the system.  

The efficacy of the spectral method used in \cite{Esposito2000,Mangano2002} can largely be attributed to the fact that, under the conditions considered there, the true solution is very close to a chemical equilibrium distribution, \req{ch_eq}, where the temperature is controlled by the dilution of the system. However, as we have discussed, the Planck CMB results \cite{Planck} indicate the possibility that neutrinos participated in reheating to a greater degree than previously believed, leading to a more pronounced chemical non-equilibrium and reheating. Efficiently obtaining this emergent chemical non-equilibrium within realm of kinetic theory motivates the development of a new numerical method that adapts to this new circumstance.

In section \ref{boltzmann_basics} we give a basic overview of the relativistic Boltzmann equation in an FRW Universe.  In section \ref{the_method} we discuss our modified spectral method in detail.  In subsection \ref{free_stream_approach} we recall the orthogonal polynomial basis used in \cite{Esposito2000,Mangano2002} and in subsection \ref{kinetic_eq_approach} we introduce our modified basis and characterize precisely the differences in the method  we propose. We compare these two bases in subsection \ref{basis_comparison}. In subsection \ref{dynamics_sec} we use the Boltzmann equation to derive the dynamics of the mode coefficients and identify physically motivated evolution equations for the effective temperature and fugacity.  In section \ref{validation} we validate the method using a model problem.  In  \ref{orthopoly_app} we give further details on the construction of the parametrized family orthogonal polynomials we use to solve the Boltzmann equation.  The work presented in this chapter can be found in our paper \cite{Birrell_orthopoly}.

\section{Relativistic Boltzmann Equation }\label{boltzmann_basics}

Recall the general relativistic Boltzmann equation introduced in chapter \ref{ch:model_ind}
\begin{equation}\label{boltzmann}
p^\alpha\partial_{x^\alpha}f-\Gamma^j_{\mu\nu}p^\mu p^\nu\partial_{p^j}f=C[f].
\end{equation}
As discussed above, the left hand side expresses the fact that particles undergo geodesic motion in between point collisions. The term $C[f]$ on the right hand side of the Boltzmann equation is called the collision operator and models the short range scattering processes that cause deviations from geodesic motion. For $2\leftrightarrow 2$ reactions between fermions, such as neutrinos and $e^\pm$, the collision operator takes the form
\begin{align}\label{coll}
C[f_1]=&\frac{1}{2}\int F(p_1,p_2,p_3,p_4) S |\mathcal{M}|^2(2\pi)^4\delta(\Delta p)\prod_{i=2}^4\delta_0(p_i^2-m_i^2)\frac{d^4p_i}{(2\pi)^3},\\
F=&f_3(p_3)f_4(p_4)f^1(p_1)f^2(p_2)-f_1(p_1)f_2(p_2)f^3(p_3)f^4(p_4),\notag\\
f^i=&1- f_i.\notag
\end{align}
Here $|\mathcal{M}|^2$ is the process amplitude or matrix element, $S$ is a numerical factor that incorporates symmetries and prevents over-counting, $f^i$ are the fermi blocking factors, $\delta(\Delta p)$ enforces four-momentum conservation in the reactions, and the $\delta_0(p_i^2-m_i^2)$ restrict the four momenta to the future timelike mass shells.

The matrix element for a $2-2$ reaction is some function of the Mandelstam variables $s, t, u$, of which only two are independent, defined by 
\begin{align}\label{Mandelstam}
&s=(p_1+p_2)^2=(p_3+p_4)^2,\\
&t=(p_3-p_1)^2=(p_2-p_4)^2,\\
&u=(p_3-p_2)^2=(p_1-p_4)^2.\\
&s+t+u=\sum_i m_i^2.
\end{align}
We will return to a study of $2$-$2$ scattering kernels for neutrino processes in chapter \ref{ch:coll_simp}.  When testing our method in this chapter, we will use a simplified model to avoid any application specific details.

We now restrict our attention to  systems of fermions under the assumption of homogeneity and isotropy. We assume that the particle are effectively massless,  i.e. the temperature is much greater than the mass scale.  Homogeneity and isotropy imply that the distribution function of each particle species under consideration has the form $f=f(t,p)$ where $p$ is the magnitude of the spacial component of the four momentum.  In a flat FRW universe the Boltzmann equation reduces to
\begin{equation}\label{boltzmann_p}
\partial_t f-pH \partial_p f=\frac{1}{E}C[f],\hspace{2mm} H=\frac{\dot{a}}{a}.
\end{equation}

The Boltzmann equation \req{boltzmann_p} can be simplified by the method of characteristics. Writing $f(p, t)=g(a(t)p,t)$ and reverting back to call the new distribution $g\to f$, the 2nd term in \req{boltzmann_p} cancels out and the evolution in time can be studied directly.  Using the formulas for the moments of $f$ \req{moments}, this transformation implies for the rate of change in the   number density and energy density  
\begin{align}\label{n_div}
\frac{1}{a^3}\frac{d}{dt}(a^3n_1)=&\frac{g_p}{(2\pi)^3}\int C[f_1] \frac{d^3p}{E}.\\
\label{rho_div}
\frac{1}{a^4}\frac{d}{dt}(a^4\rho_1)=&\frac{g_p}{(2\pi)^3}\int C[f_1] d^3p .
\end{align} 
For free-streaming particles the vanishing of the collision operator implies conservation of `comoving' particle number of species 1. From the associated powers of $a$ in \req{n_div} and \req{rho_div} we see that the energy per free streaming particle as measured by an observer scales as $1/a$, a manefestation or redshift.

\section{Spectral Methods}\label{the_method}
\subsection{Polynomials for systems close to kinetic and chemical equilibrium}\label{free_stream_approach}
Here we outline the approach for solving \req{a_vars} used in \cite{Esposito2000,Mangano2002} in order to contrast it with our approach as presented in subsection \ref{kinetic_eq_approach}.  As just discussed, the Boltzmann equation  is a linear first order partial differential equation and can be reduced using a new variable $y=a(t)p$  via the method of characteristics and exactly solved in the collision free ($C[f]=0)$ limit.   This motivates a change of variables from $p$ to $y$ which eliminates the momentum derivative, leaving the simplified equation
\begin{equation}\label{a_vars}
\partial_tf=\frac{1}{E} C[f].
\end{equation}

We let $\hat\chi_i$ be the orthonormal polynomial basis on the interval $[0,\infty)$ with respect to the weight function
\begin{equation}\label{free_stream_weight}
f_{ch}=\frac{1}{e^y+1},
\end{equation}
constructed as in  \ref{orthopoly_app}. $f_{ch}$ is the Fermi-Dirac chemical equilibrium distribution for massless fermions and temperature $T=1/a$.  Therefore this ansatz is well suited to distributions that are manifestly in chemical equilibrium ($\Upsilon=1$) or remain close and with $T\propto 1/a$, which we call dilution temperature scaling.  Assuming that $f$ is such a distribution  motivates the decomposition
\begin{equation}\label{free_stream_ansatz}
f=f_{ch}\chi,\qquad \chi=\sum_i d^i\hat\chi_i.
\end{equation}

Using this ansatz  equation \req{a_vars} becomes
\begin{equation}
\dot{d}^k=\int_0^\infty\frac{1}{E}\hat{\chi}_k C[f]dy.
\end{equation}
Because of \req{free_stream_ansatz}, we call this the chemical equilibrium method.

We also have the following expressions for the particle number density and energy density
\begin{align}\label{free_stream_moments}
n&=\frac{g_p}{2\pi^2 a^3}\sum_0^2 d^i\int_0^\infty f_{ch}\hat\chi_i y^2dy,\\
\rho&=\frac{g_p}{2\pi^2a^4}\sum_0^3 d^i\int_0^\infty f_{ch}\hat\chi_i y^3dy.
\end{align}

Note that the sums truncate at $3$ and $4$ terms respectively, due to the fact that $\hat\chi_k$ is orthogonal to all polynomials of degree less than $k$. This implies that in general, at least four modes are required to capture both the particle number and energy flow. More modes are needed if the non-thermal distortions are large and the back reaction of higher modes on lower modes is significant.

\subsection{Polynomials for systems   not close to chemical equilibrium}\label{kinetic_eq_approach}
Our primary interest is in solving \req{T_vars} for systems close to the kinetic equilibrium distribution \req{k_eq} but not necessarily in chemical equilibrium, a task for which the method in the previous section is not well suited in general. For a general kinetic equilibrium distribution, the temperature does not necessarily scale as $T\propto 1/a$ i.e. the temperature is not controlled solely by dilution.  For this reason, we will find it more useful to make the change of variables $z=p/T(t)$ rather than the scaling used in \req{a_vars}.  Here $T(t)$ is to be viewed as the time dependent effective temperature of the distribution $f$, a notion we will make precise later.  With this change of variables, the Boltzmann equation becomes
\begin{equation}\label{T_boltzmann}
\partial_t f-z\left(H+\frac{\dot T}{T}\right)\partial_z f=\frac{1}{E}C[f].
\end{equation}

 To model a distribution close to kinetic equilibrium at temperature $T$ and fugacity $\Upsilon$, we assume
\begin{equation}\label{kinetic_approx}
f(t,z)= f_\Upsilon (t,z)\psi(t,z),\hspace{2mm} f_\Upsilon(z)=\frac{1}{\Upsilon^{-1}e^z+1}
\end{equation}
where the kinetic equilibrium distribution $f_\Upsilon $ depends on $t$ because we are assuming $\Upsilon$ is time dependent (with dynamics to be specified later). 

We will solve \req{T_boltzmann} by expanding $\psi$ in the basis of orthogonal polynomials generated by the parametrized weight function
\begin{equation}\label{weight}
w(z)\equiv w_\Upsilon(z)\equiv z^2f_\Upsilon (z)=\frac{z^2}{\Upsilon^{-1} e^z+1}
\end{equation}
on the interval $[0,\infty)$. See \ref{orthopoly_app} for details on the construction of these polynomials and their dependence on the parameter $\Upsilon$. This choice of weight is physically motivated by the fact that we are interested in solutions that describe massless particles not too far from kinetic equilibrium, but (potentially) far from chemical equilibrium. We call this the chemical non-equilibrium method.

We emphasize that we have made three important changes as compared to  the chemical equilibrium method:
\begin{enumerate}
\item  We allow a general time dependence of the effective temperature parameter $T$ i.e. we do not assume dilution temperature scaling $T=1/a$.
\item We have replaced the chemical  equilibrium distribution in the weight \req{free_stream_weight}  with a chemical non-equilibrium distribution  $f_\Upsilon $ i.e. we introduced $\Upsilon$.
\item We have introduced an additional factor of $z^2$ to the functional form of the weight as proposed in a different context in Refs.\cite{Wilkening,Wilkening2}. 
\end{enumerate} 
We note that the authors of \cite{Esposito2000} did consider the case of fixed chemical potential imposed as an initial condition. This is not the same as an emergent chemical non-equilibrium, i.e. time dependent $\Upsilon$, that we study here, nor do they consider a $z^2$ factor in the weight. We borrowed the idea for the $z^2$ prefactor from   Ref.\cite{Wilkening2}, where it was found that including a $z^2$ factor along with the non-relativistic chemical equilibrium distribution in the weight improved the accuracy of their method. Fortuitously,  this will also allow us to capture the particle number and energy flow with fewer terms than required by the chemical equilibrium method. A suitably modified weight and method allows us to maintain  these advantages when a particle mass scale becomes relevant. However, there are some additional qualifications and subtleties that arise in such a program and so we return to this problem in a subsequent work.

\subsection{Comparison of Bases}\label{basis_comparison}

Before deriving the dynamical equations for the method outlined in section \ref{kinetic_eq_approach}, we illustrate the error inherent in approximating the chemical non-equilibrium distribution \req{k_eq}  with a  chemical equilibrium distribution \req{ch_eq} whose temperature is $T=1/a$.   Given a chemical non-equilibrium distribution 
\begin{equation}\label{zeroth_approx}
f_\Upsilon (y)=\frac{1}{\Upsilon^{-1}e^{y/(aT)}+1},
\end{equation}
 we can attempt to write it as a perturbation of the chemical equilibrium distribution,  
\begin{equation}\label{chi_def}
f_\Upsilon=f_{ch}\chi
\end{equation} as we would need to when using the method of section \ref{free_stream_approach}.  We expand $\chi=\sum_i d^i\hat\chi_i$ in the orthonormal basis generated by $f_{ch}$ and, using $N$ terms, form the $N$-mode approximation $f_\Upsilon^N$ to $f_\Upsilon$.  The $d^i$ are obtained by taking the $L^2(f_{ch}dy)$ inner product of $\chi$ with the basis function $\hat\chi_i$,
\begin{equation}
d^i=\int\hat\chi_i \chi f_{ch}dy=\int\hat\chi_i  f_\Upsilon dy.
\end{equation}
 Figures \ref{fig:free_stream_f0_approx_Ups_5} and \ref{fig:free_stream_f0_approx_Ups_1_5} show the normalized $L^1(dx)$ errors between $f_\Upsilon^N$ and $f_\Upsilon$, computed via
\begin{equation}
\text{error}_N=\frac{\int_0^\infty |f_\Upsilon -f_\Upsilon ^N|dy}{\int_0^\infty |f_\Upsilon |dy}.
\end{equation}

We note the appearance of the reheating ratio
\begin{equation}\label{reheat}
 R\equiv aT  
\end{equation}
in the denominator of \req{zeroth_approx}, which comes from changing variables from $z=p/T$ in \req{weight} to $y=ap$ in order to compare with \req{free_stream_weight}.  Physically, $R$ is the ratio of the physical temperature $T$ to the dilution controlled temperature scaling  of $1/a$.   In physical situations, including cosmology, $R$ can vary from unity when dimensioned energy scales influence dynamical equations for $a$. From the error plots we see that for $R$ sufficiently close to $1$, the approximation performs well with a small number of terms, even with $\Upsilon\neq 1$.  

\begin{figure}[H]
 \begin{minipage}[b]{0.5\linewidth}
\centerline{\includegraphics[height=6.cm]{./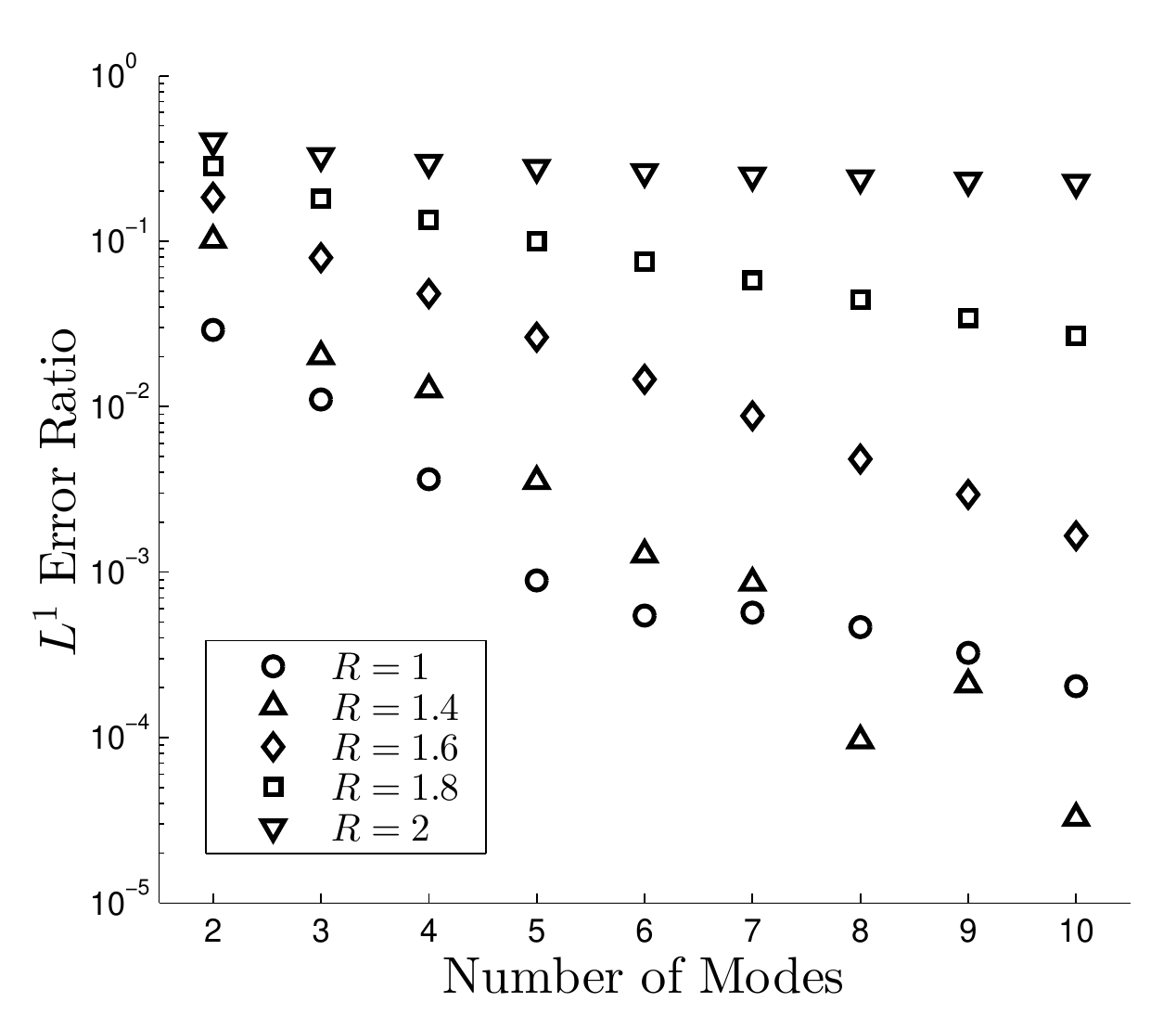}}
\caption{Errors in expansion of \req{zeroth_approx} as a function of number of modes, $\Upsilon=0.5$.}\label{fig:free_stream_f0_approx_Ups_5}
 \end{minipage}
 \hspace{0.5cm}
 \begin{minipage}[b]{0.5\linewidth}
\centerline{\includegraphics[height=6cm]{./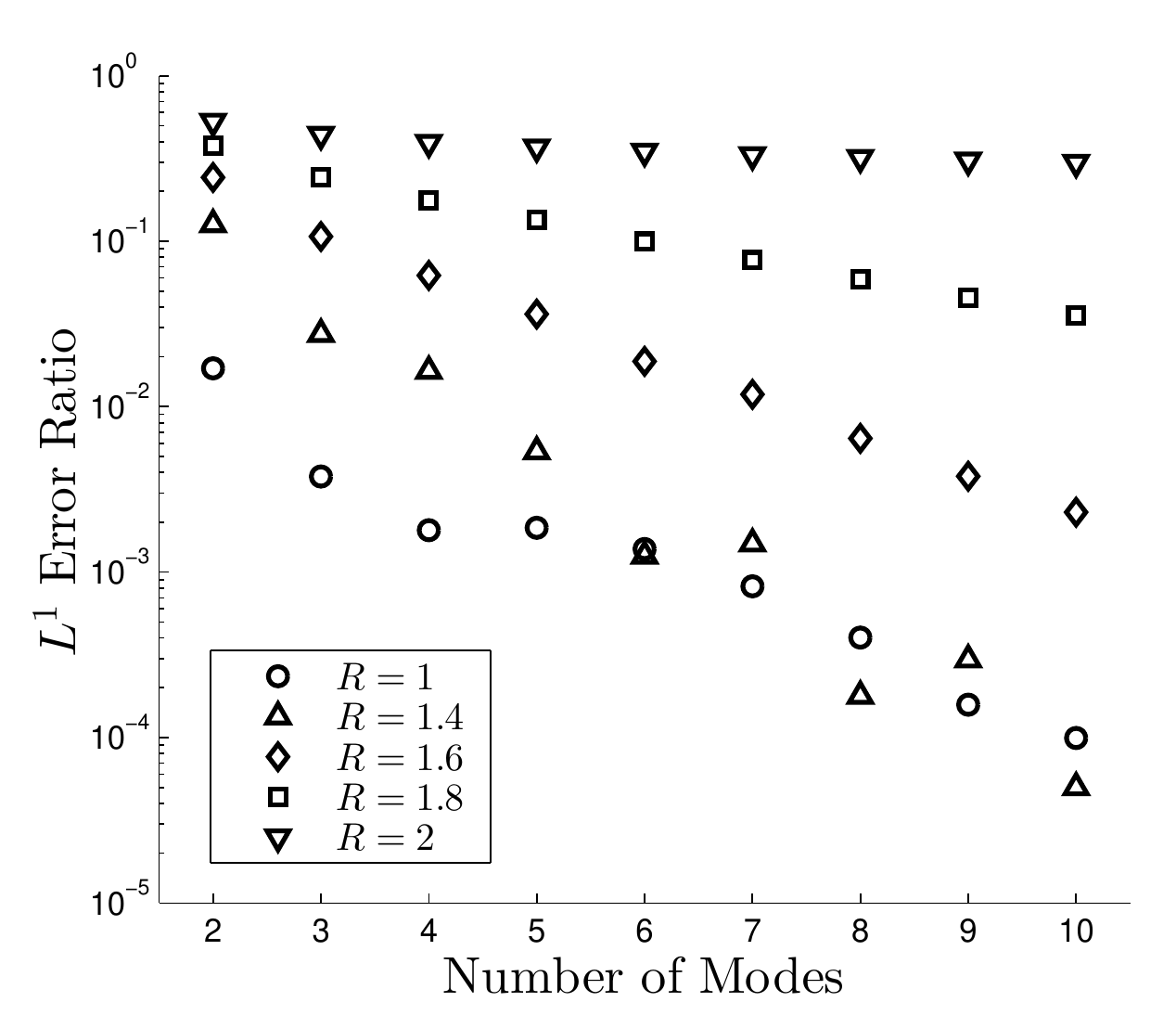}}
\caption{Errors in  expansion of \req{zeroth_approx} as a function of number of modes, $\Upsilon=1.5$.}\label{fig:free_stream_f0_approx_Ups_1_5}
 \end{minipage}
 \end{figure}

 In the case of large reheating, we find that when $R$ approaches and surpasses $2$, large spurious oscillations begin to appear in the expansion and they persist even when a large number of terms are used, as seen in figures  \ref{fig:free_stream_f0_approx_Ups_1_T_r_1_85} and \ref{fig:free_stream_f0_approx_Ups_1_T_r_2}, where we compare $f_\Upsilon/f_{ch}^{1/2}$ with $f_{\Upsilon}^N/f_{ch}^{1/2}$ for $\Upsilon=1$ and $N=20$.    See Ref.~\cite{Birrell_orthopoly} for futher discussion of the origin of these oscillations. This demonstrates that the chemical equilibrium method with dilution temperature scaling will  perform extremely poorly in situations that experience a large degree of reheating. For $R\approx 1$, the benefit of including fugacity is not as striking, as the chemical equilibrium basis is able to approximate \req{zeroth_approx} reasonably well.  However, for more stringent error tolerances including $\Upsilon$ can reduce the number of required modes in cases where the degree of chemical non-equilibrium is large.
\begin{figure}[H]
 \begin{minipage}[b]{0.5\linewidth}
\centerline{\includegraphics[height=6.cm]{./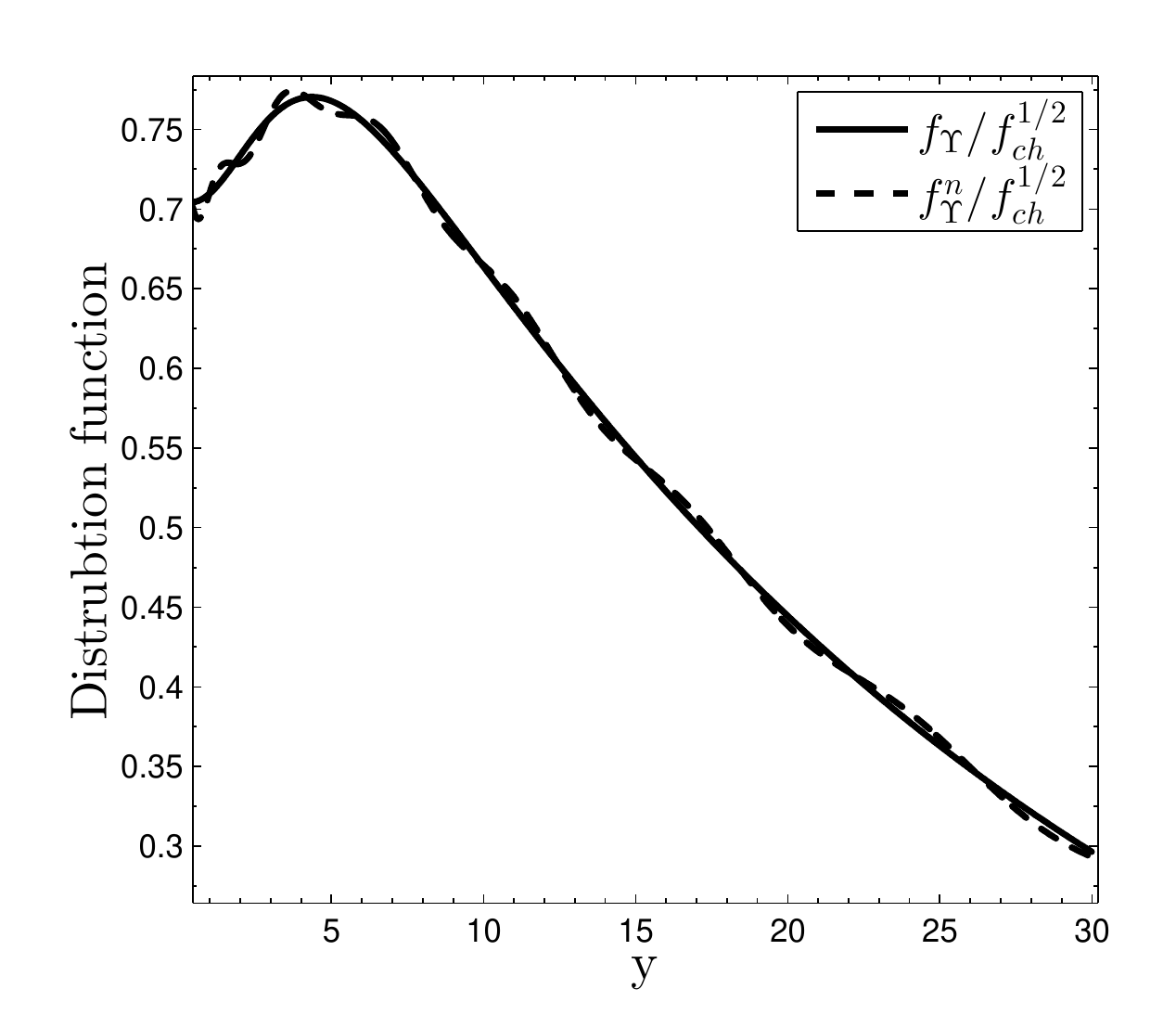}}
\caption{Approximation to \req{zeroth_approx} for $\Upsilon=1$ and $R=1.85$ using the first $20$ basis elements generated by \req{free_stream_weight}.}\label{fig:free_stream_f0_approx_Ups_1_T_r_1_85}
 \end{minipage}
 \hspace{0.5cm}
 \begin{minipage}[b]{0.5\linewidth}
\centerline{\includegraphics[height=6.cm]{./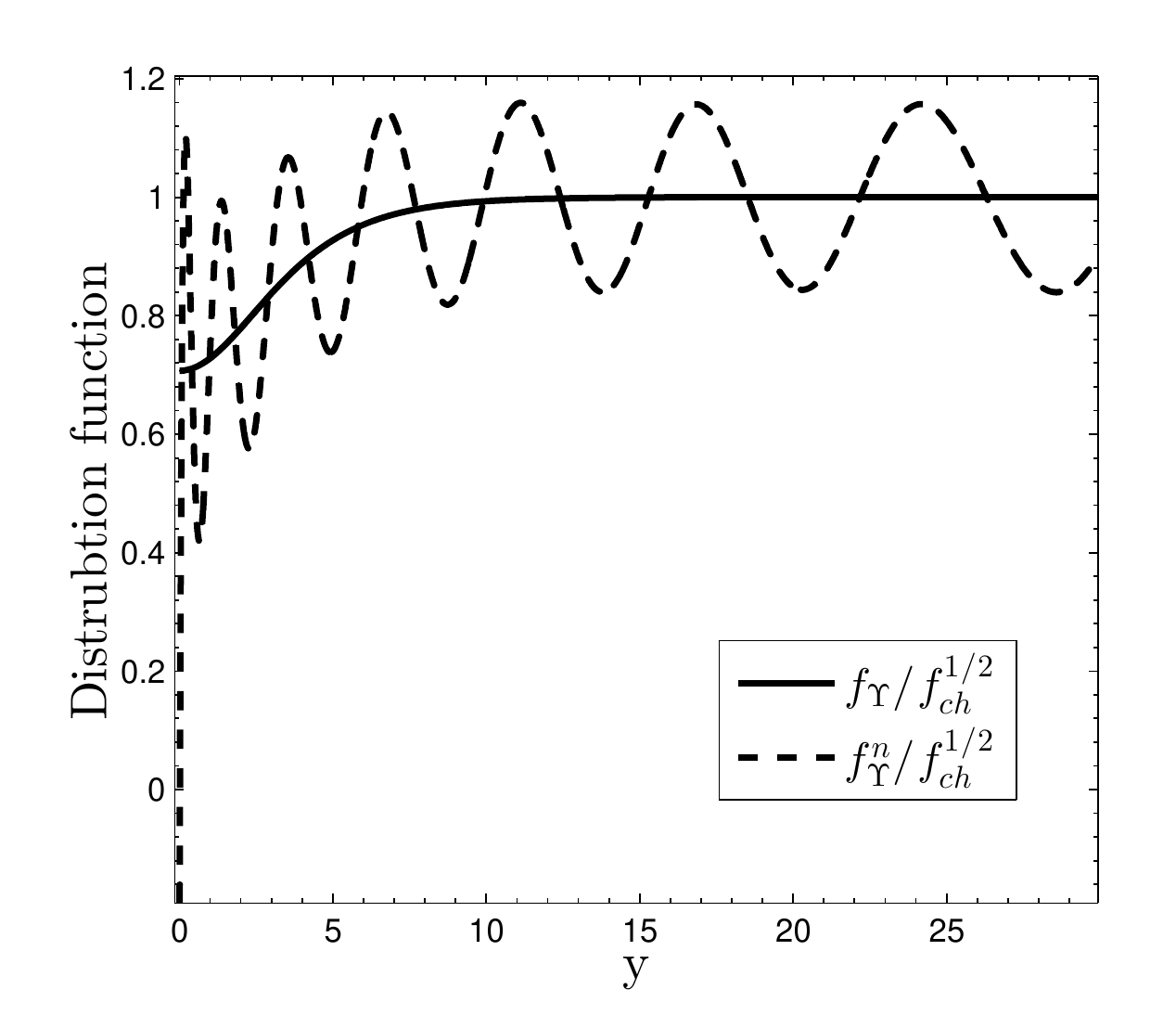}}
\caption{Approximation to \req{zeroth_approx} for $\Upsilon=1$ and $R=2$ using the first $20$ basis elements generated by \req{free_stream_weight}.}\label{fig:free_stream_f0_approx_Ups_1_T_r_2}
 \end{minipage}
 \end{figure}

\subsection{Dynamics}\label{dynamics_sec}
In this section we derive the dynamical equations for the  method outlined in section \ref{kinetic_eq_approach}.  In particular, we identify physically motivated dynamics for the effective temperature and fugacity.  Using \req{T_boltzmann} and the definition of $\psi$ from \req{kinetic_approx} we have
\begin{align}\label{near_equilib_eq}
\partial_t \psi+\frac{1}{f_\Upsilon }\frac{\partial f_\Upsilon }{\partial\Upsilon}\dot\Upsilon\psi-\frac{z}{f_\Upsilon }\left(H+\frac{\dot{T}}{T}\right)\left(\psi\partial_zf_\Upsilon +f_\Upsilon \partial_z \psi\right)=\frac{1}{f_\Upsilon E}C[f_\Upsilon \psi].
\end{align}
 Denote the monic orthogonal polynomial basis generated by the weight \req{weight} by $\psi_n$, $n=0,1,...$ where $\psi_n$ is degree $n$ and call the normalized versions  $\hat{\psi}_n$. Recall that $\hat\psi_n$ depend on $t$ due to the $\Upsilon$ dependence of the weight function used in the construction. Consider the space of polynomial of degree less than or equal to $N$, spanned by $\hat\psi_n$, $n=0,...,N$.   For $\psi$ in this subspace, we expand $\psi=\sum_{j=0}^Nb^j\hat\psi_j$ and use \req{near_equilib_eq}  to obtain
\begin{align}\label{T_vars}
\sum_i \dot{b}^i\hat\psi_i=&\sum_ib^i\frac{z}{f_\Upsilon }\left(H+\frac{\dot{T}}{T}\right)\left(\partial_z(f_\Upsilon )\hat\psi_i+f_\Upsilon \partial_z\hat\psi_i\right)\\
&-\sum_ib^i\left(\dot{\hat{\psi}}_i+\frac{1}{f_\Upsilon }\frac{\partial f_\Upsilon }{\partial\Upsilon}\dot\Upsilon\hat\psi_i\right)+\frac{1}{f_\Upsilon E}C[f].
\notag
\end{align}
From this we see  that the equations obtained from the Boltzmann equation by projecting onto the finite dimensional subspace are

\begin{align}
\dot b^k=& \sum_i b^i\left(H+\frac{\dot{T}}{T}\right)\left(\langle\frac{z}{f_\Upsilon }\hat\psi_i\partial_zf_\Upsilon ,\hat\psi_k\rangle+\langle z\partial_z \hat\psi_i,\hat\psi_k\rangle\right) \\
&-\sum_i b^i\dot{\Upsilon}\left(\langle\frac{1}{f_\Upsilon }\frac{\partial f_\Upsilon }{\partial\Upsilon}\hat\psi_i,\hat\psi_k\rangle+\langle\frac{\partial\hat{\psi}_i}{\partial \Upsilon},\hat\psi_k\rangle\right)+\langle\frac{1}{f_\Upsilon E}C[f],\hat\psi_k\rangle\notag
\end{align}

where $\langle\cdot,\cdot\rangle$ denotes the inner product defined by the weight function \req{weight}
\begin{equation}
\langle h_1,h_2\rangle=\int_0^\infty h_1(z)h_2(z)w_\Upsilon(z)dz.
\end{equation}
  The term in brackets comprises the linear part of the system, while the collision term contains polynomial nonlinearities when multiple coupled distribution are being modeled using a $2$-$2$ collision operator \req{coll}.  

To isolate the linear part, we define matrices
\begin{align}\label{A_B_matrices}
A^k_i(\Upsilon)\equiv&\langle\frac{z}{f_\Upsilon }\hat\psi_i\partial_zf_\Upsilon ,\hat\psi_k\rangle+\langle z\partial_z \hat\psi_i,\hat\psi_k\rangle,\\
B^k_i(\Upsilon)\equiv& C_i^k(\Upsilon)+D_i^k(\Upsilon),\hspace{2mm} C_i^k\equiv\Upsilon\langle\frac{1}{f_\Upsilon }\frac{\partial f_\Upsilon }{\partial\Upsilon}\hat\psi_i,\hat\psi_k\rangle,\hspace{2mm} D_i^k\equiv\Upsilon\langle\frac{\partial\hat{\psi}_i}{\partial \Upsilon},\hat\psi_k\rangle. 
\end{align}
 With these definitions, the equations for the $b^k$ become
\begin{align}\label{b_eq}
\dot b^k=& \left(H+\frac{\dot{T}}{T}\right)\sum_i A_i^k(\Upsilon)b^i-\frac{\dot{\Upsilon}}{\Upsilon}\sum_i B_i^k(\Upsilon)b^i+\langle\frac{1}{f_\Upsilon E}C[f],\hat\psi_k\rangle.
\end{align}
 See \ref{ortho-general} for details on how to recursively construct the $\partial_z\hat\psi_i$. We show how to compute the inner products $\langle\hat\psi_k,\partial_{\Upsilon}\hat\psi_k\rangle$ in  \ref{ortho-polynom-fam}. In \ref{lower_triang} we prove that that both $A$ and $B$ are lower triangular and show that the only inner products involving the $\partial_\Upsilon\hat{\psi}_i$ that are required in order to compute $A$ and $B$ are those the above mentioned diagonal elements, $\langle\hat\psi_k,\partial_{\Upsilon}\hat\psi_k\rangle$.

We fix the dynamics of $T$ and $\Upsilon$ by imposing the conditions
\begin{equation}\label{b_ics}
b^0(t)\hat\psi_0(t)=1,\hspace{2mm}b^1(t)=0.
\end{equation}
In other words,
\begin{equation}
f(t,z)=f_\Upsilon (t,z)\left(1+\phi(t,z)\right),\hspace{2mm} \phi=\sum_{i=2}^N b^i\hat\psi_i.
\end{equation}
This reduces the number of degrees of freedom in \req{b_eq} from $N+3$ to $N+1$.  In other words, after enforcing \req{b_ics}, \req{b_eq} constitutes $N+1$ equations for the remaining $N+1$ unknowns, $b^2,...,b^N$, $\Upsilon$, and $T$.  We will call $T$ and $\Upsilon$ the first two ``modes", as their dynamics arise from imposing the conditions \req{b_ics} on the zeroth and first order coefficients in the expansion. We will solve for their dynamics explicitly below.

To see the physical motivation for the choices \req{b_ics}, consider the particle number density and energy density.  Using orthonormality of the $\hat\psi_i$ and \req{b_ics} we have
\begin{align}
n=&\frac{g_pT^3}{2\pi^2}\sum_ib^i\int_0^\infty f_\Upsilon  \hat\psi_i z^2 dz=\frac{g_pT^3}{2\pi^2}\sum_ib^i\langle \hat\psi_i ,1\rangle\\
=&\frac{g_pT^3}{2\pi^2}b^0\langle \hat\psi_0 ,1\rangle=\frac{g_pT^3}{2\pi^2}\langle 1 ,1\rangle,\\
\rho=&\frac{g_pT^4}{2\pi^2}\sum_ib^i\int_0^\infty f_\Upsilon  \hat\psi_i z^3 dz=\frac{g_pT^4}{2\pi^2}\sum_ib^i\langle\hat\psi_i, z\rangle\\
=&\frac{g_pT^4}{2\pi^2}\left(b^0\langle\hat\psi_0, z\rangle+b^1\langle\hat\psi_1, z\rangle\right)=
\frac{g_pT^4}{2\pi^2}\langle 1,z\rangle.
\end{align}
 Using these together with the definition of the weight function \req{weight} we find
\begin{align}\label{th_eq_moments}
n=&\frac{g_pT^3}{2\pi^2}\int_0^\infty f_\Upsilon  z^2dz,\\
\label{th_eq_moments2}
\rho=&\frac{g_pT^4}{2\pi^2}\int_0^\infty f_\Upsilon  z^3dz.
\end{align}
Equations (\ref{th_eq_moments}) and (\ref{th_eq_moments2}) show  that the first two modes, $T$ and $\Upsilon$, with time evolution fixed by \req{b_ics} combine with the chemical non-equilibrium distribution $f_\Upsilon $ to capture the number density and energy density of the system exactly.  This fact is very significant, as it implies that within the chemical non-equilibrium approach as long as the back-reaction from the non-thermal distortions is small (meaning that the evolution of $T(t)$ and $\Upsilon(t)$ is not changed significantly when more modes are included), {\em all the effects relevant to the computation of  particle and energy flow are modeled by the time evolution of $T$ and $\Upsilon$ alone} and no further modes are necessary.  This gives a clear separation between the averaged physical quantities, characterized by $f_\Upsilon $, and the momentum dependent non-thermal distortions as contained in 
\begin{equation}
\phi=\sum_{i=2}^N b^i\hat\psi_i.
\end{equation}

One should contrast this chemical non-equilibrium behavior  with the chemical equilibrium situation, where a minimum of four modes is required to describe the number and energy densities, as shown in \req{free_stream_moments}.   Moreover we will show that convergence to the desired precision is faster in the chemical non-equilibrium approach as compared to chemical equilibrium. Due to the high cost of numerically integrating realistic collision integrals of the form \req{coll}, this fact can be very significant in applications. We remark that the relations \req{th_eq_moments} are the physical motivation for including the $z^2$ factor in the weight function. All three modifications we have made in constructing our new method, the introduction of an effective temperature i.e. $R\ne 1$, the generalization to chemical non-equilibrium $f_\Upsilon $, and the introduction of $z^2$ to the weight, \req{reheat}, were needed to obtain the properties \req{th_eq_moments}, but it is the introduction of $z^2$ that reduces the number of required modes and hence reduces the computational cost. 

With $b^0$ and $b^1$ fixed as in \req{b_ics} we can solve the equations for $\dot b^0$ and $\dot b^1$ from \req{b_eq} for $\dot\Upsilon$ and $\dot T$ to obtain

\begin{align}\label{Ups_T_eqs}
\dot{\Upsilon}/{\Upsilon}=&\frac{(Ab)^1\langle\frac{1}{f_\Upsilon E}C[f],\hat\psi_0\rangle-(Ab)^0\langle\frac{1}{f_\Upsilon E}C[f],\hat\psi_1\rangle }{[\Upsilon\partial_\Upsilon \langle1,1\rangle/(2||\psi_0||)+(Bb)^0](Ab)^1-(Ab)^0(Bb)^1},\\[0.5cm]
\dot{T}/T
=&\frac{(Bb)^1\langle\frac{1}{f_\Upsilon E}C[f],\hat\psi_0\rangle-\langle\frac{1}{f_\Upsilon E}C[f],\hat\psi_1\rangle[\Upsilon\partial_\Upsilon \langle1,1\rangle/(2||\psi_0||)+(Bb)^0]}{[\Upsilon\partial_\Upsilon \langle1,1\rangle/(2||\psi_0||)+(Bb)^0](Ab)^1-(Ab)^0(Bb)^1}-H\notag\\[0.3cm]
=&\frac{1}{(Ab)^1}\left((Bb)^1\dot{\Upsilon}/\Upsilon-\langle\frac{1}{f_\Upsilon E}C[f],\hat\psi_1\rangle\right)-H.\label{T_eq}
\end{align}

Here $(Ab)^n=\sum_{j=0}^NA^n_jb^j$ and similarly for $B$ and $||\cdot||$ is the norm induced by $\langle\cdot,\cdot\rangle$. In deriving this, we used
\begin{equation}
\dot{b}^0=\frac{1}{2||\psi_0||}\dot\Upsilon\partial_\Upsilon \langle1,1\rangle, \hspace{4mm} \partial_\Upsilon \langle1,1\rangle=\int_0^\infty \frac{z^2}{(e^{z/2}+ \Upsilon e^{-z/2})^2}dz
\end{equation}
which comes from differentiating \req{b_ics}. 
 
 It is easy to check that when the collision operator vanishes, then the above system is solved by 
\begin{equation}\label{free_stream_sol}
\Upsilon=\text{constant},\hspace{4mm} \frac{\dot T}{T}=-H,\hspace{2mm}  b^n=\text{constant},\hspace{1mm} n>2
\end{equation}
i.e. the fugacity and non-thermal distortions are `frozen' into the distribution and the temperature satisfies dilution scaling $T\propto 1/a$.

When the collision term becomes small, \req{free_stream_sol} motivates another change of variables. Letting $T=(1+\epsilon)/a$  gives the equation
\begin{equation}
\dot\epsilon=\frac{1+\epsilon}{(Ab)^1}\left((Bb)^1\dot{\Upsilon}/\Upsilon-\langle\frac{1}{f_\Upsilon E}C[f],\hat\psi_1\rangle\right).
\end{equation}
Solving this in place of \req{T_eq} when the collision terms are small avoids having to numerically track the free-streaming evolution.  In particular this will ensure conservation of comoving particle number, which equals a function of $\Upsilon$ multiplied by $(aT)^3$, to much greater precision in this regime as well as resolve the freeze-out temperatures more accurately.\\

\noindent{\bf Projected Dynamics are Well-defined:}\\
The following calculation shows that, for a distribution initially in kinetic equilibrium, the determinant factor in the denominator of \req{Ups_T_eqs} is nonzero and hence the dynamics for $T$ and $\Upsilon$, as well as the remainder of the projected system, are well-defined, at least for sufficiently small times. 

 Kinetic equilibrium implies the initial conditions $b^0=||\psi_0||$, $b^i=0$, $i>0$.  Therefore we have
\begin{align}
K\equiv& (\Upsilon\partial_\Upsilon \langle 1,1\rangle/(2||\psi_0||)+(Bb)^0)(Ab)^1-(Ab)^0(Bb)^1\\[0.3cm]
=&(C^0_0A^1_0-A^0_0C^1_0)(b^0)^2+\left[(D^0_0A^1_0-A^0_0D^1_0)(b^0)^2+\Upsilon\partial_\Upsilon \langle 1,1\rangle/(2||\psi_0||)A^1_0b^0\right]\notag\\[0.3cm]
\equiv & K_1+K_2.\notag
\end{align}
\begin{align}
K_1=&\langle \frac{1}{1+\Upsilon e^{-z}},1\rangle\langle \frac{-z}{1+\Upsilon e^{-z}}\hat\psi_1,\hat\psi_0\rangle-\langle\frac{-z}{1+\Upsilon e^{-z}},\hat\psi_0\rangle\langle\frac{1}{1+ \Upsilon e^{-z}}\hat\psi_1,1\rangle.
\end{align}
Inserting the formula for $\hat\psi_1$ from \req{poly_recursion} we find
\begin{align}
K_1=&-\frac{1}{||\psi_1||\,||\psi_0||}\left[\langle\frac{1}{1+ \Upsilon e^{-z}},\hat\psi_0\rangle\langle\frac{z^2}{1+\Upsilon e^{-z}},\hat\psi_0\rangle-\langle\frac{z}{1+\Upsilon e^{-z}},\hat\psi_0\rangle^2\right].
\end{align}
The Cauchy-Schwarz inequality  applied to the inner product with weight function
\begin{equation}
\tilde{w}=\frac{w}{1+\Upsilon e^{-z}}\hat\psi_0
\end{equation}
together with linear independence of $1$ and $z$ implies that the term in brackets is positive and so $K_1<0$ at $t=0$.  For the second term, noting that $D^1_0=0$ by orthogonality and using \req{norm_deriv_eq}, we have
\begin{align}
K_2=&[\langle\partial_\Upsilon\hat\psi_0,\hat\psi_0\rangle||\psi_0||+\partial_\Upsilon \langle 1,1\rangle/(2||\psi_0||)]\Upsilon A_0^1||\psi_0||\\
=&0.\notag
\end{align}
This proves that $K$ is nonzero at $t=0$.\\

\section{Validation}\label{validation}
We will validate our numerical method on an exactly solvable model problem
\begin{equation}\label{toy_eq}
\partial_t f-pH \partial_p f=\frac{1}{M}\left(\frac{1}{\Upsilon^{-1}e^{p/T_{eq}}+1}-f(p,t)\right), \hspace{2mm} f(p,0)=\frac{1}{e^{p/T_{eq}(0)}+1}
\end{equation}
where $M$ is a constant with units of energy and we choose units in which it is equal to $1$. This model describes a distribution that is attracted to a given equilibrium distribution at a prescribed time dependent temperature $T_{eq}(t)$ and fugacity $\Upsilon$. This type of an idealized scattering operator, without fugacity, was first introduced in \cite{Anderson_Witting}. By changing coordinates $y=a(t)p$ we find
\begin{equation}\label{free_stream_toy}
\partial_tf(y,t)=\frac{1}{\Upsilon^{-1}\exp[y/(a(t)T_{eq}(t))]+1}-f(y,t).
\end{equation}
 which has as solution
\begin{equation}\label{exact_sol}
f(y,t)=\int_0^t\frac{e^{s-t}}{\Upsilon^{-1}\exp[y/(a(s)T_{eq}(s))]+1}ds+\frac{e^{-t}}{\exp[y/(a(0)T_{eq}(0))]+1}.
\end{equation}
We now transform to $z=p/T(t)$ where the temperature $T$ of the distribution $f$ is defined as in section \ref{dynamics_sec}.  Therefore, we have the exact solution to
\begin{equation}\label{k_eq_toy}
\partial_tf-z\left(H+\frac{\dot{T}}{T}\right)\partial_zf=\frac{1}{\Upsilon^{-1}e^{zT/T_{eq}}+1}-f(z,t)
\end{equation}
given by
\begin{align}
f(z,t)=&\int_0^t\frac{e^{s-t}}{\Upsilon^{-1}\exp[a(t)T(t)z/(a(s)T_{eq}(s))]+1}ds\\
&+\frac{e^{-t}}{\exp[a(t)T(t)z/(a(0)T_{eq}(0))]+1}.\notag
\end{align}
We use this to test the chemical equilibrium and chemical non-equilibrium methods under two different conditions. 

\subsection{Reheating Test}
First we test the two methods we have outlined in a scenario that exhibits reheating.  Motivated by applications to cosmology, we choose a scale factor evolving as in the radiation dominated era, a fugacity $\Upsilon=1$, and choose an equilibrium temperature that exhibits reheating like behavior with $aT_{eq}$ increasing for a period of time,
\begin{align}\label{a_T_def}
a(t)=\left(\frac{t+b}{b}\right)^{1/2}\!\!\!,\ \  \ \
T_{eq}(t)=\frac{1}{a(t)}\left(1+\frac{1-e^{-t}}{e^{-(t-b)}+1}(R-1)\right)
\end{align}
where $R$ is the desired reheating ratio. Note that $(aT_{eq})(0)=1$ and $(aT_{eq})(t)\rightarrow R$ as $t\rightarrow\infty$. Qualitatively, this is reminiscent of the dynamics of neutrino freeze-out, but the range of reheating ratio for which we will test our method is larger than found there.

We solved \req{free_stream_toy} and \req{k_eq_toy} numerically using the chemical equilibrium and chemical non-equilibrium methods respectively for $t\in[0,10]$ and $b=5$ and the cases $R=1.1$, $R=1.4$, as well as the more extreme ratio of $R=2$.  The bases of orthogonal polynomials were generated numerically using the recursion relations from \ref{orthopoly_app}.  For the applications we are considering, where the solution is a small perturbation of equilibrium, only a small number of terms are required and so the numerical challenges associated with generating a large number of such orthogonal polynomials are not an issue.\\

\noindent{\bf Chemical Equilibrium Method:}\\
We solved \req{free_stream_toy} using the chemical equilibrium method, with the orthonormal basis defined by the weight function \req{free_stream_weight} for $N=2,...,10$ modes (mode numbers $n=0,...,N-1$) and prescribed  single step relative and absolute error tolerances of $10^{-13}$ for the numerical integration, and with asymptotic reheating ratios of $R=1.1$, $R=1.4$, and $R=2$.

In figures   \ref{fig:free_stream_num_err} and  \ref{fig:free_stream_E_err} we show the maximum relative error in the number densities and energy densities respectively over the time interval $[0,10]$ for various numbers of computed modes.  The particle number density and energy density are accurate, up to the integration tolerance level, for $3$ or more and $4$ or more modes respectively. This is consistent with \req{free_stream_moments} which shows the number of modes required to capture each of these quantities. However, fewer modes than these minimum values lead to a large error in the corresponding moment of the distribution function.

\begin{figure}[H]
 \begin{minipage}[b]{0.5\linewidth}
\centerline{\includegraphics[height=6.0cm]{./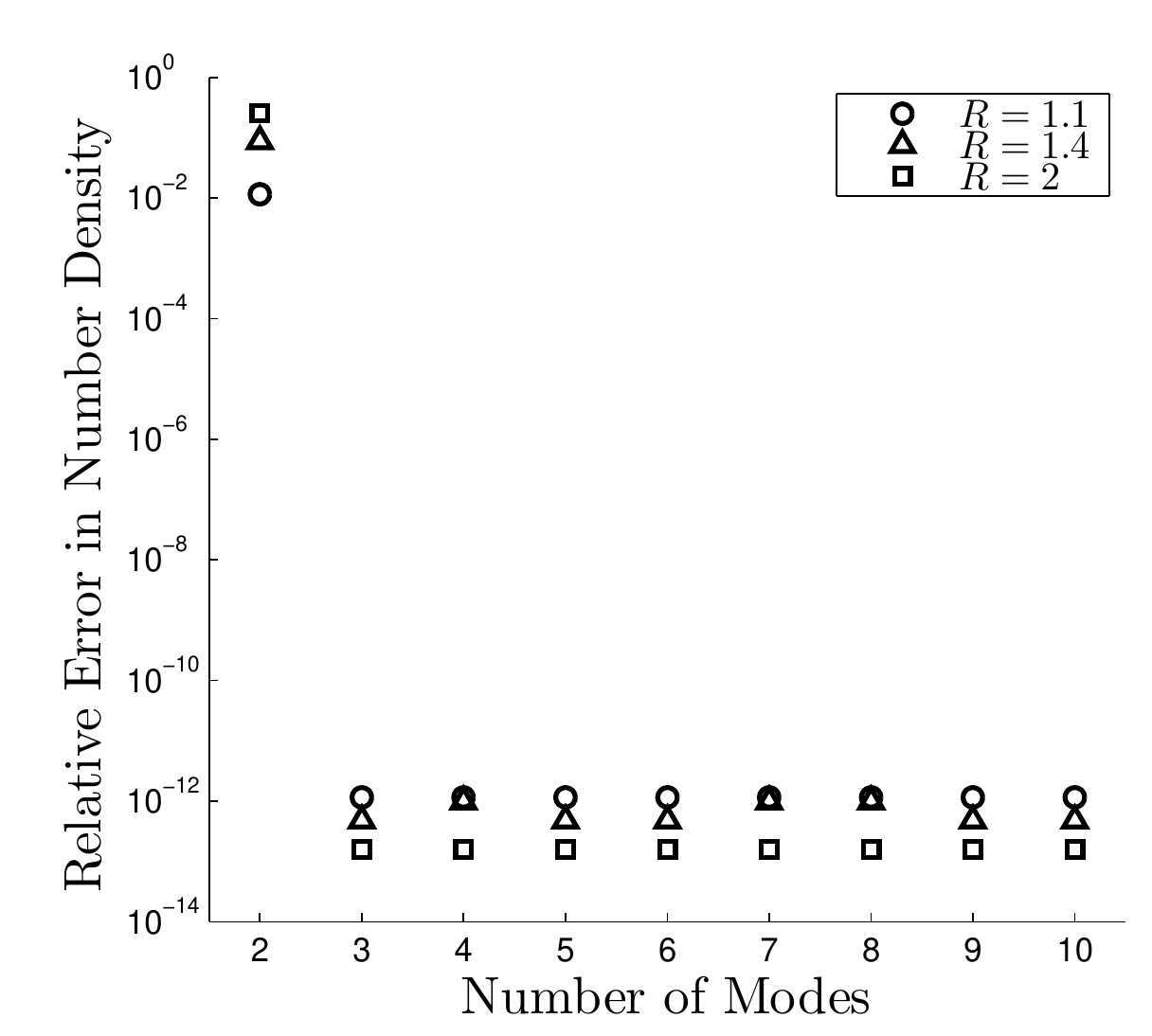}}
\caption{Maximum relative error in particle number density.}\label{fig:free_stream_num_err}
 \end{minipage}
 \hspace{0.5cm}
 \begin{minipage}[b]{0.5\linewidth}
\centerline{\includegraphics[height=6.0cm]{./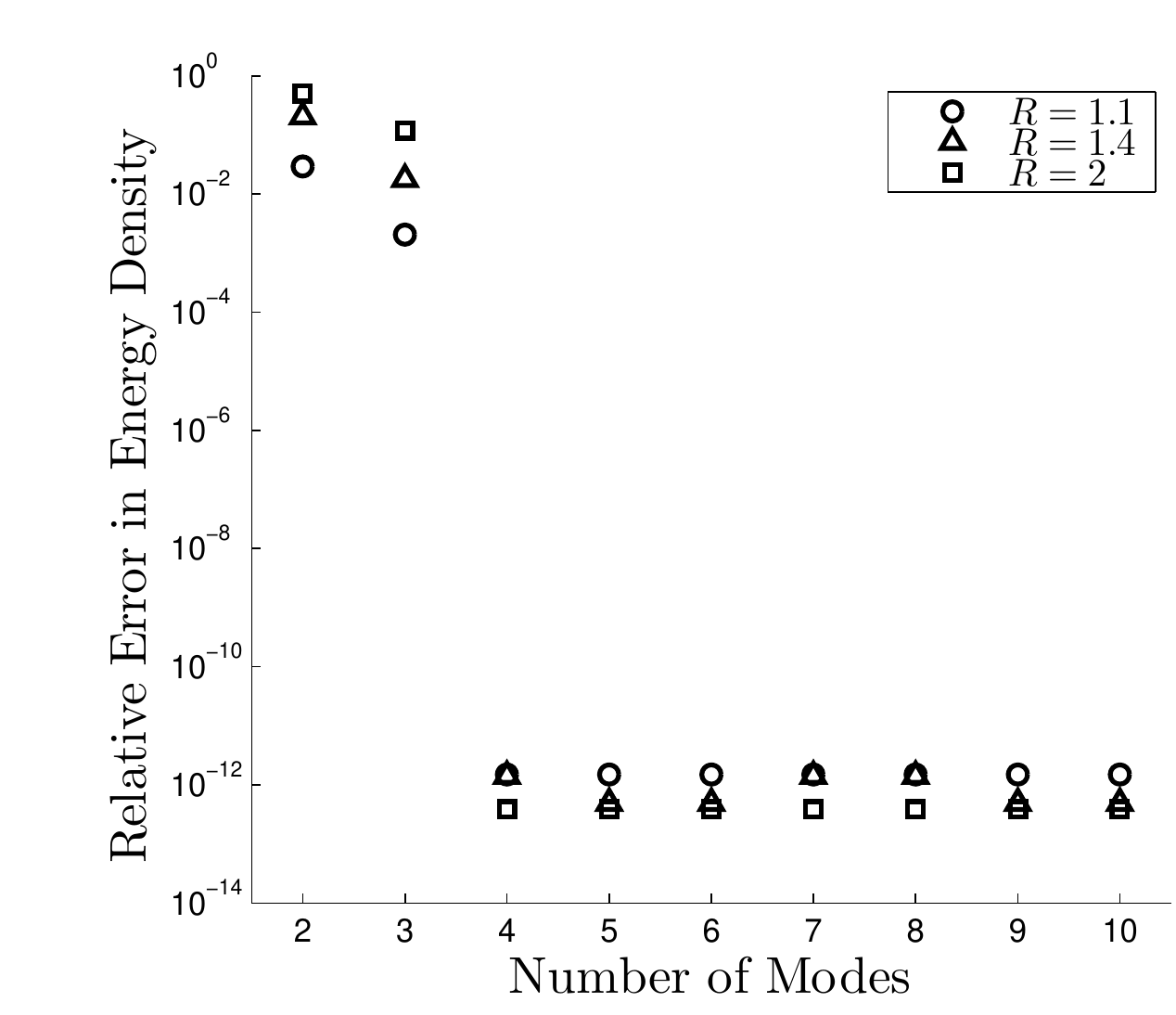}}
\caption{Maximum relative error in energy density.}\label{fig:free_stream_E_err}
 \end{minipage}
 \end{figure}

\begin{figure}[H]
\begin{minipage}[t]{0.5\linewidth}
\centerline{\includegraphics[height=6cm]{./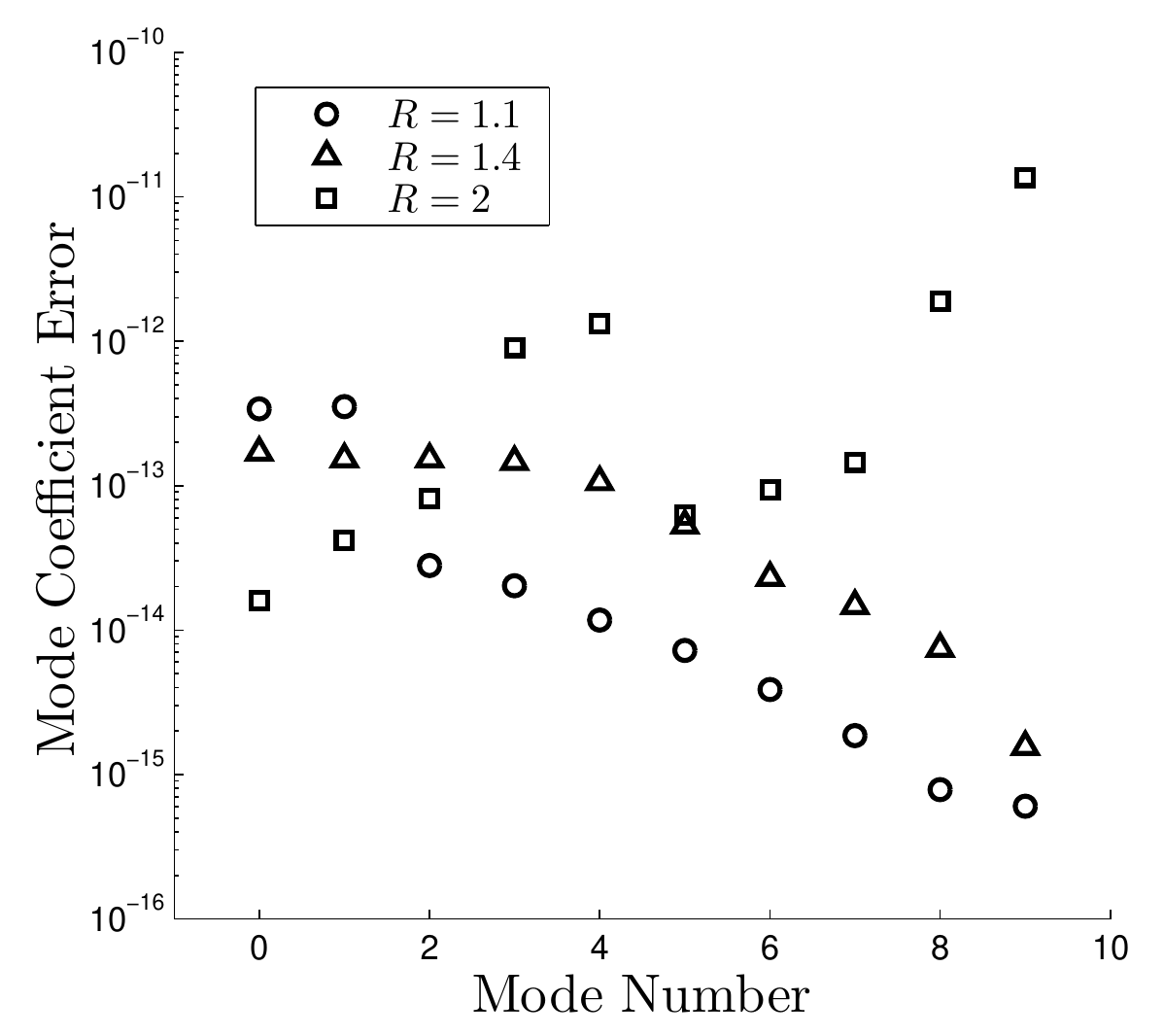}}
\caption{Maximum error in mode coefficients.}\label{fig:free_stream_b_err}
 \end{minipage}
 \hspace{0.5cm}
 \begin{minipage}[t]{0.5\linewidth}
\centerline{\includegraphics[height=6.0cm]{./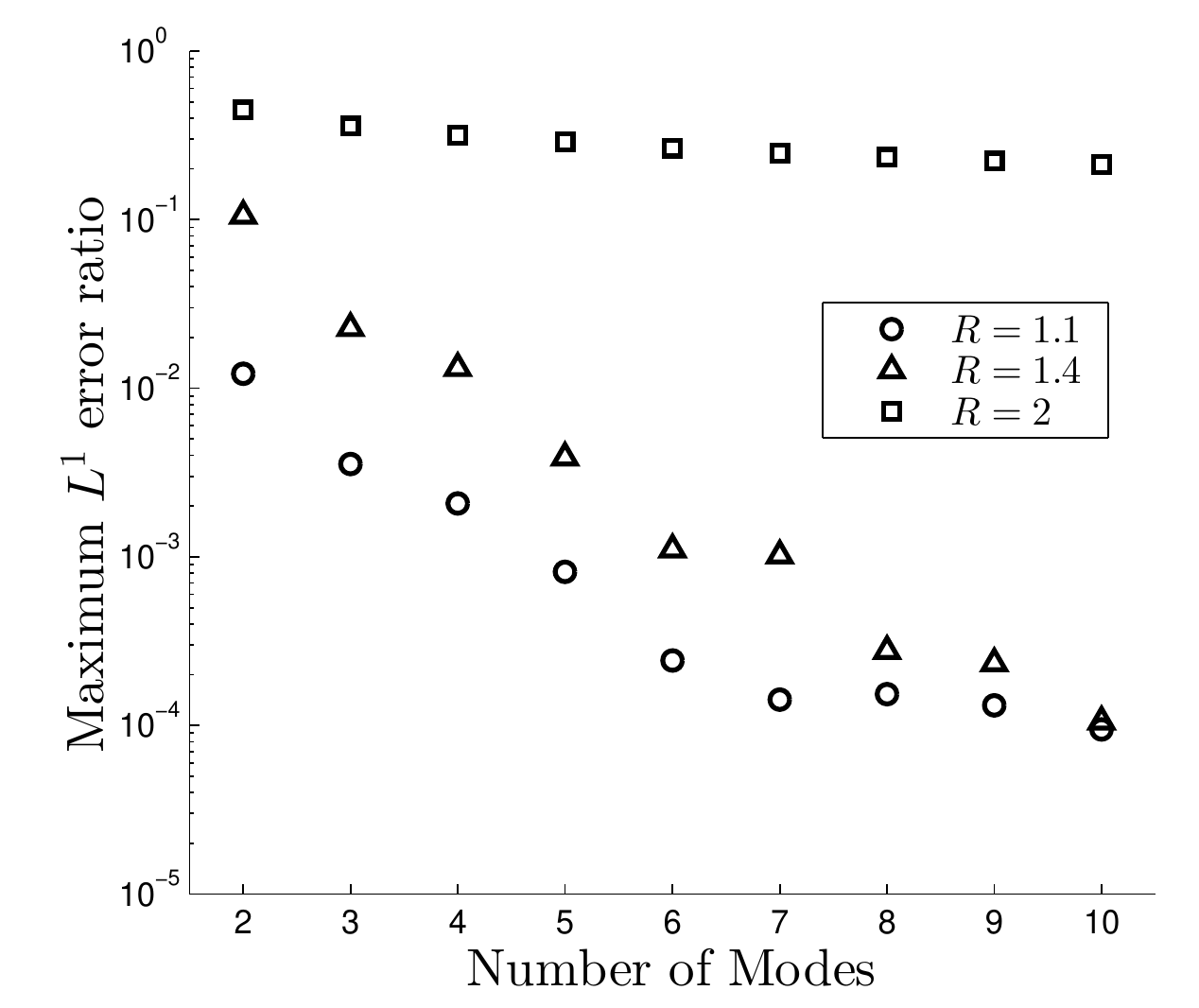}}
\caption{Maximum ratio  of $L^1$ error between computed and exact solutions to $L^1$ norm of the exact solution.}\label{fig:free_stream_L1_err}
 \end{minipage}
 \end{figure}

 To show that the numerical integration accurately captures the mode coefficients of the exact solution, \req{exact_sol}, we show the error between the computed coefficients and actual coefficients, denoted by $\tilde b_n$ and $b_n$ respectively
\begin{equation}\label{mode_err_def}
\text{error}_n=\max_{t} |\tilde{b}_n(t)-b_n(t)|,
\end{equation}
 in figure \ref{fig:free_stream_b_err}, where the evolution of the system was computed using $N=10$ modes.

In figure  \ref{fig:free_stream_L1_err} we show the error between the exact solution $f$, and the numerical solution $f^N$ computed using $N=2,...,10$ modes over the solution time interval, where we define the error by
\begin{equation}\label{f_err}
\text{error}_N=\max_{t} \frac{\int |f-f^N|dy}{\int |f|dy}.
\end{equation}
For $R=1$ and $R=1.4$  the chemical equilibrium method works reasonably well (as long as the number of modes is at least 4, so that the energy and number densities are properly captured) but for $R=2$ the approximate solution exhibits spurious oscillations, as seen in figure \ref{fig:free_stream_approx_T_r_2}, and has poor $L^1$ error.

\begin{figure}[H]
\begin{minipage}[t]{0.5\linewidth}
\centerline{\includegraphics[height=6.1cm]{./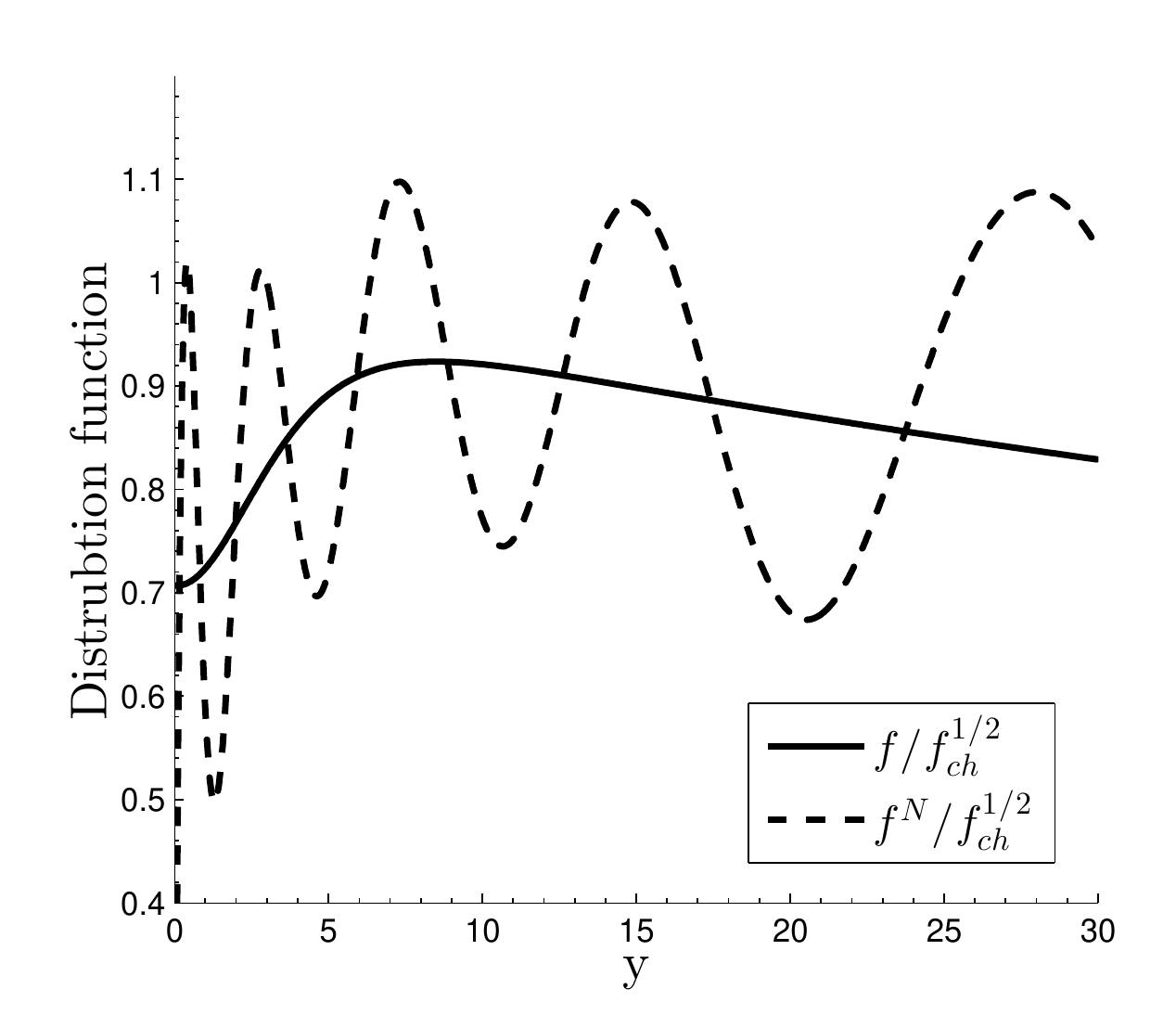}}
\caption{Approximate and exact solution for a reheating ratio $R=2$ and $N=10$ modes.}\label{fig:free_stream_approx_T_r_2}
 \end{minipage}
 \hspace{0.5cm}
 \begin{minipage}[t]{0.5\linewidth}
\centerline{\includegraphics[height=6.1cm]{./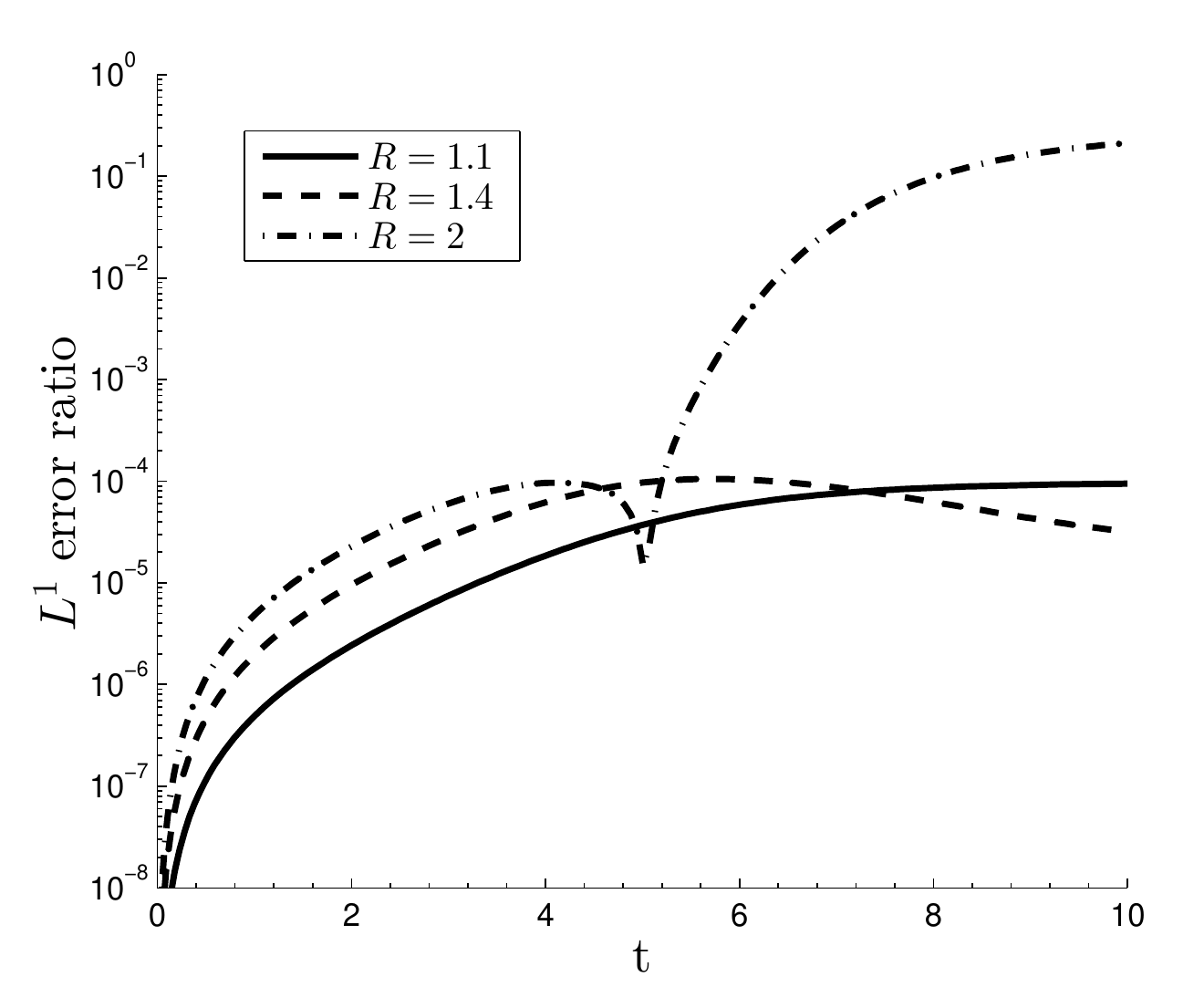}}
\caption{$L^1$ error ratio as a function of time for $N=10$ modes.}\label{fig:free_stream_L1_err_time}
 \end{minipage}
 \end{figure}

This poor behavior is expected based on the results in section \ref{basis_comparison}.  This is even clearer in figure \ref{fig:free_stream_L1_err_time} where we show the $L^1$ error ratio as a function of time for $N=10$ modes. In the $R=2$ case we see that the error increases as the reheating ratio approaches its asymptotic value of $R=2$ as $t\rightarrow\infty$.  As we will see, our methods achieves a much higher accuracy for a small number of terms in the case of large reheating ratio due to the replacement of dilution temperature scaling with the dynamical effective temperature $T$.

\noindent{\bf Chemical Non-Equilibrium Method:}\\
We now solve  \req{free_stream_toy} using the chemical non-equilibrium method, with the orthonormal basis defined by the weight function \req{weight} for $N=2,...,10$ modes, a prescribed numerical integration tolerance of $10^{-13}$, and asymptotic reheating ratios of $R=1.1$, $R=1.4$, and $R=2$.  Recall that we are referring to $T$ and $\Upsilon$ as the first two modes ($n=0$ and $n=1$). 

\begin{figure}[H]
 \begin{minipage}[b]{0.5\linewidth}
\centerline{\includegraphics[height=6.2cm]{./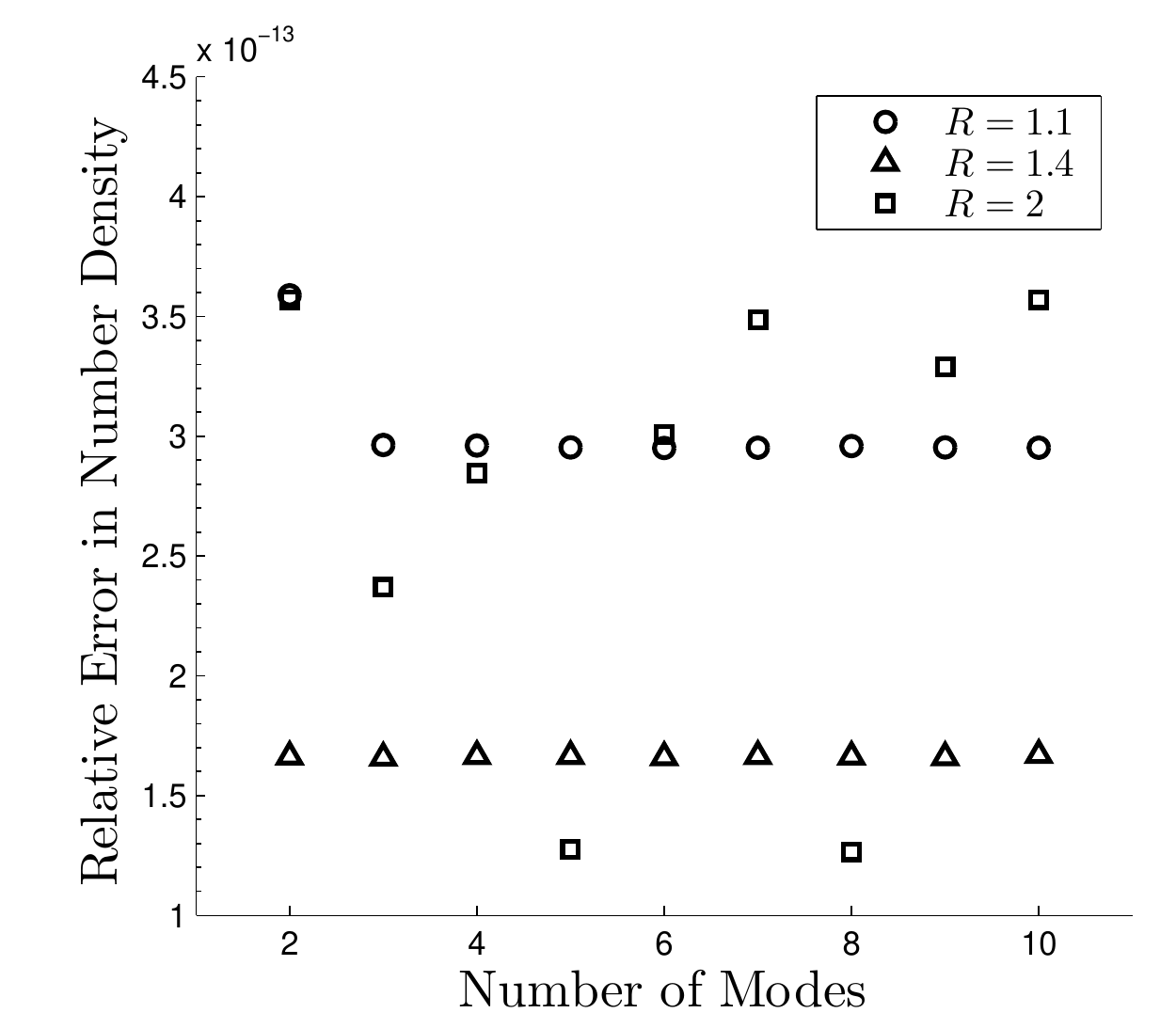}}
\caption{Maximum relative error in particle number density.}\label{fig:keq_num_err}
 \end{minipage}
 \hspace{0.5cm}
 \begin{minipage}[b]{0.5\linewidth}
\centerline{\includegraphics[height=6.2cm]{./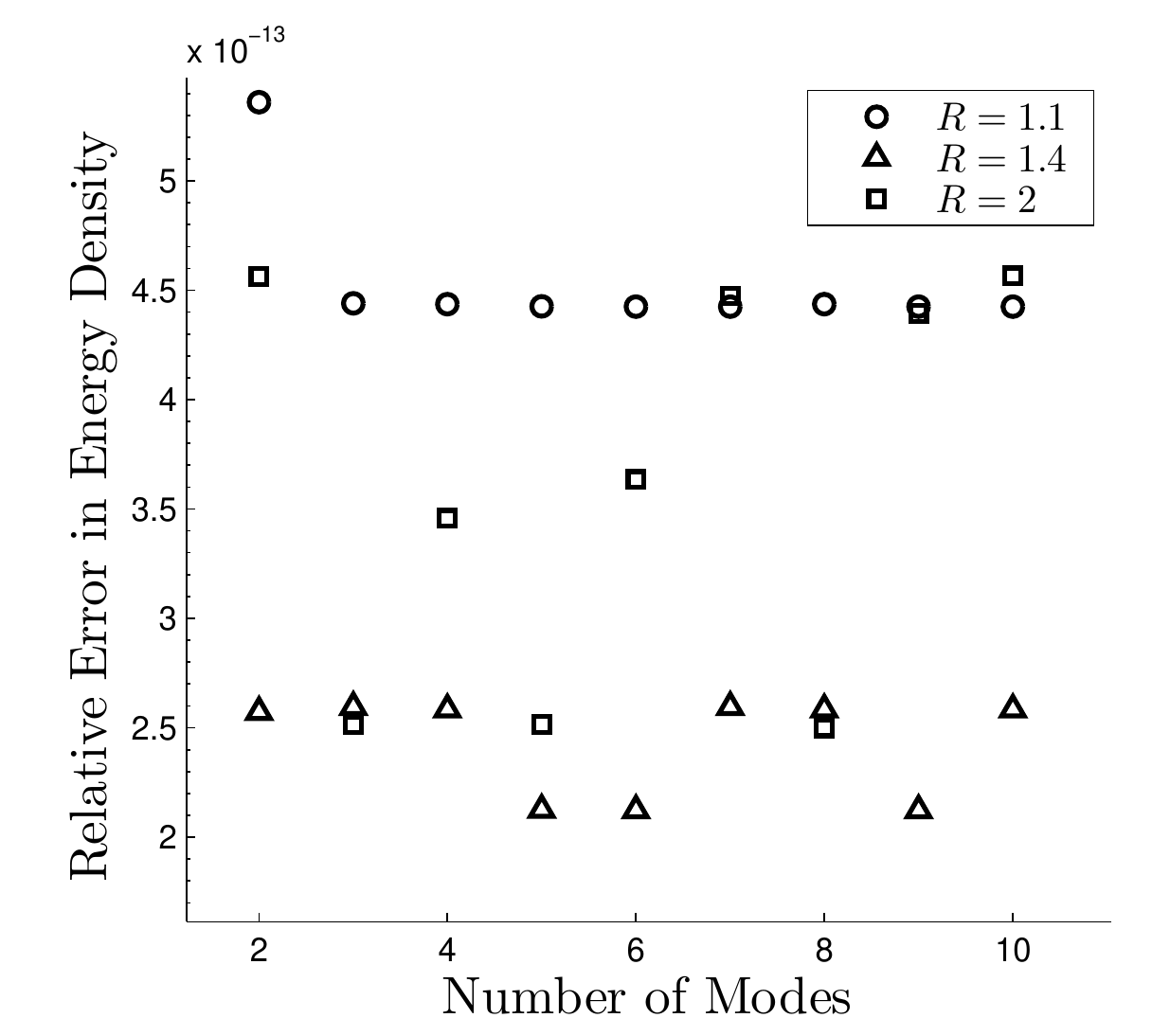}}
\caption{Maximum relative error in energy density.}\label{fig:keq_E_err}
 \end{minipage}
 \end{figure}

In figures \ref{fig:keq_num_err} and  \ref{fig:keq_E_err} we show the maximum relative error over the time interval $[0,10]$ in the number densities and energy densities respectively for various numbers of computed modes. Even for only $2$ modes, the number and energy densities are accurate up to the integration tolerance level.  This is in agreement with the analytical expressions in \req{th_eq_moments}.

\begin{figure}[H]
 \begin{minipage}[b]{0.5\linewidth}
\centerline{\includegraphics[height=6.2cm]{./SpectralMethodBoltzmann/keq_num_err-eps-converted-to.pdf}}
\caption{Maximum relative error in particle number density.}\label{fig:keq_num_err}
 \end{minipage}
 \hspace{0.5cm}
 \begin{minipage}[b]{0.5\linewidth}
\centerline{\includegraphics[height=6.2cm]{./SpectralMethodBoltzmann/keq_E_err-eps-converted-to.pdf}}
\caption{Maximum relative error in energy density.}\label{fig:keq_E_err}
 \end{minipage}
 \end{figure}

\begin{figure}[H]
\begin{minipage}[t]{0.5\linewidth}
\centerline{\includegraphics[height=6cm]{./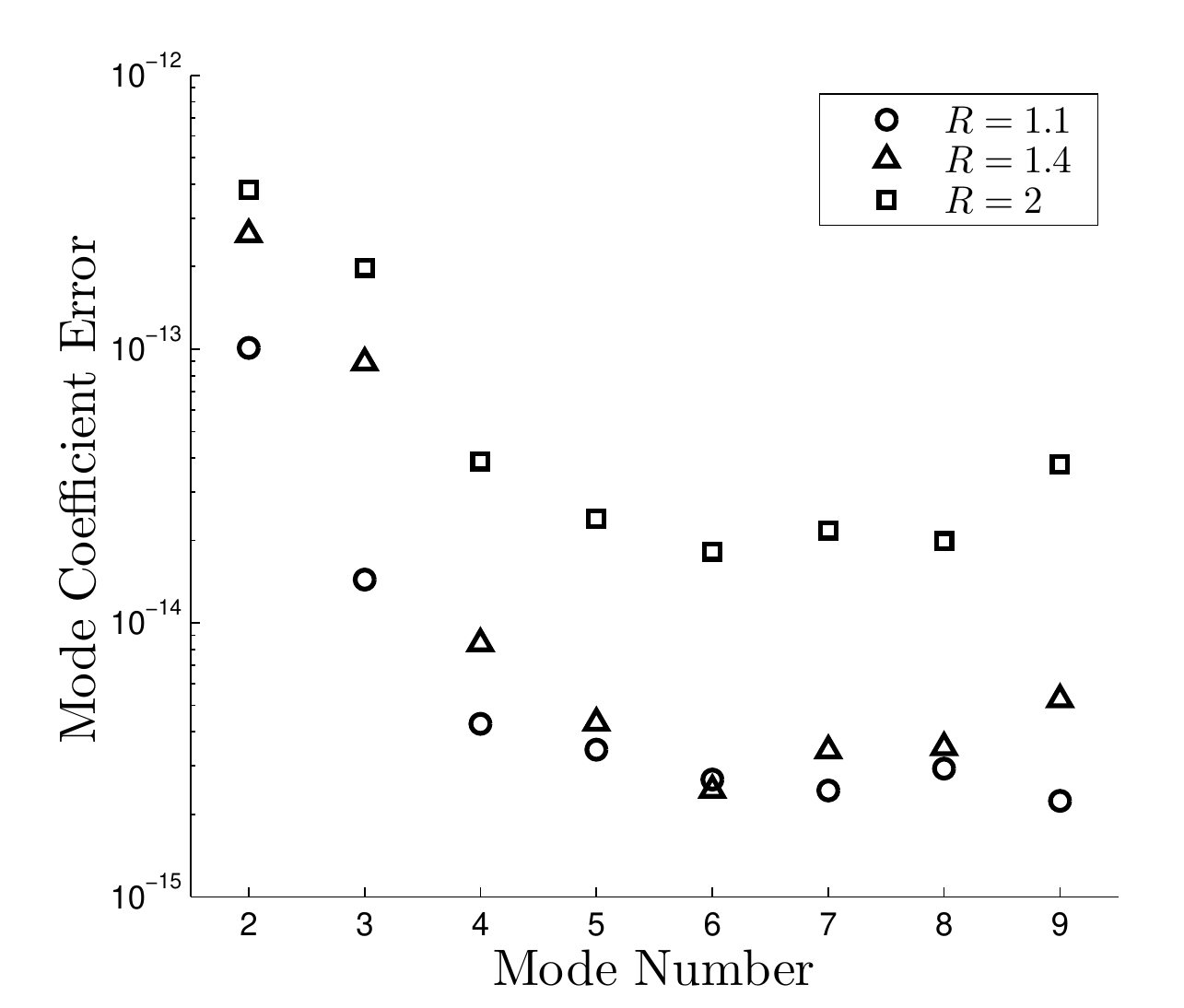}}
\caption{Maximum error in mode coefficients.}\label{fig:keq_b_err}
 \end{minipage}
 \hspace{0.5cm}
 \begin{minipage}[t]{0.5\linewidth}
\centerline{\includegraphics[height=6.1cm]{./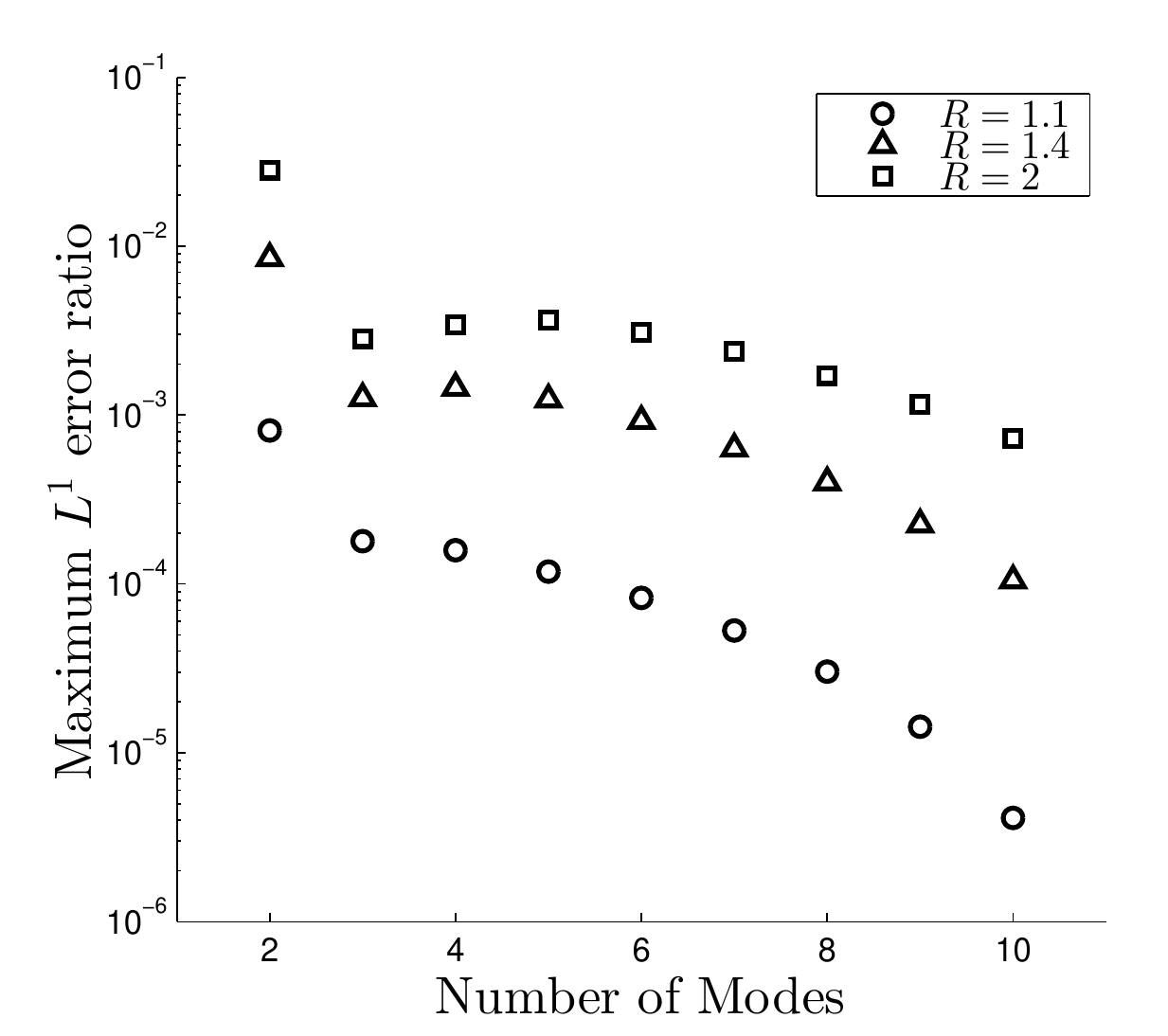}}
\caption{Maximum ratio of $L^1$ error between computed and exact solutions to $L^1$ norm of the exact solution.}\label{fig:keq_L1_err}
 \end{minipage}
 \end{figure}

 To show that the numerical integration accurately captures the mode coefficients of the exact solution, \req{exact_sol}, we give the error in the computed mode coefficients \req{mode_err_def}, where the evolution of the system was computed using $N=10$ modes, in figure \ref{fig:keq_b_err}.

In figure \ref{fig:keq_L1_err} we show the error between the approximate and exact solutions, computed as in \req{f_err} for $N=2,...,10$ and $R=1.1$, $R=1.4$, and $R=2$ respectively.  For most mode numbers and $R$ values, the error using $2$ modes is substantially less than the error from the chemical equilibrium method using $4$ modes.  The result is most dramatic for the case of large reheating, $R=2$, where the spurious oscillations from the chemical equilibrium solution are absent, as seen in figure \ref{fig:keq_approx_Tr_2}, as compared to the chemical equilibrium method in figure \ref{fig:free_stream_approx_T_r_2}.  Note that we plot from $z\in [0,15]$ in comparison to $y\in[0,30]$ in figure \ref{fig:keq_approx_Tr_2} due to the relation $z=y/R$ as discussed in section \ref{basis_comparison}. Additionally, the error no longer increases as $t\rightarrow\infty$, as it did for the chemical equilibrium method, see figure \ref{fig:keq_L1_err_time}.  In fact it decreases since the exact solution approaches chemical equilibrium at a reheated temperature and hence can be better approximated by $f_\Upsilon$. 

In summary, in addition to the reduction in the computational cost when going from $4$ to $2$ modes, we also reduce the error compared to the chemical equilibrium method, all while still capturing the number and energy densities.  We emphasize that the error in the number and energy densities is limited by the integration tolerance and {\emph not} the number of modes (so long as the neglected, higher modes have a negligible impact on the first two modes, as is often the case).  

\begin{figure}[H]
 \begin{minipage}[t]{0.5\linewidth}
\centerline{\includegraphics[height=6.2cm]{./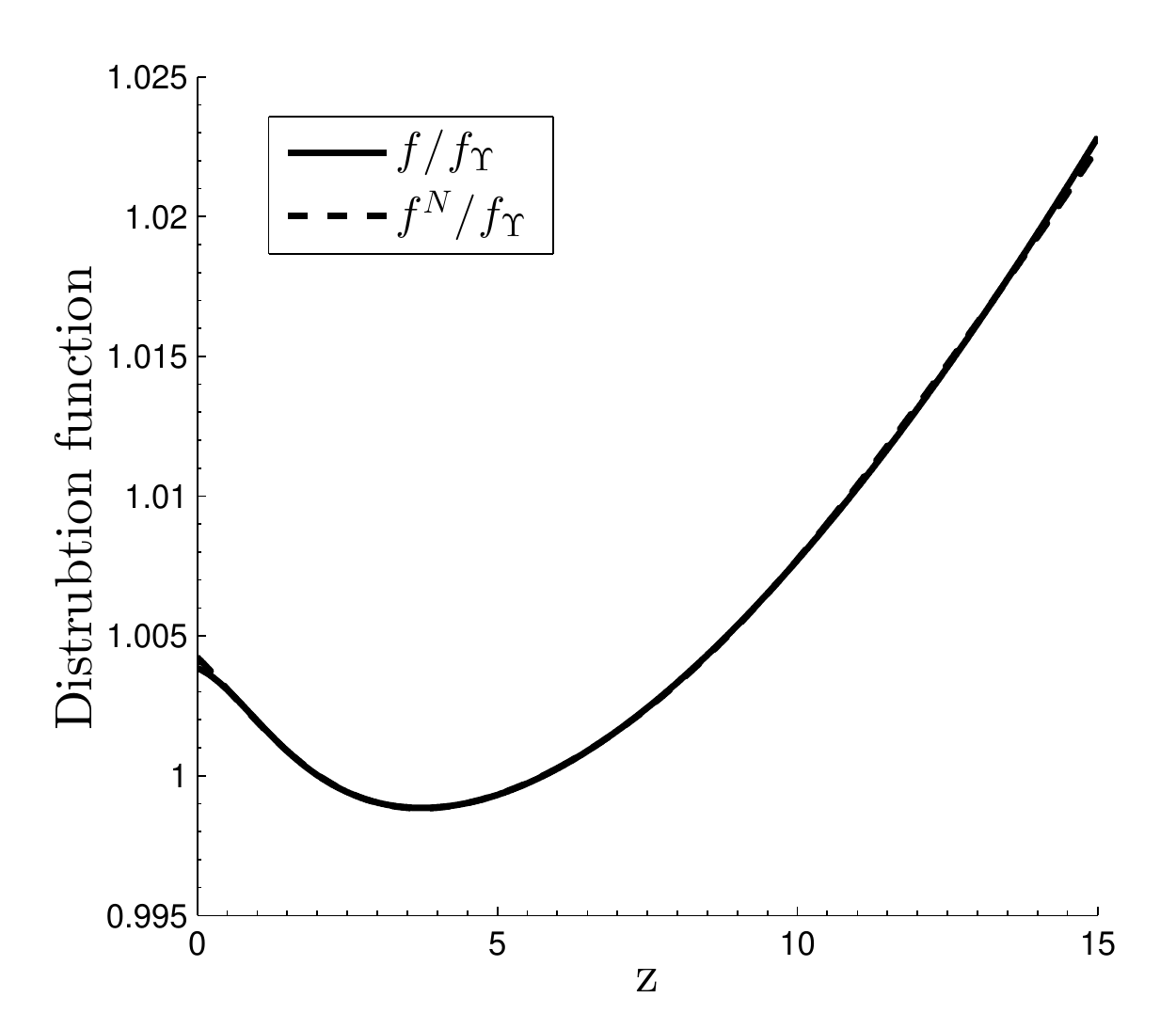}}
\caption{Approximate and exact solution for $R=2$ obtained with two modes.}\label{fig:keq_approx_Tr_2}
 \end{minipage}
 \hspace{0.5cm}
 \begin{minipage}[t]{0.5\linewidth}
\centerline{\includegraphics[height=6.2cm]{./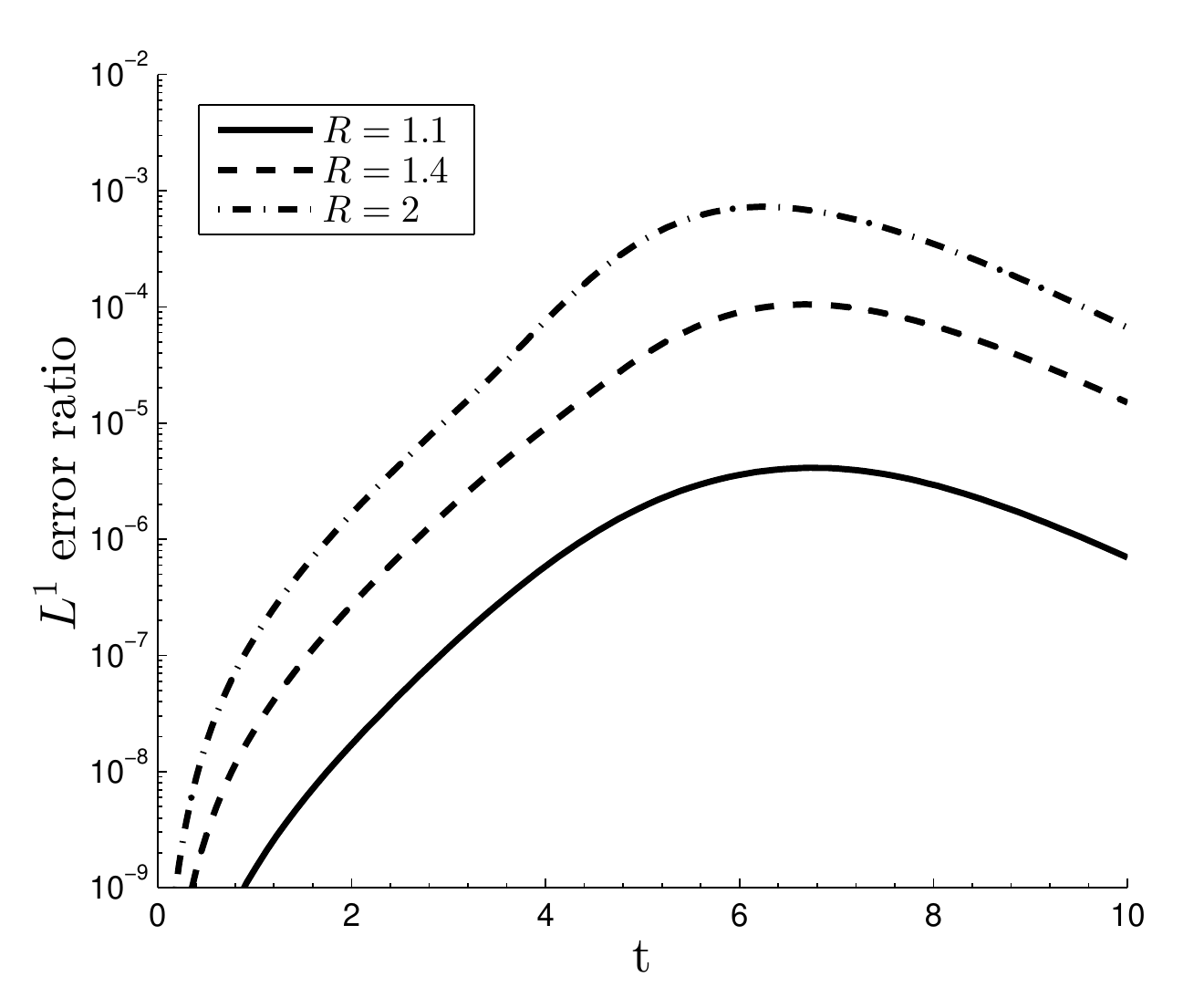}}
\caption{$L^1$ error ratio as a function of time for $n=10$ modes.}\label{fig:keq_L1_err_time}
 \end{minipage}
 \end{figure}

\subsection{Chemical Non-equilibrium Attractor}
The model problem above with $\Upsilon=1$  is reminiscent of the way that chemical equilibrium can emerge from chemical non-equilibrium in practice; the distribution of interest is attracted to some chemical equilibrium distribution but the particle creation/annihilation processes are not able to keep up with the momentum exchange and maintain an equilibrium particle yield, and so a fugacity $\Upsilon<1$ develops.  However, in order to isolate the effects of the fugacity on the solutions, we will now solve \req{toy_eq} under the  condition where our distribution is attracted to a fixed chemical non-equilibrium distribution.  More specifically, we take $T_{eq}(t)=1/a(t)$ and fugacities $\Upsilon=1.5$, $\Upsilon=0.9$, $\Upsilon=0.75$, and $\Upsilon=0.5$ with $a(t)$ defined as in \req{a_T_def}.

\begin{figure}[H]
 \begin{minipage}[t]{0.5\linewidth}
\centerline{\includegraphics[height=6.1cm]{./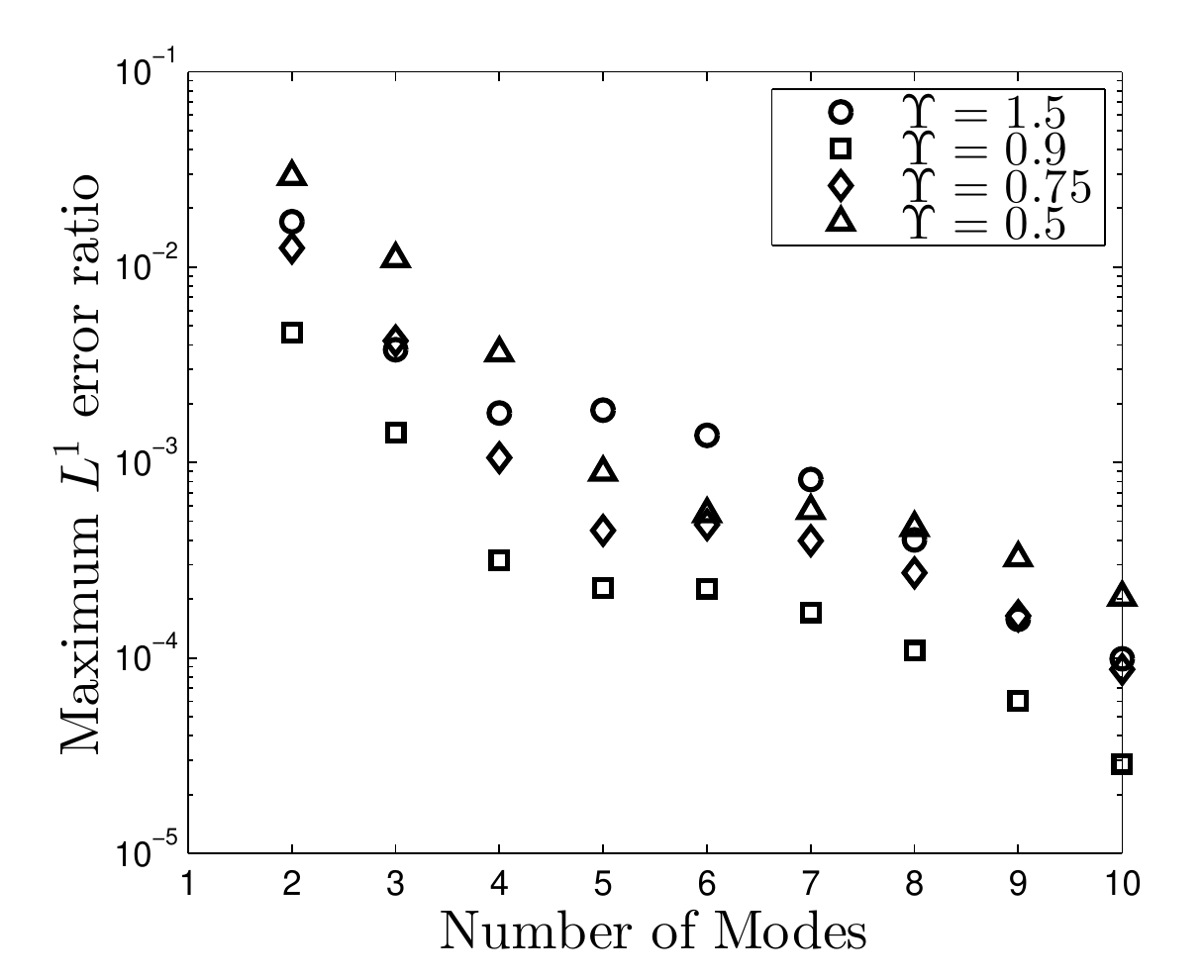}}
\caption{Maximum ratio  of $L^1$ error between computed and exact solutions to $L^1$ norm of the exact solution using the chemical equilibrium method.}\label{fig:free_stream_L1_err_Ups}
 \end{minipage}
 \hspace{0.5cm}
 \begin{minipage}[t]{0.5\linewidth}
\centerline{\includegraphics[height=6.1cm]{./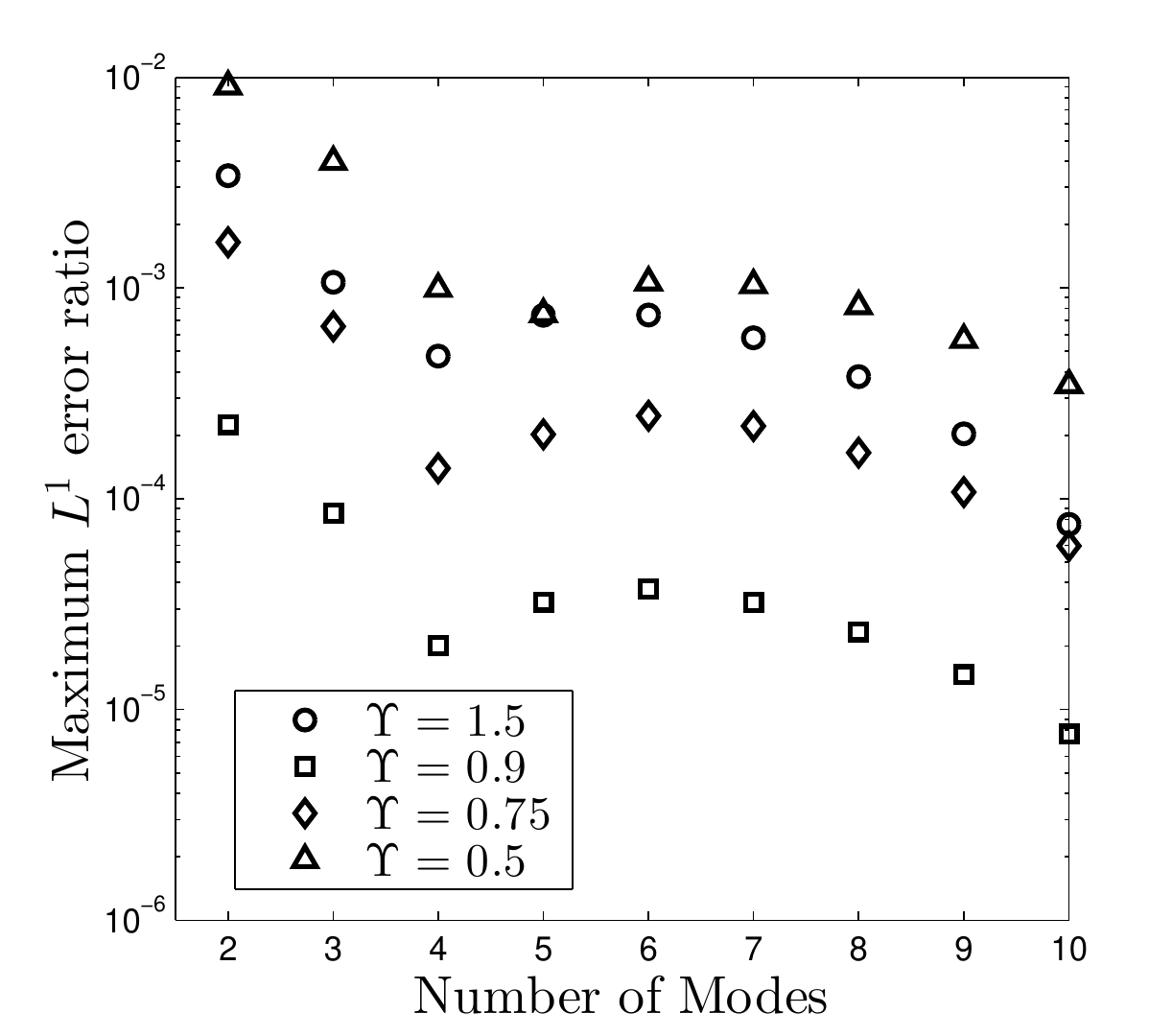}}
\caption{Maximum ratio of $L^1$ error between computed and exact solutions to $L^1$ norm of the exact solution using the chemical non-equilibrium method.}\label{fig:k_eq_L1_err_Ups}
 \end{minipage}
 \end{figure}

The behavior of the energy and number density errors are essentially the same as in the reheating test presented above and the mode coefficients are accurately captured, so we do not show these quantities here.  Instead  we show the maximum $L^1$ error, computed as in \req{f_err} in figure \ref{fig:free_stream_L1_err_Ups} for the chemical equilibrium method and in figure \ref{fig:k_eq_L1_err_Ups} for the chemical non-equilibrium method.  The error when using the latter method with only two terms is comparable to the former with four terms.

In figure \ref{fig:k_eq_L1_err_Ups_final} we show the final value of the relative $L^1$ error,
\begin{equation}\label{f_err}
\text{error}_{n}^f= \frac{\int |f-\tilde{f}|dy}{\int |f|dy}.
\end{equation}
where the distribution functions are evaluated at $t=t_f=10$. This figure is for the chemical non-equilibrium method applied to the chemical non-equilibrium attractor with tol=$10^{-10}$ -- the error tolerance was chosen small enough to disentangle integration error from the error due to the number of modes.  The chemical non-equilibrium ansatz is able to represent the final asymptotic state accurately and so the final error is much smaller than the maximum error in figure \ref{fig:k_eq_L1_err_Ups}.  For the chemical equilibrium ansatz the final error is nearly the same as the maximum error and so we don't show it here.

\begin{figure}[H]
\centerline{\includegraphics[height=6.1cm]{./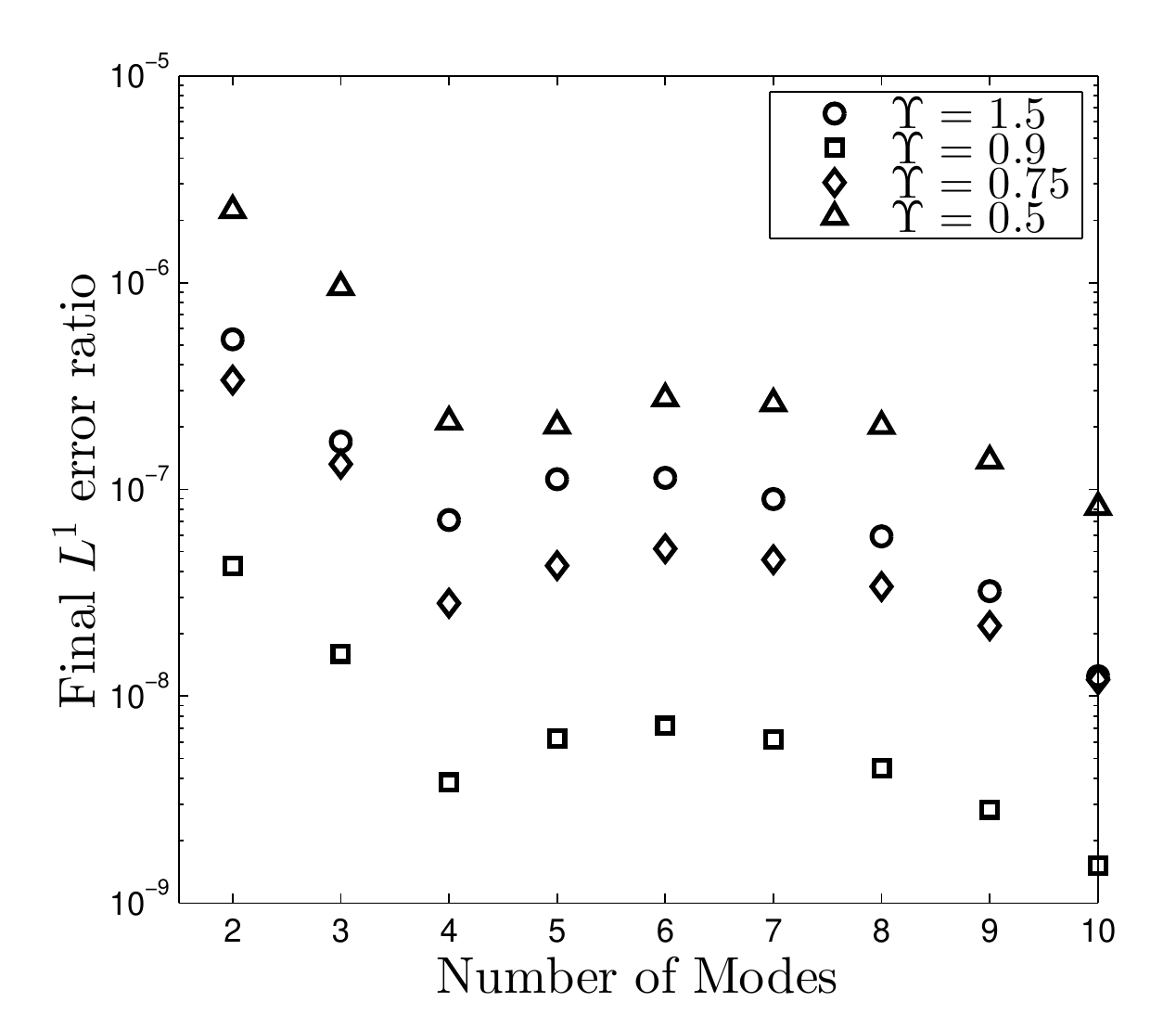}}
\caption{Final value of the ratio  of $L^1$ error between computed and exact solutions to $L^1$ norm of the exact solution using the chemical non-equilibrium method.}\label{fig:k_eq_L1_err_Ups_final}
 \end{figure}

\section{Summary and Outlook}
We have presented a spectral method for solving the Relativistic Boltzmann equation for a system  of massless fermions  diluting in time based on a dynamical basis of orthogonal polynomials.  The method is adapted to systems evolving near kinetic equilibrium, but allows for potentially strong chemical non-equilibrium in a transient and/or final state as well as strong reheating i.e. decoupling of temperature scaling from dilution scaling.  

The method depends on two time dependent parameters, the effective temperature $T(t)$ and phase space occupancy or fugacity $\Upsilon(t)$, whose dynamics are isolated by the requirement that the lowest modes capture the energy and particle number densities.  This gives the method a natural physical interpretation.  In particular, the dynamical fugacity is capable of naturally expressing the emergence of chemical non-equilibrium during the freeze-out process while the effective temperature captures any reheating phenomenon. Any system in approximate kinetic equilibrium that undergoes reheating and/or transitions to chemical non-equilibrium is a good match for this method.  In fact it is almost assured that our method will be considerably more computationally economical than the chemical equilibrium spectral method for any physical system in which the cost of computing the collision terms is high.

We validated the method on a model problem that exhibits the physical characteristics of reheating and chemical non-equilibrium.  We demonstrated that particle number and energy densities are captured accurately using only two degrees of freedom, the effective temperature and fugacity.  In general, this will hold so long as the back reaction from non-thermal distortions is small i.e. as long as kinetic equilibrium is a good approximation. 

The method presented here should be compared to the spectral method used in \cite{Esposito2000,Mangano2002}, which uses a fixed basis of orthogonal polynomials and is adapted to systems that are close to chemical equilibrium with dilution temperature scaling.  In addition to more closely mirroring the physics of systems that exhibit reheating and chemical non-equilibrium, the method presented here has a computational advantage over the chemical equilibrium method. Even when the system is close to chemical equilibrium with dilution temperature scaling, as is the case for the problem studied in \cite{Esposito2000,Mangano2002}, the method presented here reduces the minimum number of degrees of freedom needed to capture the particle number and energy densities from four to two.  In turn, this reduces the minimum number of collision integrals that must be evaluated by more than half.  

Numerical evaluation of collision operators for realistic interactions is a costly operation and so the new `emergent chemical non-equilibrium' approach we have presented here constitutes a significant reduction in the numerical cost of obtaining solutions. Moreover, even if the chemical equilibrium approach were to be properly modified to gain mathematical advantages we show  in our chemical non-equilibrium approach, it is not at all clear that the chemical equilibrium method can, with comparable numerical effort, achieve  a precise solution under conditions where transient or final chemical non-equilibrium and reheating are strong.

\begin{subappendices}
\section{Orthogonal Polynomials}\label{orthopoly_app}
\subsection{Generalities}\label{ortho-general}
Let $w:(a,b)\rightarrow [0,\infty)$ be a weight function where $(a,b)$ is a (possibly unbounded) interval and consider the Hilbert space $L^2(w dx)$.   We will consider weights such that $x^n\in L^2(wdx)$ for all $n\in\mathbb{N}$. We denote the inner product by $\langle\cdot,\cdot\rangle$, the norm by $||\cdot||$, and for a vector $\psi\in L^2$ we let $\hat{\psi}\equiv \psi/||\psi||$.  The classical three term recurrence formula can be used to define a set of orthonormal polynomials $\hat{\psi}_i$ using this weight function, for example see \cite{Olver},
\begin{align}\label{poly_recursion}
&\psi_0=1, \hspace{2mm} \psi_1=||\psi_0||(x-\langle x\hat\psi_0,\hat{\psi}_0\rangle)\hat{\psi}_0,\\
&\psi_{n+1}=||\psi_n||\left[\left(x-\langle x\hat\psi_n,\hat\psi_n\rangle\right)\hat\psi_n-\langle x\hat\psi_n,\hat\psi_{n-1}\rangle\hat\psi_{n-1}\right].
\end{align}
One can also derive recursion relations for the derivatives of $\psi_n$ with respect to $x$, denoted with a prime,
\begin{align}\label{deriv_rec}
&\psi_0^{'}=0, \hspace{2mm} \hat{\psi}_1^{'}=\frac{||\psi_0||}{||\psi_1||}\hat\psi_0,\\
&\hat{\psi}_{n+1}^{'}=\frac{||\psi_n||}{||\psi_{n+1}||}\left[\hat\psi_n+\left(x-\langle x\hat\psi_n,\hat\psi_n\rangle\right)\hat{\psi}_n^{'}-\langle x\hat\psi_n,\hat\psi_{n-1}\rangle\hat{\psi}_{n-1}^{'}\right].
\end{align}
Since $\hat{\psi}_n^{'}$ is a degree $n-1$ polynomial, we have the expansion 
\begin{equation}
\hat{\psi}_n^{'}=\sum_{k<n} a_n^k \hat{\psi}_k.
\end{equation}
Using \req{deriv_rec} we obtain a recursion relation for the $a_n^k$
\begin{equation}
a_{n+1}^k=\frac{||\psi_n||}{||\psi_{n+1}||}\left(\delta_{n,k}-\langle x\hat\psi_n,\hat\psi_{n}\rangle a_n^k-\langle x\hat\psi_n,\hat\psi_{n-1}\rangle a_{n-1}^k+\sum_{l<n}^la_n^l\langle x\hat\psi_l,\hat\psi_k\rangle\right),\notag
\end{equation}
\begin{equation}
a_1^0=\frac{||\psi_0||}{||\psi_1||}.
\end{equation}

\subsection{Parametrized Families of Orthogonal Polynomials}\label{ortho-polynom-fam}
Our method requires not just a single set of orthogonal polynomials, but rather a parametrized family of orthogonal polynomial generated by a weight function $w_t(x)$ that is a $C^1$ function of both $x\in(a,b)$ and some parameter $t$.  To emphasize this, we write $g_t(\cdot,\cdot)$ for $\langle\cdot,\cdot\rangle$.  We will assume that $\partial_t w$ is dominated by some $L^1(dx)$ function of $x$ only that decays exponentially as $x\rightarrow\pm\infty$ (if the interval is unbounded). In particular, this holds for the weight function \req{weight}.

Given the above assumption about the decay of $\partial_t w$, the dominated convergence theorem implies that $\langle p,q\rangle$ is a $C^1$ function of $t$ for all polynomials $p$ and $q$ and justifies  differentiation under the integral sign. By induction, it also implies implies that the $\hat\psi_i$ have coefficients that are $C^1$ functions of $t$. Therefore, for any polynomials $p$, $q$ whose coefficients are $C^1$ functions of $t$ we have
\begin{equation}
\frac{d}{dt}g_t( p,q)=\dot{g}_t(p,q)+g_t(\dot{p},q)+g_t( p,\dot{q})
\end{equation}
where a dot denotes differentiation with respect to $t$ and we use $\dot{g}_t(\cdot,\cdot)$ to denote the inner product with respect to the weight $\dot{w}$.  

\req{b_eq} for the mode coefficients requires us to compute $g(\dot{\hat\psi}_i,\hat\psi_j)$.  Differentiating the relation
\begin{equation}
\delta_{ij}=g_t(\hat\psi_i,\hat\psi_j)
\end{equation}
yields
\begin{equation}\label{ortho_deriv_eq}
0=\dot g_t(\hat\psi_i,\hat\psi_j)+g_t(\dot{\hat\psi}_i,\hat\psi_j)+g_t(\hat\psi_i,\dot{\hat\psi}_j).
\end{equation}
For $i=j$ we obtain
\begin{equation}\label{norm_deriv_eq}
g_t(\dot{\hat\psi}_i,\hat\psi_i)=-\frac{1}{2}\dot{g}_t(\hat\psi_i,\hat\psi_i).
\end{equation}
For $i<j$, $\dot{\hat\psi}_i$ is a degree $i$ polynomial and so it is orthogonal to $\hat\psi_j$. Therefore \req{ortho_deriv_eq} simplifies to
\begin{equation}
g_t(\dot{\hat\psi}_i,\hat\psi_j)=-\dot{g}_t(\hat\psi_i,\hat\psi_j),\hspace{2mm} i\neq j.
\end{equation}

\subsection{Proof of Lower Triangularity}\label{lower_triang}
Here we prove that the matrices that define the dynamics of the mode coefficients $b^k$ are lower triangular.  Recall the definitions
\begin{align}
A^k_i(\Upsilon)\equiv&\langle\frac{z}{f_\Upsilon }\hat\psi_i\partial_zf_\Upsilon ,\hat\psi_k\rangle+\langle z\partial_z \hat\psi_i,\hat\psi_k\rangle,\\
B^k_i(\Upsilon)\equiv &\Upsilon\left(\langle\frac{1}{f_\Upsilon }\frac{\partial f_\Upsilon }{\partial\Upsilon}\hat\psi_i,\hat\psi_k\rangle+\langle\frac{\partial\hat{\psi}_i}{\partial \Upsilon},\hat\psi_k\rangle\right).
\end{align}
Using integration by parts, we see that
\begin{equation}
A^k_i=-3\langle\hat\psi_i,\hat\psi_k\rangle-\langle \hat \psi_i,z\partial_z\hat\psi_k\rangle.
\end{equation}
Since $\hat\psi_i$ is orthogonal to all polynomials of degree less than $i$ we have $A^k_i=0$ for  $k<i$.  

$B^k_i$ can be simplified as follows.  First differentiate 
\begin{equation}
\delta_{ik}=\langle \hat\psi_i,\hat\psi_j\rangle
\end{equation}
with respect to $\Upsilon$ to obtain
\begin{align}
0=&\int \hat\psi_i\hat\psi_k\partial_{\Upsilon}wdz+\langle \partial_{\Upsilon}\hat\psi_i,\hat\psi_k\rangle+\langle \hat\psi_i,\partial_{\Upsilon}\hat\psi_k\rangle\\
=&\langle\frac{\hat\psi_i}{f_\Upsilon}\partial_{\Upsilon}f_\Upsilon,\hat\psi_k \rangle+\langle\partial_{\Upsilon}\hat\psi_i,\hat\psi_k\rangle+\langle \hat\psi_i,\partial_{\Upsilon}\hat\psi_k\rangle
\end{align}
Therefore 
\begin{equation}
B^k_i=-\Upsilon\langle\hat\psi_i,\partial_{\Upsilon}\hat\psi_k\rangle.
\end{equation}
$\partial_\Upsilon \hat\psi_k$ is a degree $k$ polynomial, hence $B_i^k=0$ for $k<i$ as desired.

\end{subappendices}

\chapter{ Collision Integrals}\label{ch:coll_simp}

\section{Collision Integral Inner Products}\label{coll_simp_sec}
Having detailed our method for solving the Boltzmann equation in chapter \ref{ch:boltz_orthopoly}, we must now address the computation of  collision integrals for neutrino processes.   See also our paper \cite{Birrell:2014uka}. To solve for the mode coefficients using \req{b_eq}, we must evaluate the collision operator inner products
\begin{align}\label{collision_integrals}
R_k\equiv&\langle\frac{1}{f_\Upsilon E_1}C[f_1],\hat\psi_k\rangle=\int_0^\infty \hat\psi_k(z_1)C[f_1](z_1) \frac{z_1^2}{E_1}dz_1\\
=&\frac{1}{2}\int \hat\psi_k(z_1)\int\left[f_3(p_3)f_4(p^4)f^1(p_1)f^2(p_2)-f_1(p_1)f_2(p_2)f^3(p_3)f^4(p^4)\right]\\
&\hspace{30mm}\times S |\mathcal{M}|^2(s,t)(2\pi)^4\delta(\Delta p)\prod_{i=2}^4\frac{d^{3}p_i}{2(2\pi)^3E_i}\frac{z_1^2}{E_1}dz_1,\notag\\
=&\frac{2(2\pi)^3}{8\pi}T_1^{-3}\int G_k(p_1,p_2,p_3,p_4)S |\mathcal{M}|^2(s,t)(2\pi)^4\delta(\Delta p)\prod_{i=1}^4\frac{d^{3}p_i}{2(2\pi)^3E_i},\\
=&2\pi^2T_1^{-3}\int G_k(p_1,p_2,p_3,p_4)S |\mathcal{M}|^2(s,t)(2\pi)^4\delta(\Delta p)\prod_{i=1}^4 \delta_0(p_i^2-m_i^2)\frac{d^4p_i}{(2\pi)^3},\\
G_k=&\hat\psi_k(z_1)\left[f_3(p_3)f_4(p_4)f^1(p_1)f^2(p_2)-f_1(p_1)f_2(p_2)f^3(p_3)f^4(p_4)\right],\hspace{2mm} f^i=1- f_i.
\end{align}
Note that $R_k$ only uses information about the distributions at a single spacetime point, and so we can work in a local orthonormal basis for the momentum.  Among other things, this implies that $p^2=p^\alpha p^\beta\eta_{\alpha\beta}$ where $\eta$ is the Minkowski metric
\begin{equation}
\eta_{\alpha\beta}=\diag(1,-1,-1,-1).
\end{equation}

From \req{collision_integrals}, we see that a crucial aspect of our spectral method is the ability to numerically compute  integrals of the type
\begin{align}\label{coll_ip}
&M\equiv\int G(p_1,p_2,p_3,p_4) S |\mathcal{M}|^2(s,t) (2\pi)^4\delta(\Delta p)\prod_{i=1}^4 \delta_0(p_i^2-m_i^2)\frac{d^4p_i}{(2\pi)^3},\\
&G(p_1,p_2,p_3,p_4)=g_1(p_1)g_2(p_2)g_3(p_3)g_4(p_4)
\end{align}
for some functions $g_i$.

Even after eliminating the delta functions in \req{coll_ip}, we are still left with an $8$ dimensional integral.  To facilitate numerical computation, we must analytically reduce this expression down to fewer dimensions.  Fortunately, the systems we are interested in have a large amount of symmetry that we can utilize for this purpose.  

The distribution functions we are concerned with are isotropic in some frame defined  by a unit timelike vector $U$, i.e. they depend only on the four-momentum only through $p_i\cdot U$.  The same is true of the basis functions $\hat\psi_k$, and hence the  $g_i$ depend only on $p_i\cdot U$ as well.  In \cite{Madsen,Dolgov_Hansen} approaches are outlined that reduce integrals of this type down to $3$ dimensions.  We outline the method from \cite{Dolgov_Hansen}, as applied to our spectral method solver, in appendix \ref{app:dogov_method}.  However, the integrand one obtains from these methods is only piecewise smooth or has an integration domain with a complicated geometry.  This presents difficulties for the integration routine we employ, which utilizes adaptive mesh refinement to ensure the desired error tolerance.  We take an alternative approach that, for the scattering kernels found in $e^\pm$, neutrino interactions, reduces the problem to three iterated integrals (but not quite to a three dimensional integral) and results in an integrand with better smoothness properties.  In our comparison with the method in \cite{Dolgov_Hansen}, the resulting formula evaluates significantly faster under the numerical integration scheme we used.   The derivation presented expands on what is found in \cite{letessier2002hadrons}.

\section{Simplifying the Collision Integral}\label{coll_simp_sec}
Our strategy for simplifying the collision integrals is as follows.  We first make a change of variables designed to put the 4-momentum conserving delta function in a particularly simple form and allowing us to analytically use that delta function to reduce the integral from $16$ to $12$ dimensions.  The remaining four delta functions, which impose the mass shell constraints, are then seen to reduce to integration over a product of spheres.  The simple form of the submanifold that these delta function restict us to allows us to use the method in chapter \ref{ch:vol_forms} to analytically evaluate all four of the remaining delta functions simultaneously.  During this process, the isotropy of the system in the frame given by the 4-vector $U$ allows us to reduce the dimensionality further, by analytically evaluating several of the angular integrals. 

The change of variables that simplifies the 4-momentum conserving delta function is given by
\begin{equation}
p=p_1+p_2,\hspace{2mm} q=p_1-p_2, \hspace{2mm} p^{'}=p_3+p_4, \hspace{2mm} q^{'}=p_3-p_4.
\end{equation}
The Jacobian of this transformation is $1/2^{8}$.  Therefore using lemma \ref{diffeo_property} we find
{\small
\begin{align}\label{M_eq1}
M=\frac{1}{256(2\pi)^8 }\int & 1_{p^0>|q^0|} 1_{(p^{'})^0>|(q^{'})^0|} G((p+q)\cdot U/2,(p-q)\cdot  U/2,(p^{'}+q^{'})\cdot U/2, (p^{'}-q^{'})\cdot U/2)\notag\\ 
&\times S |\mathcal{M}|^2  \delta(p-p^{'})\delta((p+q)^2/4-m_1^2)\delta((p-q)^2/4-m_2^2)\delta((p^{'}+q^{'})^2/4-m_3^2)\notag\\
&\times\delta((p^{'}-q^{'})^2/4-m_4^2)d^4pd^4qd^4p^{'}d^4q^{'}.
\end{align}
}
First eliminate the integration over $p^{'}$ using $\delta(p-p^{'})$
 and then use Fubini's theorem to write
\begin{align}\label{use_fubini}
M=\frac{1}{256(2\pi)^8 }\int &\bigg[ \int G((p+q)\cdot U/2,(p-q)\cdot  U/2,(p^{'}+q^{'})\cdot U/2, (p^{'}-q^{'})\cdot U/2)\notag\\ 
&\times  1_{p^0>|q^0|} 1_{p^0>|(q^{'})^0|}S |\mathcal{M}|^2 \delta((p+q)^2/4-m_1^2)\delta((p-q)^2/4-m_2^2)\notag\\
&\times \delta((p+q^{'})^2/4-m_3^2)\delta((p-q^{'})^2/4-m_4^2)d^4qd^4q^{'}\bigg]d^4p.
\end{align}
Subsequent computations will justify this use of Fubini's theorem.

Since $p^0>0$ we have $dp\neq 0$ and so we can use the corollary of the coarea formula, \ref{dummy_int}, to decompose this into an integral over the center of mass energy $s=p^2$
{\small
\begin{align}\label{M_eq2}
M=\frac{1}{256(2\pi)^8 }\int_{s_0}^\infty&\int \delta(p^2-s) \left[\int 1_{p^0>|q^0|} 1_{p^0>|(q^{'})^0|}  S |\mathcal{M}|^2  F(p,q,q^{'}) \delta((p+q)^2/4-m_1^2)\right.\notag\\
&\times\delta((p-q)^2/4-m_2^2)\delta((p+q^{'})^2/4-m_3^2)\delta((p-q^{'})^2/4-m_4^2)d^4qd^4q^{'}\bigg]d^4pds,\notag\\
&F(p,q,q^{'})=G((p+q)\cdot U/2,(p-q)\cdot  U/2,(p+q^{'})\cdot U/2, (p-q^{'})\cdot U/2),\notag\\
&s_0=\max\{(m_1+m_2)^2,(m_3+m_4)^2\}.
\end{align}
}
The lower bound on $s$ comes from the fact that both $p_1$ and $p_2$ are future timelike and hence 
\begin{equation}
p^2=m_1^2+m_2^2+2p_1\cdot p_2\geq m_1^2+m_2^2+2m_1m_2=(m_1+m_2)^2.
\end{equation}
The other inequality is obtained by using $p=p^{'}$. 

Note that the integral in brackets in \req{M_eq2} is invariant under $SO(3)$ rotations of $p$ in the frame defined by $U$.  Therefore we obtain
\begin{align}\label{K_def}
M=&\frac{1}{256(2\pi)^8 }\int_{s_0}^\infty\int_0^\infty K(s,p)\frac{4\pi |\vec p|^2}{2p^0}d|\vec p|ds,\hspace{2mm} p^0=p\cdot U=\sqrt{|\vec p|^2+s},\\
K(s,p)=&\int 1_{p^0>|q^0|} 1_{p^0>|(q^{'})^0|}  S |\mathcal{M}|^2  F(p,q,q^{'}) \delta((p+q)^2/4-m_1^2)\delta((p-q)^2/4-m_2^2)\notag\\
&\times\delta((p+q^{'})^2/4-m_3^2)\delta((p-q^{'})^2/4-m_4^2)d^4qd^4q^{'}
\end{align}
where $|\vec p|$ denotes the norm of the spacial component of $p$ and in the formula for $K(s,p)$, $p$ is any four vector whose spacial component has norm $|\vec p|$ and timelike component $\sqrt{|\vec p|^2+s}$. Note that in integrating over $\delta(p^2-s)dp^0$, only the positive root was taken, due to the indicator functions in the $K(s,p)$.

We now simplify $K(s,p)$ for fixed but arbitrary $p$ and $s$ that satisfy $p^0=\sqrt{|\vec p|^2+s}$ and $s>s_0$.  These conditions imply $p$ is future timelike, hence we can we can change variables in  $q,q^{'}$ by an element of $SO(1,3)$ so that 
\begin{equation}
p=(\sqrt{s},0,0,0), \hspace{2mm} U=(\alpha,0,0,\delta)
\end{equation}
where
\begin{equation}
\alpha=\frac{p\cdot U}{\sqrt{s}}, \hspace{2mm} \delta=\frac{1}{\sqrt{s}}\left((p\cdot U)^2-s \right)^{1/2}.
\end{equation}
Note that the delta functions in the integrand imply $p\pm q$ is  timelike (or null if the corresponding mass is zero).  Therefore $p^0>\pm q^0$ iff $p\mp q$ is future timelike (or null).  This condition is preserved by $SO(1,3)$ hence $p^0>|q^0|$ in one frame iff it holds in every frame.  Similar comments apply to $p^0>|(q^{'})^0|$ and so $K(s,p)$ has the same formula in the transformed frame as well.

We now evaluate the measure that is induced by the delta functions, using the method given in chapter \ref{ch:vol_forms}.  We have the constraint function
\begin{equation}
\Phi(q,q^{'})=((p+q)^2/4-m_1^2,(p-q)^2/4-m_2^2,(p+q^{'})^2/4-m_3^2,(p-q^{'})^2/4-m_4^2)
\end{equation}
and must compute the solution set $\Phi(q,q^{'})=0$. Adding and subtracting the first two components and the last two respectively, we have the equivalent conditions
\begin{align}
\frac{s+q^2}{2}=m_1^2+m_2^2,\hspace{2mm} p\cdot q=m_1^2-m_2^2, \hspace{2mm}\frac{s+(q^{'})^2}{2}=m_3^2+m_4^2,\hspace{2mm} p\cdot q^{'}=m_3^2-m_4^2.
\end{align}
If we let $(q^0,\vec{q})$, $((q^{'})^0,\vec{q}^{'})$ denote the spacial components in the frame defined by $p=(\sqrt{s},0,0,0)$ we have another set of equivalent conditions
\begin{align}\label{coord_conditions}
&q^{0}=\frac{m_1^2-m_2^2}{\sqrt{s}},\hspace{2mm} |\vec{q}|^2=\frac{(m_1^2-m_2^2)^2}{s}+s-2(m_1^2+m_2^2),\\
&(q^{'})^{0}=\frac{m_3^2-m_4^2}{\sqrt{s}}, \hspace{2mm} |\vec{q}^{'}|^2=\frac{(m_3^2-m_4^2)^2}{s}+s-2(m_3^2+m_4^2).
\end{align}
Note that if these hold then using $s\geq s_0$ we obtain
\begin{equation}
\frac{|q^0|}{p^0}\leq \frac{|m_1^2-m_2^2|}{(m_1+m_2)^2}<1
\end{equation}
and similarly for $q^{'}$.  Hence the conditions in the indicator functions are satisfied and we can drop them from the formula for $K(s,p)$.

The conditions \req{coord_conditions} imply that our solution set is a product of spheres in $\vec{q}$ and $\vec{q}^{'}$, as long as the conditions are consistent i.e. so long as $|\vec{q}|,|\vec{q}^{'}|>0$. To see that this holds for almost every $s$, first note
\begin{equation}
\frac{d}{ds}|\vec{q}|^2=1-\frac{(m_1^2-m_2^2)^2}{s^2}>0
\end{equation}
since $s\geq (m_1+m_2)^2$.  At $s=(m_1+m_2)^2$, $|\vec{q}|^2=0$.  Therefore, for $s>s_0$ we have $|\vec{q}|>0$ and similarly for $q^{'}$.  Hence we have the result
\begin{equation}
\Phi^{-1}(0)=\{q^{0}\}\times B_{|\vec{q}|}\times \{(q^{'})^{0}\}\times B_{|\vec{q}^{'}|}.
\end{equation}
where $B_r$ denotes the radius $r$ ball centered at $0$.  We will parametrize this by spherical angular coordinates in $q$ and $q^{'}$. 

 We now compute the induced volume form.  First consider the differential 
\begin{equation}
 D\Phi=\left( \begin{array}{c}
\frac{1}{2}(q+p)^\alpha\eta_{\alpha\beta}dq^\beta \\
\frac{1}{2}(q-p)^\alpha\eta_{\alpha\beta}dq^\beta\\
\frac{1}{2}(q^{'}+p)^\alpha\eta_{\alpha\beta}dq^{'^\beta}  \\
\frac{1}{2}(q^{'}-p)^\alpha\eta_{\alpha\beta}dq^{'^\beta}  \end{array} \right).
\end{equation}
Evaluating this on the coordinate vector fields $\partial_{q^0}$, $\partial_r$ we obtain
\begin{equation}
 D\Phi(\partial_{q^0})=\left( \begin{array}{c}
\frac{1}{2}(q^0+\sqrt{s}) \\
\frac{1}{2}(q^0-\sqrt{s}) \\
0\\
0 \end{array} \right), \hspace{2mm}  D\Phi(\partial_{r})=\left( \begin{array}{c}
-\frac{1}{2}|\vec{q}| \\
-\frac{1}{2}|\vec{q}| \\
0\\
0 \end{array} \right)=\left( \begin{array}{c}
-\frac{1}{2}r \\
-\frac{1}{2}r \\
0\\
0 \end{array} \right).
\end{equation}
Similar results hold for $q^{'}$.  Therefore we have the determinant
\begin{equation}
\det\left( \begin{array}{cccc}
D\Phi(\partial_{q^0}) & D\Phi(\partial_{r}) & D\Phi(\partial_{(q^{'})^0}) & D\Phi(\partial_{r^{'}}) \end{array} \right)=\frac{s}{4}rr^{'}.
\end{equation}
Note that this determinant being nonzero implies that our use of Fubini's theorem in \req{use_fubini} was justified.

By \req{vol_form_coords} and \req{delta_def}, the above computations imply that the induced volume measure is
\begin{align}
&\delta((p+q)^2/4-m_1^2)\delta((p-q)^2/4-m_2^2)\delta((p+q^{'})^2/4-m_3^2)\delta((p-q^{'})^2/4-m_4^2)d^4qd^4q^{'}\\
=&\frac{4}{srr^{'}}i_{(\partial_{q^0},\partial_{r},\partial _{(q^{'})^0},\partial_{r^{'}})}\left(r^2\sin(\phi)dq^0drd\theta d\phi\right)\wedge\left((r^{'})^2\sin(\phi^{'})d(q^{'})^0dr^{'}d\theta^{'}d\phi^{'}\right)\\
=&\frac{4rr^{'}}{s}\sin(\phi)\sin(\phi^{'})d\theta d\phi d\theta^{'}d\phi^{'}
\end{align}
where
\begin{equation}
r=\frac{1}{\sqrt{s}}\sqrt{(s-(m_1+m_2)^2)(s-(m_1-m_2)^2)},\hspace{2mm} r^{'}=\frac{1}{\sqrt{s}}\sqrt{(s-(m_3+m_4)^2)(s-(m_3-m_4)^2)}.
\end{equation}

Consistent with our interest in the Boltzmann equation, we assume $F$ factors as
\begin{align}
 F(p,q,q^{'})=&F_{12}((p+q)\cdot U/2,(p-q)\cdot U/2)F_{34}((p+q^{'})\cdot U/2,(p-q^{'})\cdot U/2)\\
\equiv &G_{12}(p\cdot U,q\cdot U)G_{34}(p\cdot U,q^{'}\cdot U).
\end{align}
For now, we suppress the dependence on $p$, as it is not of immediate concern. In our chosen coordinates where $U=(\alpha,0,0,\delta)$ we have
\begin{equation}
q\cdot U=q^0\alpha-r\delta\cos(\phi)
\end{equation}
and similarly for $q^{'}$.
To compute
\begin{align}\label{K_angular1}
K(s,p)=\frac{4rr^{'}}{s}\int \left[\int S |\mathcal{M}|^2 (s,t) G_{34}\sin(\phi^{'})d\theta^{'} d\phi^{'}\right] G_{12}\sin(\phi)d\theta d\phi
\end{align}
first recall 
\begin{align}
t=&(p_1-p_3)^2=\frac{1}{4}(q- q^{'})^2=\frac{1}{4}(q^2+(q^{'})^2-2(q^0(q^{'})^0-\vec{q}\cdot \vec{q}^{'})),\\
\vec{q}\cdot\vec{q}^{'}&=rr^{'}(\cos(\theta-\theta^{'})\sin(\phi)\sin(\phi^{'})+\cos(\phi)\cos(\phi^{'})).
\end{align}
Together, these imply that the integral in brackets in  \req{K_angular1} equals
{\small
\begin{align}
&\int_0^\pi\int_0^{2\pi} S |\mathcal{M}|^2 (s,t(\cos(\theta-\theta^{'})\sin(\phi)\sin(\phi^{'})+\cos(\phi)\cos(\phi^{'})))\\
&\hspace{15mm}\times G_{34}((q^{'})^0\alpha-r^{'}\delta\cos(\phi^{'}))\sin(\phi^{'})d\theta^{'} d\phi^{'}\notag\\
=&\int_{-1}^1\int_0^{2\pi} S |\mathcal{M}|^2 (s,t(\cos(\psi)\sin(\phi)\sqrt{1-y^2}+\cos(\phi)y)) G_{34}((q^{'})^0\alpha-r^{'}\delta y)d\psi dy.
\end{align}
}

Therefore 
\begin{align}
K(s,p)=&\frac{8\pi rr^{'}}{s}\int_{-1}^1 \left[\int_{-1}^1\left(\int_0^{2\pi} S |\mathcal{M}|^2 (s,t(\cos(\psi)\sqrt{1-y^2}\sqrt{1-z^2}+yz))d\psi\right)\right.\\
&\hspace{26mm}\times G_{34}((q^{'})^0\alpha-r^{'}\delta y) dy\bigg] G_{12}(q^0\alpha-r\delta z)dz
\end{align}
where
\begin{align}\label{t_def}
t(x)=&\frac{1}{4}((q^0)^2-r^2+((q^{'})^0)^2-(r^{'})^2-2q^0(q^{'})^0+2rr^{'}x),\\
=&\frac{1}{4}((q^0-(q^{'})^0)^2-r^2-(r^{'})^2+2rr^{'}x).
\end{align}

\section{Electron and Neutrino Collision Integrals}\label{nu_matrix_elements}
In this section, we compute various integrals of the scattering matrix element that appear in the scattering kernels for processes involving $e^\pm$ and neutrinos.  For reference, we collect the important results from section \ref{coll_simp_sec} on evaluation of the scattering kernel integrals \req{collision_integrals}.

\begin{align}\label{M_simp}
M=\frac{1}{256(2\pi)^7 }\int_{s_0}^\infty&\int_0^\infty K(s,p)\frac{ p^2}{p^0}dpds,
\end{align}
\begin{align}\label{matrix_elem_int}
K(s,p)=&\frac{8\pi rr^{'}}{s}\int_{-1}^1 \left[\int_{-1}^1\left(\int_0^{2\pi} S |\mathcal{M}|^2 (s,t(\cos(\psi)\sqrt{1-y^2}\sqrt{1-z^2}+yz))d\psi\right)\right.\\
&\hspace{26mm}\times G_{34}((q^{'})^0\alpha-r^{'}\delta y) dy\bigg] G_{12}(q^0\alpha-r\delta z)dz.
\end{align}
where
\begin{align}\label{t_def}
p^0=&\sqrt{p^2+s},\hspace{2mm} \alpha=\frac{p^0}{\sqrt{s}}, \hspace{2mm} \delta=\frac{p}{\sqrt{s}},\hspace{2mm}q^{0}=\frac{m_1^2-m_2^2}{\sqrt{s}},\hspace{2mm}(q^{'})^{0}=\frac{m_3^2-m_4^2}{\sqrt{s}},\\
r=&\frac{1}{\sqrt{s}}\sqrt{(s-(m_1+m_2)^2)(s-(m_1-m_2)^2)},\\
 r^{'}=&\frac{1}{\sqrt{s}}\sqrt{(s-(m_3+m_4)^2)(s-(m_3-m_4)^2)},\\
t(x)=&\frac{1}{4}((q^0-(q^{'})^0)^2-r^2-(r^{'})^2+2rr^{'}x),\\
s_0=&\max\{(m_1+m_2)^2,(m_3+m_4)^2\}.
\end{align}
and
\begin{align}
 F(p,q,q^{'})=&F_{12}((p+q)\cdot U/2,(p-q)\cdot U/2)F_{34}((p+q^{'})\cdot U/2,(p-q^{'})\cdot U/2)\\
\equiv &G_{12}(p\cdot U,q\cdot U)G_{34}(p\cdot U,q^{'}\cdot U)\notag.
\end{align}

This is as far as we can simplify things without more information about the form of the matrix elements.  The  matrix elements for weak force scattering processes involving neutrinos and $e^\pm$ in the limit $|p|\ll M_W,M_Z$, taken from \cite{Dolgov_Hansen}, are as follows
\begin{table}[H]
\centering 
\begin{tabular}{|c|c|}
\hline
Process &$S|\mathcal{M}|^2$  \\
\hline
$\nu_e+\bar\nu_e\rightarrow\nu_e+\bar\nu_e$ & $128G_F^2(p_1\cdot p_4)(p_2\cdot p_3)$\\
\hline
$\nu_e+\nu_e\rightarrow\nu_e+\nu_e$ & $64G_F^2(p_1\cdot p_2)(p_3\cdot p_4)$\\
\hline
$\nu_e+\bar\nu_e\rightarrow\nu_j+\bar\nu_j$&$32G_F^2(p_1\cdot p_4)(p_2\cdot p_3)$\\
\hline
$\nu_e+\bar\nu_j\rightarrow\nu_e+\bar\nu_j$ & $32G_F^2(p_1\cdot p_4)(p_2\cdot p_3)$\\
\hline
$\nu_e+\nu_j\rightarrow\nu_e+\nu_j$&$32G_F^2(p_1\cdot p_2)(p_3\cdot p_4)$\\
\hline
$\nu_e+\bar\nu_e\rightarrow e^++e^-$ & $128G_F^2[g_L^2(p_1\cdot p_4)(p_2\cdot p_3)+g_R^2(p_1\cdot p_3)(p_2\cdot p_4)+g_Lg_Rm_e^2(p_1\cdot p_2)]$\\
\hline
$\nu_e+e^-\rightarrow\nu_e+e^-$ & $128G_F^2[g_L^2(p_1\cdot p_2)(p_3\cdot p_4)+g_R^2(p_1\cdot p_4)(p_2\cdot p_3)-g_Lg_Rm_e^2(p_1\cdot p_3)]$\\
\hline
$\nu_e+e^+\rightarrow\nu_e+e^+$ & $128G_F^2[g_R^2(p_1\cdot p_2)(p_3\cdot p_4)+g_L^2(p_1\cdot p_4)(p_2\cdot p_3)-g_Lg_Rm_e^2(p_1\cdot p_3)]$\\
\hline
\end{tabular}
\caption{Matrix elements for electron neutrino processes where $j=\mu,\tau$,  $g_L=\frac{1}{2}+\sin^2\theta_W$, $g_R=\sin^2\theta_W$, $\sin^2(\theta_W)\approx 0.23$ is the Weinberg angle, and $G_F=1.16637\times 10^{-5}\text{GeV}^{-2}$ is Fermi's constant.}
\label{table:nu_e_reac}
\end{table}

\begin{table}[H]
\centering 
\begin{tabular}{|c|c|}
\hline
Process &$S|\mathcal{M}|^2$  \\
\hline
$\nu_i+\bar\nu_i\rightarrow\nu_i+\bar\nu_i$ & $128G_F^2(p_1\cdot p_4)(p_2\cdot p_3)$\\
\hline
$\nu_i+\nu_i\rightarrow\nu_i+\nu_i$ & $64G_F^2(p_1\cdot p_2)(p_3\cdot p_4)$\\
\hline
$\nu_i+\bar\nu_i\rightarrow\nu_j+\bar\nu_j$&$32G_F^2(p_1\cdot p_4)(p_2\cdot p_3)$\\
\hline
$\nu_i+\bar\nu_j\rightarrow\nu_i+\bar\nu_j$ & $32G_F^2(p_1\cdot p_4)(p_2\cdot p_3)$\\
\hline
$\nu_i+\nu_j\rightarrow\nu_i+\nu_j$&$32G_F^2(p_1\cdot p_2)(p_3\cdot p_4)$\\
\hline
$\nu_i+\bar\nu_i\rightarrow e^++e^-$ & $128G_F^2[\tilde{g}_L^2(p_1\cdot p_4)(p_2\cdot p_3)+g_R^2(p_1\cdot p_3)(p_2\cdot p_4)+\tilde{g}_Lg_Rm_e^2(p_1\cdot p_2)]$\\
\hline
$\nu_i+e^-\rightarrow\nu_i+e^-$ & $128G_F^2[\tilde{g}_L^2(p_1\cdot p_2)(p_3\cdot p_4)+g_R^2(p_1\cdot p_4)(p_2\cdot p_3)-\tilde{g}_Lg_Rm_e^2(p_1\cdot p_3)]$\\
\hline
$\nu_i+e^+\rightarrow\nu_i+e^+$ & $128G_F^2[g_R^2(p_1\cdot p_2)(p_3\cdot p_4)+\tilde{g}_L^2(p_1\cdot p_4)(p_2\cdot p_3)-\tilde{g}_Lg_Rm_e^2(p_1\cdot p_3)]$\\
\hline
\end{tabular}
\caption{Matrix elements for $\mu$ and $\tau$ neutrino processes where $i=\mu,\tau$, $j=e,\mu,\tau$, $j\neq i$,  $\tilde{g}_L=g_L-1=-\frac{1}{2}+\sin^2\theta_W$, $g_R=\sin^2\theta_W$, $\sin^2(\theta_W)\approx 0.23$ is the Weinberg angle, and $G_F=1.16637\times 10^{-5}\text{GeV}^{-2}$ is Fermi's constant.}
\label{table:nu_mu_reac}
\end{table}
In the following subsections, we will analytically simplify \req{M_simp} for each of these processes as much as possible.

\subsection{$\nu\nu\rightarrow\nu\nu$ }
Using \req{Mandelstam}, the matrix elements for neutrino neutrino scattering can be simplified to
\begin{align}
\label{TA002}
S|\mathcal{M}|^2=C(p_1\cdot p_2)(p_3\cdot p_4)=C\frac{s^2}{4}
\end{align}
 where the coefficient $C$ is given in table \ref{table:nu_nu_coeff}.

\begin{table}[H]
\centering 
\begin{tabular}{|c|c|}
\hline
Process &$C$ \\
\hline
$\nu_i+\nu_i\rightarrow\nu_i+\nu_i,\hspace{2mm} i\in\{e,\mu,\tau\}$& $64 G_F^2$\\
\hline
$\nu_i+\nu_j\rightarrow\nu_i+\nu_j,\hspace{2mm} i\neq j, \hspace{1mm} i,j\in\{e,\mu,\tau\}$& $32 G_F^2$\\
\hline
\end{tabular}
\caption{Matrix element coefficients for neutrino neutrino scattering processes.}
\label{table:nu_nu_coeff}
\end{table}
From here we obtain

\begin{align}
K(s,p)=&\frac{8\pi rr^{'}}{s}\int_{-1}^1 \left[\int_{-1}^1\left(\int_0^{2\pi}S|\mathcal{M}|^2 (s,t(\cos(\psi)\sqrt{1-y^2}\sqrt{1-z^2}+yz))d\psi\right)\right.\notag\\
&\hspace{26mm}\times G_{34}((q^{'})^0\alpha-r^{'}\delta y) dy\bigg] G_{12}(q^0\alpha-r\delta z)dz\\
=& 4\pi^2 Crr^{'}s \int_{-1}^1 G_{12}(q^0\alpha-r\delta z)dz \int_{-1}^1G_{34}((q^{'})^0\alpha-r^{'}\delta y) dy.
\end{align}
{\small
\begin{align}
M_{\nu\nu\rightarrow\nu\nu}=&\frac{C}{256(2\pi)^5 }\int_{s_0}^\infty s^2\int_0^\infty \int_{-1}^1 G_{12}(q^0\alpha-r\delta z)dz \int_{-1}^1G_{34}((q^{'})^0\alpha-r^{'}\delta y) dy\frac{ p^2}{p^0}dpds,\\
=&\frac{C}{256(2\pi)^5 } T^8\!\!\!\int_{0}^\infty\tilde{s}^2\int_0^\infty  \left[\int_{-1}^1 \tilde{G}_{12}(-\tilde{p} z)dz \int_{-1}^1\tilde{G}_{34}(-\tilde{p} y) dy\right]\frac{\tilde{p}^2}{\tilde{p}^0}d\tilde{p}d\tilde{s}
\end{align}
}
where the tilde quantities are obtained by non-dimensionalizing via scaling by $T$. If we want to emphasize the role of $C$ then we write $M_{\nu\nu\rightarrow\nu\nu}(C)$.

\subsection{$\nu\bar\nu\rightarrow \nu\bar\nu$ }
Using \req{Mandelstam}, the matrix elements for neutrino anti-neutrino scattering can be simplified to
\begin{align}
S|\mathcal{M}|^2=C\left(\frac{s+t}{2}\right)^2
\end{align}
where the coefficient $C$ is given in table \ref{table:nu_nubar_coeff}.

\begin{table}[H]
\centering 
\begin{tabular}{|c|c|}
\hline
Process &$C$  \\
\hline
$\nu_i+\bar\nu_i\rightarrow\nu_i+\bar\nu_i,\hspace{2mm} i\in\{e,\mu,\tau\}$& $128 G_F^2$\\
\hline
$\nu_i+\bar\nu_i\rightarrow\nu_j+\bar\nu_j,\hspace{2mm} i\neq j, \hspace{1mm} i,j\in\{e,\mu,\tau\}$& $32 G_F^2$\\
\hline
$\nu_i+\bar\nu_j\rightarrow\nu_i+\bar\nu_j,\hspace{2mm} i\neq j, \hspace{1mm} i,j\in\{e,\mu,\tau\}$& $32 G_F^2$\\
\hline
\end{tabular}
\caption{Matrix element coefficients for neutrino neutrino scattering processes.}
\label{table:nu_nubar_coeff}
\end{table}
Using this we find
 \begin{align}
\int_0^{2\pi} S |\mathcal{M}|^2 (s,t(\cos(\psi)\sqrt{1-y^2}\sqrt{1-z^2}+yz))d\psi=&\frac{\pi C}{16} s^2(3+4 yz-y^2-z^2+3y^2z^2)\notag\\
\equiv&\frac{\pi C}{16} s^2q(y,z),
\end{align}

\begin{align}
K(s,p)=&\frac{\pi^2C}{2}s^2\int_{-1}^1 \left[\int_{-1}^1q(y,z)G_{34}(-p y) dy\right] G_{12}(-p z)dz,
\end{align}
\begin{align}
M_{\nu\bar\nu\rightarrow\nu\bar\nu}=&\frac{C}{2048(2\pi)^5 }T^8\!\!\!\int_0^\infty\!\!\!\int_0^\infty\!\!\! \tilde{s}^2\left[\int_{-1}^1\int_{-1}^1q(y,z)\tilde{G}_{34}(-\tilde{p} y) \tilde{G}_{12}(-\tilde{p} z)dydz\right]\frac{\tilde{p}^2}{\tilde{p}^0}d\tilde{p}d\tilde{s}.
\end{align}
 If we want to emphasize the role of $C$ then we write $M_{\nu\bar\nu\rightarrow\nu\bar\nu}(C)$. Note that due to the polynomial form of the matrix element integral, the double integral in brackets breaks into a linear combination of products of one dimensional integrals, meaning that the nesting of integrals is only three deep.

\subsection{$\nu\bar{\nu}\rightarrow e^+e^-$}\label{nu_nubar_int}
Using \req{Mandelstam}, the matrix elements for neutrino anti-neutrino annihilation into $e^\pm$ can be simplified to
\begin{align}
S|\mathcal{M}|^2=A\left(\frac{s+t-m_e^2}{2}\right)^2+B\left(\frac{m_e^2-t}{2}\right)^2+Cm_e^2\frac{s}{2}
\end{align}
where the coefficients $A,B,C$ are given in table \ref{table:nu_nubar_ee_coeff}.

\begin{table}[H]
\centering 
\begin{tabular}{|c|c|c|c|}
\hline
Process &$A$&$B$&$C$  \\
\hline
$\nu_e+\bar\nu_e\rightarrow e^++e^-$&$128G_F^2g_L^2$&$128G_F^2g_R^2$&$128G_F^2g_Lg_R$\\
\hline
$\nu_i+\bar\nu_i\rightarrow e^++e^-,\hspace{2mm} i\in\{\mu,\tau\}$&$128G_F^2\tilde g_L^2$&$128G_F^2g_R^2$&$128G_F^2\tilde g_Lg_R$\\
\hline
\end{tabular}
\caption{Matrix element coefficients for neutrino neutrino annihilation into $e^\pm$.}
\label{table:nu_nubar_ee_coeff}
\end{table}

  The integral of each of these terms is
 \begin{align}
&\int_0^{2\pi}\frac{(s+t(\psi)-m_e^2)^2}{4}d\psi=\frac{\pi}{16}s(3s-4m_e^2)+\frac{\pi}{4}s^{3/2}\sqrt{s-4m_e^2}yz\\
&-\frac{\pi}{16}s(s-4m_e^2)(y^2+z^2)+\frac{3\pi}{16}s(s-4m_e^2)y^2z^2,\notag\\
&\int_0^{2\pi} \frac{(m_e^2-t(\psi))^2}{4}d\psi=\frac{\pi}{16}s(3s-4m_e^2)-\frac{\pi}{4}s^{3/2}\sqrt{s-4m_e^2}yz\\
&-\frac{\pi}{16}s(s-4m_e^2)(y^2+z^2)+\frac{3\pi}{16}s(s-4m_e^2)y^2z^2,\\
&\int_0^{2\pi} m_e^2\frac{s}{2} d\psi=\pi m_e^2s.
\end{align}
Therefore 
{\small
\begin{align}
\int_0^{2\pi} S |\mathcal{M}|^2 (s,t(\psi))d\psi=&\frac{\pi}{16}s[3s(A+B)+4m_e^2(4C-A-B)]+\frac{\pi}{4}s^{3/2}\sqrt{s-4m_e^2}(A-B)yz\notag\\
&-\frac{\pi}{16}s(s-4m_e^2)(A+B)(y^2+z^2)+\frac{3\pi}{16}s(s-4m_e^2)(A+B)y^2z^2\notag\\
\equiv& \pi q(m_e,s,y,z).
\end{align}
\begin{align}
M_{\nu\bar\nu\rightarrow e^+e^-}=&\frac{1}{128(2\pi)^5 }\int_{4m_e^2}^\infty\int_0^\infty\!\!\!\sqrt{1-4m_e^2/s}\left[\int_{-1}^1\int_{-1}^1q(s,y,z,m_e)G_{34}(-(\sqrt{1-4m_e^2/s})p y)\right.\notag\\
&\hspace{68mm}\times G_{12}(-p z)dydz\bigg]\frac{ p^2}{p^0}dpds,\notag\\
=&\frac{T^8}{128(2\pi)^5 }\int_{4\tilde m_e^2}^\infty\int_0^\infty\!\!\!\sqrt{1-4\tilde m_e^2/\tilde s}\left[\int_{-1}^1 \int_{-1}^1q(\tilde s,y,z,\tilde m_e)\tilde G_{34}(-(\sqrt{1-4\tilde{m}_e^2/\tilde s})\tilde p y)\right.\notag\\
&\hspace{68mm}\tilde G_{12}(-\tilde p z)dydz\bigg] \frac{ \tilde p^2}{\tilde p^0}d\tilde pd\tilde s,
\end{align}
}
where $\tilde{m_e}=m_e/T$.  If we want to emphasize the role of $A,B,C$ then we write $M_{\nu\bar\nu\rightarrow e^+e^-}(A,B,C)$.  Note that this expression is linear in $(A,B,C)\in\mathbb{R}^3$. Also note that, under our assumptions that the distributions of $e^+$ and $e^-$ are the same,  the $G_{ij}$ terms that contain the product of $e^\pm$ distributions are even functions. Hence the term involving the integral of $yz$ vanishes by antisymmetry.

\subsection{ $\nu e^\pm\rightarrow \nu e^\pm$}
 Using \req{Mandelstam}, the matrix elements for neutrino $e^\pm$ scattering can be simplified to
\begin{align}
\label{TA002}
S|\mathcal{M}|^2=A\left(\frac{s-m_e^2}{2}\right)^2+B\left(\frac{s+t-m_e^2}{2}\right)^2+Cm_e^2\frac{t}{2}
\end{align}
 where the coefficients $A,B,C$ are given in table \ref{table:nu_e_coeff}.

\begin{table}[H]
\centering 
\begin{tabular}{|c|c|c|c|}
\hline
Process &$A$&$B$&$C$  \\
\hline
$\nu_e+e^-\rightarrow \nu_e+e^-$&$128G_F^2g_L^2$&$128G_F^2g_R^2$&$128G_F^2g_Lg_R$\\
\hline
$\nu_i+e^-\rightarrow \nu_i+e^-,\hspace{2mm} i\in\{\mu,\tau\}$&$128G_F^2\tilde g_L^2$&$128G_F^2g_R^2$&$128G_F^2\tilde g_Lg_R$\\
\hline
$\nu_e+e^+\rightarrow \nu_e+e^+$&$128G_F^2g_R^2$&$128G_F^2g_L^2$&$128G_F^2g_Lg_R$\\
\hline
$\nu_i+e^+\rightarrow \nu_i+e^+,\hspace{2mm} i\in\{\mu,\tau\}$&$128G_F^2 g_R^2$&$128G_F^2\tilde g_L^2$&$128G_F^2\tilde g_Lg_R$\\
\hline
\end{tabular}
\caption{Matrix element coefficients for neutrino $e^\pm$ scattering.}
\label{table:nu_e_coeff}
\end{table}

  The integral of each of these terms is
 \begin{align}
&\int_0^{2\pi}\frac{(s-m_e^2)^2}{4} d\psi=\pi\frac{(s-m_e^2)^2}{2},\\
&\int_0^{2\pi}\frac{(s+t(\psi)-m_e^2)^2}{4}d\psi=\frac{\pi}{16s^2}(s-m_e^2)^2(3m_e^4+2m_e^2s+3s^2)+\frac{\pi}{4s^2}(s-m_e^2)^3(s+m_e^2)yz,\notag\\
&-\frac{\pi}{16s^2}(s-m_e^2)^4(y^2+z^2)+\frac{3\pi}{16s^2}(s-m_e^2)^4y^2z^2,\\
&\int_0^{2\pi} m_e^2\frac{t(\psi)}{2}d\psi=-\frac{\pi}{2s}m_e^2(s-m_e^2)^2(1-yz).
\end{align}
Therefore we have
\begin{align}
\int_0^{2\pi} S |\mathcal{M}|^2 (s,t(\psi))d\psi=&\pi\left[\frac{A}{2}+\frac{B}{16s^2}(3m_e^4+2m_e^2s+3s^2)-\frac{C}{2s}m_e^2\right](s-m_e^2)^2\notag\\
&+\pi\left[\frac{B}{4s^2}(s-m_e^2)(s+m_e^2)+\frac{C}{2s}m_e^2\right](s-m_e^2)^2yz\notag\\
&-B\frac{\pi}{16s^2}(s-m_e^2)^4(y^2+z^2)+B\frac{3\pi}{16s^2}(s-m_e^2)^4y^2z^2\notag\\
\equiv& \pi q(m_e,s,y,z)
\end{align}
and
\begin{align}\label{matrix_elem_int}
K(s,p)=&\frac{8\pi^2 rr^{'}}{s}\int_{-1}^1 \left[\int_{-1}^1 q(m_e,s,y,z) G_{34}((q^{'})^0\alpha-r^{'}\delta y) dy\right] G_{12}(q^0\alpha-r\delta z)dz,\\
r=r^{'}=&\frac{s-m_e^2}{\sqrt{s}},\hspace{2mm} q^0=(q^{'})^0=-\frac{m_e^2}{\sqrt{s}},\hspace{2mm} \delta=\frac{p}{\sqrt{s}},\hspace{2mm}  \alpha=\frac{p^0}{\sqrt{s}}.
\end{align}

\begin{align}\label{M_simp}
M_{\nu e\rightarrow\nu e}=&\frac{1}{128(2\pi)^5 }\int_{m_e^2}^\infty\!\int_0^\infty (1-m_e^2/s)^2\left(\int_{-1}^1 \int_{-1}^1 q(m_e,s,y,z) G_{34}((q^{'})^0\alpha-r^{'}\delta y)\right.\\
&\hspace{68mm}\times  G_{12}(q^0\alpha-r\delta z)dydz\bigg)\frac{ p^2}{p^0}dpds.\notag
\end{align}
As above, after scaling all masses by $T$, we obtain a prefactor of $T^8$. If we want to emphasize the role of $A,B,C$ then we write $M_{\nu e\rightarrow\nu e}(A,B,C)$.  Note that this expression is also linear in $(A,B,C)\in\mathbb{R}^3$.

\subsection{Total Collision Integral}
We now give the total collision integrals for neutrinos.    In the following, we indicate which distributions are used in each of the four types of scattering integrals discussed above by using the appropriate subscripts. For example, to compute $M_{\nu_e\bar\nu_\mu\rightarrow\nu_e\bar\nu_\mu}$  we set $G_{1,2}=\hat\psi_jf^1f^2$, $G_{3,4}=f_3f_4$, $f_1= f_{\nu_e}$, $f_3=f_{\nu_e}$, and $f_2=f_4=f_{\bar\nu_\mu}$ in the expression for $M_{\nu\bar\nu\rightarrow\nu\bar\nu}$ from section \ref{nu_nubar_int} and then, to include the reverse direction of the process, we must {\emph subtract}  the analogous expression whose only difference is $G_{1,2}=\hat\psi_jf_1f_2$, $G_{3,4}=f^3f^4$.
With this notation the collision integral for $\nu_e$ is
\begin{align}\label{M_tot}
M_{\nu_e}=&[M_{\nu_e\nu_e\rightarrow\nu_e\nu_e}+M_{\nu_e\nu_\mu\rightarrow\nu_e\nu_\mu}+M_{\nu_e\nu_\tau\rightarrow\nu_e\nu_\tau}]\\
&+[M_{\nu_e\bar\nu_e\rightarrow\nu_e\bar\nu_e}+M_{\nu_e\bar\nu_e\rightarrow\nu_\mu\bar\nu_\mu}+M_{\nu_e\bar\nu_e\rightarrow\nu_\tau\bar\nu_\tau}+M_{\nu_e\bar\nu_\mu\rightarrow\nu_e\bar\nu_\mu}+M_{\nu_e\bar\nu_\tau\rightarrow\nu_e\bar\nu_\tau}]\notag\\
&+M_{\nu_e\bar\nu_e\rightarrow e^+e^-}+[M_{\nu_e e^-\rightarrow\nu_e e^-}+M_{\nu_e e^+\rightarrow\nu_e e^+}].
\end{align}

Symmetry among the interactions implies that the distributions of $\nu_\mu$ and $\nu_\tau$ are equal.  We also neglect the small matter anti-matter asymmetry and so we take the distribution of each particle to be equal to that of the corresponding antiparticle.  Therefore there are only three independent distributions, $f_{\nu_e}$, $f_{\nu_\mu}$, and $f_e$ and so we can combine some of the terms in \req{M_tot} to obtain
\begin{align}
M_{\nu_e}=&M_{\nu_e\nu_e\rightarrow\nu_e\nu_e}(64G_F^2)+M_{\nu_e\nu_\mu\rightarrow\nu_e\nu_\mu}(2\times 32 G_F^2)+M_{\nu_e\bar\nu_e\rightarrow\nu_e\bar\nu_e}(128G_F^2)  \\
&+M_{\nu_e\bar\nu_e\rightarrow\nu_\mu\bar\nu_\mu}(2\times 32G_F^2)+M_{\nu_e\bar\nu_\mu\rightarrow\nu_e\bar\nu_\mu}(2\times 32 G_F^2)\notag\\
&+M_{\nu_e\bar\nu_e\rightarrow e^+e^-}(128G_F^2g_L^2,128G_F^2g_R^2,128G_F^2g_Lg_R)\notag\\
&+M_{\nu_e e\rightarrow\nu_e e}(128 G_F^2( g_L^2+g_R^2),128 G_F^2 (g_L^2+ g_R^2),256G_F^2g_Lg_R)\notag.
\end{align}
Introducing one more piece of notation, we use a subscript $k$ to denote the orthogonal polynomial basis element that multiplies $f_1$ or $f^1$ in the inner product.  The inner product of the $k$th basis element with the total scattering operator for electron neutrinos is therefore 
\begin{align}
R_k=&2\pi^2T^{-3} M_{k,\nu_e}.
\end{align}
Under these same assumptions and conventions, the total collision integral for the combined $\nu_\mu$, $\nu_\tau$ distribution (which we label $\nu_\mu$) is
\begin{align}
M_{\nu_\mu}=&M_{\nu_\mu\nu_\mu\rightarrow\nu_\mu\nu_\mu}(64G_F^2+32G_F^2)+M_{\nu_\mu\nu_e\rightarrow\nu_\mu\nu_e}(32 G_F^2)+M_{\nu_\mu\bar\nu_\mu\rightarrow\nu_\mu\bar\nu_\mu}(128G_F^2+32G_F^2+32G_F^2) \notag \\
&+M_{\nu_\mu\bar\nu_\mu\rightarrow\nu_e\bar\nu_e}(32G_F^2)+M_{\nu_\mu\bar\nu_e\rightarrow\nu_\mu\bar\nu_e}( 32 G_F^2)\notag\\
&+M_{\nu_\mu\bar\nu_\mu\rightarrow e^+e^-}(128G_F^2\tilde g_L^2,128G_F^2g_R^2,128G_F^2\tilde g_Lg_R)\notag\\
&+M_{\nu_\mu e\rightarrow\nu_\mu e}(128 G_F^2(\tilde  g_L^2+g_R^2),128 G_F^2 (\tilde g_L^2+ g_R^2),256G_F^2\tilde g_Lg_R),\\
R_k=&2\pi^2T^{-3} M_{k,\nu_\mu}.
\end{align}


\subsection{ Neutrino Freeze-out Test}
Now that we have the above expressions for the neutrino scattering integrals, we can abandon the model problem used to test our method in chapter \ref{ch:boltz_orthopoly} and compare the chemical equilibrium and non-equilibrium methods on the problem of neutrino freeze-out using the full $2$-$2$ scattering kernels for neutrino processes.  We solve the Boltzmann equation, \req{boltzmann}, for both the electron neutrino distribution and the combined $\mu$, $\tau$ neutrino distribution, including all of the  processes outlined above in the scattering operator, together with the Hubble equation for $a(t)$, \req{Hubble_eq}.  The total energy density  appearing in the Hubble equation consists of the contributions from both independent neutrino distributions as well as chemical equilibrium $e^\pm$ and photon distributions at some common temperature $T_\gamma$, all computed using \req{moments}.  The dynamics of $T_\gamma$ are fixed by the divergence freedom condition of the total stress energy tensor, \req{divTmn}, implied by Einstein's equations.  In addition, we include the QED corrections to the $e^\pm$ and photon equations of state from appendix \ref{app:QED_corr}.

To compare our results with Ref.~\cite{Mangano2005}, where neutrino freeze-out was simulated using $\sin^2(\theta_W)=0.23$ and $\eta=\eta_0$, in table \ref{table:method_comp} we present $N_\nu$ together with the following quantities
\begin{align}
 z_{fin}=T_\gamma a,\hspace{2mm}  \rho_{\nu 0}=\frac{7}{120}\pi^2a^{-4}, \hspace{2mm}  \delta\bar\rho_{\nu}= \frac{\rho_\nu}{\rho_{\nu 0}}-1.
\end{align}
This quantities were introduced in Ref.~\cite{Mangano2005}, but some additional discussion of their significance is in order.  The normalization of the scale factor $a$ is chosen so that at the start of the computation $T_\gamma=1/a$.  This means that $1/a$ is the temperature of a (hypothetical) particle species that is completely decoupled throughout the computation.  Here we will call it the free-streaming temeprature.  

 $z_{fin}$ is the ratio of photon temperature to the free-streaming temperature.  It is a measure of the amount of reheating that photons underwent due to the annihilation of $e^\pm$.  For completely decoupled neutrinos, whose temperature is the free-streaming temperature, the well known value can be computed from conservation of entropy
\begin{equation}
z_{fin}=(11/4)^{1/3}\approx 1.401.
\end{equation}
For coupled neutrinos, one expects this value to be slightly reduced, due to the  transfer of some entropy from annihilating $e^\pm$ into neutrinos. This is reflected in table \ref{table:method_comp}.

 $\rho_{\nu0}$ is the energy density of a massless fermion with two degrees of freedom and temperature equal to the free-streaming temperature.  In other words, it is the energy density of a single neutrino species, assuming it decoupled before reheating. Consequently, $\delta\bar\rho_\nu$ is the fractional increase in the energy density of a coupled neutrino species, due to its participation in reheating.

We compute the above using both the chemical equilibrium and non-equilibrium methods. For the following results, we used $\sin^2(\theta_W)=0.23$ and $\eta=\eta_0$. 
\begin{table}[h]\label{table:method_comp}
\centering 
\begin{tabular}{|c|c|c|c|c|c|}
\hline
Method &Modes&$z_{fin}$ & $\delta\bar\rho_{\nu_e}$&   $\delta\bar\rho_{\nu_{\mu,\tau}}$ & $N_{\nu}$  \\
\hline
Chemical Eq& 4 &1.39785 &0.009230 &0.003792 &3.044269\\
\hline
Chemical Non-Eq& 2&1.39784 &0.009269 & 0.003799&3.044383 \\
\hline
Chemical Non-Eq& 3&1.39785&0.009230 & 0.003791&3.044264 \\
\hline
\end{tabular}
\end{table}
We see that $\Delta N_\nu\equiv N_\nu-3$ agrees to $2$ digits and $4$ digits when using $2$ and $3$ modes respectively for the chemical non-equilibrium method, and similar behavior holds for the other quantities. Due to the reduction in the required number of modes, the chemical non-equilibrium method with the minimum number of required modes ($2$ modes) is more than $20\times$ faster than the chemical equilibrium method with its minimum number of required modes ($4$ modes), a very significant speed-up when the minimum number of modes meets the required precision.  The value of $N_\nu$ we obtain agrees with that found by \cite{Mangano2005}, up to their cited error tolerance of $\pm 0.002$.

\section{Conservation Laws and Scattering Integrals}
For some processes, some of the $R_k$'s vanish exactly.  As we now show, this is an expression of various conservation laws. First consider processes in which $f_1=f_3$ and $f_2=f_4$, such as kinetic scattering processes. Since $m_1=m_3$ and $m_2=m_4$ we have $r=r^{'}$, $q^0=(q^{'})^0$.  The scattering terms are all two dimensional integrals of some function of $s$ and $p$ multiplied by 

{\small
\begin{align}
I_k\equiv&\int_{-1}^1 \left[\int_{-1}^1\left(\int_0^{2\pi}S|\mathcal{M}|^2 (s,t(\cos(\psi)\sqrt{1-y^2}\sqrt{1-z^2}+yz))d\psi\right) f_1(h_1(y))f_2(h_2(y)) dy\right] \notag\\
&\hspace{26mm}\times f_k^1(h_1(z))f^2(h_2(z))dz\\
&-\int_{-1}^1 \left[\int_{-1}^1\left(\int_0^{2\pi}S|\mathcal{M}|^2 (s,t(\cos(\psi)\sqrt{1-y^2}\sqrt{1-z^2}+yz))d\psi\right) f^1(h_1(y))f^2(h_2(y)) dy\right] \notag\\
&\hspace{26mm}\times f_{1,k}(h_1(z))f_2(h_2(z))dz\\
h_1(y)=&(p^0+(q^{'})^0\alpha-r^{'}\delta y)/2,\hspace{2mm} h_2(y)=(p^0-q^0\alpha+r\delta y)/2, \hspace{2mm} f_{1,k}=\hat\psi_k f_1,\hspace{2mm} f^1_k=\hat\psi_k f^1.
\end{align}
}
For $k=0$, $\hat\psi_0$ is constant.  After factoring it out of $I_k$, the result is obviously zero and so $R_0=0$.  

We further specialize to a distribution scattering from itself i.e. $f_1=f_2=f_3=f_4$.  Since $m_1=m_2$ and $m_3=m_4$ we have $q^0=(q^{'})^0=0$ and
\begin{equation}
h_1(y)=(p^0-r^{'}\delta y)/2,\hspace{2mm} h_2(y)=(p^0+r\delta y)/2.
\end{equation}
 By the above, we know that $R_0=0$.  $\hat\psi_1$ appears in $I_1$ in the form $\hat\psi_1(h_1(z))$, a degree one polynomial in $z$.  Therefore $R_1$ is a sum of two terms, one which comes from the degree zero part and one from the degree one part.  The former is zero, again by the above reasoning.  Therefore, to show that $R_1=0$ we need only show $I_1=0$, except with $\hat\psi_1(h_1(z))$ replaced by  $z$.  Since $h_1(-y)=h_2(y)$, changing variables  $y\rightarrow -y$ and $z\rightarrow -z$ in the following shows that this term is equal to its own negative, and hence is zero
\begin{align}
&\int_{-1}^1 \left[\int_{-1}^1\left(\int_0^{2\pi}S|\mathcal{M}|^2 (s,t(\cos(\psi)\sqrt{1-y^2}\sqrt{1-z^2}+yz))d\psi\right) f_1(h_1(y))f_1(h_2(y)) dy\right] \notag\\
&\hspace{26mm}\times  zf^1(h_1(z))f^1(h_2(z))dz\\
&-\int_{-1}^1 \left[\int_{-1}^1\left(\int_0^{2\pi}S|\mathcal{M}|^2 (s,t(\cos(\psi)\sqrt{1-y^2}\sqrt{1-z^2}+yz))d\psi\right) f^1(h_1(y))f^1(h_2(y)) dy\right] \notag\\
&\hspace{26mm}\times zf_{1}(h_1(z))f_1(h_2(z))dz.
\end{align}
We note that the corresponding scattering integrals do not vanish for the chemical equilibrium spectral method.  This is another advantage of the method developed in chapter \ref{ch:boltz_orthopoly} and leads to a further reduction in cost of the method, beyond just the reduction in minimum number of modes.

Finally, we point out how the vanishing of these inner products is a reflection of certain conservation laws. From \req{n_div}, \req{collision_integrals}, and the fact that $\hat\psi_0,\hat\psi_1$ span the space of polynomials of degree $\leq 1$, we have the following expressions for the change in number density and energy density of a massless particle
\begin{align}
\frac{1}{a^3} \frac{d}{dt}(a^3n)=&\frac{g_p}{2\pi^2}\int \frac{1}{E}C[f]p^2dp=c_0 R_0,\\
\frac{1}{a^4}\frac{d}{dt}(a^4\rho)=&\frac{g_p}{2\pi^2}\int C[f] p^2dp=d_0R_0+d_1R_1\notag
\end{align}
for some $c_0,d_0,d_1$. Therefore, the vanishing of $R_0$ is equivalent to conservation of comoving particle number.  The vanishing of $R_0$ and $R_1$ implies $\rho\propto 1/a^4$ i.e. that the reduction in energy density is due entirely to redshift; energy is not lost from the distribution due to scattering.  These findings match the situations above where we found one or both of $R_0=0$, $R_1=0$.  $R_0$ vanished for all kinetic scattering processes and we know that all such processes conserve comoving particle number.  Both $R_0$ and $R_1$ vanished for a distribution scattering from itself and in such a process one expects that no energy is lost from the distribution by scattering, it is only redistributed among the particles corresponding to that distribution.

\section{Freeze-Out Temperature and Relaxation Time}
To make a connection with our development from chapter \ref{ch:model_ind}, we now give a definition of the kinetic freeze-out temperature that is applicable to the Boltzmann equation model. Any such definition will be only approximate, as the freeze-out process is not a sharp transition.  Our definition is motivated in part the treatment in \cite{kolb}. 

We first define a characteristic length between scatterings. Recalling the formula \req{n_div}, we obtain the fractional rate of change of comoving particle number
\begin{align}
\frac{\frac{d}{dt}(a^3 n)}{a^3n}=\frac{g_\nu}{2\pi^2n}\int C[f]p^2/Edp.
\end{align}
Here we don't want the net change, but rather to count the number of interactions.  For that reason, we imagine that only one direction of the process is operational and define the relaxation rate
\begin{align}
\Gamma\equiv\frac{g_\nu}{2\pi^2n}T^2\int \tilde C[f]zdz
\end{align}
where the one way collision is $\tilde C[f]$ is computed as in \req{coll} except with $F$ replaced by 
\begin{equation}
\tilde F=f_1(p_1)f_2(p_2)f^3(p_3)f^4(p_4).
\end{equation}
If particle type $1$ also participates in the reverse of the reaction $1+2\rightarrow 3+4$ then a corresponding term for the reverse reaction must also be added.  The key difference is there is no minus sign-we are counting reactions, not net particle number change.

Using the average velocity, which for neutrinos is $\bar v=c=1$, we obtain what we call the scattering length
\begin{align}
L&\equiv\frac{\bar v}{\Gamma}=\frac{\int_0^\infty\frac{1}{\Upsilon^{-1}e^z+1}z^2dz}{\int_0^\infty \tilde C[f] z^2/E dz}.
\end{align}
This can be compared to the Hubble length $L_H=c/H$ and the temperature at which $L=L_H$ we call the freeze-out temperature for that reaction.  Figure \ref{fig:scatt_length} shows the scattering length and $L_H$ for various types of neutrino reactions.  The solid line corresponds to the annihilation process $e^+e^-\rightarrow \nu\bar\nu$, the dashed line corresponds to the scattering $\nu e^\pm\rightarrow \nu e^\pm$, and the dot-dashed line corresponds to the combination of all processes involving only neutrinos.  The freeze-out temperatures in MeV are given in table \ref{table:freezeout_temp}.

\begin{table}[H]
\centering 
\begin{tabular}{|c|c|c|c|}
\hline
              & $e^+e^-\rightarrow \nu\bar\nu$ & $\nu e^\pm\rightarrow \nu e^\pm$ & $\nu$-only processes\\
\hline
$\nu_e$ &2.29 & 1.15&0.910\\
\hline
$\nu_{\mu,\tau}$ &3.83 & 1.78& 0.903\\
\hline
\end{tabular}
\caption{Freeze-out temperatures in MeV for electron neutrinos and for $\mu$,$\tau$ neutrinos.}
\label{table:freezeout_temp}
\end{table}

\begin{figure}[h]
\centerline{\includegraphics[height=6cm]{./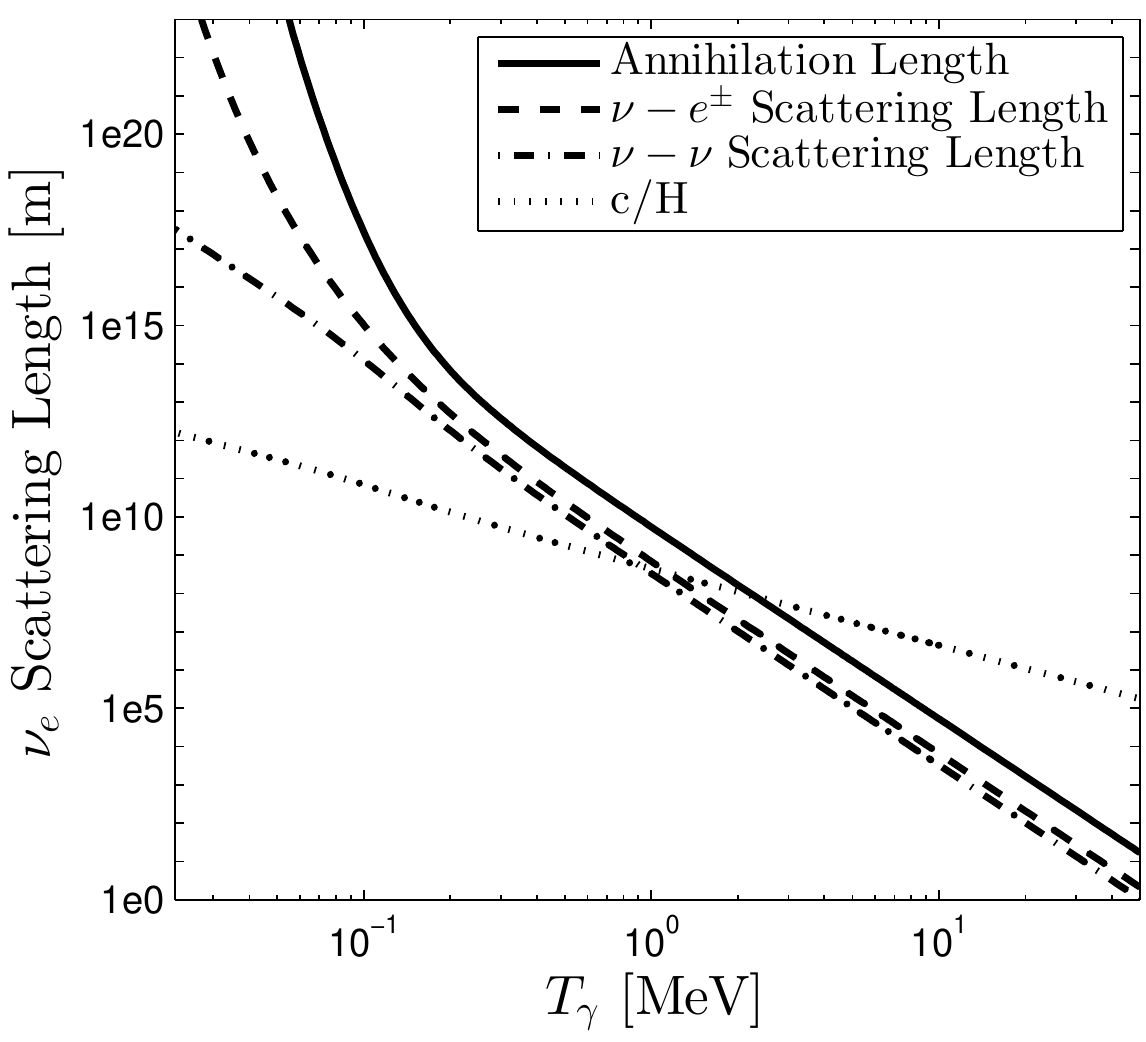}\includegraphics[height=6cm]{./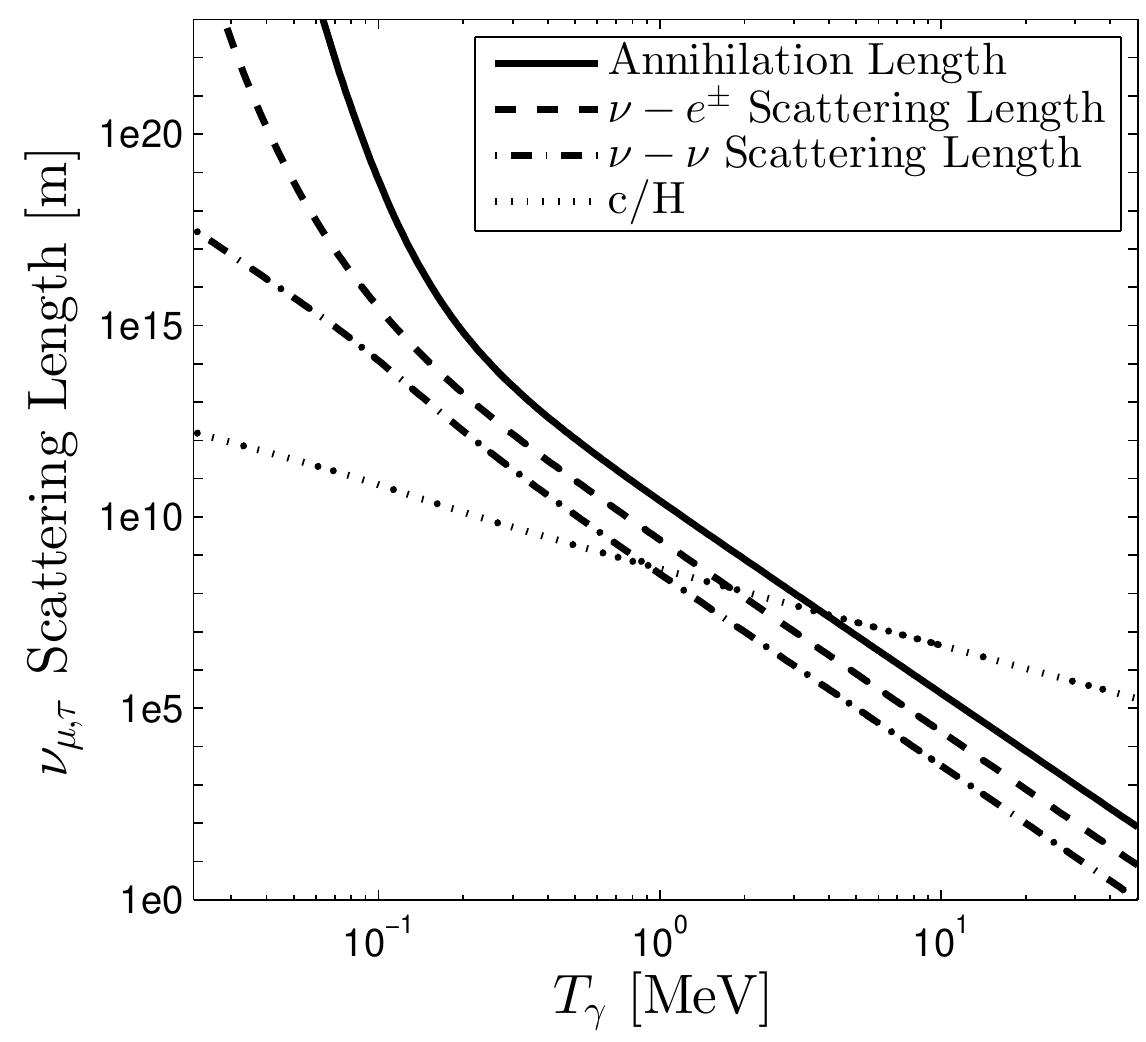}}
\caption{Comparison of Hubble parameter to neutrino scattering length for various types of processes for $\sin^2(\theta_W)=.23$. }\label{fig:scatt_length}
 \end{figure}

We now consider the the relaxation time for a given reaction, defined by $\tau=1/\Gamma$.  Suppose we have a time interval $t_f>t_i$  and corresponding temperature interval $T_f<T_i$ during which there is no reheating and the Universe is radiation dominated.  Normalizing time so $t=0$ corresponds to the temperature $T_i$ we have
\begin{equation}\label{ch6:H_eq}
\dot a/a=-\dot T/T,\hspace{2mm} H=\frac{C}{2Ct+T_i^2}\propto T^2
\end{equation}
where $C$ is a constant that depends on the energy density and the Planck mass.  Its precise form will not be significant for us.  Note that \req{ch6:H_eq} implies
\begin{equation}
1/H(t)-1/H(0)=2t.
\end{equation}

At $T\gg m_e$, the rates for reactions under consideration from tables \ref{table:nu_e_reac} and \ref{table:nu_mu_reac} scale as $\Gamma\propto T^5$.  Therefore, supposing $H(T_f)/\Gamma(T_f)=1$ (which occurs at $T_f=O(1\MeV)$ as seen in the above figures), at any time $t_f>t>t_i$ we find 
\begin{align}\label{relax_time}
\tau(t)/t=&\frac{2}{\Gamma(t)}\left(\frac{1}{H(t)}-\frac{1}{H(0)}\right)^{-1}=\frac{2T_f^5}{\Gamma(T_f)T^5}\left(\frac{T_f^2}{H(T_f)T^2}-\frac{T_f^2}{H(T_f)T_i^2}\right)^{-1}\\
=&\frac{2T_f^3}{T^3}\left(1-\frac{T^2}{T_i^2}\right)^{-1}.
\end{align}
Therefore, given any time $t_i<t_0<t_f$ we have
\begin{equation}\label{ch6:tau_eq}
\tau(t)<\tau(t_0)=\frac{2T_f^3}{T_0^3}\left(1-\frac{T_0^2}{T_i^2}\right)^{-1}\Delta t \text{ for all } t<t_0
\end{equation}
where $\Delta t=t_0-t_i=t_0$.

 The first reheating period that precedes neutrino freeze-out is the disappearance of muons and pions around $O(100\MeV)$, as seen in figure \ref{fig:energy_frac}, and so we let $T_i=100\MeV$. \req{ch6:tau_eq} is minimized at $T_0\approx 77.5\MeV$ at which point we have 
\begin{equation}
\tau(t)<10^{-5} \Delta t_0 \text{ for } t<t_0.
\end{equation}
This shows that the relaxation time during the period between $100\MeV$ and $77.5\MeV$ is at least five orders of magnitude smaller than the corresponding time interval.  Therefore the system has sufficient time to relax back to equilibrium after any potential non-equilibrium aspects developed during the reheating period.  Thus justifies our assumption that the neutrino distribution has the equilibrium Fermi Dirac form at $T=O(10 \MeV)$ when we begin our numerical simulation.

We demonstrate this numerically in figure \ref{fig:relax} where we have initialized the system at $T_\gamma=12\MeV$ with a non-equilibrium distriubtion of $\mu$ and $\tau$ neutrinos, giving them $\Upsilon=0.9$, and let them evolve.  We see that after approximately $10^{-3}$ seconds the system relaxes back to equilibrium, well before neutrino freeze-out near $t=1$s.

\begin{figure} 

\begin{minipage}{\linewidth}
\makebox[0.5\linewidth]%
{\includegraphics[height=6.2cm]{./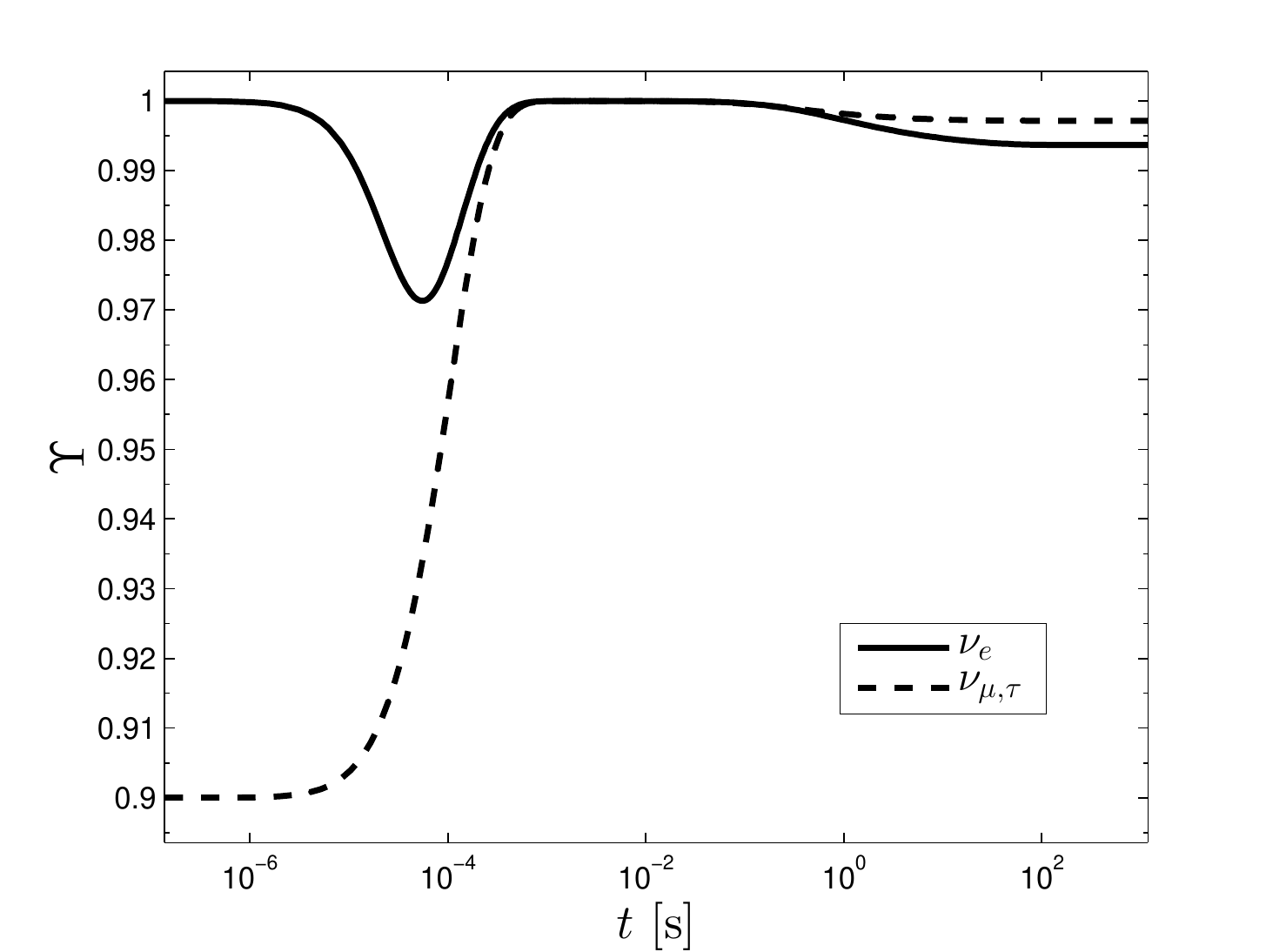}}
\makebox[0.5\linewidth]%
{\includegraphics[height=6.2cm]{./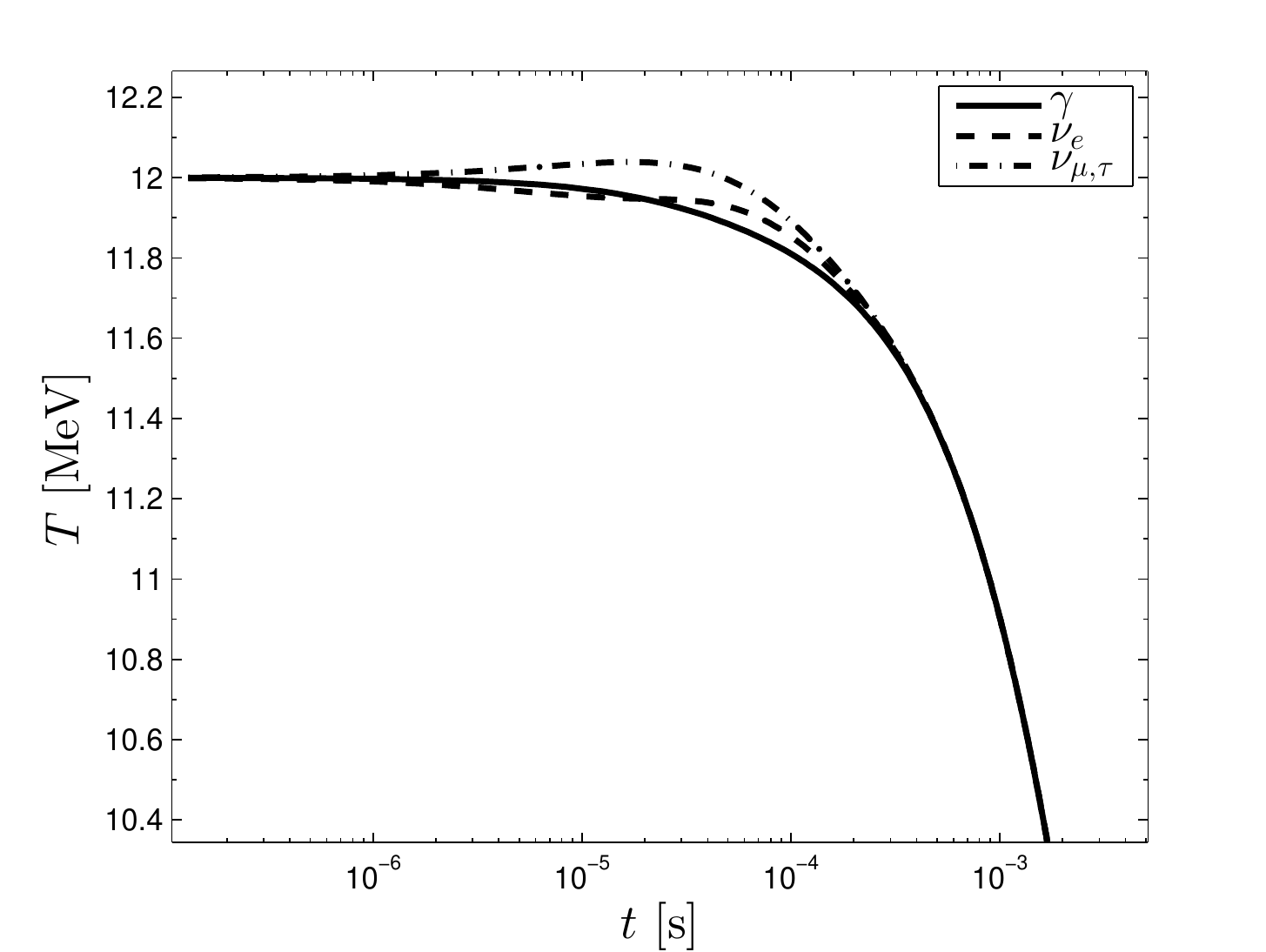}}
\end{minipage}
\caption{Starting at $12\MeV$, this figure shows the relaxation of a non-equilibrium $\mu,\tau$-neutrino distribution towards equilibrium. The fugacities are shown in the left frame while the temperatures are shown in the right frame.\label{fig:relax}}
 \end{figure}

The attentive reader will notice that we have omitted here a discussion of flavor neutrino oscillations. If it weren't for the differences between the matrix elements for the interactions between $e^\pm$ and $\nu_e$ on one hand and $e^\pm$ and $\nu_\mu,\nu_\tau$ on the other, oscillations would have no effect on the flow of entropy into neutrinos and hence no effect on $N_\nu$, but these differences do lead to a modification of $N_\nu$.  In \cite{Mangano2005} the impact of oscillations on neutrino freeze-out for the present day measured values of $\theta_W$ and $\eta$ was investigated.  It was found  that while oscillations redistributed energy amongst the neutrino flavors, the impact on $N_\nu$ was negligible. We have neglected oscillations in our study and so, once the relevant neutrino properties are fully understood, the precision of the results could be improved by incorporating the effect of oscillations.

\begin{subappendices}

\section{Another Method for Computing Scattering Integrals}\label{app:dogov_method}
As a comparison and consistency check for our method of computing the scattering integrals, in this appendix we analytically reduce the collision integral down to $3$ dimensions by a method adapted from \cite{Dolgov_Hansen}.  The only difference between our treatment in this section and theirs being that they solved the Boltzmann equation numerically on a grid in momentum space and not via a spectral method.  Therefore we must take an inner product of the collision operator with a basis function and hence we are integrating over all particle momenta, whereas they integrate over all momenta except that of particle one.  For completeness we give a detailed discussion of their method.

 Writing the conservation of four-momentum enforcing delta function
\begin{equation}
\delta(\Delta p)=\frac{1}{(2\pi)^3}\delta(\Delta E)e^{i\vec z\cdot \Delta \vec p}d^3z,
\end{equation}
where the arrow denoted the spatial component, we can simplify the collision integral as follows
\begin{align}
R\equiv&\int G(E_1,E_2,E_3,E_4) S|\mathcal{M}|^2(s,t)(2\pi)^4\delta(\Delta p)\prod_{i=1}^4\frac{d^3p_i}{2(2\pi)^3 E_i}\\
=&\frac{1}{16(2\pi)^{11}}\int G(E_i) S|\mathcal{M}|^2(s,t)\delta(\Delta E)e^{i\vec z\cdot\Delta p}\prod_{i=1}^4\frac{d^3p_i}{ E_i}d^3z\\
=&\frac{2}{(2\pi)^{6}}\int G(E_i)K(E_i) \delta(\Delta E)\prod_{i=1}^4\frac{p_i}{E_i}dp_i z^2dz,\\
K=&\frac{p_1p_2p_3p_4}{(4\pi)^5}\int S|\mathcal{M}|^2(s,t)e^{i\vec z\cdot\Delta\vec p}\prod_{i=1}^4d\Omega_id\Omega_z.
\end{align}
We can change variables from $p_i$ to $E_i$ in the outer integrals and use the delta function to eliminate the integration over $E_4$ to obtain
\begin{align}
R=&\frac{2}{(2\pi)^{6}}\int1_{E_1+E_2-E_3>m_4}G(E_i)\left[\int_0^\infty K(z,E_i)z^2dz\right]dE_1dE_2dE_3,\\
p_i=&\sqrt{E_i^2-m_i^2},\hspace{2mm} E_4=E_1+E_2-E_3.
\end{align}
From tables \ref{table:nu_e_reac} and \ref{table:nu_mu_reac} we see that the matrix elements for weak scattering involving neutrinos are linear combinations of the terms
\begin{equation}
p_1\cdot p_2,\hspace{2mm} p_1\cdot p_3,\hspace{2mm}(p_1\cdot p_4)(p_2\cdot p_3), \hspace{2mm} (p_1\cdot p_2)(p_3\cdot p_4),\hspace{2mm} (p_1\cdot p_3)(p_2\cdot p_4).
\end{equation}
Therefore we must compute the angular integral term $K$ with $S|\mathcal{M}|^2$ replaced by elements from the following list
\begin{align}\label{matrix_element_pieces}
&1,\hspace{2mm}\vec p_1 \cdot\vec p_2,\hspace{2mm}\vec p_1 \cdot\vec p_3,\hspace{2mm}\vec p_1 \cdot\vec p_4 ,\hspace{2mm}\vec p_2\cdot\vec p_3,\hspace{2mm}\vec p_2\cdot\vec p_4 ,\hspace{2mm}\vec p_3\cdot\vec p_4 ,\\
& (\vec p_1 \cdot\vec p_2)(\vec p_3\cdot\vec p_4 ),\hspace{2mm}(\vec p_1 \cdot\vec p_4 )(\vec p_2\cdot\vec p_3),\hspace{2mm} (\vec p_1 \cdot\vec p_3)(\vec p_2\cdot\vec p_4 ),
\end{align}
producing $K_0$, $K_{12}$, $K_{13}$,...,$K_{1324}$.  All of these are rotationally invariant, and so we can always rotate coordinates so that $\vec z=z\hat z$.  This allows us to evaluate the $z$ angular integral
\begin{equation}
K=\frac{p_1p_2p_3p_4}{(4\pi)^4}\int S|\mathcal{M}|^2(s,t)e^{iz \hat z\cdot\Delta\vec p}\prod_{i=1}^4d\Omega_i.
\end{equation}

The remaining angular integrals are straightforward to evaluate analytically for each expression in \req{matrix_element_pieces}
\begin{align}
K_0&=\prod_{i=1}^4\frac{\sin(p_iz)}{z},\\
K_{12}&=-\frac{(\sin(p_1z)-p_1z\cos(p_1z))(\sin(p_2z)-p_2z\cos(p_2z))\sin(p_3z)\sin(p_4z)}{z^6},\\
K_{13}&=\frac{(\sin(p_1z)-p_1z\cos(p_1z))\sin(p_2z)(\sin(p_3z)-p_3z\cos(p_3z))\sin(p_4z)}{z^6},\\
K_{14}&=\frac{(\sin(p_1z)-p_1z\cos(p_1z))\sin(p_2z)\sin(p_3z)(\sin(p_4z)-p_4z\cos(p_4z))}{z^6},\\
K_{23}&=\frac{\sin(p_1z)(\sin(p_2z)-p_2z\cos(p_2z))(\sin(p_3z)-p_3z\cos(p_3z))\sin(p_4z)}{z^6},\\
K_{24}&=\frac{\sin(p_1z)(\sin(p_2z)-p_2z\cos(p_2z))\sin(p_3z)(\sin(p_4z)-p_4z\cos(p_4z))}{z^6},\\
K_{34}&=-\frac{\sin(p_1z)\sin(p_2z)(\sin(p_3z)-p_3z\cos(p_3z))(\sin(p_4z)-p_4z\cos(p_4z))}{z^6},\\
K_{1234}&=K_{1423}=K_{1324}=\prod_{i=1}^4\frac{(\sin(p_iz)-p_iz\cos(p_iz))}{z^2}.
\end{align}

To compute $\int_0^\infty K(z) z^2 dz$ we need to evaluate the following three integrals
\begin{align}
D_1=&\int_0^\infty \frac{\sin(p_1z)\sin(p_2z)\sin(p_3z)\sin(p_4z)}{z^2}dz,\\
D_2=&\int_0^\infty\frac{\sin(p_1z)\sin(p_2z)(\sin(p_3z)-p_3z\cos(p_3z))(\sin(p_4z)-p_4z\cos(p_4z))}{z^4}dz,\\
D_3=&\int_0^\infty\frac{\prod_{i=1}^4(\sin(p_iz)-p_iz\cos(p_iz))}{z^6}dz.
\end{align}
These expressions are symmetric under $1\leftrightarrow 2$ and $3\leftrightarrow 4$ and so without loss of generality we can assume $p_1\geq p_2$, $p_3\geq p_4$. We require $p_1\leq p_2+p_3+p_4$ (and cyclic permutations) by conservation of energy.  In the case where the above conditions all hold, we separate things into four additional cases in which the integrals can be evaluated analytically, as given in \cite{Dolgov_Hansen},\\
${\bf p_1+p_2>p_3+p_4\text{, \hspace{1mm} }p_1+p_4>p_2+p_3}${\bf :}
\begin{align}
D_1=&\frac{\pi}{8}(p_2+p_3+p_4-p_1),\\
D_2=&\frac{\pi}{48}((p_1-p_2)^3+2(p_3^3+p_4^3)-3(p_1-p_2)(p_3^2+p_4^2),\\
D_3=&\frac{\pi}{240}(p_1^5-p_2^5+5p_2^3(p_3^2+p_4^2)-5p_1^3(p_2^2+p_3^2+p_4^2)-(p_3+p_4)^3(p_3^2-3p_3p_4+p_4^2)\notag\\
&\hspace{7mm}+5p_2^2(p_3^3+p_4^3)+5p_1^2(p_2^3+p_3^3+p_4^3)).
\end{align}
${\bf p_1+p_2<p_3+p_4\text{, \hspace{1mm} }p_1+p_4>p_2+p_3}${\bf :}
\begin{align}
D_1=&\frac{\pi }{4}p_2,\\
D_2=&\frac{\pi }{24}p_2(3(p_3^2+p_4^2-p_1^2)-p_2^2),\\
D_3=&\frac{\pi}{120}p_2^3(5(p_1^2+p_3^2+p_4^2)-p_2^2).
\end{align}
${\bf p_1+p_2>p_3+p_4\text{, \hspace{1mm} }p_1+p_4<p_2+p_3}${\bf :}
\begin{align}
D_1=&\frac{\pi }{4}p_4,\\
D_2=&\frac{\pi}{12} p_4^3,\\
D_3=&\frac{\pi }{120}p_4^3(5(p_1^2+p_2^2+p_3^2)-p_4^2).
\end{align}
${\bf p_1+p_2<p_3+p_4\text{, \hspace{1mm} }p_1+p_4<p_2+p_3}${\bf :}
\begin{align}
D_1=&\frac{\pi}{8}(p_1+p_2+p_4-p_3),\\
D_2=&\frac{\pi}{48}(-(p_1+p_2)^3-2p_3^3+2p_4^3+3(p_1+p_2)(p_3^2+p_4^2)),\\
D_3=&\frac{\pi}{240}(p_3^5-p_4^5-(p_1+p_2)^3(p_1^2-3p_1p_2+p_2^2)+5(p_1^3+p_2^3)p_3^2-5(p_1^2+p_2^2)p_3^3\\
&\hspace{7mm}+5(p_1^3+p_2^3-p_3^3)p_4^2+5(p_1^2+p_2^2+p_3^2)p_4^3).\notag
\end{align}
We computed the remaining integrals numerically in several test cases for each of the reaction types in section \ref{nu_matrix_elements} and obtained agreement between this method and ours, up to the integration tolerance used.  However, the method we have developed in this chapter has the distinct advantage of resulting in a smooth integrand which then must be evaluated numerically.  The expressions obtained here are only piecewise smooth and therefore much costlier to integrate numerically.  In tests, the difference in integration time was found to be $1000$ times longer in some instances for the non-smooth integrand using an adaptive mesh integration method.  Since the cost of numerically solving the Boltzmann equation is dominated by the cost of computing the collision integrals, this is a very significant optimization.

\section{QED Corrections to Equation of State}\label{app:QED_corr}
At the time of neutrino freeze-out, the universe is at sufficiently high temperature for photons and $e^\pm$ to be in chemical and kinetic equilibrium.  The temperature is also sufficiently high for QED corrections to the photon and $e^\pm$ equation of state to be non-negligible.  We use the results given in \cite{Heckler:1994tv,Mangano2002} to include these in our computation by modifying the combined photon, $e^\pm$ equation of state
\begin{align}
P=P^0+P^{int},\hspace{2mm} \rho=-P+T\frac{dP}{dT}
\end{align}
where
\begin{align}
P^{int}=&-\frac{1}{2\pi^2}\int_0^\infty\left[\frac{k^2}{E_k}\frac{\delta m_e^2}{e^{E_k/T}+1}+\frac{k}{2}\frac{\delta m_\gamma^2}{e^{k/T}-1}\right]dk,\hspace{2mm} E_k=\sqrt{k^2+m_e^2}\\
\delta m_e^2=&\frac{2\pi\alpha^2}{3}+\frac{4\alpha}{\pi}\int_0^\infty \frac{k^2}{E_k}\frac{1}{e^{E_k/T}+1}dk,\hspace{2mm} \delta m_\gamma^2=\frac{8\alpha}{\pi}\int_0^\infty \frac{k^2}{E_k}\frac{1}{e^{E_k/T}+1}dk.
\end{align}
and $P^0$ is the pressure of a noninteracting gas of photons and $e^\pm$ in chemical equilibrium.

\end{subappendices}

\chapter{Dependence of Neutrino Freeze-out on Parameters}\label{ch:param_studies}
Having developed an improved method for solving the Boltzmann equation and computing scattering integrals that greatly reduces the computational cost, we are now able to characterize the dependence of neutrino freeze-out on parameters.  This will allow us to identify potential avenues by which the tension between observed and theoretical values of $N_\nu$ may be alleviated.  See also our paper \cite{Birrell:2014uka}.

Our study will also us to constrain the time and/or temperature variation of certain natural constants by comparing the results with measurements of $N_\nu$.  The topic of time variation of natural constants is a very active field with a long history. For a comprehensive review of this area, with which we make only slight contact,  see \cite{Uzan:2010pm}.

\section{Weinberg Angle}

As mentioned above, the Weinberg angle is one of the standard model parameters that impacts the neutrino freeze-out process.  More specifically, it is found in the matrix elements of weak force processes, including the reactions $e^+e^-\rightarrow \nu\bar\nu$ and $\nu e^\pm\rightarrow \nu e^\pm$ found in tables \ref{table:nu_e_reac} and \ref{table:nu_mu_reac}.  It is determined by the $SU(2)\times U(1)$ coupling constants $g$, $g^{'}$  by
\begin{equation}
\sin(\theta_W)=\frac{g^{'}}{\sqrt{g^2+(g^{'})^2}}.
\end{equation}
It is also related to the mass of the $W$ and $Z$ bosons and the Higgs vacuum expectation value $v$ by
\begin{equation}
M_Z=\frac{1}{2}\sqrt{g^2+(g^{'})^2}v,\hspace{2mm}  M_W=\frac{1}{2}gv,\hspace{2mm} \cos(\theta_W)=\frac{M_W}{M_Z}
\end{equation}
as well as the electromagnetic coupling strength
\begin{equation}
e=2M_W\sin(\theta_W)/v=\frac{gg^{'}}{\sqrt{g^2+(g^{'})^2}}.
\end{equation}
It has a measured value in vacuum $\theta_W\approx 30^\circ$, giving $\sin(\theta_W)\approx 1/2$, but its value is not fixed within the Standard Model. For this reason, a time or temperature variation can be envisioned and this would have an observable impact on the neutrino freeze-out process, as measured by $N_\nu$.

In letting $\sin(\theta_W)$, and hence $g$ and $g^{'}$, vary we must fix the electromagnetic coupling $e$ so as not to impact sensitive cosmological observables such as Big Bang Nucleosynthesis.  Fixing $v$, the smallest $M_W$ can become is when $\sin(\theta_W)=1$, yielding a reduction in $M_W$ by a factor of $2$.  This implies that $M_Z>M_W\gg |p|$ for neutrino momentum $p$ in the energy range of neutrino freeze-out, around $1\MeV$, even as we vary $\sin(\theta_W)$.  This approximation is inherent in the formulas for the matrix elements  in tables  \ref{table:nu_e_reac} and \ref{table:nu_mu_reac} and continues to be valid here. We will characterize the dependence of $N_\nu$ on $\sin(\theta_W)$ in section \ref{sec:param_char} below, but first we identify the remaining parameter dependence in the Einstein Boltzmann system

\section{Interaction Strength}
 In order to isolate the dependence of the Einstein Boltzmann system for neutrino freeze-out on dimensioned quantities, we now convert it to dimensionless form. Letting $m_e$ be the mass scale and $M_p/m_e^2$ be the time scale the Einstein equations take the form
\begin{equation}
H^2=\frac{\rho}{3},\hspace{2mm}\dot\rho=-3H(\rho+P).
\end{equation}
 Since $e^\pm$ are the only (effectively) massive particles in the system, by scaling all energies, momenta, energy densities, pressures, and temperatures by $m_e$ we have removed all scale dependent parameters from the Einstein equations.  The Boltzmann equation becomes
\begin{equation}\label{eta_def}
\partial_tf-pH\partial_pf=\eta\frac{C[f]}{E},\hspace{2mm}\eta\equiv M_p m_e^3G_F^2
\end{equation}
where we have also factored out of $C[f]$ the $G_F^2$ term that is common to all of the neutrino interaction matrix elements. 

Aside from the $\theta_W$ dependence of the matrix elements seen in tables \ref{table:nu_e_reac} and \ref{table:nu_mu_reac}, the complete dependence on natural constants  is now contained in a single dimensionless interaction strength parameter $\eta$ with the vacuum present day value,
\begin{equation}\label{eta0_def}
\eta_0\equiv \left.M_p m_e^3 G_F^2\right|_0  = 0.04421 .
\end{equation}
In the following section we characterize the dependence of $N_\nu$ on the interaction strength.

\section{Dependence of $N_\nu$ on Parameters}\label{sec:param_char}

The main result  of this chapter is the  dependence of $N_\nu$ on  the SM parameters   $\sin^2\theta_W$ and $\eta$. These results are shown in  figure \ref{N_nu_params}, presented as a function of  Weinberg angle $\sin^2 \theta_W $ for $\eta/\eta_0=1,2,5,10$. The effects of an increase in both parameters above the vacuum values superpose  in the parameter range  considered, amplifying the effect and generating a significant increase in  $N_\nu\to 3.5$. The present day vacuum value of Weinberg angle puts the $\nu_\mu,\nu_\tau$ freeze-out temperature, seen in the right pane of figure \ref{fig:freezeoutT},  near its maximum value.  This is why a comparatively large change in $\sin^2(\theta_W)$ is needed to produce a change in $N_\nu$ for $\sin^2(\theta_W)\approx0.23$.
 
\begin{figure}
\centerline{\includegraphics[width=0.70\columnwidth]{./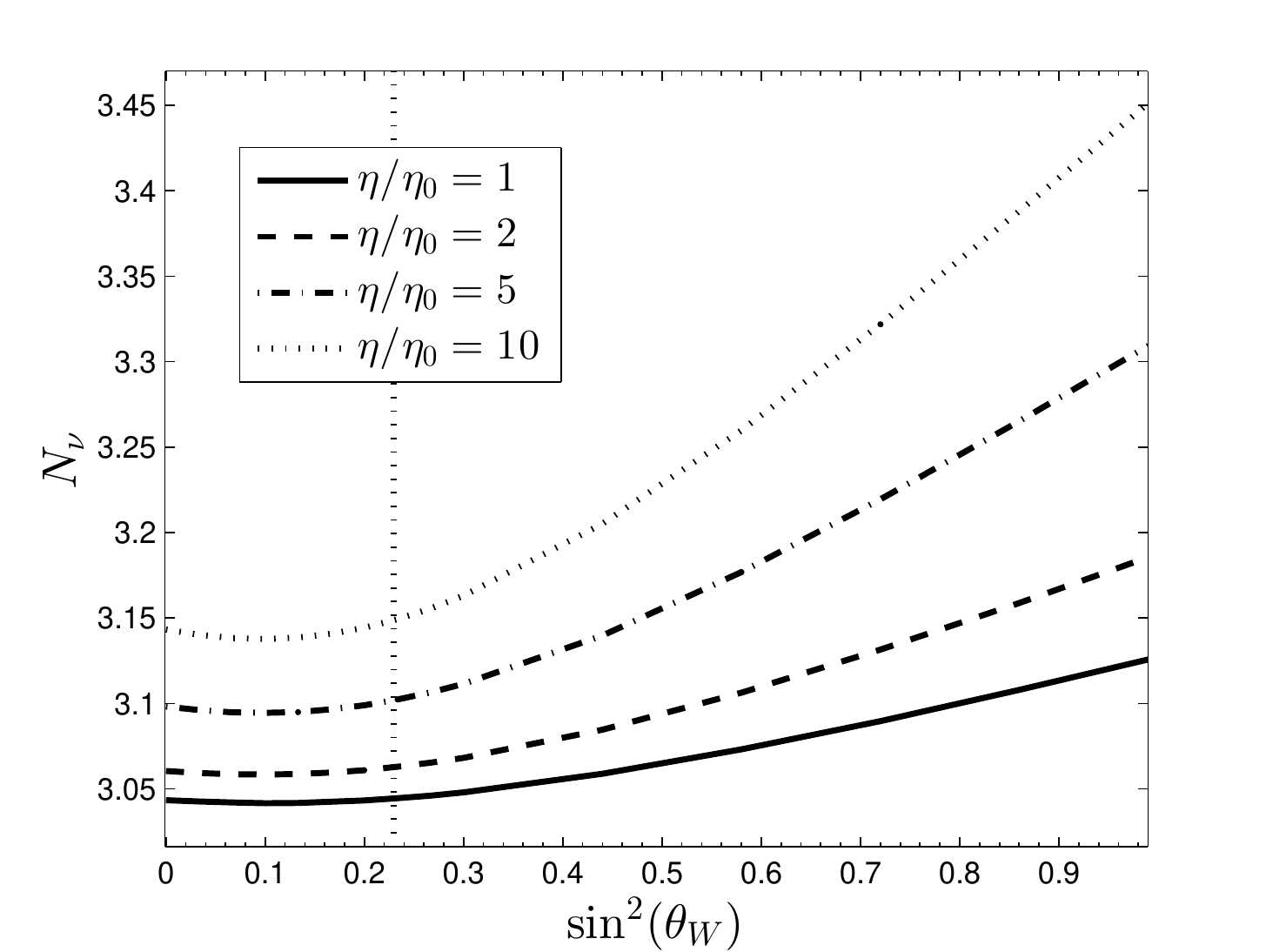}
}
\caption{Change in effective number of neutrinos  $N_\nu$ as a function of Weinberg angle for  several values of $\eta/\eta_0=1,2,5,10$. Vertical line is $\sin^2(\theta_W)=0.23$.}
\label{N_nu_params}  
 \end{figure}
We performed a least squares fit of $N_\nu$ over the range $0\leq \sin^2(\theta_W)\leq 1$, $1\leq \eta/\eta_0\leq 10$ shown in figure \ref{N_nu_params}, obtaining a result with relative error less than $0.2\%$,
\begin{align}
N_\nu=&3.003-0.095\sin^2\theta_W +0.222\sin^4\theta_W  -0.164\sin^6\theta_W \notag\\
+&\sqrt{\frac{\eta}{\eta_0}}\left(0.043+0.011\sin^2\theta_W +0.103\sin^4\theta_W\right).
\end{align}
$N_\nu$ is monotonically increasing in $\eta/\eta_0$ with dominant behavior  scaling as $\sqrt{ \eta/\eta_0}$. Monotonicity is to be expected, as increasing $\eta$ decreases the freeze-out temperature and the longer neutrinos are able to remain coupled to $e^\pm$, the more energy and entropy from annihilation is transfered to neutrinos.

We complement this with fits to the photon to neutrino temperature ratios $ T_\gamma / T_{\nu_e}, T_\gamma / T_{\nu_\mu}= T_\gamma / T_{\nu_\tau} $, and the neutrino fugacities, $\Upsilon_{\nu_e}, \Upsilon_{\nu_\mu}=\Upsilon_{\nu_\tau}$, again with relative error less than $0.2\%$  
\begin{align}
\frac{T_\gamma}{T_{\nu_\mu}}=&1.401+0.015x-0.040x^2+0.029x^3-0.0065y+0.0040xy-0.017x^2y, \label{fit1}\\
\Upsilon_{\nu_e}=&1.001+0.011x-0.024x^2+0.013x^3-0.005y-0.016xy+0.0006x^2y,\label{fit2}\\ 
\frac{T_\gamma}{T_{\nu_e}}=&1.401+0.015x-0.034x^2+0.021x^3-0.0066y-0.015xy-0.0045x^2y,\label{fit3}\\
\Upsilon_{\nu_\mu}=&1.001+0.011x-0.032x^2+0.023x^3-0.0052y+0.0057xy-0.014x^2y.\label{fit4}
\end{align}
where
\begin{equation}
x\equiv \sin^2 \theta_W ,\qquad
y\equiv  \sqrt{\frac{\eta}{\eta_0}}.
\end{equation}

\begin{figure}
\centerline{\includegraphics[width=0.75\columnwidth]{./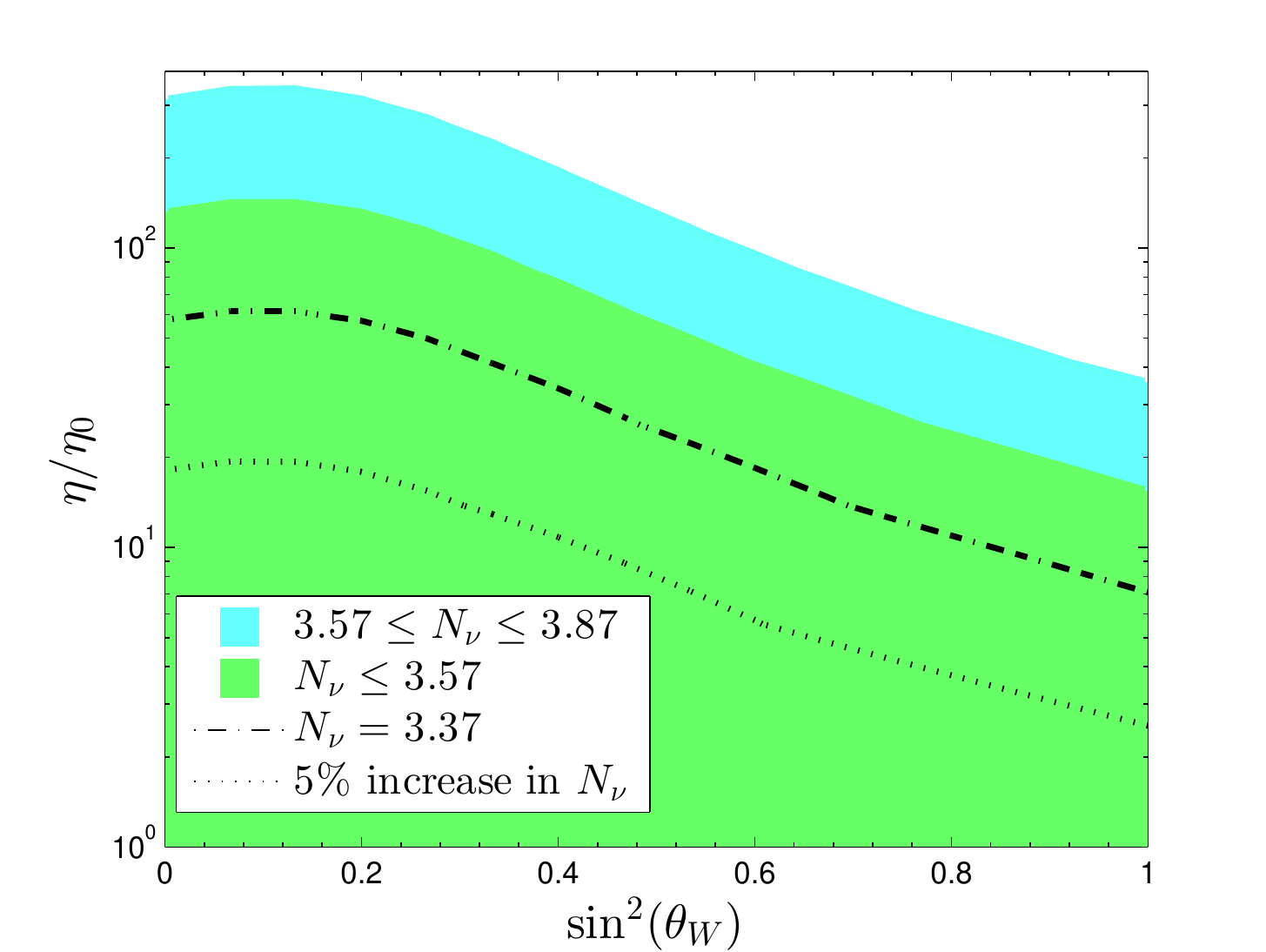}
}
\caption{$N_\nu$ bounds in the $\eta/\eta_0, \sin^2(\theta_W)$ plane. Dark (green) for $N_\nu\in (3.03,3.57)$ corresponding to Ref.\cite{Planck} CMB+BAO analysis and light(teal) extends the region to $N_\nu<3.87$ i.e. to CMB+$H_0$. Dot-dashed line delimits the 1s.d. lower boundary of the second analysis.}
\label{N_nu_domain}
 \end{figure}
The bounds on $N_\nu$ from the Planck analysis \cite{Planck} can be  used to constrain time or temperature variation of $\sin^2\theta_W$ and $\eta$.  
In  Figure \ref{N_nu_domain} the dark (green) color shows the combined range of  variation of natural constants  compatible with CMB+BAO and the light (teal) color shows  the extension in the range of  variation of  natural constants for CMB+$H_0$, both at a $68\%$ confidence level. The dot-dashed line within the dark (green) color  delimits   this latter domain. The dotted line shows the limit of a 5\% change in $N_\nu$.    Any increase in  $\eta/\eta_0$ and/or $\sin^2(\theta_W)$ moves the value of $N_\nu$ into the domain favored by current experimental results. Further parameter study is found in \ref{app:weinberg} and \ref{app:int_strength}.


\section{Summary, Discussion, and Conclusions}\label{sec:concl}
We have employed a novel spectral method Boltzmann solver and a new procedure for evaluating the Boltzmann scattering integrals in order to characterize the impact of a potential time and/or temperature variation of SM parameters on the effective number of neutrinos. Specifically, we identified a dimensionless combination of $m_e$, $M_p$, and $G_F$, called the interaction strength $\eta$, that, along with the Weinberg angle $\sin^2 \theta_W$, control neutrino freeze-out and the resulting value of the effective number of neutrinos, $N_\nu$.  

\subsection{Primordial Variation of Natural Constants}
The question which we addressed in this section is: What neutrino decoupling in the early Universe can tell us about the values of natural constants when the Universe was about one second old and at an ambient temperature near to 1 MeV (11.6 billion degrees K). Our results were presented assuming that the Universe contains no other effectively massless particles but the three left handed neutrinos and corresponding, three right handed anti-neutrinos. 

We found that near to the physical value of the Weinberg angle  $\sin^2 \theta_W\simeq 0.23$ the effect of changing $\sin^2\theta_W$ on the decoupling of neutrinos is small. Thus as seen in Figure \ref{N_nu_params}  the dominant variance is due to the change  in the coupling strength $\eta/\eta_0$, \req{eta_def}  and \req{eta0_def}. The dotted line in  Figure \ref{N_nu_domain} shows that in order to achieve a change in $N_\nu$ at the level of up to 5\% that is  $N_\nu\lesssim 3.2 $  both $\sin^2 \theta_W$ and $\eta/\eta_0$ must change significantly, with e.g. $\eta$ increasing by an order of magnitude.

Let us look closer at what an increase in the strength parameter $\eta$ by factor 10 means, looking case by case on all the natural constant contributions as if each were responsible for the entire change:
\begin{itemize}
\item 
Considering that  $\eta\propto M_p\propto G_N^{-1/2}$ this translates into a decrease  in the strength of Gravity at neutrino freeze-out by a factor 100.  This effect would need to become much smaller by the time the age of the Universe is 1000 times longer (1s compared to 10 min) for Big Bang nucleosynthesis to be unaffected. This presumably means that, conversely, as we go further back in time we would need the gravity to continue to rapidly become very much weaker yet. In models of emergent gravity we can  imagine a  `melting' of gravity in the hot primordial Universe. Whether such a model can be realized will be a topic for future consideration. The attractive aspect of Gravity weakening rapidly with increasing temperature is that for  exponentially disappearing $G_N\to 0$ as $t\to 0$ and/or $T\to \infty$ the dynamics can be arranged to be similar to an inflationary  Universe.
\item 
Since $\eta\propto m_e^3$ electron mass would need to go up `only' by factor 2.15 . Compared to all other particles the electron mass has an anomalously  low value. Appearance of a mechanism just when $T\simeq m_e$ that `restores' the electron mass to where intuition would like it to be, a few MeV, arising from  the systematics of other Yukawa Higgs coupling $g_{Ye}$ compared to the Yukawa coupling of other charged light particles, where $m_e= g_{Ye} v $ seems to us also  a possible scenario. Interestingly,   laboratory limits for these conditions could be attainable in the foreseeable future.
\item
Since $\eta\propto G_F^2\propto 1/v^4$  we would need to find a mechanism that would decrease the vacuum value $v_0\simeq 246$ GeV by factor 1.8 already at temperature $T\simeq m_e$.  Allowing three powers of $v$ to cancel by using the Higgs minimal coupling formula for electron mass  we need to change $v$ by an order of magnitude near to $T\simeq m_e$. This appears impossible.
\end{itemize}
While ideas justifying strong variation of $\eta$ can be developed as two of the above three cases argue, a model for temperature or time dependence of  $\sin^2 \theta_W$ seems at this time without a theoretical anchor point, mainly so since we do not have a valid grand unified theoretical framework in which the electro-weak mixing or equivalently the masses $M_W, M_Z$ would be anchored.

\subsection{Two Different Ways to Change $N_\nu$}

However, there are additional challenges we have not at this time addressed. This is so since the immediate observable is the energy content of the invisible Universe as defined by the effective number of neutrinos $N_\nu$. Considering a value of   $N_\nu>3$,  there could  be a contribution from presently not discovered, more weakly interacting massless particles that decoupled even before neutrinos, and which therefore could contribute fractionally to $N_\nu$, see our discussion in Ref.\cite{Birrell:2014connect}. 

Of particular relevance could be a so called light sterile neutrino~\cite{Abazajian:2012ys}, possibly the right handed complement to the left handed neutrinos. If such particles exist and freeze-out well before regular neutrinos, their contribution to $N_\nu$ would be subject to dilution by reheating~\cite{Birrell:2014connect} and thus their contribution to $N_\nu$ would depend on when precisely they begin free-streaming.

These unknown dark `radiation' particles as well as neutrinos could have a mass that is at the scale of the temperature of photon decoupling $T_{\gamma 0}=0.25$ eV, for which an analysis of the Universe density fluctuations akin to Planck~\cite{Planck} would need to be adapted. We have  discussed in Ref.\cite{Birrell:2013_2} a consistent treatment of neutrino mass and $N_\nu$,  in the case of a particular type of delayed massive neutrino  freeze-out. This approach is exactly the same as would be the case for dark radiation:  Near to $T_{\gamma 0}=0.25$ eV massive neutrinos are indistinguishable from massive dark radiation, which contributes as an additional particle with reduced contribution to $N_\nu$~\cite{Birrell:2014connect}.

The alternative explanation of $N_\nu>3$ in terms of variation of  of natural constants that we have presented comprises  speculative beyond the standard model ideas akin in this aspect to new dark `radiation' particles. We believe that our present contribution provides a viable alternative  mechanism  capable of influencing $N_\nu$. In order to achieve an increase in $N_\nu$ the change in natural constants must cause a delay in neutrino freeze-out and thus  a greater participation of neutrinos in reheating during $e^\pm$ annihilation. The changes in the natural constants  which are required to make a large and visible contribution in $N_\nu$ appear at first sight to reach beyond a variation that one could tolerate at the time of big bang nucleosynthesis only a factor 1000 in time later. We have argued that a change in the electron mass $m_e$ by factor larger than two, and/or Newtons constant $G_N$ even by several orders of magnitude  could be present.

\begin{subappendices}
\section{Weinberg Angle Plots}\label{app:weinberg}

\begin{figure}[H]
\centerline{\includegraphics[height=5.8cm]{./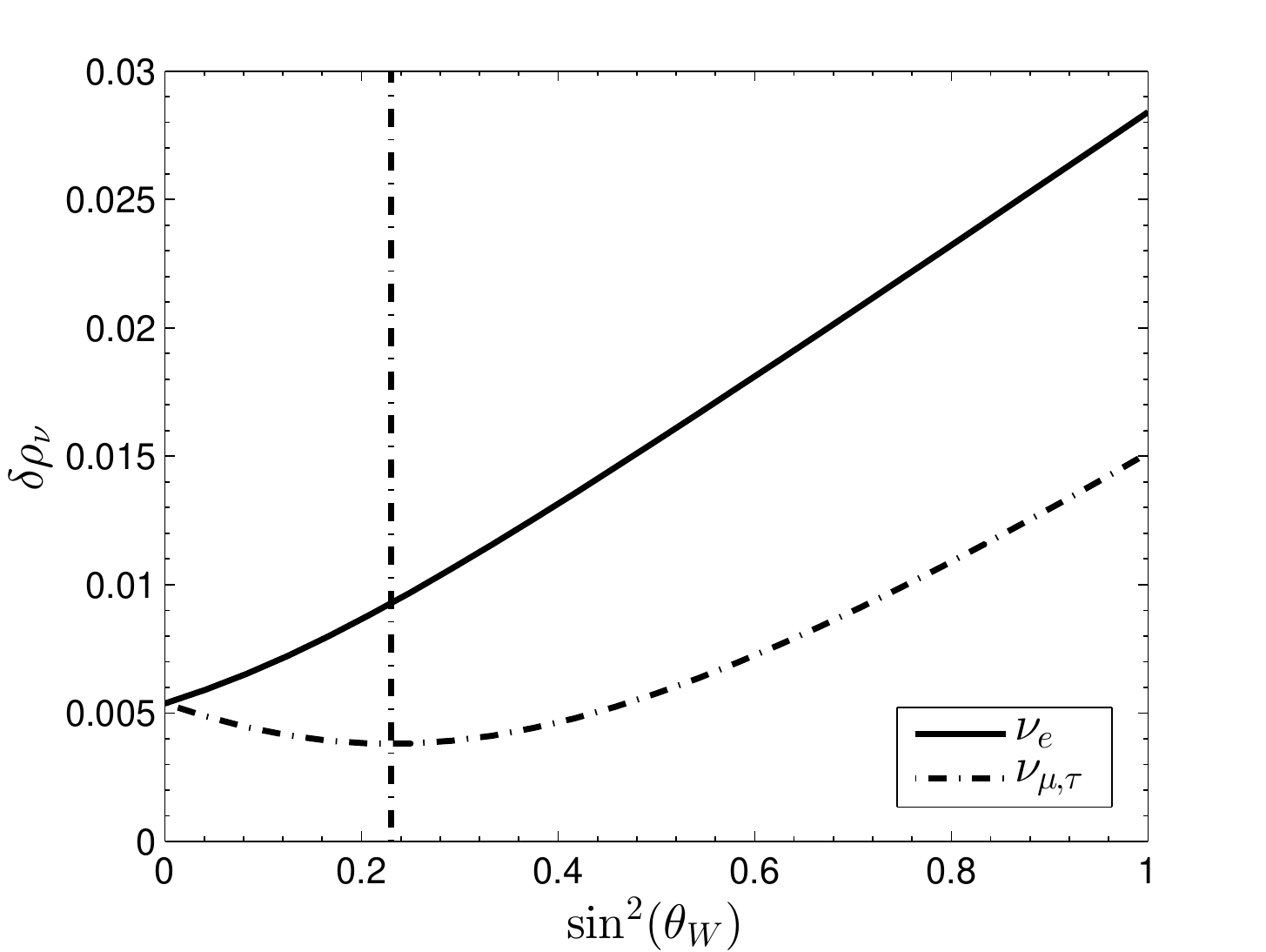}\hspace{-5mm}\includegraphics[height=5.8cm]{./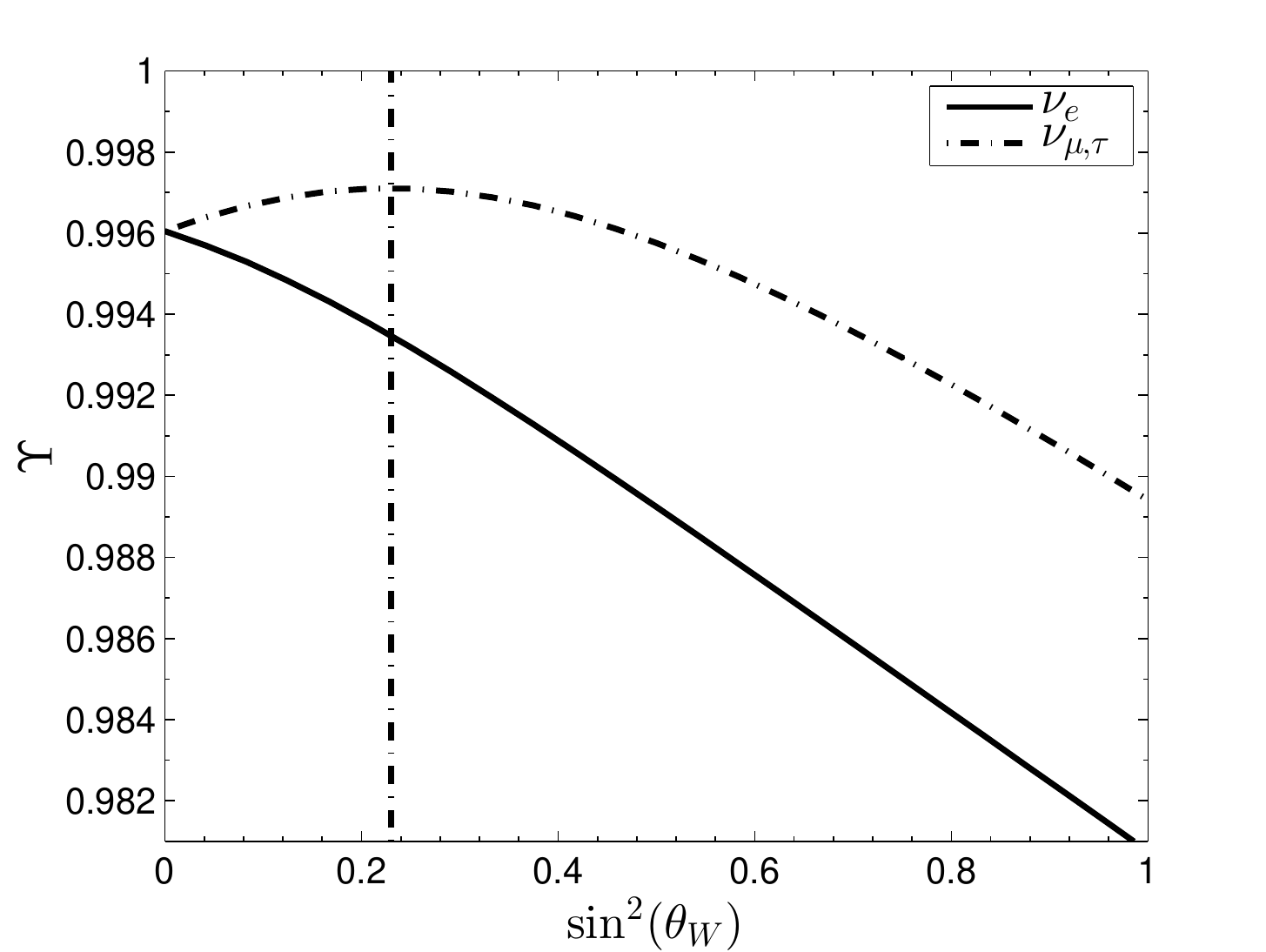}}
\caption{Fractional increase in neutrino energy (left) and neutrino fugacities (right), as functions of Weinberg angle. Vertical line is $\sin^2(\theta_W)=0.23$.}
 \end{figure}

\begin{figure}[H]
\centerline{\includegraphics[height=5.8cm]{./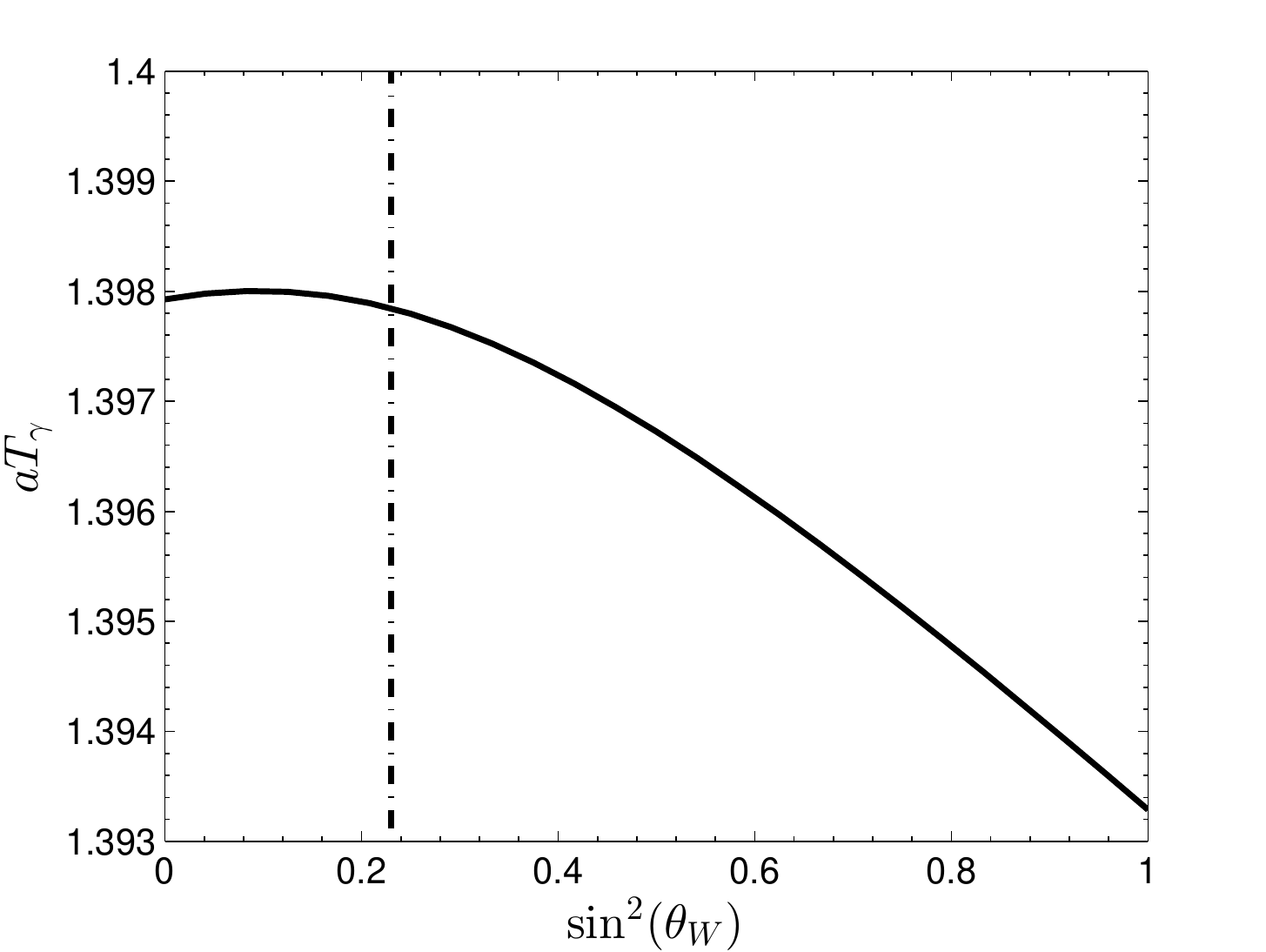}\hspace{-5mm}\includegraphics[height=5.8cm]{./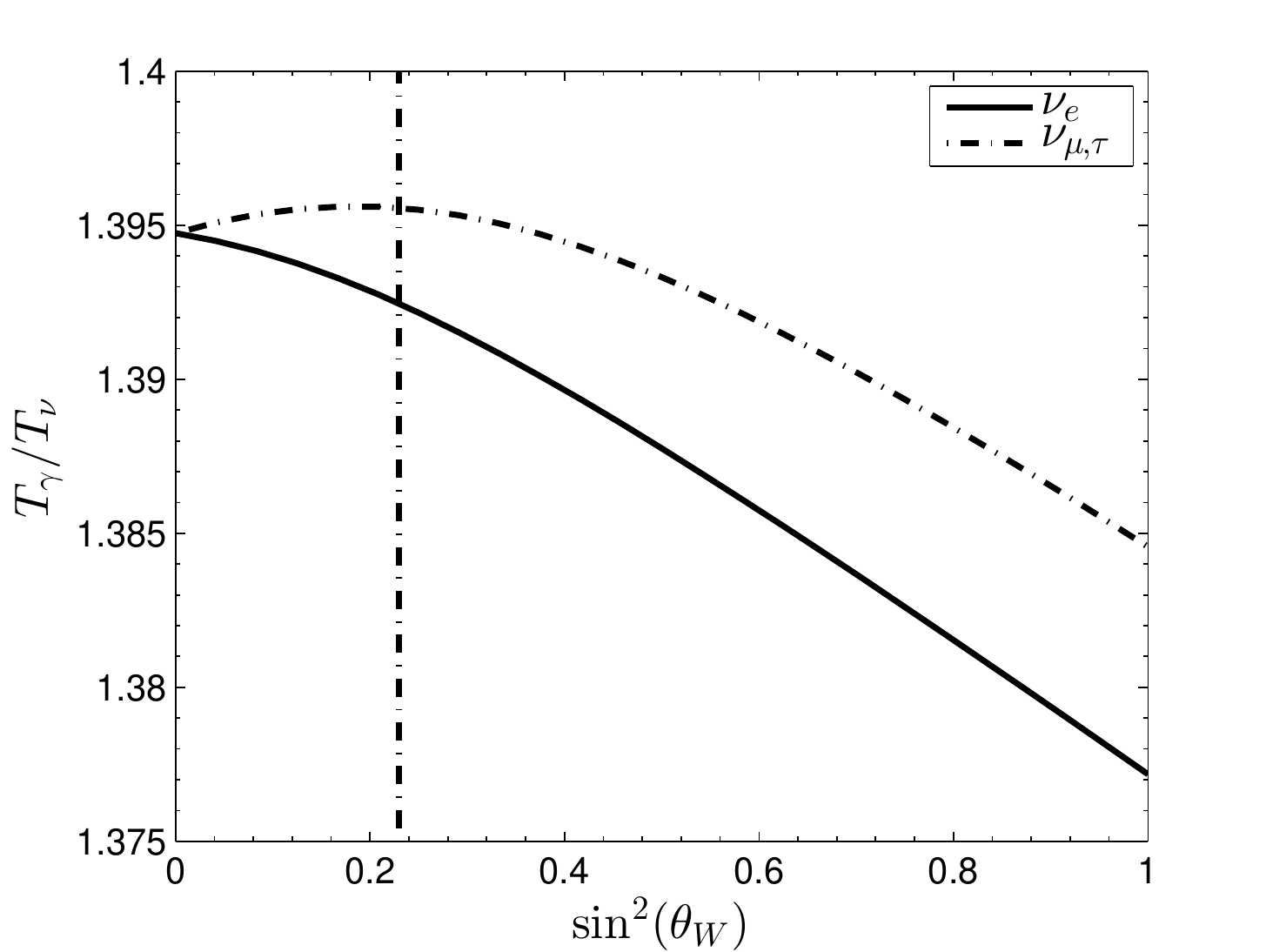}}
\caption{Dependence on Weinberg angle of photon reheating  (left) and photon-neutrino temperature ratios (right) after freeze-out. Vertical line is $\sin^2(\theta_W)=0.23$.}
 \end{figure}

\begin{figure}[H]
\centerline{\includegraphics[height=6.3cm]{./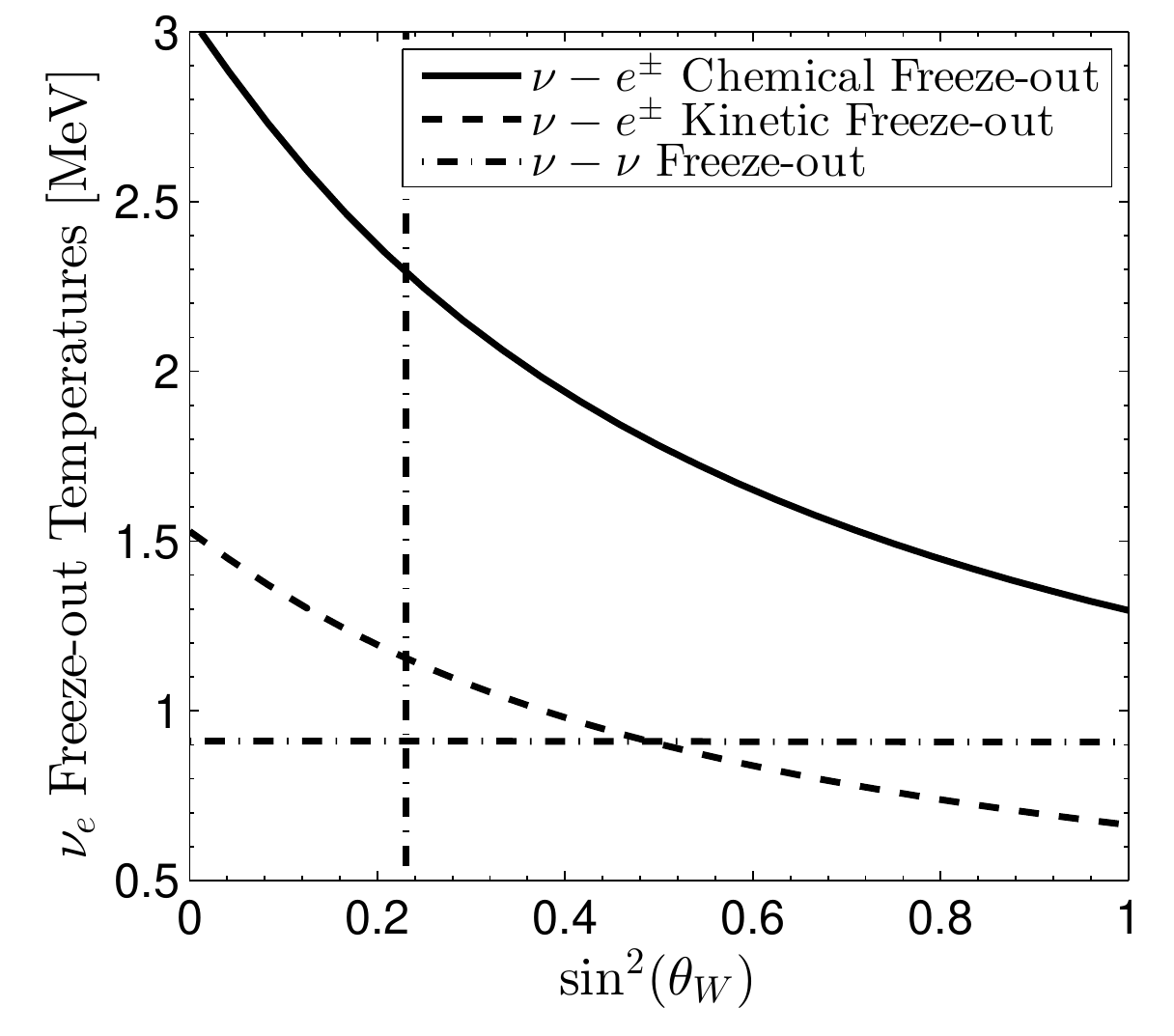}\hspace{5mm}\includegraphics[height=6.3cm]{./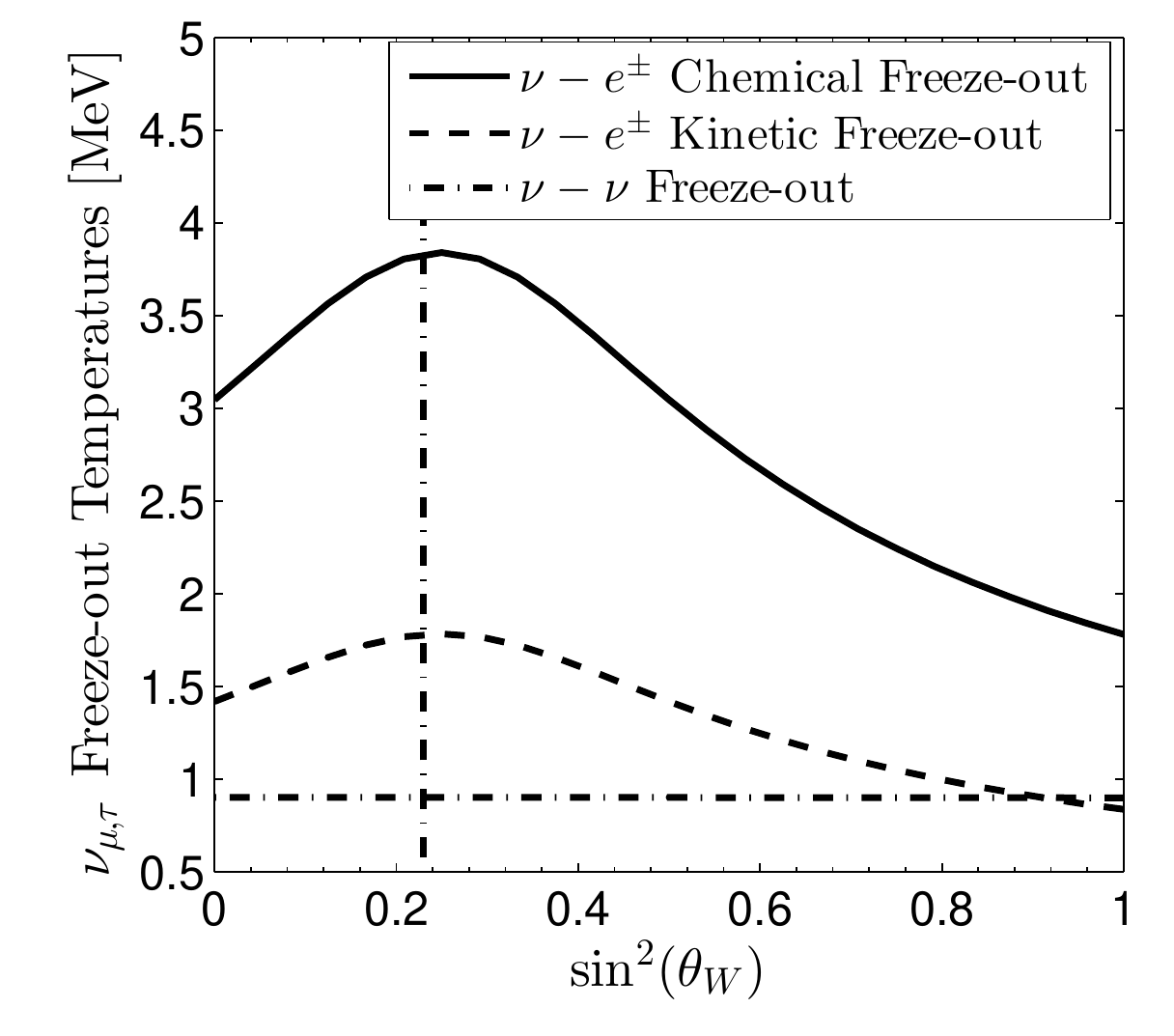}}
\caption{Freeze-out temperatures for electron neutrinos (left) and $\mu$, $\tau$ neutrinos (right) for various types of processes, as functions of Weinberg angle. Vertical line is $\sin^2(\theta_W)=0.23$.}\label{fig:freezeoutT}
 \end{figure}

\section{Interaction Strength Plots}\label{app:int_strength}

\begin{figure}[H]
\centerline{\includegraphics[height=5.8cm]{./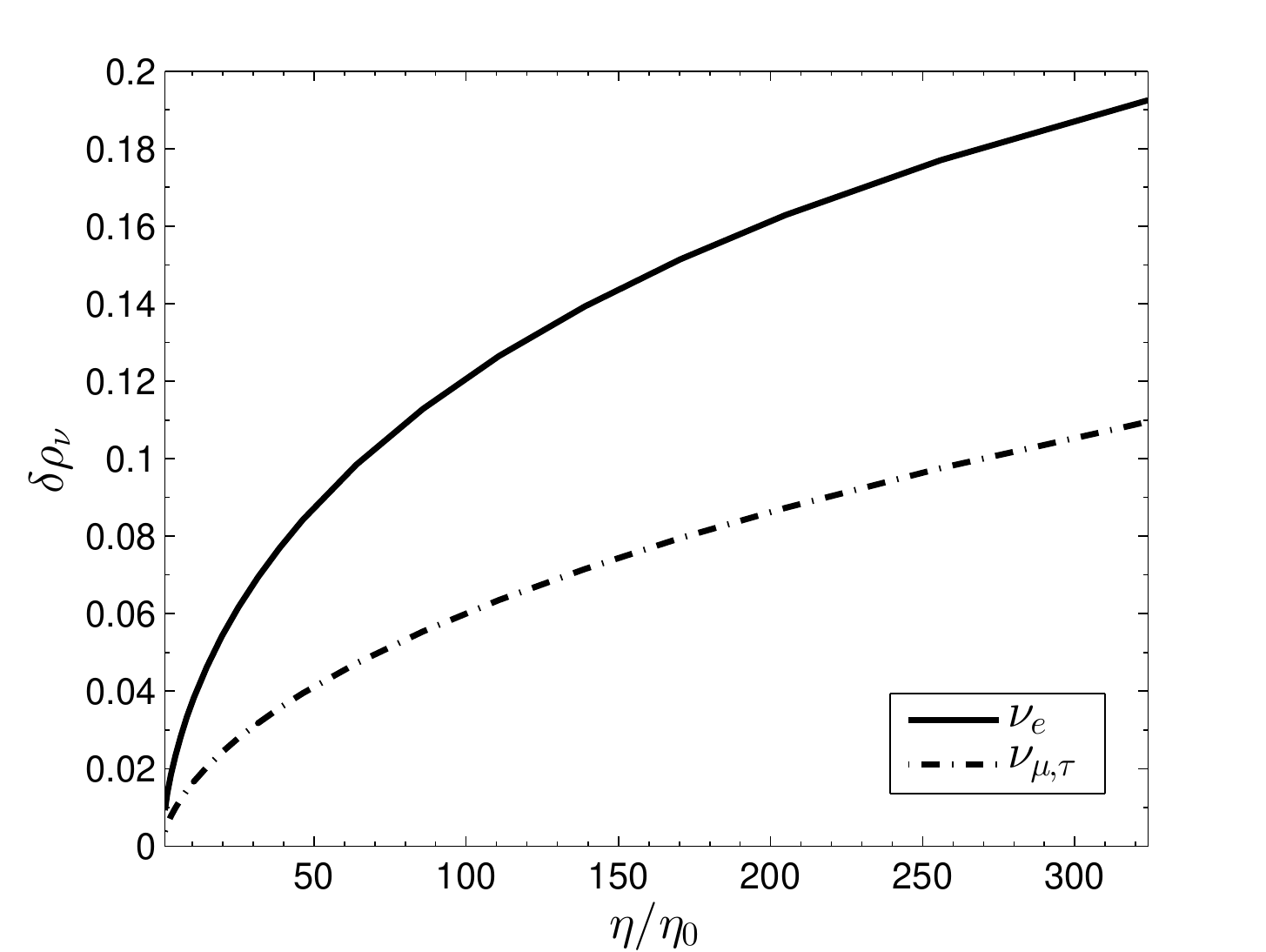}\hspace{-5mm}\includegraphics[height=5.8cm]{./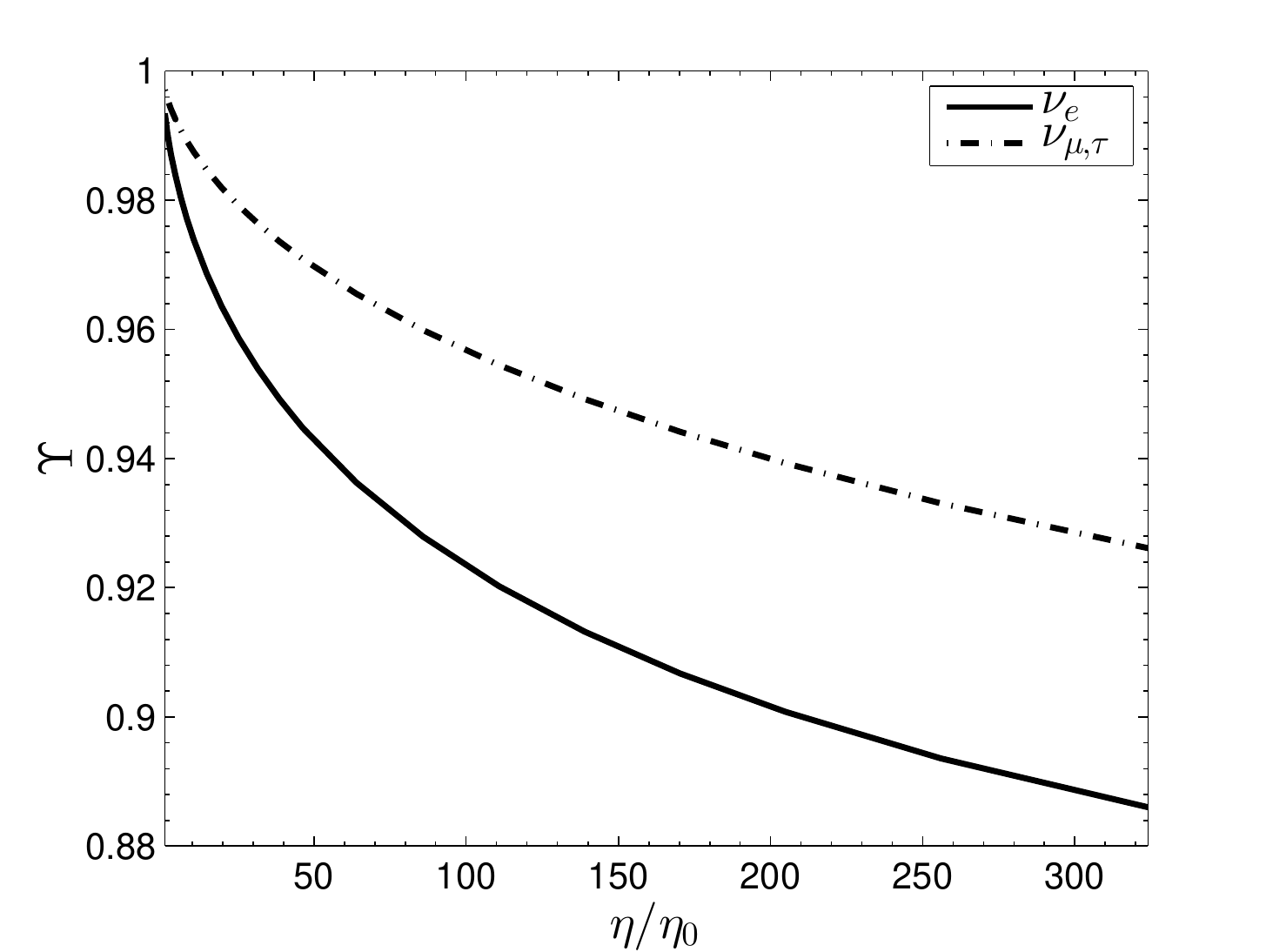}}
\caption{Fractional increase in neutrino energy (left) and neutrino fugacities (right), as functions of interaction strength.}
 \end{figure}

\begin{figure}[H]
\centerline{\includegraphics[height=5.8cm]{./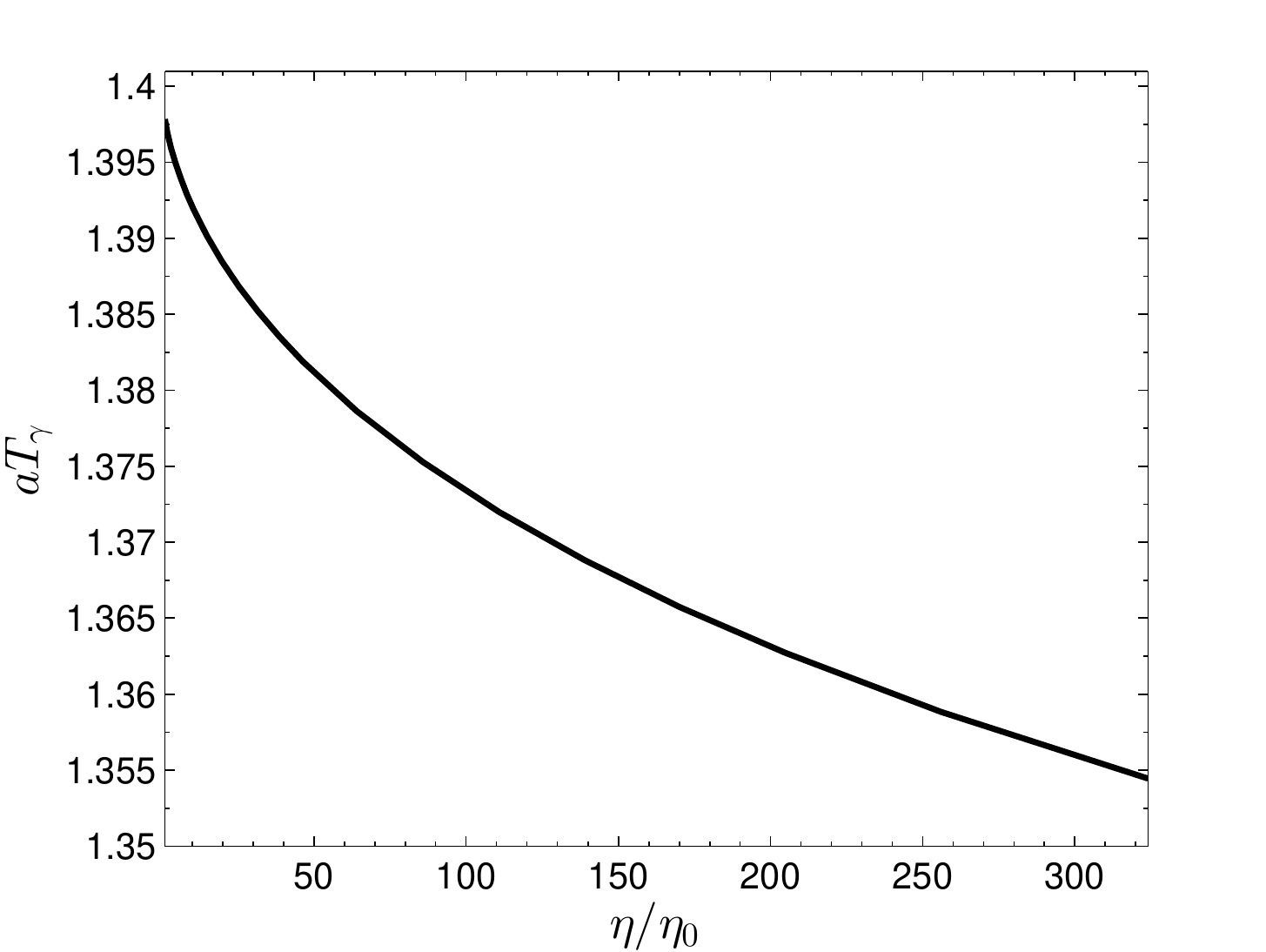}\hspace{-5mm}\includegraphics[height=5.8cm]{./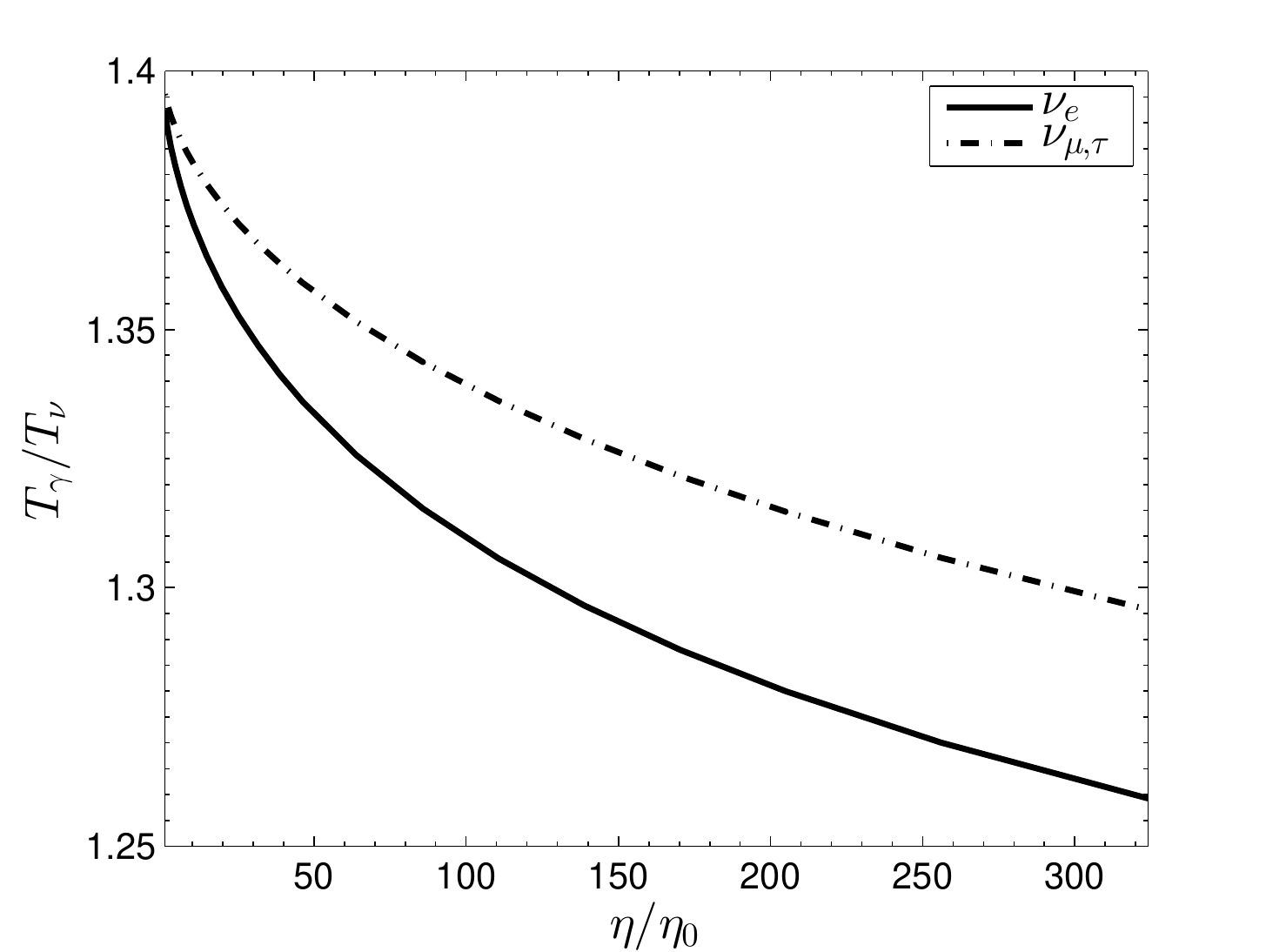}}
\caption{Dependence on  interaction strength of photon reheating (left) and photon-neutrino temperature ratios (right) after freeze-out.}
 \end{figure}

\begin{figure}[H]
\centerline{\includegraphics[height=6.3cm]{./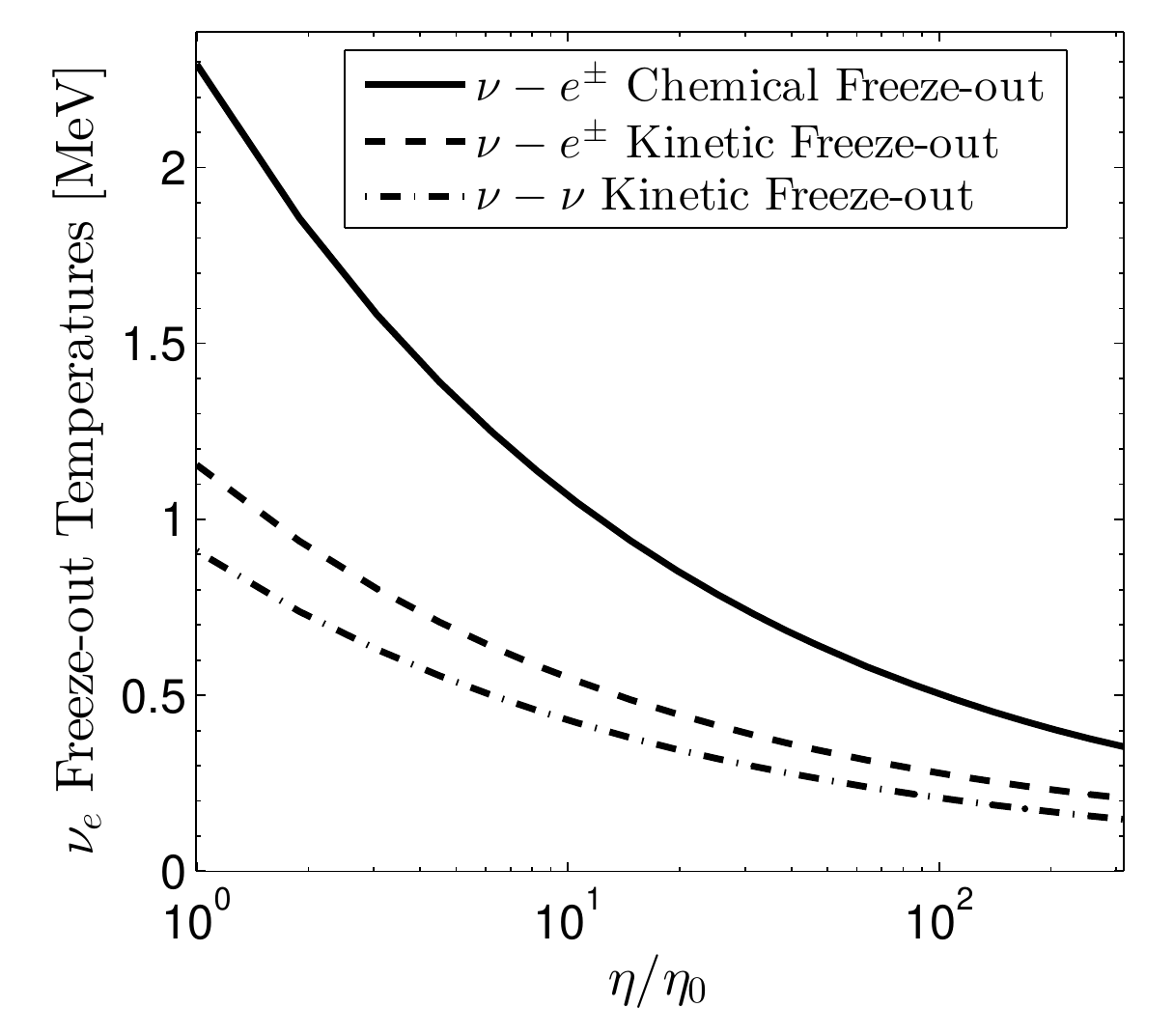}\hspace{5mm}\includegraphics[height=6.3cm]{./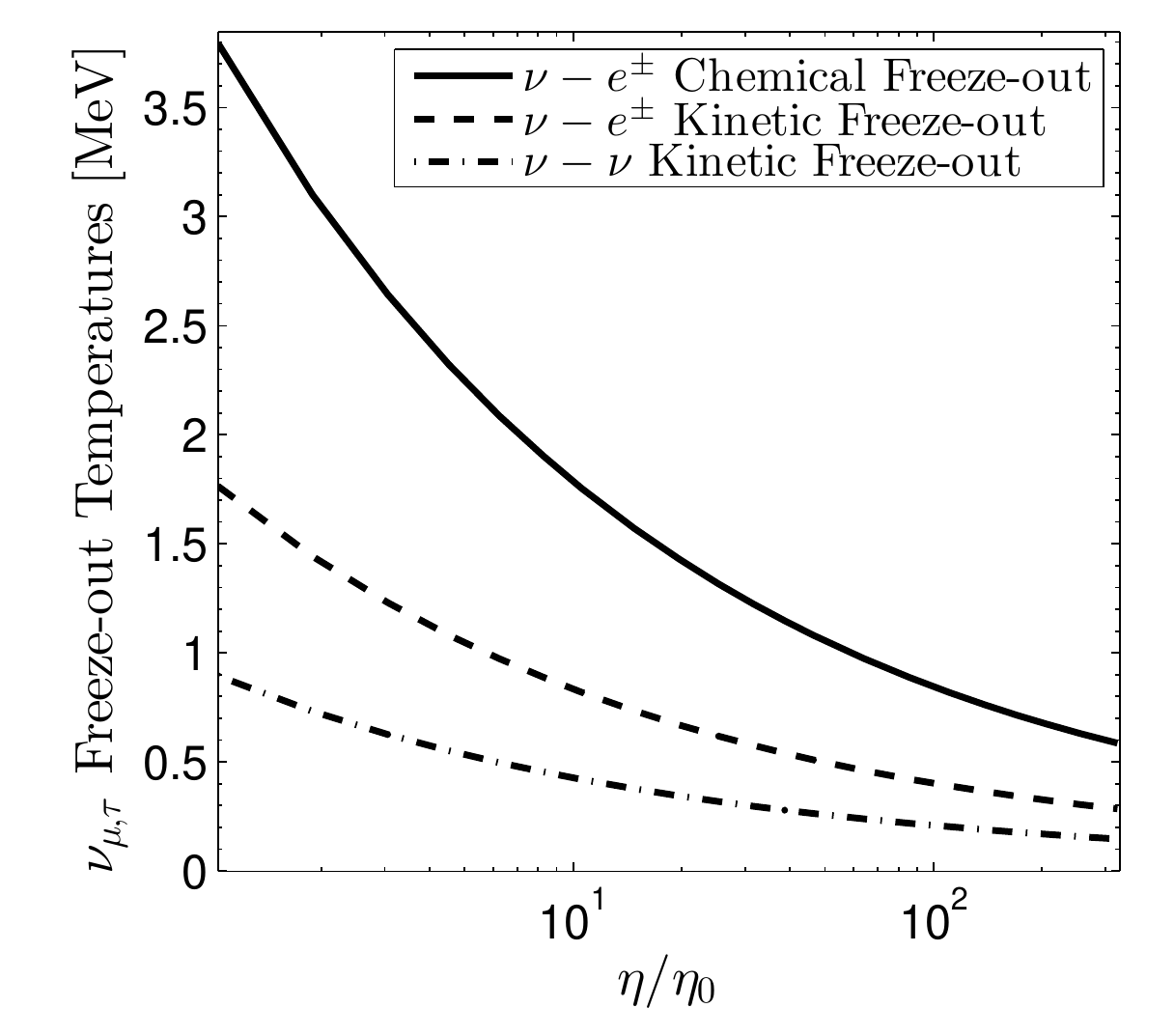}}
\caption{Freeze-out temperatures for electron neutrinos (left) and $\mu$, $\tau$ neutrinos (right) for various types of processes, as functions of interaction strength.}
 \end{figure}

\end{subappendices}

\chapter{Summary and Future Work}
We have studied the evolution of the relic neutrino background, focusing on the deviation from equilibrium, starting at freeze-out and moving to the present day. 
In the first part of this dissertation we focused on chemical non-equilibrium and characterized the neutrino distribution after freeze-out by a model independent approach that  used only conservation laws and was independent of the details of the scattering processes.  In particular, we computed the dependence on the kinetic freeze-out temperature $T_k$ of the reheating ratio $T_\gamma/T_\nu$, the effective number of neutrinos $N_\nu$, and the deviation from chemical equilibrium captured by the fugacity $\Upsilon$.  In this model a measurement of $N_\nu$ constitutes a measurement of $T_k$ and we found that a modified freeze-out temperature is capable of matching a value of $N_\nu>3.046$ as seen in the Planck CMB data.   We also characterized the free-streaming neutrino distribution today, as seen from Earth, a small step towards detector design.  In particular, we computed the drag force on a millimeter sized coherent detector due to the anisotropy of the neutrino distribution induced by the relative Earth-CMB motion.

In the second part of this dissertation we focused on the aspects of neutrino freeze-out that depend on the details of the scattering processes and hence require solution of the full general relativistic Boltzmann equation. Such a description is needed in order to model the non-thermal distortions of the neutrino distribution and eliminates the need to estimate $T_k$ or treat it as a free parameter, as we did in part one. We developed a novel spectral method for solving the Boltzmann equation that is adapted to emergent chemical non-equilibrium. The method incorporates three key improvements over the spectral method used by previous authors: a modified weight function, which reduces the minimum required number of modes from $4$ to $2$, together with a dynamical effective temperature and fugacity. We also developed an improved method for analytically simplifying the scattering integrals that substantially reduces the numerical integration cost.

Finally, we used our improved method to perform parametric studies of the dependence of neutrino freeze-out on the Weinberg angle, weak force interaction strength, the strength of gravity, and the electron mass in order to identify which mechanisms in the neutrino freeze-out process are capable of leading to the measured value of  $N_\nu$ in the environment of a hot Universe in which freeze-out  occurs.  Furthermore, our study allows us  to constrain time and/or temperature variation of these parameters using measurements of $N_\nu$.  

In the future, we will refine our study of neutrino freeze-out by investigating the impact of collective plasma effects on the system. For example, we are in the process of including photon decay due to the nonzero photon plasma mass, a process which has been overlooked in prior work. We also hope to better understand the limits on the variation of natural constants set by neutrino freeze-out, especially as the precision of measurements of $N_\nu$ improves.

Extending beyond the realm of neutrino freeze-out, we plan to adapt our spectral method for solving the Boltzmann equation to systems of massive bosons/fermions as well as to spatially varying systems with the hope of generalizing the method to form a broadly applicable method for studying the emergence of chemical non-equilibrium in systems of bosons and fermions that is also capable of capturing non-thermal distortions. In particular, we will use this to study the dynamics of massive quarks in Quark Gluon Plasma, both as found in the laboratory and in the early Universe.

\bibliographystyle{h-physrev}
\bibliography{refs}

\appendix
\chapter{Fugacity and Reheating of Primordial Neutrinos.}\label{app:chem_freezeout}
J. Birrell, C.T. Yang, P. Chen, J. Rafelski, Mod. Phys. Lett. A 28, 1350188 (2013) 
DOI: 10.1142/S0217732313501885 

\section*{ Summary}

In this paper we studied numerically the dependence of the neutrino distriubtion, including fugacity $\Upsilon$, effective number of neutrinos $N_\nu$, and reheating ratio $T_\gamma/T_\nu$ on the kinetic freezeout temperature $T_k$ under the assumption of kinetic equilibrium. In particular, we showed that  a measurement of $N_\nu$ constitues a measurement of $T_k$ in this model. We demonstrated that the instantaneous chemical freeze-out approximation, and hence entropy conservation,  holds to a good approximation for a subset of the neutrino reactions.  We also numerically identified an approximate power law relation between $\Upsilon$ and $T_\gamma/T_\nu$.

Others worked on deriving the neutrino reaction rates, but I was responsible for deriving all other equations, numerically simulating the system, and producing both figures. I was also responsible for the creation of the initial draft of the manuscript.

\includepdf[pages={-}]{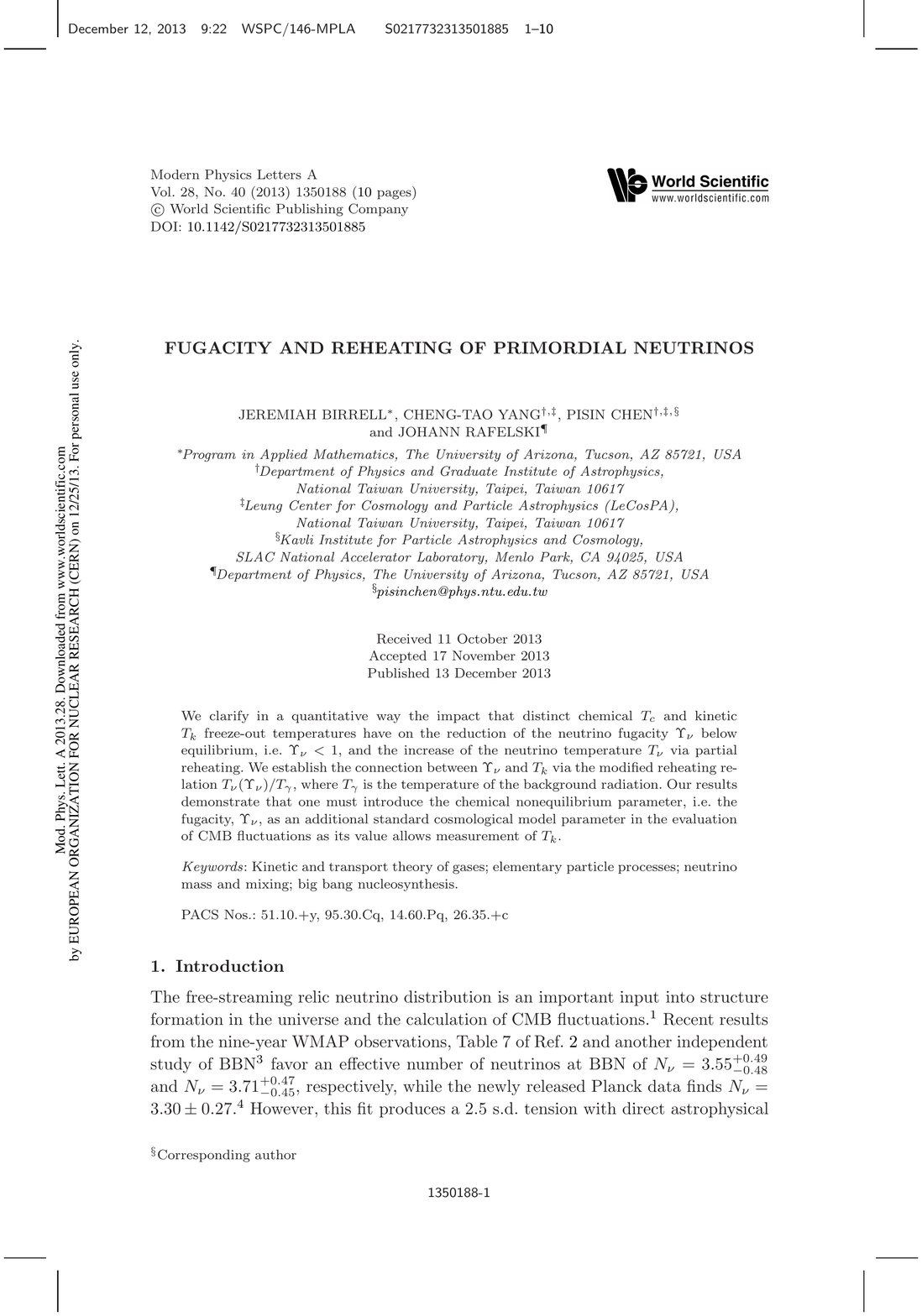}
\chapter{Relic neutrinos: Physically consistent treatment of effective number of neutrinos and neutrino mass.}\label{app:model_ind}

J. Birrell, C.T. Yang, P. Chen, J. Rafelski, Phys. Rev. D 89, 023008 (2014)
DOI: 10.1103/PhysRevD.89.023008

\section*{ Summary}

In this paper, we performed a model independent characterization of the neutrino distribution after freeze-out as a function of the kinetic freeze-out temperature $T_k$.  We showed how chemical non-equilibrium, in the form of a fugacity $\Upsilon<1$, emerges during freeze-out as a result of $T_k$ being on the order of the electron mass, the period when $e^\pm$ annihilation begins in earnest.

Using conservation of energy and entropy we were able to compute the neutrino fugacity, the photon to neutrino reheating temperature ratio $T_\gamma/T_\nu$, and the effective number of neutrinos $N_\nu$, all as functions of $T_k$.  In particular, we presented an analytic derivation of an approximate power law relation between the  reheating ratio and the fugacity.  We also showed that a delayed neutrino freeze-out is capable of matching the value of $N_\nu>3$, as seen in the recent Planck CMB results.

We also derived an analytic expression for the free-streaming neutrino distribution after freeze-out and used that to find fits of the neutrino energy density and pressure as functions of both the observed value of $N_\nu$ and the neutrino mass.  Such a parameterization is required in order to include the effects of both neutrino mass and neutrino reheating, $N_\nu>3,$ into CMB studies in a physically consistent way.  These constitute a new insight that was made possible by the model independent approach.  Prior studies had difficulty when attempting to include both effects simultaneously.

I was responsible for all mathematical derivations, numerical computations, and creation of figures.  I was also responsible for the creation of the initial draft of the manuscript.

\includepdf[pages={-}]{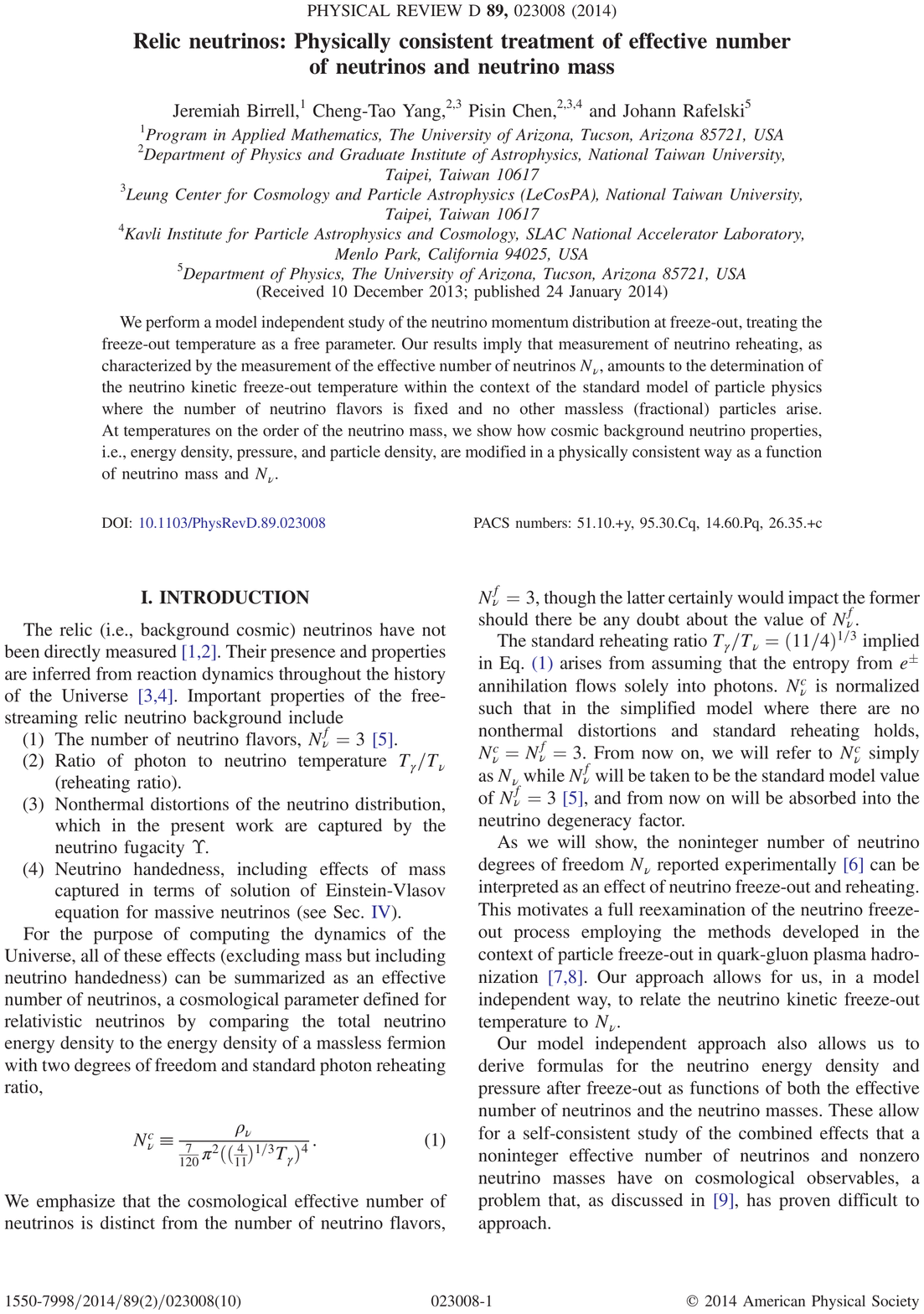}

\end{document}